\newtheorem{theorem}{Theorem}[section]
\newtheorem{lemma}[theorem]{Lemma}
\newtheorem{corollary}[theorem]{Corollary}
\newtheorem{definition}[theorem]{Definition}
\newcommand{\1}[1]{\mathbbm{1}\left[#1\right]} %
\newcommand{\cut}{\mathsf{Cut}}
\newcommand{\induced}{\mathsf{Induced}}
\newcommand{\util}{\mathsf{quality}}
\newcommand{\D}{\mathcal{D}} %
\newcommand{\Exp}{\mathsf{Exponential}}
\newcommand{\signalspace}{\mathcal{S}} %
\newcommand{\indicator}[1]{\mathbbm{1}\left[#1\right]}
\newcommand{\eps}{\epsilon}
\newcommand{\Ex}[2]{\operatorname*{\mathbb{E}}_{#1}\left[#2\right]} %
\newcommand{\OPT}{\mathsf{OPT}}
\newcommand{\OPTIC}{\OPT^{\mathsf{stable}}}
\newcommand{\OPTIR}{\OPT^{\mathsf{IR}}}
\newcommand{\barOPT}{\overline{\mathsf{OPT}}}
\newcommand{\barW}{\overline{W}}
\newcommand{\IC}{\mathsf{stable}}
\newcommand{\IR}{\mathsf{IR}}
\newcommand{\PoS}{\mathsf{PoS}}
\newcommand{\Cost}{\mathsf{Cost}}
\newcommand{\LHS}{\mathrm{LHS}}
\newcommand{\poly}{\mathrm{poly}}
\newcommand{\pr}[2]{\Pr_{#1}\left[#2\right]} %
\newcommand{\R}{\mathbb{R}} %
\newcommand{\RHS}{\mathrm{RHS}}
\newcommand{\rmd}{\mathrm{d}}
\newcommand{\Sum}{\mathsf{sum}}
\newcommand{\btheta}{\boldsymbol{\theta}}
\newcommand{\bt}{\boldsymbol{t}}
\newcommand{\bs}{\boldsymbol{s}}
\newcommand{\bu}{\boldsymbol{u}}
\newcommand{\ba}{\boldsymbol{a}}
\newcommand{\DS}{{\mathsf{DS}}}
\newcommand{\DStilde}{\widetilde{\DS}}
\newcommand{\IS}{{\mathsf{IS}}}
\newcommand{\IStilde}{\widetilde{\IS}}
\newcommand{\suchthat}{\text{ s.t. }}
\newcommand{\sym}{\textsf{Sym}}
\newcommand{\unif}{\textsf{Unif}}
\newcommand{\tildephi}{\widetilde{\varphi}}
\newcommand{\tildemu}{\widetilde{\mu}}
\newcommand{\Contrib}{\textsf{Contrib}}
\newcommand{\Numtheta}{\textsf{Num}}
\newcommand{\prior}{\tau}
\newcommand{\ISsize}{\beta}
\newcommand{\vecone}{\mathbf{1}}
\newcommand{\veczero}{\mathbf{0}}
\newcommand{\dualvar}{\boldsymbol{\phi}}
\newcommand{\stepa}[1]{\overset{\rm (a)}{#1}}
\newcommand{\stepb}[1]{\overset{\rm (b)}{#1}}
\newcommand{\stepc}[1]{\overset{\rm (c)}{#1}}
\newcommand{\stepd}[1]{\overset{\rm (d)}{#1}}
\crefname{definition}{definition}{definitions}
\Crefname{definition}{Definition}{Definitions}
\crefname{prop}{proposition}{propositions}
\Crefname{Prop}{Proposition}{Propositions}
\Crefname{cor}{Corollary}{Corollaries}
\crefname{lemma}{Lemma}{Lemmas}
\crefname{section}{Section}{Sections}
\crefname{subsubsubsection}{Section}{Sections}
\crefname{remark}{Remark}{Remarks}
\crefname{figure}{Figure}{Figures}
\crefname{table}{Table}{Tables}
\Crefname{lemma}{Lemma}{Lemmas}
\crefname{theorem}{Theorem}{Theorems}
\Crefname{theorem}{Theorem}{Theorems}
\crefname{algo}{Algorithm}{Algorithms}
\begin{document}
\title{Platforms for Efficient and Incentive-Aware Collaboration}
\author{Nika Haghtalab} %
\author{Mingda Qiao} %
\author{Kunhe Yang} %

\affil{University of California, Berkeley\\{\small\texttt{\{nika,mingda.qiao,kunheyang\}@berkeley.edu}}}
\date{}

\maketitle

\allowdisplaybreaks
\begin{abstract}
    Collaboration is crucial for reaching collective goals. However, its potential for effectiveness is often undermined by the strategic behavior of individual agents --- a fact that is captured by a high Price of Stability (PoS) in recent literature~\citep{blumOneOneAll2021}.  Implicit in the traditional PoS analysis is the assumption that agents have full knowledge of how their tasks relate to one another. 
We offer a new perspective on bringing about efficient collaboration across strategic agents using information design. 
Inspired by the increasingly important role collaboration plays in machine learning (such as platforms for collaborative federated learning and data cooperatives), we propose a framework in which the platform possesses more information about how the agents' tasks relate to each other than the agents themselves. Our results characterize how and to what degree such platforms can leverage their information advantage and steer strategic agents towards efficient collaboration.

Concretely, we consider collaboration networks in which each node represents a task type held by one agent, and each task benefits from contributions made in their inclusive neighborhood of tasks.
 This network structure is known to the agents and the platform. 
 On the other hand, the real location of each agent in the network is known to the platform only --- from the perspective of the agents, their location is determined by a uniformly random permutation.
We employ the framework of private Bayesian persuasion and design two families of persuasive signaling schemes that the platform can use to guarantee a small total workload when agents follow the signal. 
The first family
{aims to achieve}
the minmax optimal approximation ratio compared to the total workload in the optimal collaboration, which is {shown to be $\Theta(\sqrt{n})$} for unit-weight graphs,
{$\Theta(n^{\frac{2}{3}})$} for graphs with edge weights lower bounded by $\Omega(1)$,
and $O(n^{\frac{3}{4}})$ for general weighted graphs. The second family ensures per-instance strict improvement in the total workload compared to scenarios with full information disclosure.
    \end{abstract}
    \clearpage
\tableofcontents
\clearpage

    \section{Introduction}

Collaboration is the cornerstone of modern achievements across various disciplines.
{Effective implementation of collaborative systems has substantially increased what can be accomplished by the limited capabilities of individual agents. For instance, the global collaboration between agencies and institutions under the Genome-Wide Association Studies (GWAS) has enabled the decoding of  genetic foundations of diseases~\citep{bergen2012genome}; collaborations across hundreds of mathematicians have led to resolving longstanding open problems~\citep{steingart2012group}; collaboratively maintained virus signature databases~\citep{OTX} have made technology safer; 
and collaborative federated learning~\citep{mcmahan2017communication} has led to the training of models with superior performance on a large range of applications.

However, for every highly successful collaboration, there are many others that never came to fruition.
This is in part due to the %
inherently strategic nature of participants in a collaboration that can be a} significant barrier to the realization of optimal collaboration. On the one hand, achieving optimal collaboration has been shown to typically require levels of effort from {some} participants that surpass what is individually rational (the amount of work required when working alone) or what is considered {stable} (the amount of work deemed reasonable given others' contributions).
On the other hand, stable collaboration systems where everyone is satisfied with their assigned effort, if exist, often suffer from significant inefficiency.

The above issues are evident even in a simple double-star network shown in \Cref{fig:double-star} where each node represents an agent {and their task}, and each agent{'s task} benefits from both their own and neighboring agents' contributions\footnote{A variant of this structure was used by \citet{blumOneOneAll2021} for establishing a lower bound on the PoS in several abstractions of collaborative federated learning.}.
For collaboration to be feasible, each agent's {task requires at least one unit of contribution.}
Ideally, in the optimal collaboration that minimizes total workload, the two central nodes {take on the entire workload} by each contributing one unit of effort to support all their leaf neighbors. However, when incentives are factored in, both center nodes have the incentive to {unilaterally} reduce their contribution, as they have already received enough support from each other. 
In fact, in any {stable} collaboration {in this network}, the total workload inevitably scales linearly with the number of agents, which implies that the benefits gained from collaboration are merely marginal.

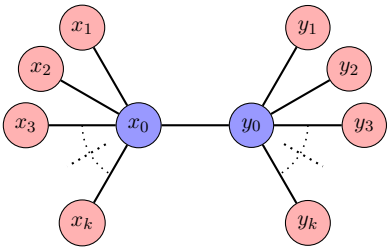
\begin{figure}[!htbp]
    \centering
    \begin{tikzpicture}[scale=0.75, transform shape]
        \def \radius {2cm}
        
        \node[draw, circle, fill = blue!40] at (0,0) (center1) {$x_0$};
        \foreach \i in {1,2,3}{
          \node[draw, circle, fill = red!30] at (\i*30+90:\radius) (left\i) {$x_{\i}$};
          \draw[thick] (center1)--(left\i);
        }
        \node[draw, circle, fill = red!30] at (240:\radius) (leftk) {$x_{k}$};
        \draw[thick] (center1)--(leftk);

        \node[circle] at ({210}:\radius) (aux1) {\phantom{$u_{5}$}};
        \draw[dotted, thick, shorten >=1mm, shorten <=2mm] (center1)--(aux1);

        \draw[dotted, semithick] (180:\radius/2) arc[start angle=180, end angle=240, radius=\radius/2];
        
        \node[draw, circle, fill = blue!40] at (\radius,0) (center2) {$y_0$};
        \foreach \i in {1,2,3}{
            \node[draw, circle, fill = red!30] at ([shift={(-\i*30+90:\radius)}]center2) (right\i) {$y_{\i}$};
          \draw[thick] (center2)--(right\i);
        }
        \node[draw, circle, fill = red!30] at ([shift={(-60:\radius)}]center2) (rightk) {$y_{k}$};
        \draw[thick] (center2)--(rightk);

        \node[circle] at ([shift={(-30:\radius)}]center2) (aux2) {\phantom{$u_{5}$}};
        \draw[dotted, thick, shorten >=1mm, shorten <=2mm] (center2)--(aux2);

        \draw[dotted, semithick] ([shift={(0:\radius/2)}]center2) arc[start angle=0, end angle=-60, radius=\radius/2];
        
        \draw[thick] (center1) -- (center2);
        
        \end{tikzpicture}

        \caption{\footnotesize Double Star Graph. 
        Formally, let $\theta_v$ denote the contribution of node $v$. The social optimum is achieved when $\theta_{x_0}=\theta_{y_0}=1$ with all other $\theta_v=0$.  {\emph{Feasibility} requires that each agent $v$ receives a total contribution of $\theta_v + \sum_{u: \{u,v\}\in E} \theta_u \geq 1$.}
        \emph{Stability} requires no node can unilaterally reduce their contribution without hurting their own feasibility. Following are typical stable solutions: (1) One-sided leaf contribution, where $\theta_v=1$ for $v\in\{x_1, x_2, \ldots, x_k, y_0\}$ and $\theta_v=0$ for the rest, and (2) Full leaf contribution, with $\theta_v=1$ for all leaf nodes  $v\in\{x_1, x_2, \ldots, x_k, y_1, y_2, \ldots, y_k\}$, while $\theta_{x_0}=\theta_{y_0}=0$. Both solutions suffer from a total workload of $\Omega(k)$.
        }

        \label[figure]{fig:double-star}
\end{figure}

This example highlights the inefficiency of collaboration in the presence of strategic agents, which is quantitatively reflected by a high Price of Stability (PoS)~\citep{PoS,blumOneOneAll2021}, a well-established concept in game theory measuring the ratio between the total workload in the best {stable} solution and the socially optimal solution\footnote{
In the original definition, PoS is the ratio between the cost of the best Nash equilibrium and that of the social optimum. In our collaborative context, the Nash equilibrium analog is a stable solution conditioned on being feasible.}.

However, the traditional notion of PoS may be overly pessimistic for analyzing collaboration networks for two primary reasons. First, it assumes that agents have full knowledge of their positions and roles within the collaborative system. This overlooks real-life and practical uncertainties that often exist in collaborations centered on data gathering and usage; for instance, in data cooperatives and federated learning, participants are often uncertain of how their data contribute to the bigger picture, including their specific uses, their broader implications and value, and how they can help solve tasks for other participants. 
Collaborations are often hosted on platforms, yet, platforms lack the leverage to steer the agents toward more efficient collaborative outcomes in the traditional notion of PoS. As a result, the system is severely affected by the misalignment between the strategic agents' incentives and the platform's goal of improving the overall efficiency.

{Inspired by the increasingly important role collaboration plays in machine learning, such as platforms for collaborative federated learning and data cooperatives, we revisit the problem of efficient and stable collaboration under a new framework. To address the aforementioned} issues, it is crucial to recognize that while agents typically have some understanding of the types of tasks in the system and the general structure relating different task types, they often lack precise knowledge about how their specific tasks relate to others.
This precise knowledge is usually exclusive to the platform, which highlights the platform's crucial role in communicating this knowledge to the agents. 
This level of uncertainty challenges the traditional PoS analysis that assumes complete knowledge of one's position and role. 
Yet, it also opens up new possibilities: by strategically distributing information about agents' task types, {not only} platforms can address these challenges but also {they can} create opportunities for efficient collaboration that was previously thought to be impossible.

Reflecting this into the aforementioned double-star structure in \cref{fig:double-star}, {we now consider a scenario where agents are assigned to nodes of this network by a uniformly random permutation. That is, while the network structure is known to the agents, they are unaware of their exact positions} within the graph.
Not knowing whether one's task is represented by a central node or a leaf node
significantly influences agents' decisions and collaboration outcomes.  
This is where the platform's role of strategically sharing information becomes clear.
If the platform withholds all information, every agent will contribute constant effort because they face {maximum and independent uncertainty} about whether {their task is sufficiently covered by contributions of others. This high level of uncertainty will lead agents to contribute an unnecessary amount of effort.
}
On the other extreme, full information disclosure {removes all uncertainty}, leading central nodes to reduce their contributions (as discussed above) and pushing leaf nodes into a less efficient pattern of contribution, thus significantly increasing the PoS.

The crux of steering strategic agents into efficient collaboration lies in \emph{creating the right level of correlated uncertainty}. 
By controlling the flow of information, the platform can {negatively correlate the beliefs of}
central and leaf nodes,
{while maintaining a desirable level of uncertainty}.
{The negative correlation between agents' beliefs ensures that agents would not perform unnecessary amount of work simultaneously. Moreover, the remaining uncertainty (as well as negative correlation) leverages the central agents' fear that their tasks might have not received sufficient contributions from others, thereby encouraging them to increase their effort.
}

This leads us to the main question of our study: Is it possible for the platform to always ensure efficient, {stable} collaboration among strategic agents by controlling the information flow about their types? Specifically, we seek to determine whether a general information structure can be designed to:
\begin{enumerate}[(1)]
    \item Guarantee a sublinear\footnote{Recall that worst-case PoS can be linear in the number of agents, as demonstrated by \cref{fig:double-star}} upper bound on the PoS for all collaboration instances;
    \item Consistently improve the PoS compared to a scenario of full information revelation.
\end{enumerate}

To address these questions, our approach builds on the framework of the private Bayesian persuasion model in information design~\citep{kamenica2011bayesian,arieli2019private}. 
We allow the platform to credibly commit to a signaling scheme, which is a probabilistic mapping from the realization of joint types to private signals that are sent to each agent in the system. These private signals distill the information about joint types into direct recommendations on how much each agent should contribute. Key to this framework is the signaling scheme's {\emph{persuasiveness}} -- it guarantees that agents are incentivized to follow the recommendations. The platform facilitates efficient collaboration by committing to persuasive signaling schemes with low total workload when agents follow the recommendations.

\subsection{Our Contribution}

We initiate the study of improving the efficiency of {stable} collaboration through the lens of information design. We formalize the private Bayesian persuasion problem in a linear collaborative network and characterize the structural properties of persuasive signaling schemes. For {a} special class of {\emph{binary}} signaling schemes, we identify how its persuasiveness relates to graph characteristics such as cuts and induced subgraphs. 

Based on these structural insights, we design two families of signaling schemes that achieve the aforementioned goals (1) and (2), respectively. 
We will use $n$ to denote the number of agents (nodes) in the network, and use benchmarks $\OPT$, $\OPTIR$ and $\OPTIC$ to respectively quantify the total workload under the optimal feasible collaboration as well as the optimal feasible collaboration which also satisfies individual rationality (IR) and stability constraints.

Toward our first goal of achieving a nontrivial approximation ratio to $\OPT$ in the worst case, we design a binary signaling scheme that achieves the approximation ratio of $O(\sqrt{n})$ in any unit-weight graphs. In general weighted graphs, we argue that $\OPTIR$ serves as a more appropriate benchmark than $\OPT$, because $\OPT$ might require some agents to contribute more than their individually rational amount, which rational agents cannot be persuaded to do. However, we show that binary signaling schemes fail to achieve this goal. To address this, we introduce a third signal to achieve the optimal approximation ratios of $O(n^{\frac{2}{3}})$ for graphs with lower bounded edge weights,
and $O(n^{\frac{3}{4}})$ for general weighted graphs. Our results are summarized in \cref*{thm:unit-ub-opt,thm:weighted-ternary-opt}.

\begin{theorem}
    \label[theorem]{thm:unit-ub-opt}
    In any unit-weight graph, there exists a binary signaling scheme that is persuasive and has cost $O(\sqrt{n}\cdot\OPT)$. Moreover, the $O(\sqrt{n})$ approximation ratio is tight for certain graphs. 
\end{theorem}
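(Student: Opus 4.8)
The plan is to establish the two halves of the statement separately, and in both to route everything through the structural characterization of persuasive binary schemes developed in the previous section: a binary scheme is a distribution over the set $T\subseteq V$ of nodes that are recommended to contribute, its cost is $\Ex{}{|T|}$, obedience of the ``contribute'' recommendation is governed by the $\induced$ quantity (a recommended node must, conditioned on its own signal, be isolated in $G[T]$ with the required probability), and obedience of the ``do not contribute'' recommendation is governed by the $\cut$ quantity (a non-recommended node must be covered with the required probability, which forces $T$ to dominate most of the graph). So the task reduces to designing a distribution over dominating-like sets $T$ whose induced subgraphs have enough isolated vertices, while keeping $\Ex{}{|T|}$ small.

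For the upper bound I would split on $\OPT$. If $\OPT\ge\sqrt n$, a single \emph{deterministic} scheme suffices: take a maximal independent set $M$, which is automatically dominating, and recommend exactly $M$. Independence makes every recommended node isolated in $G[M]$, so the $\induced$-side constraint holds for every realization (hence robustly), and domination makes every other node covered, so the $\cut$-side constraint holds; the cost is $|M|\le n\le\sqrt n\cdot\OPT$. If instead $\OPT<\sqrt n$, the graph is ``dense'' in the sense that $\Delta(G)+1\ge n/\OPT>\sqrt n$, and I would use a \emph{randomized} scheme with two components: a base component that, with probability $1-\delta$, recommends a near-optimal dominating set $D$ of size $\widetilde O(\OPT)$, and a dispersion component that, with probability $\delta$, picks a node $v$ and re-routes the domination of $V\setminus N[v]$ through an independent dominating set of $G[V\setminus N[v]]$, so that in that event $v$ --- and in fact every recommended node --- is isolated in the induced active subgraph. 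The dispersion probability $\delta$ is chosen just large enough that every node that is ever recommended meets the $\induced$-side constraint; since a dispersion realization can have up to $\Theta(n)$ active nodes, the cost is $\widetilde O(\OPT)+\delta\cdot\Theta(n)$, and balancing the two terms yields $O(\sqrt n\cdot\OPT)$.

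The step I expect to be the main obstacle on the upper-bound side is controlling the cost of the dispersion component: naively it costs $\Theta(n)$ times the probability needed to make \emph{all} of $D$'s nodes ``isolating-with-high-enough-probability,'' which can overshoot $\sqrt n\cdot\OPT$. The fix I would pursue is to first decompose the optimal (fractional) solution $\theta^\star$ into $O(\OPT)$ ``local'' pieces, each a ball-like region on $n_i$ vertices with a single dominator, run the dispersion scheme separately inside each piece (so each isolation event only perturbs $O(n_i)$ nodes rather than $n$), and then combine: $\sum_i O(\sqrt{n_i})\le O(\sqrt{\OPT\cdot\sum_i n_i})=O(\sqrt{\OPT\, n})\le O(\sqrt n\cdot\OPT)$ by Cauchy--Schwarz, with the residual $\log$ gap between $\gamma(G)$ and $\OPT$ absorbed either into this rounding or into the bound $\log n\le\sqrt n$.

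For the lower bound I would exhibit a family of graphs on which the obedience constraints provably force cost $\Omega(\sqrt n\cdot\OPT)$ --- a natural candidate being a disjoint union of $\Theta(\sqrt n)$ identical ``hub gadgets,'' each on $\Theta(\sqrt n)$ vertices, consisting of a small set of mutually adjacent hubs together with many pendant nodes, each pendant seeing exactly one hub, so that $\OPT=\Theta(\sqrt n)$ (one hub per gadget). Using the characterization, one argues gadget by gadget: if a hub is recommended to contribute, the $\induced$-side constraint forces it, with the mandated probability, to see no recommended hub-neighbor, but mutual adjacency of the hubs then means that in those realizations all but one hub of the gadget is passive, which in turn (since a pendant's only neighbor is its hub) forces \emph{all} of those hubs' pendants to be recommended; summing these contributions across the $\Theta(\sqrt n)$ gadgets gives an expected $\Omega(n)$ recommended nodes. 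The delicate point here is that, because persuasion is private, an agent conditions only on its own signal and not on the realized permutation, so the constraint is about a posterior that pools over all positions consistent with the signal; I would handle this by symmetrizing the scheme over the gadget automorphism group and then translating ``covered with the mandated probability conditioned on the signal'' into a statement about the unconditional frequencies with which hubs versus pendants are recommended.
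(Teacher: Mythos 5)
Your upper bound follows essentially the paper's route: mix a near-optimal dominating distribution (the paper uses an independent rounding of the optimal fractional solution, which also avoids the $\log n$ integrality-gap loss) with an independent-set ``dispersion'' plan played with probability $\Theta(\OPT/\sqrt{n})$, and verify the cut-versus-induced inequality of \Cref{lemma:persuasive-binary}. One correction, though: the obstacle you anticipate --- that the dispersion probability must be large enough to make \emph{every} recommended node individually well-behaved --- does not arise. Because positions are assigned by a uniformly random permutation, the posterior upon receiving signal $\theta$ pools over all vertices labeled $\theta$, so persuasiveness is the single aggregate condition $\Delta_\theta=0$ of \Cref{lemma:persuasive-general}, not a per-vertex condition; the aggregate slack of the base plan is only $O(\OPT^2)$, which a dispersion probability of $\Theta(\OPT/\sqrt n)$ already absorbs. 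Your proposed fix of decomposing into local pieces is therefore unnecessary, and would create a new problem of its own, since persuasiveness does not compose across pieces in general (\Cref{eq:simpson-combined} can fail even when \Cref{eq:simpson-individual} holds for every piece).

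The lower bound is where the proposal genuinely breaks. On a disjoint union of $\Theta(\sqrt n)$ hub gadgets, each on $m=\Theta(\sqrt n)$ vertices with $O(1)$ mutually adjacent hubs, we have $\OPT=\Theta(\sqrt n)$, and your claim that every persuasive scheme recommends $\Omega(n)$ nodes is false: applying the double-star scheme (Plans A and B with $\eps,p=\Theta(m^{-1/2})$) simultaneously to all gadgets yields a persuasive scheme of cost $\Theta(\sqrt n)\cdot O(\sqrt m)=O(n^{3/4})$, since $\Delta_\theta$ is additive over disjoint components and the global slacks are just $\Theta(\sqrt n)$ times the per-gadget slacks. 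Hence the best ratio this family can witness is $O(n^{3/4})/\Theta(\sqrt n)=O(n^{1/4})$, not $\Omega(\sqrt n)$. The flaw in the gadget-by-gadget argument is that obedience only pins down the \emph{expected} incoming contribution of a recommended hub; this can be met by breaking the ``all hubs active'' pattern with probability as small as $\Theta(1/\sqrt m)$, so each gadget forces cost $\Theta(\sqrt m)$, not $\Theta(m)$. To witness the $\Omega(\sqrt n)$ ratio you need a single gadget spanning the whole graph with $\OPT=O(1)$: the paper uses the double-star on $n$ vertices itself ($\OPT=2$) and shows every persuasive scheme costs $\Omega(\sqrt n)$ by summing the slack identities over all signal values to force $\Ex{\D_\varphi}{x+y}$ close to $2$, applying Markov's inequality to get $x+y\ge 2-\eps$ with constant probability, and then using $\sum_{\theta\ge 1-\eps}\Delta_\theta=0$ to transfer the resulting positive slack at the centers onto the leaves; its $k$-star generalization keeps all $k$ centers in one global clique for exactly this reason.
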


\begin{theorem}
    \label[theorem]{thm:weighted-ternary-opt}
    In any weighted graph, there exists a ternary signaling scheme that is persuasive and has cost $O(n^{3/4}\cdot(\OPTIR)^{1/2})$.
    In addition, for graphs with all edge weights bounded below by $\delta$, the cost is improved to $O((n\cdot \OPTIR)^{2/3} \cdot \delta^{-1/3})$, with the $O(n^{2/3})$ approximation ratio being tight on a family of graphs with $\delta = 1/2$. 
\end{theorem}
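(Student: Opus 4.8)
The plan is to build the ternary scheme by combining the random permutation (which the agents do not observe) with a randomized, neighbor-correlated inflation of an IR-optimal collaboration, and then to use the structural characterization of persuasiveness -- extended from two to three signal values -- to trade the inflation off against the total workload. Fix an IR-optimal profile $\btheta^\star$ of total contribution $\OPTIR$; individual rationality gives $\theta^\star_v\in[0,1]$, and by optimality we may assume $\btheta^\star$ is minimal, so that every node with $\theta^\star_v>0$ either is itself exactly covered or has a neighbor that is. Pick a threshold $\tau$ and let $H=\{v:\theta^\star_v\ge\tau\}$ be the heavy nodes, with $|H|\le\OPTIR/\tau$. The three signal values will be $0$, an intermediate level $m$, and $1$: heavy nodes are the ones that may be recommended $1$, a controlled set of light nodes may be recommended $m$, and the rest $0$.

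\emph{The scheme.} After drawing the permutation, the platform mixes (i) a ``targeted'' recommendation that plays a rounding of $\btheta^\star$ to $\{0,m,1\}$ with (ii) a ``coverage-boosting'' randomization that, with some probability $q$, turns on an extra random set of nodes at level $m$; crucially, the two components are coupled through the permutation so that whether a node's neighbors are boosted is unknown to it. Because each agent sees only its own recommendation and the positions are scrambled, an agent told $m$ cannot tell whether the boost fired or which of the many level-$m$ positions it occupies, and an agent told $1$ learns only that it sits at one of the at most $\OPTIR/\tau$ heavy positions without learning whether its neighbors' (possibly boosted) contributions already cover it. Feasibility of every realization is maintained by enlarging the level-$m$ set slightly in the targeted component, so that a node told $0$ is always covered -- by $\btheta^\star$ if it is covered there, and otherwise by the boost.

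\emph{Persuasiveness, and the main obstacle.} We invoke the structural lemma reducing persuasiveness of a low-arity scheme to conditions on cuts and induced subgraphs, now with three signal classes. A node told $0$ needs posterior coverage $\ge 1$ in every supported realization -- which holds because such a node is zero in the targeted component exactly when $\btheta^\star$ already covers it, and the boost is rare. A node told $m$ must not want to reduce below $m$: conditioned on $m$ there must be a realization in which its coverage from others equals exactly $1-m$, and the boost component is designed to create precisely such barely-covered events with positive probability. A node told $1$ must fear under-coverage: conditioned on $1$, there must be positive probability its coverage from others falls below $1$ -- which is why we neither grant the heavy nodes slack nor inflate them, and why enough residual uncertainty (over which heavy position it occupies and over which neighbors are boosted) must be preserved. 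Reconciling these three constraints at once is the main obstacle: the $0$-constraint wants the boost rare and the level-$m$ set large, the $m$-constraint wants the boost frequent, and the $1$-constraint pins down $\tau$ and forbids inflating the heavy nodes -- and it is exactly this three-way tension that a third signal resolves and binary schemes cannot (in a weighted graph a binary scheme must additionally round the fractional values of $\btheta^\star$ to $\{0,1\}$, wasting $\Omega(1)$ of workload on each rounded node).

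\emph{Cost, the bounded-weight refinement, and tightness.} The expected workload is the boost's contribution, $\Theta(q\,n\,pm)$ where $p$ is the boosted fraction, plus the targeted component's $O(\OPTIR+\OPTIR/\tau)$ together with the mass placed on the level-$m$ set. Forcing the boost to provide constant coverage lower-bounds $pm$ in terms of a node's incident weight, so in the worst case the boost is costly; balancing $q\,n\,pm$ against an $\OPTIR/q$-type term and optimizing the remaining parameters yields the stated $O(n^{3/4}(\OPTIR)^{1/2})$. When all edge weights are at least $\delta$, a node with only about $1/\delta$ boosted neighbors is already covered, so the boost can be far sparser; re-running the same optimization with this cheaper boost improves the bound to $O((n\cdot\OPTIR)^{2/3}\delta^{-1/3})$. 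Finally, for the matching $\Omega(n^{2/3})$ lower bound on the approximation ratio over \emph{all} persuasive schemes, we construct a family of graphs with $\delta=1/2$ and $\OPTIR=O(1)$ -- a clustered generalization of the double star in which a few hub tasks can cover everyone cheaply, yet any persuasive scheme, in order to keep those hubs obedient while respecting individual rationality everywhere, must spread $\Omega(n^{2/3})$ units of contribution across the periphery. The lower bound proceeds by tracking the aggregate posterior coverage deficit that the obedience constraints force onto the hubs' neighborhoods and showing it cannot be concentrated on few nodes, in the spirit of the lower bound behind \Cref{thm:unit-ub-opt} but with the weight-$1/2$ edges amplifying the redundancy each covered node requires.
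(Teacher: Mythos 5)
Your high-level architecture is the right one -- a cheap plan built from an IR-optimal profile $\btheta^\star$, a second plan that puts the high signal on a large almost-independent set, and a third intermediate signal level whose job is to repair the slack at $0$ -- and your hard instance for the lower bound (two hubs plus a clustered periphery with weight-$1/2$ edges) is essentially the paper's. But there are genuine gaps. First, your signal values are $\{0,m,1\}$, and signal value exactly $1$ is unusable for the role you assign it. The stability half of persuasiveness is an \emph{equality in expectation}: conditioned on receiving signal $\theta>0$, the expected contribution entering from neighbors must equal exactly $1-\theta$ (equivalently $\Delta_\theta=0$). For $\theta=1$ this forces the expected neighbor contribution to be $0$, i.e.\ every neighbor of a node told $1$ must be told $0$ almost surely -- so the heavy nodes cannot coexist with your boosted level-$m$ neighbors, and they cannot cover anyone while remaining stable themselves unless you back off to $1-\eps$ and tune $\eps$ so that the equality holds. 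Relatedly, you state the constraints as ``there must be a realization in which coverage equals exactly $1-m$'' and ``positive probability coverage falls below $1$''; neither is the right condition, and a scheme satisfying only these would not be persuasive. This is not cosmetic: the entire parameter balancing in the paper's proof consists of solving the three equations $\Delta_{1-\eps}=0$, $\Delta_\alpha=0$, $\Delta_0\ge0$ for $\eps$, $\alpha$, and the mixing probabilities, and it is exactly this computation (absent from your write-up) that produces the exponents $n^{3/4}(\OPTIR)^{1/2}$ and $(n\cdot\OPTIR)^{2/3}\delta^{-1/3}$. Your ``balancing $qnpm$ against an $\OPTIR/q$-type term'' does not substitute for it, and your threshold decomposition $H=\{v:\theta^\star_v\ge\tau\}$ is unnecessary -- independent rounding of $\btheta^\star$ (include $v$ with probability $\theta^\star_v$) gives a set of expected size exactly $\OPTIR$ with the right cut and induced-weight bounds, which is what the paper uses. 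Also, coupling the scheme to the permutation buys nothing: \Cref{lem:simplify-signal} shows identity-independent schemes are without loss of generality.

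The second gap is the lower bound. Proving that \emph{every} persuasive scheme (not just binary or ternary ones) pays $\Omega(n^{2/3})$ on the hard instance cannot be done by ``tracking the aggregate posterior coverage deficit'' at the level of generality you describe. The family of persuasive schemes is a polytope cut out by the linear constraints $\Delta_0\ge0$ and $\Delta_\theta=0$ for $\theta>0$, and the paper's proof constructs an explicit feasible dual solution: after a dimensionality reduction to one random clique plus the two centers, it takes the combination $\beta_1\sum_\theta\Delta_\theta-\beta_2\sum_\theta\frac{\theta}{1-\theta}\Delta_\theta$ with $\beta_1=\Theta(1/k^2)$, $\beta_2=\Theta(1/k)$, shows separately that signal value $1$ can be excised at constant cost, and verifies dual feasibility through a three-case analysis over $(x,y,\Sum)$. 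Some argument of this duality type (or an equally quantitative substitute) is needed; without it the claim that the deficit ``cannot be concentrated on few nodes'' is an assertion, not a proof.
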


We introduce a different family of signaling schemes that achieve our second goal of strictly improving $\OPTIC$. This approach uses only two signals in unit-weight graphs and at most $n+1$ signals in general weighted graphs. We summarize the results in \cref{thm:unit-improve-ic,thm:weighted-improve-ic}. 
Despite the substantial increase in the signal space required for weighted graphs, we demonstrate the infeasibility of achieving this goal through binary or ternary signaling schemes, and leave the identification of the minimum number of signals needed as an open problem discussed in \Cref{sec:discussion}.

\begin{theorem}
    \label[theorem]{thm:unit-improve-ic}
    In any unit-weight graph, whenever $\OPT<\OPTIC$, there exists a binary signaling scheme that is persuasive and has a cost strictly lower than $\OPTIC$.
\end{theorem}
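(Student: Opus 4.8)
The plan is to build the scheme as a convex combination of two $\{0,1\}$-valued recommendation profiles, so that it is automatically binary: with probability $1-\eps$ the platform recommends the indicator of a small dominating set $D$ of $G$, and with the remaining probability $\eps>0$ it recommends the indicator of a maximal independent set $I$ of $G$ (translated to the agents through the uniformly random node assignment, as always). Since $I$ is an independent dominating set, both $\mathbf 1_D$ and $\mathbf 1_I$ are feasible. The expected total workload is $(1-\eps)|D|+\eps|I|$, so if we can take $|D|<\OPTIC$ and pick $\eps\in(0,1)$ small enough, the cost falls strictly below $\OPTIC$; it then only remains to verify persuasiveness, which I claim holds for \emph{every} $\eps\in(0,1)$.

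For persuasiveness, consider an agent receiving the recommendation $1$. In every realization it is covered (it contributes a full unit), so it has no profitable upward deviation; and it has no profitable downward deviation because, conditioned on this recommendation, the posterior puts positive probability on the $\mathbf 1_I$-branch, where the agent occupies a node of the independent set $I$ and hence receives \emph{zero} coverage from its neighborhood --- any shirking there leaves its own task infeasible, which is catastrophic. This is exactly the ``right level of correlated uncertainty'' described in the introduction: most of the time the agent is happily over-covered inside $D$, but it cannot rule out the scenario in which it is an isolated contributor that everything depends on. An agent receiving the recommendation $0$ is covered in both branches (both $D$ and $I$ dominate $G$), so it has no profitable deviation in either direction. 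Thus the scheme is persuasive, and this argument is exactly where the unit-weight hypothesis is used --- it is what lets both profiles live on the two-level palette $\{0,1\}$ while remaining feasible.

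The step I expect to be the real obstacle is producing a dominating set $D$ with $|D|<\OPTIC$ from the sole hypothesis $\OPT<\OPTIC$; more generally, reconciling ``binary'' with ``the backup must deter every over-contributing agent.'' The double star of \Cref{fig:double-star} is the archetypal easy case: the optimal profile (both centers) over-covers the centers, and the backup (one side's leaves together with the opposite center) is an independent dominating set that scares any center into staying put, with both profiles already $\{0,1\}$-valued. In a general unit-weight graph with $\OPT<\OPTIC$ one must argue that an analogously cheap $\{0,1\}$ profile exists --- i.e.\ that the gap between $\OPT$ and $\OPTIC$ is not entirely ``used up'' by passing from the fractional optimum to an integral dominating set --- and, if it is, fall back to choosing the two-level palette adaptively from the structure of the optimal profile and building the backup from dominating sets that are independent on the induced subgraph carrying the over-contribution, so that the fear is delivered uniformly at a single active level. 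Quantifying this trade-off, via the cut/induced-subgraph characterization of persuasive binary schemes, is the crux.
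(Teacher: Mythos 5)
There is a genuine gap, on two independent fronts.

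First, your persuasiveness argument misreads the solution concept. Stability here is a constraint on \emph{expected} coverage under the posterior: an agent recommended $\theta>0$ must have $\Ex{(\bt,\bs_{-t_i})\sim\mu_i}{u_{t_i}(x,\bs_{-t_i})}\ge 1$ fail for every $x<\theta$, equivalently the expected-feasibility constraint must hold with \emph{equality} (\Cref{lemma:persuasive-general}: $\Delta_\theta=0$ for $\theta>0$). There is no ``catastrophic'' aversion to a single bad realization; a small posterior probability of being an isolated contributor does not deter shirking if the agent is over-covered on average. Concretely, if $D$ has any internal edges then sending the literal value $1$ to $D$ yields $\Delta_1=\Ex{}{\induced(S)}-\Ex{}{|S|}>0$, so the recipients strictly prefer to reduce their effort, for \emph{every} $\eps\in(0,1)$. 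To fix this one must calibrate the positive signal to $\alpha=\Ex{}{|S|}/\Ex{}{\induced(S)}<1$ and then verify the cut-versus-induced inequality of \Cref{lemma:persuasive-binary}; your claim that persuasiveness is automatic is the step that fails.

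Second, the cost side does not close, and the obstacle you flag as ``the crux'' is exactly where the paper's two missing ingredients live. (a) An integral dominating set $D$ with $|D|<\OPTIC$ need not exist: the hypothesis only bounds the fractional optimum $\OPT$, and the integrality gap is $\Theta(\log n)$, so $|D|$ can exceed $\OPTIC$. The paper sidesteps this by using an \emph{independent rounding} of the fractional optimum $\btheta^\star$, whose expected size is exactly $\OPT$. (b) More fundamentally, the mass you must place on the independent-set branch to repair persuasiveness of the $D$-branch scales like $\sqrt{n}\cdot\OPT/|\IS|$ (this is precisely the calculation in \Cref{thm:unit-upper-general}), which drives the total cost to $\Theta(\sqrt{n}\cdot\OPT)$ --- generically far above $\OPTIC$ when, say, $\OPTIC=1.01\,\OPT$. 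The paper instead inverts the roles: it puts mass $1-\eps$ on a binary scheme of cost \emph{exactly} $\OPTIC$, obtained by decomposing the optimal stable solution as the marginal of a distribution over independent sets (\Cref{lemma:IC-solution-decomp}, via competing exponential clocks), and shows via the wastefulness/LP-duality argument (\Cref{lem:wasteful-unit}) that this scheme satisfies the persuasiveness inequality with strict margin whenever $\OPT<\OPTIC$; a small admixture $\eps$ of the $\OPT$-rounding then preserves persuasiveness while dropping the cost to $\OPTIC-\eps(\OPTIC-\OPT)$. Neither the decomposition of the stable solution nor the margin argument appears in your proposal, and without them the approach does not yield a strict improvement.
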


\begin{theorem}
    \label[theorem]{thm:weighted-improve-ic}
    In any general weighted graph, whenever $\OPTIR<\OPTIC$, there exists a signaling scheme with at most $n+1$ signals that is persuasive and has a cost strictly lower than $\OPTIC$.
\end{theorem}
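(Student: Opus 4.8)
The plan is to exploit the strict gap $\OPTIR < \OPTIC$ with a scheme that places almost all of its probability on the optimal individually-rational profile and only a sliver on a few ``incentive-providing'' profiles, so that any single agent, after seeing their recommendation, keeps positive posterior weight on a realization in which that recommended contribution is genuinely necessary. Let $\btheta^{\IR}$ be an optimal feasible, individually rational (IR) profile, $\Cost(\btheta^{\IR}) = \OPTIR$, and let $\btheta^{\IC}$ be an optimal feasible, IR, and stable profile, $\Cost(\btheta^{\IC}) = \OPTIC$. Since $\btheta^{\IC}$ is a Nash equilibrium, every node $v$ with $\btheta^{\IC}_v > 0$ has its feasibility constraint tight; conversely, because $\OPTIR < \OPTIC$, the profile $\btheta^{\IR}$ is not stable and hence has at least one node that is positive yet slack --- exactly an agent who would shirk under full disclosure.

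\textbf{The scheme.} Fix a small $\eps \in (0,1)$, to be chosen last. On observing the realized permutation $\sigma$, the platform draws a mode: with probability $1-\eps$ it uses the base profile $\btheta^{(0)} := \btheta^{\IR}$, and with total probability $\eps$ it uses one of finitely many auxiliary feasible-and-IR profiles $\btheta^{(1)},\dots,\btheta^{(m)}$; it then recommends to each agent the coordinate of the chosen profile at the agent's true node. Since the feasible-and-IR profiles form a polytope containing every mode, every realized recommendation profile is feasible and IR, so an agent who follows is always covered and never asked to exceed its stand-alone workload --- ruling out over-contribution and exit and leaving shirking as the only candidate deviation. The expected cost is $(1-\eps)\OPTIR + \eps\cdot\Ex{k}{\Cost(\btheta^{(k)})}$; as each $\btheta^{(k)}$ is feasible and IR its cost is finite, so for $\eps$ small relative to $\OPTIC - \OPTIR$ the total cost is strictly below $\OPTIC$. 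The construction will be arranged so that the set of distinct recommendations ever issued has size at most $n+1$.

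\textbf{Persuasiveness.} By the structural characterization of persuasive schemes, shirking from a received recommendation $t > 0$ is unprofitable precisely when, conditioned on receiving $t$, there is positive posterior probability that the weighted contribution of the agent's neighbors is at most $1-t$, i.e.\ that the agent's own feasibility constraint is tight (for $t = 0$ there is nothing to check). Since the permutation is uniform and independent of the mode, the posterior support over the (mode, node) pairs consistent with a recommendation $t$ is exactly $\{(k,v) : \btheta^{(k)}_v = t\}$, so persuasiveness reduces to the combinatorial requirement that for every value $t > 0$ the scheme can output, some such pair has $v$ tight under $\btheta^{(k)}$. Values that already occur at a tight node of $\btheta^{(0)}$ are automatically fine; for each remaining (``problematic'') value --- one $\btheta^{\IR}$ sends only at slack nodes --- we attach an auxiliary profile containing a node that attains that value with a tight constraint, and since $\eps > 0$ this ``tight-witness'' event has positive posterior mass. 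Hence every recommendation is persuasive.

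\textbf{Main obstacle.} The difficulty is the auxiliary construction under two simultaneous constraints. Given a slack node $v$ of $\btheta^{\IR}$ with value $t$, we need a feasible, IR profile in which some node attains exactly $t$ with a tight constraint: the natural approach --- lower the contributions inside $N(v)$ until $v$'s constraint binds, then repair the feasibility violations this propagates to $N(v)$ and beyond --- must be shown to terminate inside the feasible-and-IR polytope, and weighted graphs with arbitrarily small edge weights make that repair step delicate (keeping the auxiliaries inexpensive, so that larger $\eps$ is tolerable, is a further desideratum). At the same time, all of the profiles together may use at most $n+1$ distinct recommendation values; this bound is the quantitative crux and is morally why binary and ternary schemes provably fail, since two or three values leave no room to furnish a tight witness for every slack value while keeping the base profile cheap. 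I would finish by handling the degenerate cases (base profile already persuasive, isolated vertices, value coincidences among the modes) with the base profile alone, and then choose $\eps$ last so as to meet the persuasiveness and cost requirements simultaneously --- possible exactly because $\OPTIR < \OPTIC$.
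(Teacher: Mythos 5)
There is a genuine gap, and it sits at the heart of your persuasiveness argument. In this paper's model (\Cref{lemma:persuasive-general} and the definition in \Cref{sec:signaling-schemes}), an agent who receives a positive signal $\theta$ shirks unless the \emph{posterior expectation} of their neighbors' contribution equals $1-\theta$ \emph{exactly}: the stability condition is $\Delta_\theta = 0$, a linear equation on the scheme, not a support condition. Your criterion --- that shirking is deterred as long as there is \emph{positive posterior probability} of the agent's constraint being tight --- is neither necessary nor sufficient here; it describes a worst-case/risk-averse agent, not one who minimizes effort subject to an expected-feasibility constraint. This sinks the construction: since $\OPTIR < \OPTIC$, the profile $\btheta^{\IR}$ has a node $v$ with $\theta^{\IR}_v>0$ whose constraint is slack, so conditioned on receiving that value the expected neighbor contribution strictly exceeds $1-\theta^{\IR}_v$ when the base mode has weight $1-\eps$; attaching an $\eps$-weight ``tight witness'' perturbs that expectation by $O(\eps)$ and cannot drive the slack to zero. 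Putting almost all mass on the (non-stable) IR optimum is exactly the allocation that cannot be made persuasive. Separately, you leave the auxiliary-profile construction and the $n+1$-value bound as an acknowledged open obstacle, so the proposal is incomplete even on its own terms.

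The paper's proof inverts your weighting. It places mass $1-\eps$ on the \emph{degenerate} scheme that deterministically sends the optimal stable solution $\btheta$ (up to $n$ distinct values; every positive value has slack exactly $0$ by stability, and by the wastefulness bound of \Cref{lem:wasteful-weighted} the slack at signal $0$ is strictly positive, at least $(\OPTIC-\OPT)/\OPT$, precisely because $\OPTIC>\OPT$). It places mass $\eps$ on a binary scheme obtained by independently rounding the optimal IR solution and labeling the sampled set with a value $\alpha = \OPTIR/\Ex{S}{\induced(S)}$ tuned so that $\Delta_\alpha=0$; this component may have negative slack at $0$, but \Cref{lemma:IR-sol-rounding} bounds that deficit, and $\eps$ is then chosen small enough that the mixture keeps $\Delta_0\ge 0$ while every positive signal has slack exactly $0$. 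The cost is $(1-\eps)\OPTIC + \eps\alpha\OPTIR \le \OPTIC - \eps(\OPTIC-\OPTIR) < \OPTIC$, and the signal space is $\{\theta_v : v\in V\}\cup\{\alpha\}$, of size at most $n+1$. If you want to salvage your outline, the fix is to make the stable solution the heavy component and treat persuasiveness as the system of exact slack equations, not as a matter of posterior support.
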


\subsection{Related Works}
\label{sec:related}

\paragraph{Collaborative federated learning.}
Our work connects to the broader topic of collaborative and federated learning among strategic agents, where incentives and fairness have been extensively explored~\citep{donahueA2021model,donahueB2021optimality,donahueC2021models,blumOneOneAll2021}. Most related to us is the work of \citet{blumOneOneAll2021} {that introduces a formalism for collaborative learning among network of agents and highlights} the inefficiency in collaboration among strategic agents as captured by a high PoS. Our work introduces a novel approach to reducing the PoS under a more realistic scenario where agents face uncertainty regarding the types of their tasks.

Recent studies have adopted a \emph{mechanism design} approach in federated learning, targeting objectives such as incentivizing data contribution and preventing free-riding~\citep{huang2023evaluating,karimireddy2022mechanisms,chen2023mechanism,xu2021gradient}, encouraging participation~\citep{hu2023federated,xu2023fair,cohen2023incentivized,han2023effect},
and promoting fairness~\citep{lin2023fair,lyu2020collaborative}. See recent surveys~\citep{zhan2021survey,tu2022incentive} for a more comprehensive overview.
Additionally, research on federated bandits \citep{wei2023incentivized} explores strategies to motivate agents to share their data. However, these papers primarily rely on either monetary incentives or penalization through reduced model accuracy and usually operate under the condition that agents know how their data or tasks relate to those of others. 
In contrast, we assume that the platform possesses more information about the agents' types but without a direct control over the distribution of data, gradients, or models.

In addition, there is research focusing on incentives and mechanism design in \emph{data markets} where consumers purchase data from the market~\citep{agarwal2019marketplace,agarwal2020towards,jia2019towards}, but they do not focus on incentivizing collaboration among participants.

Another line of literature focuses on the communication cost~\citep{blum2021communication} or sample efficiency of collaborative learning without incentives~\citep{blum2017collaborative,nguyen2018improved,chen2018tight,haghtalab2022demand,awasthi2023open,zhang2023optimal,peng2023sample}.
In contrast, we achieve \emph{both efficiency and incentive-awareness} through the lens of information design.

\paragraph{Information design.}
Bayesian persuasion, a seminal work of \citet{kamenica2011bayesian} rooted in the context of incomplete information games~\citep{aumann1995repeated}, is a canonical model for information design that studies how an informed sender can strategically share information to influence the actions of a single uninformed receiver.
This framework has been expanded to multi-receiver settings~\citep{bergemann2016bayes} and explored through algorithmic perspectives~\citep{dughmi2016algorithmic,dughmi2017algorithmic}, 
supporting its applications in diverse problems. These applications include but are not limited to routing games~\citep{das2017reducing,wu2019information}, coordination games~\citep{wu2019information}, voting~\citep{schnakenberg2015expert,alonso2016persuading,bardhi2018modes,wang2013bayesian,arieli2019private}, and incentivizing exploration~\citep{mansour2020bayesian,mansour2022bayesian,kremer2014implementing}.
We refer the readers to survey papers~\citep{candogan2020information,kamenica2019bayesian} for a comprehensive overview.

The problem we study involves designing signaling schemes in a specific multi-receiver setting, where the literature offers two general approaches: characterizing and selecting \emph{Bayes Correlated Equilibria (BCE)} based on sender utility~\citep{bergemann2013robust,bergemann2016bayes,bergemann2019information,taneva2019information}, and belief-based optimization~\citep{mathevet2020information}.
However, applying these approaches directly in our context is computationally intractable
due to the exponentially large state space.
Indeed, 
finding optimal signal schemes is proved hard even in some simple multi-receiver settings ~\citep{bhaskar2016hardness,rubinstein2017honest,dughmi2017algorithmic}, emphasizing the importance of leveraging our collaborative network's structural properties.

Our work also contributes to the recent line of work on \emph{information design in networks and platforms}.
In the context of social networks, \citet{candogan2017optimal,candogan2019persuasion,egorov2019persuasion} design optimal signaling schemes to maximize overall engagement, while \citet{arieli2019private} studies the setting when the sender's utility is more general and non-additive.
\citet{mathevet2022organized} studies vertical and horizontal information transmission, identifying scenarios where \emph{single-meeting schemes} --- signals shared with and observed by a specific subset of agents --- are optimal.
Our analysis of signaling schemes in the collaboration network also focuses on optimality but diverges in its definition: we assess optimality not by the ability to implement all BCE, as previous studies do, but by achieving an optimal asymptotic approximation ratio compared to the socially optimal solution when agents are non-strategic. To the best of our knowledge, we are the first work that focuses on the perspective of multi-receiver Bayesian persuasion to enhance and encourage collaboration between strategic agents.

In the context of online platforms with strategic agents, \citet{papanastasiou2018crowdsourcing} studies information provision policy for sequentially arriving consumers who choose among alternative products or services,
\citet{bergemann2022calibrated,chen2023bayesian} focuses on maximizing revenues in click-through auctions when the platform has exclusive knowledge about the true click-through rates,
among many other studies in this field.
Our work complements the literature by studying the efficiency of collaboration platforms.

    \subsection{Structure of the paper}
We formally introduce our model in \Cref{sec:model}. In \Cref{sec:technical-overview}, we discuss structural results that characterize the persuasiveness of signaling schemes, and provide a technical overview of our main results. In \Cref{sec:discussion}, we discuss a few open problems and future research directions. In \Cref{sec:unit-weight-upper}, we focus on unit-weight graphs and prove the upper bounds in \Cref{thm:unit-ub-opt,thm:unit-improve-ic}. We prove the lower bound side of \Cref{thm:unit-ub-opt} in \Cref{sec:unit-lower}. For weighted graphs, we provide the upper bounds (\Cref{thm:weighted-ternary-opt,thm:weighted-improve-ic}) in \Cref{sec:weighted-upper}, and the lower bound (second half of \Cref{thm:weighted-ternary-opt}) in \Cref{sec:weighted-lower}.
    
    \section{Model}
\label{sec:model}

\subsection{Collaboration system}
\paragraph{Task-based contribution.}
Let $V$ be the set of task types in the collaboration system, with $|V|=n$. Let $\theta_v\in\R_{\ge0}$ be the effort contributed directly towards solving task $v$, and let $\btheta=(\theta_v)_{v\in V}$ be its vector form. We assume that the total contribution that each type receives is a linear combination of the direct contributions that are put into all tasks in the system, with coefficients {$W_{u,v}$ that quantifies}
the extent to which efforts dedicated to a type-$u$ task help complete a type-$v$ task. We assume that $W_{u,v}=W_{v,u}\in[0,1]$ and $W_{u,u}=1$ for all pairs $(u,v)\in V^2$, i.e., $W$ is an $n\times n$ symmetric matrix with coefficients $W_{u,v}$ and has a diagonal of all ones. For $v\in V$, we use $u_v$ to denote the total contribution entering a type-$v$ task, and let $\bu=(u_v)_{v\in V}$ be its vector form. We then have
\begin{align*}
    u_v(\btheta)=\sum_{v'\in V}W_{v,v'}\theta_{v'},
    \quad\text{and}\quad
    \bu(\btheta)=W \btheta.
\end{align*}
\paragraph{Graph representation.}
The task structure can be equivalently represented as a weighted undirected graph $G=(V, E, w)$, where each vertex is a task type in $V$. Each edge $\{v_1,v_2\}\in E$ represents that efforts towards task $v_2$ positively impact task $v_1$. In other words, $E = \{\{v_1, v_2\}: v_1, v_2 \in V, v_1 \ne v_2, W_{v_1,v_2} > 0\}$. The weight function $w: E \to [0, 1]$ maps each edge $\{v_1, v_2\} \in E$ to $W_{v_1,v_2}$ and reflects the coefficients between tasks.
We also use $N(v)$ to denote the open neighborhood of vertex $v\in V$.

\paragraph{Strategic agents.}
Let there be $n$ agents in the collaboration system. Each agent $i$ is associated with a unique task type $t_i\in V$ in the network. Task types for all agents form a type profile $\bt=(t_1,\ldots,t_n)$, which is a permutation of the types in $V$, i.e., $\bt\in \sym(V)$ with $\sym(V)$ being the symmetric group on $V$. 
Additionally, each agent $i$ can take an action $a_i\in\R_{\ge0}$ {that is the effort they} put into solving their own task. The action profile $\ba$ and task type profile $\bt\in \sym(V)$ naturally induce a type-based direct contribution profile $\btheta(\ba;\bt)=(\theta_v)_{v\in V}$, where $a_i=\theta_{t_i}$, or more succinctly $\btheta(\ba;\bt)=\Pi^{-1}\ba$. In the above equation, $\Pi=(\pi_{i,j})$ is the $n\times n$ permutation matrix corresponding to permutation $\bt$, where $\pi_{i,j}=1$ if $j=t_i$ and $0$ otherwise. We abbreviate $\btheta(\ba; \bt)$ to $\btheta$ when it is clear from context.

For each agent $i$, {the quality of a collaborative solution}, denoted with $\util_i$, is a function of both $\ba$ and $\bt$ and equals the total contribution entering the task of type $t_i$. 
We also use $\util=(\util_1,\cdots,\util_n)$ to denote the agent-based utility vector.
Formally, we have
\[
    \util_i(\ba; \bt)=u_{t_i}(\btheta(\ba; \bt)),
    \quad\text{and}\quad
    \util(\ba; \bt)=\Pi \cdot \bu(\btheta(\ba; \bt)).
\]
Strategic agents have three main concerns about the collaboration outcome: \emph{feasibility, individual rationality (IR), and stability}. 
Given a type profile $\bt$, an action profile $\ba$, and the type-based action profile $\btheta$ they induce, we say that a solution is {feasible} if all agents secure a {quality} goal of $1$, i.e., $\util(\ba;\bt)\ge\vecone$ or equivalently $\bu(\btheta)\ge\vecone$ coordinate-wise --- 
For example, when specializing our model to the context of collaborative learning, $\util_i$ represents the accuracy of the final model on agent $i$'s learning task, and feasibility requires the model to achieve sufficient accuracy across all agents' tasks.
IR is satisfied if participating in the collaboration system is more beneficial for each agent than completing the task independently. Note that a solution satisfies IR iff $\ba\le\vecone$ (or equivalently $\btheta\le\vecone$) coordinate-wise.
This is because the matrix $W$ has an all-one diagonal, implying that completing a task independently would require a unit of effort.
In addition, a feasible solution is stable if, given the contribution of others, no agent can unilaterally reduce their contribution without compromising their own feasibility. Formally, this requires
\[
\forall i,\ a_i=\min\left\{x\ge0\ \left|\ \util_i(x,\ba_{-i};\bt)\ge1\right.\right\}
\iff
\forall v,\ \theta_v=\min\left\{x\ge0\ \left|\ u_v(x,\btheta_{-v})\ge1\right.\right\}.
\]

\subsection{Benchmarks and the price of stability}\label{sec:benchmarks-PoS}
Our benchmarks are the {minima} of the total workload (i.e., $\|\ba\|_1=\|\btheta\|_1$) {among} solutions that satisfy different combinations of the feasibility, IR, and {stability} requirements.
We introduce three benchmarks: $\OPT$ as the optimal workload in feasible collaborations, and $\OPTIR$, $\OPTIC$ as the optimal workload subject to additional IR and stability constraints. Formally, they are solutions to the following programs:
\begin{align*}
    \OPT=\min_{\btheta}\|\btheta\|_1 \quad 
    \suchthat \ &
    \bu(\btheta)\ge\vecone,\ \btheta\ge\veczero;\\
    \OPTIR=\min_{\btheta}\|\btheta\|_1 \quad \suchthat \ &
    \bu(\btheta)\ge\vecone,\ \btheta\ge\veczero,\ \btheta\le\vecone;\\
    \OPTIC=\min_{\btheta}\|\btheta\|_1 \quad \suchthat \ &
    \bu(\btheta)\ge\vecone,\ \btheta\ge\veczero,\ \theta_v=\min\left\{x\ge0\ \left|\ u_v(x,\btheta_{-v})\ge1\right.\right\}\ (\forall v\in V).
    \end{align*}
The \emph{price of stability (PoS)} is formally defined as the gap between the socially optimal solution and the optimal stable solution, i.e.,
$
    \PoS\triangleq{\OPTIC}/{\OPT}.$
\subsection{Information design and signaling schemes}\label{sec:signaling-schemes}
\paragraph{Prior distribution and information asymmetry.}
We now introduce the information design perspective of our model.
Let $\prior$ be the prior distribution from which the type profile $\bt$ is drawn. We model $\prior$ as the uniform distribution over the symmetric group, i.e., $\prior=\unif(\sym(V))$. As a result, for each fixed agent $i$, the marginal distribution on its type $t_i$ is uniform over $V$. 
We assume that the graph structure $G$ and prior distribution $\prior$ are common knowledge, but the realization of the true types $\bt$ is exclusive knowledge held by the platform. The platform uses the private Bayesian persuasion protocol to strategically communicate this exclusive knowledge to the agents.

\paragraph{Private Bayesian persuasion.}
A \emph{signaling scheme}, denoted with $\varphi:\sym(V)\to\Delta(\signalspace^V)$, is a mapping from the task type profile $\bt\in\sym(V)$ to a correlated distribution over private signals $\bs=(s_v)_{v\in V}\in\signalspace^V$, where $\signalspace\subset[0,1]$ is {a finite} space of signals, and each $s_v$ is the signal value sent to the agent with type $v$.\footnote{In the original definition, the signals should be sent to each \emph{agent} rather than each \emph{type}, but the two definitions are equivalent as the platform can reassign the signals to agents according to $\Pi \cdot \bs$.}
The interaction protocol between the platform and the agents is as follows: initially, both the platform and the agents start with a common prior $\prior$, and the platform credibly commits to a signaling scheme $\varphi$, which also becomes common knowledge upon commitment. 
Then, the true types $\bt\sim\prior$ are realized from the prior distribution and observed by the platform. With the knowledge of $\bt$, the platform generates a set of signals $\bs=(s_v)_{v\in V}\sim\varphi(\bt)$, and sends $s_{t_i}$ privately to agent $i$ for all $i\in[n]$, which represents the recommended action to be taken. When an agent $i$ receives the private signal $s_{t_i}$, they form a posterior belief about the type profile $\bt$ and the signals sent to other types $\bs_{-t_i}$, which we denote with $\mu_i$:
\begin{align}
    \mu_i(\bt,\bs_{-t_i}\mid s_{t_i})=\frac{\prior(\bt)\cdot\pr{\varphi(\bt)}{s_{t_i},\bs_{-t_i}}}{
    \sum_{\bt',\bs_{-t_i}'}\prior(\bt')\cdot\pr{\varphi(\bt')}{s_{t_i},\bs_{-t_i}'}
    }.
    \label{eq:posterior}
\end{align}

\paragraph{Persuasiveness.} We say that a signaling scheme is \emph{persuasive} if no agent has the incentive to {unilaterally} deviate from the signal, assuming that all the other agents follow their signals. 
Formally, this requires that for every agent $i$, the following two conditions hold:
\begin{itemize}
    \item (Feasibility) For every $\theta \in \signalspace$, conditioning on receiving a private signal $s_{t_i}=\theta$, taking action $\theta$ is feasible in expectation:
    \[
        \Ex{(\bt,\bs_{-t_i})\sim\mu_i(\cdot\mid \bs_{t_i}=\theta)}{u_{t_i}(\theta,\bs_{-t_i})}\ge1.
    \]
    \item (Stability) For every $\theta \in \signalspace$, conditioning on receiving a private signal of value $\theta$, the agent cannot contribute strictly {less} than $\theta$ {effort} while still {meeting the aforementioned feasibility condition:}
    \[
    \theta=\min\left\{x\ge0\ \left|\ 
    \Ex{(\bt,\bs_{-t_i})\sim\mu_i(\cdot\mid \bs_{t_i}=\theta)}{u_{t_i}(x,\bs_{-t_i})}\ge1
    \right.\right\}.
    \]
    Equivalently, the feasibility condition must be tight whenever $\theta > 0$. 
\end{itemize}
The \emph{cost} of a persuasive signaling scheme $\varphi$ is the expected total contribution assuming that all agents follow the signals. Formally,
\[
    \Cost(\varphi)\triangleq\Ex{\bt\sim\prior,\bs\sim\varphi(\bt)}{\|\bs\|_1}.
\]

\paragraph{Identity-independent signaling schemes.}
Note that in the above formulation, since $\varphi$ is a function of permutations of $V$, the signal sent to each task type could depend on the identity of the agent performing the task. Nevertheless, we claim that it is sufficient to consider signaling schemes that are \emph{identity-independent}: one such that the signal distribution is independent of the realized permutation, i.e., there is a distribution $\D_\varphi\in\Delta(\signalspace^V)$ such that $\ \varphi(\bt)=\D_\varphi$ holds for all $\bt \in \sym(V)$. We prove the following lemma in \Cref{app:identity-independent}.
\begin{lemma}[Identity-independent signaling schemes]
    \label[lemma]{lem:simplify-signal}
    For any persuasive signaling scheme $\varphi$, there exists an identity-independent signaling scheme $\tildephi$ that is persuasive and has $\Cost(\tildephi)=\Cost(\varphi)$.
\end{lemma}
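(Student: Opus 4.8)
The plan is to exploit the symmetry of the prior $\prior = \unif(\sym(V))$ by ``averaging'' an arbitrary persuasive scheme $\varphi$ over all relabelings of the agents. Concretely, given $\varphi$, I would define a new scheme $\tildephi$ as follows: sample a uniformly random permutation $\rho \in \sym(V)$ (internally, as part of the platform's randomness), and upon observing the true type profile $\bt$, run $\varphi$ on the ``shuffled'' profile $\rho \cdot \bt$ (or an appropriate composition), then map the resulting signal vector back through $\rho^{-1}$ before delivering $s_v$ to the type-$v$ agent. Because $\rho$ is uniform and independent of $\bt$, and $\bt$ itself is uniform, the distribution of the signal vector produced by $\tildephi$ does not depend on $\bt$ at all; this is exactly the claimed distribution $\D_{\tildephi} \in \Delta(\signalspace^V)$, so $\tildephi$ is identity-independent by construction.

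Next I would verify the two quantities that must be preserved. For the cost, since $\|\bs\|_1$ is invariant under permuting the coordinates of $\bs$, and since averaging over $\rho$ while $\bt \sim \prior$ just reindexes which permutation-argument $\varphi$ is evaluated at, we get $\Cost(\tildephi) = \Ex{\bt\sim\prior,\rho}{\|\varphi(\rho\cdot\bt)\|_1} = \Ex{\bt'\sim\prior}{\|\varphi(\bt')\|_1} = \Cost(\varphi)$, using that $\rho \cdot \bt$ is again uniform on $\sym(V)$ for each fixed $\bt$ (or for $\bt$ uniform). For persuasiveness, the key observation is that persuasiveness is a property of the \emph{posterior belief} an agent holds upon receiving a given signal value $\theta$, and that posterior — being an expectation over $(\bt, \bs_{-t_i})$ — is itself an average, over $\rho$, of posteriors arising under $\varphi$. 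Since each of the averaged posteriors satisfies the feasibility inequality $\Ex{}{u_{t_i}(\theta,\bs_{-t_i})}\ge1$ (by persuasiveness of $\varphi$, applied at the appropriate relabeled type), and since $u_{t_i}(\theta, \cdot)$ is linear in the other signals, the averaged posterior satisfies it too; the stability/tightness condition is handled the same way, as a convex combination of tight constraints is tight. One subtlety is that when $\varphi$ is run on $\rho\cdot\bt$, the agent originally of type $v$ is playing the role of a possibly different type $v'$; I would check that the feasibility/stability conditions, which quantify over \emph{all} types $v$ and \emph{all} signal values $\theta \in \signalspace$, are closed under this relabeling precisely because $\varphi$ is persuasive for every agent.

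The main obstacle I anticipate is purely bookkeeping: keeping the composition of permutations straight so that the agent who physically receives signal $s_v$ indeed forms the posterior I claim, and confirming that the ``role-swap'' induced by $\rho$ maps the persuasiveness constraints of $\varphi$ onto those needed for $\tildephi$ without a mismatch in which coordinate of $W$ is being used. A clean way to sidestep index juggling is to argue abstractly: the set of persuasive schemes with a given cost is convex and closed under the natural $\sym(V)$-action (pre-composition on the domain $\sym(V)$ together with the induced coordinate permutation on $\signalspace^V$), hence contains the $\sym(V)$-orbit average of any member, and orbit averages under a transitive action on a symmetric-group-indexed family are exactly the constant (identity-independent) maps. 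I would present the explicit construction for concreteness and relegate the orbit-average viewpoint to a remark.
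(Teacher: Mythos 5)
Your overall strategy --- symmetrize $\varphi$ over $\sym(V)$, then use the uniformity of the prior and the linearity of the persuasiveness constraints --- is exactly the paper's. But the specific symmetrization you describe fails at the one point the lemma is about: identity-independence. The culprit is the step ``map the resulting signal vector back through $\rho^{-1}$.'' Conditioned on the realized profile $\bt$, the shuffled profile $\bt'=\rho\cdot\bt$ is indeed uniform, but the permutation $\rho=\bt'\cdot\bt^{-1}$ that you then apply to the signal coordinates is \emph{not} independent of $\bt$: it is precisely the permutation that undoes the shuffle. So each agent ends up receiving the signal that $\varphi$ would have assigned to the role they play under the shuffled profile, and the law of $\tildephi(\bt)$ still depends on $\bt$. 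Concretely, on a single unit-weight edge ($n=2$), the scheme ``agent $1$ always contributes $1$, agent $2$ contributes $0$'' (i.e., $\varphi(\bt)=\delta_{\bs}$ with $s_{t_1}=1$, $s_{t_2}=0$) is persuasive, and your averaged scheme reproduces it verbatim --- maximally identity-dependent. The same flaw infects your abstract fallback: the orbit average under ``pre-composition together with the induced coordinate permutation'' is a fixed point of that action, i.e.\ an \emph{equivariant} scheme, not a constant one; only the action by pre-composition alone has constant maps as its orbit averages.

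The fix is simply to delete the re-indexing: the platform ignores the true $\bt$, draws a fresh $\bt'\sim\prior$, samples $\bs\sim\varphi(\bt')$, and delivers $s_v$ to the type-$v$ agent, i.e., $\D_{\tildephi}=\frac{1}{|\sym(V)|}\sum_{\bt'\in\sym(V)}\varphi(\bt')$. This is the paper's construction and is identity-independent by fiat. Your cost and persuasiveness arguments then go through essentially as sketched, with one caveat worth making explicit: the feasibility condition is a ratio (a conditional expectation under a Bayes posterior), and a convex combination of posteriors does not directly average the conditional expectations with equal weights. You should verify the constraints in their linearized form --- numerator minus $(1-\theta)$ times denominator is nonnegative, and zero for $\theta>0$ --- which is linear in the scheme and hence genuinely preserved under mixing; this is how the paper's appendix computation proceeds.
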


Since identity-independent signaling schemes are equivalent to joint distributions on $\signalspace^V$,
this lemma sets the groundwork for interpreting a signaling scheme as a ``random labeling'' or ``random assignment'' that directly assigns values in $\signalspace$ to types in $V$.
In the rest of the paper, we will adopt this terminology and refer to the signaling process in task structure graphs as ``labeling'' or ``assigning'' randomized values to its vertices. Furthermore, \Cref{lem:simplify-signal} shows that, to lower bound the cost of all persuasive signaling schemes on a particular instance, it suffices to prove such a lower bound against identity-independent schemes.
    
    \section{Technical Overview and Structural Results}
\label{sec:technical-overview}

{This section outlines our approach for proving all of our main results. In \Cref{sec:structural}, we give a sufficient and necessary condition for a signaling scheme to be persuasive and show that for \emph{binary} signaling schemes this condition reduces to a simple inequality involving cuts and induced sub-graphs in the graph representation of the collaboration system. This characterization is useful both for verifying the persuasiveness of the signaling schemes that we design, and for proving lower bounds against persuasive schemes.
With these tools in hand, we
give an overview of our upper bounds for unit-weight graphs (\Cref{sec:overview_unit}) and general weighted graphs (\Cref{sec:weighted-overview}) starting from concrete illustrative examples. We end the section by explaining how we use the LP duality framework to prove our lower bounds.
}

\subsection{Structural results}
\label{sec:structural}
Consider an identity-independent signaling scheme $\D_\varphi \in \Delta(\signalspace^{V})$ on a finite signal space $\signalspace\subset[0,1]$. For each $\theta\in\signalspace$, we introduce the concept of \emph{slack} for receiving signal $\theta$ as follows:
\begin{definition}[Slack]
    \label[definition]{def:slack}
    For a signaling scheme $\D_\varphi \in \Delta(\signalspace^{V})$ and any $\theta\in\signalspace$, 
    we define $\Contrib_{\theta}$ as the expected total contribution entering vertices receiving $\theta$, and $\Numtheta_{\theta}$ as the expected number of vertices receiving $\theta$:
    \begin{align*}
        \Contrib_\theta\triangleq \Ex{\bs \sim \D_\varphi}{\sum_{v \in V}\1{s_v = \theta} {\sum_{v'\in N(v)}W_{v,v'}s_{v'}}};
        \quad
        \Numtheta_\theta\triangleq \Ex{\bs \sim \D_\varphi}{\sum_{v \in V}\1{s_v = \theta}},
    \end{align*}
where $N(v)$ stands for the open neighborhood of $v$.
    The \emph{slack} of receiving signal $\theta$, denoted with $\Delta_\theta$, is the difference between $\Contrib_\theta$ and $\Numtheta_\theta$ scaled by the factor $(1-\theta)$:
    \begin{align*}
        \Delta_\theta\triangleq
        \Contrib_\theta
        - 
        {(1 - \theta) \cdot }
        \Numtheta_\theta.
    \end{align*}
\end{definition}
\begin{lemma}[Persuasiveness of general signaling schemes]
    \label[lemma]{lemma:persuasive-general}
    An identity-independent signaling scheme $\D_\varphi \in \Delta(\signalspace^{V})$ with finite $\signalspace$ is persuasive if and only if the following conditions are met:
    \begin{itemize}
        \item For every $\theta \in \signalspace$ where $\theta>0$, the slack $\Delta_\theta$  must equal zero ($\Delta_\theta=0$).
        \item If the signal space includes zero ($0\in\signalspace$), the slack $\Delta_0$ must be non-negative ($\Delta_0\ge0$).
    \end{itemize}
\end{lemma}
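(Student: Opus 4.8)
The plan is to first invoke \Cref{lem:simplify-signal} to restrict attention to an identity-independent scheme $\D_\varphi$, which we think of as a random labeling $\bs \colon V \to \signalspace$ of the vertices. Since the prior $\prior = \unif(\sym(V))$ is independent of $\bs$ and the scheme ignores the permutation, each agent's type is a uniformly random vertex drawn independently of $\bs$, so the persuasiveness conditions are identical for all agents and it suffices to analyze one generic agent. Next I would compute this agent's posterior upon receiving a signal $\theta \in \signalspace$ with $\Numtheta_\theta > 0$ (when $\Numtheta_\theta = 0$ the signal is never sent, $\Delta_\theta = 0$ and $\Contrib_\theta = 0$ trivially, and there is nothing to verify). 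Plugging the identity-independent $\varphi$ into \eqref{eq:posterior} and summing over which vertex the agent may occupy, the joint posterior on (agent's vertex $v$, labeling $\bs$) comes out proportional to $\Pr_{\D_\varphi}[\bs]\cdot\1{s_v = \theta}$. The crucial point is that the uniform weight over vertices is a constant that cancels, so conditioning on ``received signal $\theta$'' effectively size-biases the vertex by how many $\theta$-labeled vertices the realization of $\bs$ contains, and the normalizing constant is exactly $\Numtheta_\theta = \Ex{\bs \sim \D_\varphi}{\sum_{v}\1{s_v = \theta}}$.

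The second step is to evaluate, under this posterior, the expected total contribution entering the agent's own vertex when it plays action $x$ and everyone else follows the labeling: this is $x + \sum_{v' \in N(v)} W_{v,v'}\, s_{v'}$ (using $W_{v,v} = 1$). Taking the posterior expectation, the own-action term contributes $x \cdot (\Numtheta_\theta / \Numtheta_\theta) = x$, and the neighbor term equals $\frac{1}{\Numtheta_\theta}\Ex{\bs \sim \D_\varphi}{\sum_{v}\1{s_v = \theta}\sum_{v'\in N(v)}W_{v,v'}s_{v'}} = \Contrib_\theta / \Numtheta_\theta$ by \Cref{def:slack}. So the expected contribution received is the affine function $x + \Contrib_\theta / \Numtheta_\theta$ of the agent's action.

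With this formula the persuasiveness conditions become elementary. Feasibility at signal $\theta$ (play $x = \theta$) reads $\theta + \Contrib_\theta / \Numtheta_\theta \ge 1$, i.e.\ $\Contrib_\theta \ge (1-\theta)\Numtheta_\theta$, which is precisely $\Delta_\theta \ge 0$. Stability at signal $\theta$ requires $\theta = \min\{x \ge 0 : x + \Contrib_\theta / \Numtheta_\theta \ge 1\} = \max\{0,\ 1 - \Contrib_\theta / \Numtheta_\theta\}$. When $\theta > 0$ the maximum cannot be achieved by the $0$ branch, forcing $1 - \Contrib_\theta / \Numtheta_\theta = \theta$, equivalently $\Delta_\theta = 0$ (which in particular makes the feasibility inequality tight, consistent with the ``tightness'' reformulation in the definition of persuasiveness). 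When $0 \in \signalspace$, stability at $\theta = 0$ requires $\max\{0,\ 1 - \Contrib_0 / \Numtheta_0\} = 0$, i.e.\ $\Contrib_0 \ge \Numtheta_0$, which is exactly $\Delta_0 \ge 0$ and also subsumes feasibility at $0$. Intersecting these equivalences over all $\theta \in \signalspace$ gives the stated characterization, in both directions.

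I expect the main obstacle to be the first step: carefully justifying that conditioning on the received signal induces the size-biased posterior with normalization $\Numtheta_\theta$, as opposed to a plain expectation over a fixed vertex — this ``inspection-paradox''-type reweighting is what makes the clean quantities $\Contrib_\theta$ and $\Numtheta_\theta$ appear. Once the affine formula $x + \Contrib_\theta/\Numtheta_\theta$ for the expected contribution received is established, the remainder is a short case analysis; the only minor care needed is the degenerate case $\Numtheta_\theta = 0$ and, within the stability argument, ruling out the $x = 0$ branch of the minimum when $\theta > 0$.
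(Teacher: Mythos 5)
Your proposal is correct and follows essentially the same route as the paper: apply Bayes' rule under the uniform prior to show that the posterior size-biases over $\theta$-labeled vertices with normalizer $\Numtheta_\theta$, derive the affine formula $x + \Contrib_\theta/\Numtheta_\theta$ for the expected quality, and then read off $\Delta_\theta = 0$ (for $\theta > 0$) and $\Delta_0 \ge 0$ from the feasibility and stability conditions. Your explicit handling of the $\Numtheta_\theta = 0$ degenerate case and the $\max\{0,\cdot\}$ branch in the stability argument matches the paper's (implicit) treatment.
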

\begin{proof}[Proof sketch]
Recall from the definition of persuasiveness, $\D_\varphi$ is persuasive if and only if $\forall\theta\in\signalspace$,
$\Ex{(\bt,\bs_{-t_i})\sim\mu_i(\cdot\mid \bs_{t_i}=\theta)}{u_{t_i}(\theta,\bs_{-t_i})}\ge1$,
with this inequality being tight for $\theta>0$. 
Therefore, to prove \cref{lemma:persuasive-general}, it suffices to show that 
\begin{align}
    \Ex{(\bt,\bs_{-t_i})\sim\mu_i(\cdot\mid \bs_{t_i}=\theta)}{u_{t_i}(\theta,\bs_{-t_i})}
    =\theta+\frac{\Contrib_\theta}{\Numtheta_\theta}.
    \label{eq:tmp-cond-exp-slack}
\end{align}
This is because $\theta+\frac{\Contrib_\theta}{\Numtheta_\theta} {\geq 1}\Leftrightarrow{\Contrib_\theta}-(1-\theta)\cdot{\Numtheta_\theta}\ge0$ when $\Numtheta_\theta>0$, which is true because it is without loss of generality to assume that every $\theta\in\signalspace$ is realized with a non-zero probability.
We establish \Cref{eq:tmp-cond-exp-slack} by straightforward application of Bayes' rule in \Cref{app:persuasive-general}.
\end{proof}

\paragraph{A special class of binary signals}
Consider a binary scheme $\varphi$ with signal space $\signalspace=\{0,\alpha\}$ for $\alpha\in(0,1]$. 
Recall from \cref{lem:simplify-signal} that $\D_\varphi$ can be interpreted as a random labeling of $V$ using either $0$ or $\alpha$.
It is useful to view the random labeling $\bs$ as first selecting a random subset $S\subseteq V$ from a distribution $\D$ over all subsets of $V$ (namely $2^V$), and then assigning $s_v=\alpha\cdot\1{v\in S}$ for each $v\in V$.
Using this view, we can rewrite $\Contrib_0$ and $\Contrib_\alpha$ using the random subset $S\sim\D$:
\begin{align*}
    \Contrib_\alpha=&\Ex{\bs \sim \D_\varphi}{\sum_{v \in V}\1{s_v = \alpha}\sum_{v' \in N(v)}W_{v,v'}s_{v'}}=\alpha\cdot\Ex{S\sim \D}{\sum_{v\in S}\sum_{v'\in S\setminus\{v\}}W_{v,v'}};\\
    \Contrib_0=&\Ex{\bs \sim \D_\varphi}{\sum_{v \in V}\1{s_v = 0}\sum_{v' \in N(v)}W_{v,v'}s_{v'}}=\alpha\cdot\Ex{S\sim \D}{\sum_{v\not\in S}\sum_{v'\in S}W_{v,v'}}.
\end{align*}

The above two quantities have graph-theoretical interpretation using the cut $(S,V\setminus S)$ and the subgraph induced by $S$. We define their total weight respectively as follows.
\begin{definition}[Weights of Cut and Induced Subgraph]
    \label[definition]{def:cut-induced}
    The weight of the cut between $S$ and its complement $V \setminus S$, denoted $\cut(S, V\setminus S)$, and the weight of the subgraph induced by $S$, denoted $\induced(S)$, are given by
    \begin{align*}
        &\cut(S, V\setminus S)\triangleq\sum_{u\in S}\sum_{v\in(V\setminus S)}W_{u,v};\\
        &\induced(S)\triangleq\sum_{u\in S}\sum_{v\in S}W_{u,v}=|S|+2\sum_{\{u,v\}\subseteq S} W_{u,v}.
    \end{align*}
    Here, $\cut(S, V\setminus S)$ measures the total weight of edges crossing the cut from $S$ to $V \setminus S$, and $\induced(S)$ measures the total weight of all edges within $S$, including self-loops with $W_{v,v}=1$, and counts each undirected internal edge twice.
\end{definition}
 
Leveraging the characteristics of $S$ defined in \Cref{def:cut-induced}, we provide a graph-theoretical characterization of the persuasiveness of a binary signaling scheme in the next lemma. The detailed proof mirrors the arguments used in \cref{lemma:persuasive-general} and is deferred to \Cref{app:persuasive-binary}.

\begin{lemma}[Persuasiveness of binary signaling schemes]
\label[lemma]{lemma:persuasive-binary}
    Let $\D\in\Delta(2^V)$ be a distribution over subsets of $V$. If $\D$ satisfies the following inequality,\footnote{When $\D$ is the degenerate distribution at $V$, the left-hand side is treated as $+\infty$.}
    \begin{align*}
        \frac{\Ex{S \sim \D}{\cut(S, V \setminus S)}}{\Ex{S \sim \D}{|V \setminus S|}} \ge \frac{\Ex{S \sim \D}{\induced(S)}}{\Ex{S \sim \D}{|S|}},
    \end{align*}
    for $\alpha=\frac{\Ex{S \sim \D}{|S|}}{\Ex{S \sim \D}{\induced(S)}}$, there exists an identity-independent persuasive signaling scheme $\D_\varphi\in\Delta\left(\{0,\alpha\}^V\right)$ that assigns $s_v=\alpha\cdot\indicator{v\in S}$ according to a random subset $S\sim\D$.
    The cost of this signaling scheme is $\Cost(\varphi)=\frac{(\Ex{S \sim \D}{|S|})^2}{\Ex{S \sim \D}{\induced(S)}} \le \Ex{S \sim \D}{|S|}$.
\end{lemma}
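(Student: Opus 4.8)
The plan is to specialize the general characterization \Cref{lemma:persuasive-general} to the two‑point signal space $\signalspace=\{0,\alpha\}$ and show that its two slack conditions collapse into exactly the displayed inequality together with the prescribed value of $\alpha$. Building on the random‑subset viewpoint and the computations of $\Contrib_0,\Contrib_\alpha$ already carried out above, and using the identity $\induced(S)=|S|+\sum_{v\in S}\sum_{v'\in S\setminus\{v\}}W_{v,v'}$, I would first record the four relevant quantities from \Cref{def:slack}:
\[
\Contrib_\alpha=\alpha\Big(\Ex{S\sim\D}{\induced(S)}-\Ex{S\sim\D}{|S|}\Big),\quad \Contrib_0=\alpha\,\Ex{S\sim\D}{\cut(S,V\setminus S)},\quad \Numtheta_\alpha=\Ex{S\sim\D}{|S|},\quad \Numtheta_0=\Ex{S\sim\D}{|V\setminus S|}.
\]

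Next I would impose the first persuasiveness condition, $\Delta_\alpha=\Contrib_\alpha-(1-\alpha)\Numtheta_\alpha=0$. Substituting the expressions above, the $\alpha\,\Ex{S}{|S|}$ terms cancel and this reduces to $\alpha\,\Ex{S}{\induced(S)}=\Ex{S}{|S|}$, i.e.\ to the stated choice $\alpha=\Ex{S}{|S|}/\Ex{S}{\induced(S)}$. I would then check that this $\alpha$ is a legitimate signal value: since all edge weights are nonnegative and $W_{v,v}=1$, we have $\induced(S)\ge|S|$ pointwise, so $\Ex{S}{\induced(S)}\ge\Ex{S}{|S|}$ and hence $0<\alpha\le1$ whenever $\D$ is not the point mass at $\emptyset$ (which is the only degenerate case, and there $\Cost=0$ so the claim is vacuous/ill‑posed). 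With $\alpha$ fixed this way, the second condition $\Delta_0=\Contrib_0-\Numtheta_0\ge0$ reads $\alpha\,\Ex{S}{\cut(S,V\setminus S)}\ge\Ex{S}{|V\setminus S|}$; clearing denominators and substituting the value of $\alpha$ rearranges this exactly into
\[
\frac{\Ex{S\sim\D}{\cut(S,V\setminus S)}}{\Ex{S\sim\D}{|V\setminus S|}}\ \ge\ \frac{\Ex{S\sim\D}{\induced(S)}}{\Ex{S\sim\D}{|S|}},
\]
which is the hypothesis of the lemma (with the convention of the footnote: if $\D$ is the point mass at $V$ the left side is $+\infty$ and the inequality holds trivially, while $\Delta_\alpha=0$ still fixes $\alpha$). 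Since the construction outputs a single distribution $\D_\varphi\in\Delta(\{0,\alpha\}^V)$ that does not depend on the realized permutation, it is an identity‑independent persuasive scheme in the sense of \Cref{lem:simplify-signal}.

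Finally I would compute the cost: $\Cost(\varphi)=\Ex{\bs\sim\D_\varphi}{\|\bs\|_1}=\alpha\,\Ex{S\sim\D}{|S|}=\big(\Ex{S}{|S|}\big)^2/\Ex{S}{\induced(S)}$, and the bound $\Cost(\varphi)\le\Ex{S}{|S|}$ follows once more from $\Ex{S}{\induced(S)}\ge\Ex{S}{|S|}$. I do not expect a genuine obstacle: the whole argument is a direct specialization of \Cref{lemma:persuasive-general}. The only points demanding care are the algebraic bookkeeping that turns the two slack conditions $\Delta_\alpha=0$ and $\Delta_0\ge0$ into the single displayed inequality, and the handling of the degenerate distributions (point masses at $\emptyset$ and at $V$) so that $\alpha$ is well defined and the stated conventions are respected.
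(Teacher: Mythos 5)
Your proposal is correct and follows essentially the same route as the paper's proof: both specialize \Cref{lemma:persuasive-general} to $\signalspace=\{0,\alpha\}$, derive $\Delta_\alpha=\alpha\,\Ex{S\sim\D}{\induced(S)}-\Ex{S\sim\D}{|S|}$ to pin down $\alpha$, and then read off $\Delta_0\ge0$ as the displayed cut-versus-induced inequality, with the same cost computation $\alpha\cdot\Ex{S\sim\D}{|S|}$. Your explicit check that $\alpha\in(0,1]$ via $\induced(S)\ge|S|$ and your handling of the point masses at $\emptyset$ and $V$ are minor additions the paper leaves implicit, but nothing differs in substance.
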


\begin{corollary}[Cost of the no-information scheme]
    \label[corollary]{cor:degenerate-signal}
    When $\D$ in \Cref{lemma:persuasive-binary} is the degenerate distribution at $V$, the resulting signaling scheme is persuasive for $\alpha=\frac{n}{\induced(V)}$ and incurs a cost of $\frac{n^2}{\induced(V)}$. Using $m=\sum_{\{v_1,v_2\}\in E} W_{v_1,v_2}$ to denote the total edge weights in the graph, the cost simplifies to $\frac{n^2}{n+2m}=O(\frac{n^2}{m})$.
    This scheme, by not differentiating between vertices, essentially conveys no information. Therefore, its cost is equal to the total contribution of agents when no signal is sent.
\end{corollary}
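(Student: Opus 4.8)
The plan is to instantiate \Cref{lemma:persuasive-binary} with the degenerate distribution $\D$ supported on the single set $S = V$, and then simplify the resulting expressions. First I would record that for this $\D$ we have $\Ex{S \sim \D}{|S|} = n$, $\Ex{S \sim \D}{\induced(S)} = \induced(V)$, $\Ex{S \sim \D}{\cut(S, V\setminus S)} = \cut(V, \emptyset) = 0$, and $\Ex{S \sim \D}{|V \setminus S|} = 0$. The persuasiveness hypothesis of \Cref{lemma:persuasive-binary} then reads $\tfrac{0}{0} \ge \tfrac{\induced(V)}{n}$, which holds by the stated convention that the left-hand side is $+\infty$ when $\D$ is degenerate at $V$. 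Hence \Cref{lemma:persuasive-binary} yields a persuasive identity-independent scheme with $\alpha = \tfrac{\Ex{S \sim \D}{|S|}}{\Ex{S \sim \D}{\induced(S)}} = \tfrac{n}{\induced(V)}$ and cost $\tfrac{(\Ex{S \sim \D}{|S|})^2}{\Ex{S \sim \D}{\induced(S)}} = \tfrac{n^2}{\induced(V)}$.

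Next I would rewrite $\induced(V)$ in terms of the total edge weight: by \Cref{def:cut-induced}, $\induced(V) = |V| + 2\sum_{\{u,v\} \subseteq V} W_{u,v} = n + 2m$, since $W_{u,v} > 0$ exactly on the edge set. Substituting gives cost $\tfrac{n^2}{n + 2m}$, and the inequality $n + 2m \ge 2m$ yields $\tfrac{n^2}{n+2m} = O(n^2/m)$ (with the trivial bound $\le n$ following from $n + 2m \ge n$). Finally I would justify the interpretation: since every vertex receives the same signal $\alpha$, the signal profile is constant across all realizations of $\bt$ and thus conveys no information, and the induced play is the fully symmetric profile in which each agent contributes $\alpha$, giving total workload $n\alpha = \tfrac{n^2}{\induced(V)}$. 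As a sanity check one can verify that an agent holding the uniform posterior over $V$ who contributes $x$ while every other agent does the same has expected utility $x \cdot \tfrac{1}{n}\sum_{v \in V}\bigl(1 + \sum_{v' \in N(v)} W_{v,v'}\bigr) = x \cdot \tfrac{\induced(V)}{n}$, so the tight-feasibility part of persuasiveness forces exactly $x = \tfrac{n}{\induced(V)} = \alpha$.

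Since the whole derivation is a direct substitution into \Cref{lemma:persuasive-binary}, I do not expect any real obstacle; the only point requiring care is the $0/0$ appearing on the left-hand side of the persuasiveness inequality, which is precisely what the footnote convention in \Cref{lemma:persuasive-binary} is there to handle.
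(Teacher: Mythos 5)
Your proposal is correct and is exactly the intended derivation: the corollary is a direct instantiation of \Cref{lemma:persuasive-binary} at the degenerate distribution on $V$, with the $0/0$ on the left-hand side resolved by the footnote's $+\infty$ convention, and the paper offers no separate proof beyond this. Your computations of $\alpha$, the cost $n^2/\induced(V)=n^2/(n+2m)$, and the no-information sanity check all match.
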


\subsection{Overview of upper bounds in unit-weight graphs}
\label{sec:overview_unit}

We start by sketching the proof of \Cref{thm:unit-ub-opt}, which gives a binary signaling scheme with an $O(\sqrt{n}\cdot\OPT)$ cost. Recall that by \Cref{lem:simplify-signal}, a signaling scheme is equivalent to a randomized labeling of vertices with values between $0$ and $1$.

\paragraph{A signaling scheme for the double-star graph.} Our proof is best illustrated by the double-star example from \Cref{fig:double-star}, on which the signaling scheme simply randomizes between the following two plans for some small $\eps > 0$:
\begin{itemize}
    \item Plan A: Label the two centers with $1 - \eps$. Label the $2k$ leaves with $0$.
    \item Plan B: Label the two centers with $0$. Label the $2k$ leaves with $1 - \eps$.
\end{itemize}

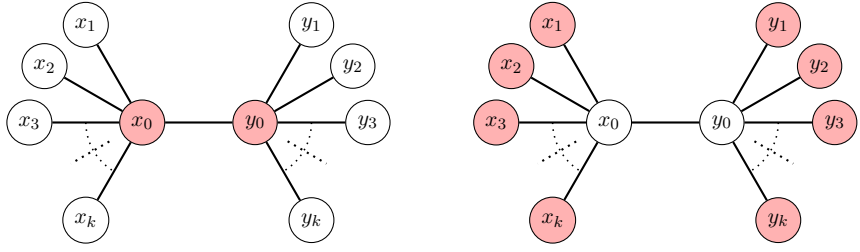
\begin{figure}[!htbp]
    \centering
    \begin{tikzpicture}[scale=0.75, transform shape]
        \def \radius {2cm}
        
        \node[draw, circle, fill = red!30] at (0,0) (center1) {$x_0$};
        \foreach \i in {1,2,3}{
          \node[draw, circle] at (\i*30+90:\radius) (left\i) {$x_{\i}$};
          \draw[thick] (center1)--(left\i);
        }
        \node[draw, circle] at (240:\radius) (leftk) {$x_{k}$};
        \draw[thick] (center1)--(leftk);

        \node[circle] at ({210}:\radius) (aux1) {\phantom{$u_{5}$}};
        \draw[dotted, thick, shorten >=1mm, shorten <=2mm] (center1)--(aux1);

        \draw[dotted, semithick] (180:\radius/2) arc[start angle=180, end angle=240, radius=\radius/2];
        
        \node[draw, circle, fill = red!30] at (\radius,0) (center2) {$y_0$};
        \foreach \i in {1,2,3}{
            \node[draw, circle] at ([shift={(-\i*30+90:\radius)}]center2) (right\i) {$y_{\i}$};
          \draw[thick] (center2)--(right\i);
        }
        \node[draw, circle] at ([shift={(-60:\radius)}]center2) (rightk) {$y_{k}$};
        \draw[thick] (center2)--(rightk);

        \node[circle] at ([shift={(-30:\radius)}]center2) (aux2) {\phantom{$u_{5}$}};
        \draw[dotted, thick, shorten >=1mm, shorten <=2mm] (center2)--(aux2);

        \draw[dotted, semithick] ([shift={(0:\radius/2)}]center2) arc[start angle=0, end angle=-60, radius=\radius/2];
        
        \draw[thick] (center1) -- (center2);
        
    \end{tikzpicture}
    \hspace{24pt}
    \begin{tikzpicture}[scale=0.75, transform shape]
        \def \radius {2cm}
        
        \node[draw, circle] at (0,0) (center1) {$x_0$};
        \foreach \i in {1,2,3}{
          \node[draw, circle, fill = red!30] at (\i*30+90:\radius) (left\i) {$x_{\i}$};
          \draw[thick] (center1)--(left\i);
        }
        \node[draw, circle, fill = red!30] at (240:\radius) (leftk) {$x_{k}$};
        \draw[thick] (center1)--(leftk);

        \node[circle] at ({210}:\radius) (aux1) {\phantom{$u_{5}$}};
        \draw[dotted, thick, shorten >=1mm, shorten <=2mm] (center1)--(aux1);

        \draw[dotted, semithick] (180:\radius/2) arc[start angle=180, end angle=240, radius=\radius/2];
        
        \node[draw, circle] at (\radius,0) (center2) {$y_0$};
        \foreach \i in {1,2,3}{
            \node[draw, circle, fill = red!30] at ([shift={(-\i*30+90:\radius)}]center2) (right\i) {$y_{\i}$};
          \draw[thick] (center2)--(right\i);
        }
        \node[draw, circle, fill = red!30] at ([shift={(-60:\radius)}]center2) (rightk) {$y_{k}$};
        \draw[thick] (center2)--(rightk);

        \node[circle] at ([shift={(-30:\radius)}]center2) (aux2) {\phantom{$u_{5}$}};
        \draw[dotted, thick, shorten >=1mm, shorten <=2mm] (center2)--(aux2);

        \draw[dotted, semithick] ([shift={(0:\radius/2)}]center2) arc[start angle=0, end angle=-60, radius=\radius/2];
        
        \draw[thick] (center1) -- (center2);
        
        \end{tikzpicture}

        \caption{Plan~A (left) and Plan~B (right) for the double-star graph. The shaded vertices are labeled with $1 - \eps$. The empty vertices are labeled with $0$.}

        \label[figure]{fig:binary-scheme}
\end{figure}

Plan~A is almost the same as the socially optimal solution (without the stability constraint), in which each center contributes a unit amount. While Plan~A has a low cost, we should not always follow it, for two different reasons. First, it violates the stability condition: conditioning on receiving signal $1 - \eps$, the agent knows for sure that one of their neighbors will play $1 - \eps$, and thus has an incentive to deviate and play a much lower value. Second, it violates feasibility: conditioning on receiving $0$, the agent expects a total contribution of $1 - \eps$ from their neighbors, so following the signal would not satisfy their demand. On the other hand, while Plan~B appears extremely inefficient at first glance, it does remedy both issues of Plan~A by reducing the slack when signal $1 - \eps$ is received, and increasing the total contributions from neighbors when the signal is $0$.

With the notation from \Cref{def:slack} (i.e., $\Delta_\theta$ denotes the amount of slack corresponding to signal $\theta$), always following Plan~A gives
\[
    \Delta_{1-\eps} = 2\cdot (1 - \eps) - \eps\cdot 2 = +\Theta(1), \quad \Delta_0 = 2k(1 - \eps) - 1\cdot 2k = -\Theta(k\eps),
\]
whereas following Plan~B gives
\[
    \Delta_{1-\eps} = 0 - \eps\cdot 2k = -\Theta(k\eps), \quad \Delta_0 = 2k(1-\eps) - 1\cdot 2 = +\Theta(k).
\]
A simple calculation shows that, for some $\eps, p = \Theta(1/\sqrt{k})$, we can ensure $\Delta_{1-\eps} = 0$ and $\Delta_0 \ge 0$ by following Plan~A with probability $1 - p$ and following Plan~B with probability $p$. The resulting cost would then be $(1 - p)\cdot 2\cdot (1 - \eps) + p\cdot 2k\cdot (1 - \eps) = O(\sqrt{k}) = O(\sqrt{n})$ as desired.

\paragraph{A binary signal perspective.} To generalize this result to all unit-weight graphs, it is helpful to view the signaling scheme above through the lens of \Cref{lemma:persuasive-binary}, our characterization of persuasive binary signaling schemes. Recall that the lemma states that a distribution $\D \in \Delta(2^V)$ gives a persuasive binary signaling scheme if the following inequality holds:
\begin{equation}\label{eq:binary-signal-persuasiveness}
    \frac{\Ex{S \sim \D}{\cut(S, V \setminus S)}}{\Ex{S \sim \D}{|V \setminus S|}} \ge \frac{\Ex{S \sim \D}{\induced(S)}}{\Ex{S \sim \D}{|S|}}.
\end{equation}
Furthermore, the cost of the scheme is at most $\Ex{S \sim \D}{|S|}$.

In the double-star example, let $\DS$ be the set of the two centers, and $\IS$ be the set of the $2k$ leaves. (The names are justified since $\DS$ is a dominating set of the graph, and $\IS$ is an independent set.) When $\D$ is the degenerate distribution at $\DS$, \Cref{eq:binary-signal-persuasiveness} reduces to
$    \frac{2k}{2k} \ge \frac{4}{2},
$
which does not hold. In contrast, when $\D$ is the degenerate distribution at $\IS$, \Cref{eq:binary-signal-persuasiveness} gives
$
    \frac{2k}{2} \ge \frac{2k}{2k},
$
which not only holds, but holds with a large margin! Again, it follows from an elementary calculation that, \Cref{eq:binary-signal-persuasiveness} can be satisfied by setting $\D(\DS) = 1 - p$ and $\D(\IS) = p$ for some $p = \Theta(1/\sqrt{k})$, and the resulting cost is, as expected, bounded by $\Ex{S \sim \D}{|S|} = (1 - p)\cdot|\DS| + p\cdot|\IS| = O(\sqrt{n})$.

\paragraph{Signaling scheme for general unit-weight graphs.} In general, we choose $\DS$ as a minimum dominating set of $G$, and choose $\IS$ as a maximal independent set of the sub-graph induced by $V \setminus \DS$. As in the double-star graph, $\DS$ gives a good approximation of the socially optimal cost $\OPT$, but does not guarantee persuasiveness. Concretely, $|\DS| / \OPT$ is upper bounded by the integrality gap of the LP relaxation of the dominating set problem, which is $O(\log n)$ (e.g., \citep[Section 1.7]{WS11}). To see that $\DS$ might not be persuasive, note that the left-hand side of \Cref{eq:binary-signal-persuasiveness}, $\frac{\cut(\DS, V \setminus \DS)}{|V \setminus \DS|}$, can be as small as $1$, while the right-hand side $\frac{\induced(\DS)}{|\DS|}$ can be as large as $|\DS|$, when the induced sub-graph of $\DS$ is a clique.

Once again, the other set $\IS$ comes to the rescue of persuasiveness. By the independence of $\IS$, the right-hand side of \Cref{eq:binary-signal-persuasiveness} gives $\frac{\induced(\IS)}{|\IS|} = 1$. For the left-hand side, we note that
\[
    \cut(\IS, V \setminus \IS)
=   \cut(\IS, \DS) + \cut(\IS, V \setminus (\IS \cup \DS))
\ge |\IS| + |V \setminus (\IS \cup \DS)|
=   n - |\DS|,
\]
\begin{wrapfigure}{r}{0.5\textwidth}
    \begin{center}
        \begin{tikzpicture}[scale=0.75, transform shape]
    \def \radius {0.75cm}
    \def \isstart {1.4cm}
    \def \dsdx {0.22cm}
    \def \dsdy {0.6cm}
    \draw[rotate=75, dashed] (-2,0) circle(2cm and 1cm);
    \draw[dashed] (2.5,-4) circle(2cm and 1cm);
    \draw[dashed] (3,-1.2) circle(1.5cm and 1cm);
    \node at (3,-0.8) (others) {$V\setminus(\DS\cup\IS)$};
    \node[draw, circle, fill=red!30] at (\isstart+3*\radius, -4) (IS0){};
    \node[draw, circle, fill=red!30] at (\isstart+2*\radius, -4) (IS1){};
    \node[draw, circle, fill=red!30] at (\isstart+\radius, -4) (IS2){};
    \node[draw, circle, fill=red!30] at (\isstart, -4) (IS3){};
    \node at (2.4,-4.7) (DS) {$\IS$};
    \node[draw, circle, fill=blue!40] at (-0.75cm, -2.5cm) (DS1){};
    \node[draw, circle, fill=blue!40] at (-0.75cm+\dsdx, -2.5cm+\dsdy) (DS2){};
    \node[draw, circle, fill=blue!40] at (-0.75cm+2*\dsdx, -2.5cm+2*\dsdy) (DS3){};
    \node at (-0.9,-3.5) (DS) {$\DS$};
    \draw[thick, color=blue!60] (IS3) -- (DS1);
    \draw[thick, color=blue!60] (IS2) -- (DS2);
    \draw[thick, color=blue!60] (IS1) -- (DS2);
    \draw[thick, color=blue!60] (IS0) -- (DS3);
    \node[draw, circle] at (3,-1.2) (other1) {};
    \node[draw, circle] at (2.4,-1.5) (other2) {};
    \node[draw, circle] at (3.7,-1.5) (other3) {};
    \draw[thick, color=red!60] (IS3) -- (other2);
    \draw[thick, color=red!60] (IS0) -- (other3);
    \draw[thick, color=red!60] (IS1) -- (other1);
    \end{tikzpicture}
    \end{center}
    \caption{\footnotesize $\cut(\IS,V\setminus \IS)$. Every vertex in $\IS$ has a neighbor in $\DS$; every vertex in $V \setminus (\DS \cup \IS)$ has a neighbor in $\IS$.}
    \label[figure]{fig:DS-IS-example}
\end{wrapfigure}
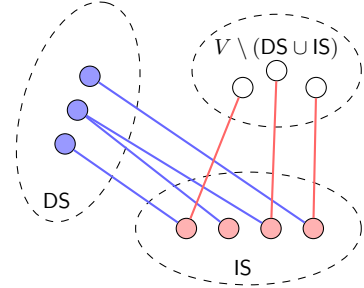
The first step holds since $\DS$ and $V \setminus (\IS \cup \DS)$ partitions $V \setminus \IS$ (see \Cref{fig:DS-IS-example} for a pictorial illustration). In the second step, $\cut(\IS, \DS) \ge |\IS|$ since $\DS$ is a dominating set of $G$, which guarantees that every vertex in $\IS \subseteq V \setminus \DS$ has a neighbor in $\DS$. Meanwhile, the maximality of $\IS$ implies that every vertex in $V \setminus (\DS \cup \IS)$ has a neighbor in $\IS$. This gives $\cut(\IS, V \setminus (\IS \cup \DS)) \ge |V \setminus (\IS \cup \DS)|$. Therefore, the left-hand side of \Cref{eq:binary-signal-persuasiveness}, $\frac{\cut(\IS, V \setminus \IS)}{|V \setminus \IS|}$, is at least $\frac{n - |\DS|}{n - |\IS|}$, which is typically strictly larger than $1$. This shows that \Cref{eq:binary-signal-persuasiveness} holds with a margin for $\IS$.

By carefully randomizing between $\DS$ and $\IS$, we obtain a persuasive binary signaling scheme with cost $O(\sqrt{n})\cdot|\DS|$, which is an $O(\sqrt{n}\cdot\log n)$-approximation of $\OPT$. To shave this extra $\log n$ factor, our actual proof replaces $\DS$ with a distribution over sets obtained from an independent rounding of the optimal solution. This essentially preserves all the desirable properties of $\DS$, while reducing the cost from $|\DS|$ to $\OPT$.

\paragraph{Strict improvement upon $\OPTIC$.} We sketch the proof of \Cref{thm:unit-improve-ic}, which states that there is a binary signaling scheme with a strictly lower cost than $\OPTIC$, whenever $\OPT < \OPTIC$. A key ingredient of the proof is the construction of a binary scheme with cost exactly $\OPTIC$, which is based on the following structural result regarding stable solutions in unit-weight graphs. 

\begin{lemma}\label[lemma]{lemma:IC-solution-decomp}
    Let $\btheta \in [0, 1]^V$ be an arbitrary stable solution on a unit-weight graph $G = (V, E)$. There exists a distribution $\D \in \Delta(2^V)$ supported over the independent sets of $G$, such that for every $v \in V$, 
$\pr{S \sim \D}{v \in S} = \theta_v$.
\end{lemma}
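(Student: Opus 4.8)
The plan is to recast the statement as a polytope-containment fact. A distribution over independent sets of $G$ with marginals $(\theta_v)_{v\in V}$ is precisely a convex combination $\btheta=\sum_{I}\lambda_I\mathbbm{1}_I$ over independent-set indicators with $\lambda_I\ge0$, $\sum_I\lambda_I=1$; so the lemma asserts exactly that $\btheta$ lies in the stable-set polytope $\mathrm{STAB}(G):=\mathrm{conv}\{\mathbbm{1}_I: I\subseteq V \text{ independent}\}$ (this set contains $\mathbbm{1}_\emptyset=\mathbf 0$). Let $U=\{v:\theta_v>0\}$ and $H=G[U]$. Since $\btheta$ is stable, for every $v\in U$ the feasibility inequality is tight: $\sum_{v'\in N_G[v]}\theta_{v'}=u_v(\btheta)=1$, and because $\theta_{v'}=0$ off $U$ this reads $\sum_{v'\in N_H[v]}\theta_{v'}=1\le1$. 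Hence $\btheta|_U$ lies in the \emph{closed-neighbourhood packing polytope} $Q(H):=\{y\ge\mathbf 0 : \sum_{v'\in N_H[v]}y_{v'}\le1 \ \forall v\in V(H)\}$. So it suffices to prove the purely graph-theoretic inclusion $Q(H)\subseteq\mathrm{STAB}(H)$ for every graph $H$: this gives $\btheta|_U=\sum_J\lambda_J\mathbbm{1}_J$ over independent sets $J\subseteq U$ of $H$, which are independent in $G$, and padding with $\theta_v=0$ for $v\notin U$ yields the desired $\D$.

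\textbf{Proving $Q(H)\subseteq\mathrm{STAB}(H)$ by peeling.} I would prove this by induction on $(|V(H)|,\,|\mathrm{supp}(y)|)$ in lexicographic order, for $y\in Q(H)$. If some coordinate $y_v=0$, delete $v$: $y|_{V(H)\setminus v}\in Q(H-v)$ (closed neighbourhoods only shrink), so by induction it lies in $\mathrm{STAB}(H-v)\subseteq\mathrm{STAB}(H)$. Otherwise $y>\mathbf 0$ on all of $V(H)$. Fix a \emph{maximal} independent set $I$ of $H$ and set $\lambda=\min_{v\in I}y_v$; the constraint at any $v\in I$ gives $y_v\le1$, so $\lambda\le1$. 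If $\lambda=1$ then $y_v=1$ for all $v\in I$, and $\sum_{v'\in N_H[v]}y_{v'}\le1$ with $y>\mathbf0$ forces $N_H(v)=\emptyset$ for each $v\in I$; maximality of $I$ then forces $V(H)=I$ and $H$ edgeless, so $y=\mathbbm{1}_{V(H)}\in\mathrm{STAB}(H)$. If $\lambda<1$, put $y':=\dfrac{y-\lambda\mathbbm{1}_I}{1-\lambda}$. Then $y'\ge\mathbf 0$ and $y'_{v^\star}=0$ at the minimiser $v^\star\in I$, so $|\mathrm{supp}(y')|<|\mathrm{supp}(y)|$. The key computation is $y'\in Q(H)$: by maximality of $I$, $|N_H[v]\cap I|\ge1$ for every $v$ (it equals $1$ if $v\in I$; if $v\notin I$ then $v$ has a neighbour in $I$), hence
\[
\sum_{v'\in N_H[v]}y'_{v'}=\frac{1}{1-\lambda}\Bigl(\sum_{v'\in N_H[v]}y_{v'}-\lambda\,|N_H[v]\cap I|\Bigr)\le\frac{1-\lambda}{1-\lambda}=1 .
\]
By induction $y'\in\mathrm{STAB}(H)$, and since $\mathbbm{1}_I\in\mathrm{STAB}(H)$, the convex combination $y=\lambda\mathbbm{1}_I+(1-\lambda)y'$ lies in $\mathrm{STAB}(H)$. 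The base case $y=\mathbf 0$ is trivial.

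\textbf{The main obstacle.} The point where a naive approach breaks, and which guides the whole proof, is that one \emph{cannot} run the peeling directly on stable solutions: subtracting a multiple of $\mathbbm{1}_I$ and renormalising does not preserve stability, because feasibility of the residual solution fails at any vertex with two or more neighbours in $I$ (its closed-neighbourhood sum drops by more than $1-\lambda$). Passing to the larger, inequality-defined polytope $Q(H)$ is exactly what makes the peeling self-sustaining — the slack ``$\le 1$'' absorbs the renormalisation, and maximality of $I$ supplies precisely the ``$|N_H[v]\cap I|\ge1$'' needed in the displayed estimate. A secondary subtlety worth flagging: the peeling must use a maximal independent set of $H$ \emph{after} all zero coordinates have been deleted; a maximal independent set of $\mathrm{supp}(y)$ inside the original $H$ need not dominate the zero vertices, and their closed-neighbourhood sums could then be pushed above $1$ by the renormalisation. (Note the argument never uses feasibility at vertices outside the support, only tightness on the support — so it in fact holds for any nonnegative $\btheta$ whose closed-neighbourhood sums on its support are at most $1$.)
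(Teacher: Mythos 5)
Your proof is correct, and it takes a genuinely different route from the paper. The paper proves \Cref{lemma:IC-solution-decomp} by a probabilistic correlated rounding ("competing exponential clocks"): draw $X_v \sim \Exp(\theta_v)$ independently for $v$ in the support, let $S$ be the set of local minimizers, and use the tightness $\theta_v + \sum_{u \in N(v)\cap V'}\theta_u = 1$ together with the closure of exponentials under minima to compute $\pr{}{v \in S} = \theta_v$ exactly. Your argument instead establishes the polyhedral inclusion $Q(H)\subseteq\mathrm{STAB}(H)$ by peeling off maximal independent sets, with the domination property $|N_H[v]\cap I|\ge 1$ absorbing the renormalization; I checked the induction (lexicographic on $(|V(H)|,|\mathrm{supp}(y)|)$, the $\lambda=1$ degenerate case, and the key displayed estimate) and it is sound. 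The trade-offs: the paper's construction is shorter and yields a clean closed-form marginal computation, but it genuinely uses the \emph{equality} $u_v(\btheta)=1$ on the support and produces a distribution whose support may be large; your argument is deterministic, yields an explicit decomposition with small support (one independent set per peeling step), and proves a strictly more general statement — any nonnegative vector whose closed-neighborhood sums are at most $1$ on its support lies in the stable-set polytope — at the cost of a somewhat longer induction. Your closing remarks correctly identify why one cannot peel stable solutions directly and why the maximal independent set must be taken in the graph induced on the current support; both points are exactly where a naive version would fail.
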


We prove \Cref{lemma:IC-solution-decomp} in \Cref{sec:unit-weight-upper} %
by rounding the fractional solution $\btheta$ (which satisfies stability) into an integral solution (i.e., a subset of $V$), via a rounding scheme known as \emph{competing exponential clocks}. Given the lemma, it then follows quite easily that such a distribution $\D$ satisfies \Cref{eq:binary-signal-persuasiveness}, and gives a binary signaling scheme of cost exactly $\OPTIC$. To see this, note that since every set in the support of $\D$ is an independent set, we have $\Ex{S \sim \D}{\induced(S)} = \Ex{S \sim \D}{|S|}$, i.e., the right-hand side of \Cref{eq:binary-signal-persuasiveness} is $1$. Then, using the assumption that $\btheta$ is a stable solution, along with the fact that the marginal of $\D$ exactly matches $\btheta$, we can prove that $\Ex{S \sim \D}{\cut(S, V\setminus S)} \ge n - \|\btheta\|_1 = \Ex{S \sim \D}{|V \setminus S|}$. %
This lower bounds the left-hand side of \Cref{eq:binary-signal-persuasiveness} by $1$.

The other ingredient of our proof is an argument based on LP duality, which shows that, whenever $\OPT < \OPTIC$, there exists an optimal stable solution $\btheta \in [0, 1]^V$ such that the rounding of $\btheta$ actually satisfies \Cref{eq:binary-signal-persuasiveness} with a positive margin. This allows us to randomize between this scheme (with cost $\OPTIC$) and an independent rounding of the optimal solution (with cost $\OPT < \OPTIC$), and achieve a cost strictly below $\OPTIC$.

\subsection{Overview of upper bounds in weighted graphs} \label{sec:weighted-overview}
Recall that for unit-weight graphs, our binary signaling scheme is based on sending a positive signal to either $\DS$ (a minimum dominating set) or $\IS$ (a maximal independent set of the sub-graph induced by $V \setminus \DS$), each with a carefully chosen probability. Sending signal $1$ to the vertices in $\DS$ gives a low cost, guarantees feasibility, but might not be stable. The purpose of the set $\IS$ is to reduce the slack introduced by $\DS$.

What goes wrong when we repeat this strategy on weighted graphs? The argument about $\DS$ generalizes easily. On unit-weight graphs, a dominating set is equivalent to an integral solution, in which the contribution from each vertex is either $0$ or $1$. Naturally, for weighted graphs, we choose $\DS$ as the set corresponding to the optimal integral solution. Towards satisfying \Cref{eq:binary-signal-persuasiveness}, the feasibility of the integral solution guarantees $\cut(\DS, V \setminus \DS) \ge |V \setminus \DS|$, i.e., the left-hand side of \Cref{eq:binary-signal-persuasiveness} is lower bounded by $1$. Under mild assumptions on the edge weights, we can still show that $|\DS|$ is not much larger than $\OPTIR$, the optimal fractional solution subject to IR.

Unfortunately, as we demonstrate in the example below, it might be impossible to find an analogue of $\IS$ for our purpose.

\paragraph{A concrete example where binary schemes fail.} The graph is shown in \Cref{fig:binary-schemes-fail}. Every edge in the graph has a weight of $1/2$. The graph contains two ``centers'' and $k$ triangles (consisting of $3k$ ``leaves''). Each center is connected to {all} other {vertices} in the graph.

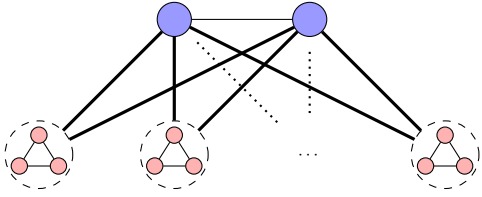
\begin{figure}
  \begin{center}
    \begin{tikzpicture}[scale=0.6, transform shape]
        \def \radius {1.5cm}
        
        \node[draw, circle, fill = blue!40] at (-1*\radius,0) (center1) {\phantom{$c_1$}};
        \node[draw, circle, fill = blue!40] at (1*\radius, 0) (center2) {\phantom{$c_2$}};
        \draw (center1) -- (center2);

        \node[draw, circle, minimum size = \radius, dashed] at (- 3*\radius, -2 * \radius) (triangle1){};
        \draw[very thick] (center1) -- (triangle1);
        \draw[very thick] (center2) -- (triangle1);

        \node[draw, circle, fill = red!30] at (-3*\radius+0*\radius, -2*\radius+0.289*\radius) (leaf11) {};
        \node[draw, circle, fill = red!30] at (-3*\radius+0.289*\radius, -2*\radius-0.167*\radius) (leaf12) {};
        \node[draw, circle, fill = red!30] at (-3*\radius-0.289*\radius, -2*\radius-0.167*\radius) (leaf13) {};
        \draw (leaf11) -- (leaf12);
        \draw (leaf12) -- (leaf13);
        \draw (leaf13) -- (leaf11);

        \node[draw, circle, minimum size = \radius, dashed] at (-1*\radius, -2 * \radius) (triangle2){};
        \draw[very thick] (center1) -- (triangle2);
        \draw[very thick] (center2) -- (triangle2);

        \node[draw, circle, fill = red!30] at (-1*\radius+0*\radius, -2*\radius+0.289*\radius) (leaf21) {};
        \node[draw, circle, fill = red!30] at (-1*\radius+0.289*\radius, -2*\radius-0.167*\radius) (leaf22) {};
        \node[draw, circle, fill = red!30] at (-1*\radius-0.289*\radius, -2*\radius-0.167*\radius) (leaf23) {};
        \draw (leaf21) -- (leaf22);
        \draw (leaf22) -- (leaf23);
        \draw (leaf23) -- (leaf21);

        \node[circle, minimum size = \radius] at (1*\radius, -2 * \radius) (triangle3){$\cdots$};
        \draw[dotted, thick, shorten >=1mm, shorten <=2mm] (center1)--(triangle3);
        \draw[dotted, thick, shorten >=1mm, shorten <=2mm] (center2)--(triangle3);

        \node[draw, circle, minimum size = \radius, dashed] at (3*\radius, -2 * \radius) (trianglek){};
        \draw[very thick] (center1) -- (trianglek);
        \draw[very thick] (center2) -- (trianglek);

        \node[draw, circle, fill = red!30] at (3*\radius+0*\radius, -2*\radius+0.289*\radius) (leafk1) {};
        \node[draw, circle, fill = red!30] at (3*\radius+0.289*\radius, -2*\radius-0.167*\radius) (leafk2) {};
        \node[draw, circle, fill = red!30] at (3*\radius-0.289*\radius, -2*\radius-0.167*\radius) (leafk3) {};
        \draw (leafk1) -- (leafk2);
        \draw (leafk2) -- (leafk3);
        \draw (leafk3) -- (leafk1);
        
        \end{tikzpicture}
  \end{center}
  \caption{\footnotesize A weighted graph on which all binary schemes fail. There are $k$ triangles in total. Each thick line indicates the edges between a center and all the three vertices in a triangle. Every edge is of weight $1/2$.}
        
        \label[figure]{fig:binary-schemes-fail}
\end{figure}

On this graph, the optimum subject to IR, $\OPTIR = 2$, is achieved when each center plays $1$. As discussed earlier, we pick $\DS$ as the set of the two centers. If we, as in unit-weight graphs, pick $\IS$ as a maximal independent set of the induced sub-graph of $V \setminus \DS$, $\IS$ would contain exactly one vertex in each triangle. However, it can be verified that no distribution $\D$ supported over $\{\DS, \IS\}$ satisfies \Cref{eq:binary-signal-persuasiveness}. Therefore, we cannot derive a persuasive binary signaling scheme following the previous approach.

In fact, this is not even a problem with this specific choice of $\DS$ and $\IS$ --- in \Cref{thm:weighted-binary-signals-fail}, we show that \emph{every} persuasive binary signaling scheme has a cost of $\Omega(n)$ on the instance above!

\paragraph{Solution: a third signal.} While binary schemes are insufficient for weighted graphs, a minimal fix --- adding a third signal --- would solve the problem. Concretely, for the graph in \Cref{fig:binary-schemes-fail}, there is a non-trivial signaling scheme via randomizing among the three plans below:
\begin{itemize}
    \item Plan A: Label the centers with $1 - \eps$. Label the leaves with $0$.
    \item Plan B: Label one vertex from each triangle with $1 - \eps$. Label all other vertices with $0$.
    \item Plan C: Label the leaves with $\alpha$. Label the centers with $0$.
\end{itemize}
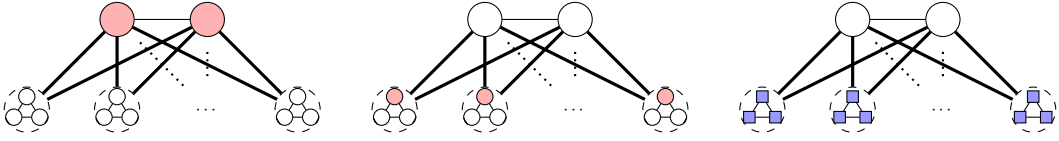
\begin{figure}[!htbp]
    \centering
    \begin{tikzpicture}[scale=0.6, transform shape]
        \def \radius {1cm}
        
        \node[draw, circle, fill = red!30] at (-1*\radius,0) (center1) {\phantom{$c_1$}};
        \node[draw, circle, fill = red!30] at (1*\radius, 0) (center2) {\phantom{$c_2$}};
        \draw (center1) -- (center2);

        \node[draw, circle, minimum size = \radius, dashed] at (- 3*\radius, -2 * \radius) (triangle1){};
        \draw[very thick] (center1) -- (triangle1);
        \draw[very thick] (center2) -- (triangle1);

        \node[draw, circle] at (-3*\radius+0*\radius, -2*\radius+0.289*\radius) (leaf11) {};
        \node[draw, circle] at (-3*\radius+0.289*\radius, -2*\radius-0.167*\radius) (leaf12) {};
        \node[draw, circle] at (-3*\radius-0.289*\radius, -2*\radius-0.167*\radius) (leaf13) {};
        \draw (leaf11) -- (leaf12);
        \draw (leaf12) -- (leaf13);
        \draw (leaf13) -- (leaf11);

        \node[draw, circle, minimum size = \radius, dashed] at (-1*\radius, -2 * \radius) (triangle2){};
        \draw[very thick] (center1) -- (triangle2);
        \draw[very thick] (center2) -- (triangle2);

        \node[draw, circle] at (-1*\radius+0*\radius, -2*\radius+0.289*\radius) (leaf21) {};
        \node[draw, circle] at (-1*\radius+0.289*\radius, -2*\radius-0.167*\radius) (leaf22) {};
        \node[draw, circle] at (-1*\radius-0.289*\radius, -2*\radius-0.167*\radius) (leaf23) {};
        \draw (leaf21) -- (leaf22);
        \draw (leaf22) -- (leaf23);
        \draw (leaf23) -- (leaf21);

        \node[circle, minimum size = \radius] at (1*\radius, -2 * \radius) (triangle3){$\cdots$};
        \draw[dotted, thick, shorten >=1mm, shorten <=2mm] (center1)--(triangle3);
        \draw[dotted, thick, shorten >=1mm, shorten <=2mm] (center2)--(triangle3);

        \node[draw, circle, minimum size = \radius, dashed] at (3*\radius, -2 * \radius) (trianglek){};
        \draw[very thick] (center1) -- (trianglek);
        \draw[very thick] (center2) -- (trianglek);

        \node[draw, circle] at (3*\radius+0*\radius, -2*\radius+0.289*\radius) (leafk1) {};
        \node[draw, circle] at (3*\radius+0.289*\radius, -2*\radius-0.167*\radius) (leafk2) {};
        \node[draw, circle] at (3*\radius-0.289*\radius, -2*\radius-0.167*\radius) (leafk3) {};
        \draw (leafk1) -- (leafk2);
        \draw (leafk2) -- (leafk3);
        \draw (leafk3) -- (leafk1);
        
        \end{tikzpicture}
    \hspace{12pt}
    \begin{tikzpicture}[scale=0.6, transform shape]
        \def \radius {1cm}
        
        \node[draw, circle] at (-1*\radius,0) (center1) {\phantom{$c_1$}};
        \node[draw, circle] at (1*\radius, 0) (center2) {\phantom{$c_2$}};
        \draw (center1) -- (center2);

        \node[draw, circle, minimum size = \radius, dashed] at (- 3*\radius, -2 * \radius) (triangle1){};
        \draw[very thick] (center1) -- (triangle1);
        \draw[very thick] (center2) -- (triangle1);

        \node[draw, circle, fill = red!30] at (-3*\radius+0*\radius, -2*\radius+0.289*\radius) (leaf11) {};
        \node[draw, circle] at (-3*\radius+0.289*\radius, -2*\radius-0.167*\radius) (leaf12) {};
        \node[draw, circle] at (-3*\radius-0.289*\radius, -2*\radius-0.167*\radius) (leaf13) {};
        \draw (leaf11) -- (leaf12);
        \draw (leaf12) -- (leaf13);
        \draw (leaf13) -- (leaf11);

        \node[draw, circle, minimum size = \radius, dashed] at (-1*\radius, -2 * \radius) (triangle2){};
        \draw[very thick] (center1) -- (triangle2);
        \draw[very thick] (center2) -- (triangle2);

        \node[draw, circle, fill = red!30] at (-1*\radius+0*\radius, -2*\radius+0.289*\radius) (leaf21) {};
        \node[draw, circle] at (-1*\radius+0.289*\radius, -2*\radius-0.167*\radius) (leaf22) {};
        \node[draw, circle] at (-1*\radius-0.289*\radius, -2*\radius-0.167*\radius) (leaf23) {};
        \draw (leaf21) -- (leaf22);
        \draw (leaf22) -- (leaf23);
        \draw (leaf23) -- (leaf21);

        \node[circle, minimum size = \radius] at (1*\radius, -2 * \radius) (triangle3){$\cdots$};
        \draw[dotted, thick, shorten >=1mm, shorten <=2mm] (center1)--(triangle3);
        \draw[dotted, thick, shorten >=1mm, shorten <=2mm] (center2)--(triangle3);

        \node[draw, circle, minimum size = \radius, dashed] at (3*\radius, -2 * \radius) (trianglek){};
        \draw[very thick] (center1) -- (trianglek);
        \draw[very thick] (center2) -- (trianglek);

        \node[draw, circle, fill = red!30] at (3*\radius+0*\radius, -2*\radius+0.289*\radius) (leafk1) {};
        \node[draw, circle] at (3*\radius+0.289*\radius, -2*\radius-0.167*\radius) (leafk2) {};
        \node[draw, circle] at (3*\radius-0.289*\radius, -2*\radius-0.167*\radius) (leafk3) {};
        \draw (leafk1) -- (leafk2);
        \draw (leafk2) -- (leafk3);
        \draw (leafk3) -- (leafk1);
        
        \end{tikzpicture}
    \hspace{12pt}
    \begin{tikzpicture}[scale=0.6, transform shape]
        \def \radius {1cm}
        
        \node[draw, circle] at (-1*\radius,0) (center1) {\phantom{$c_1$}};
        \node[draw, circle] at (1*\radius, 0) (center2) {\phantom{$c_2$}};
        \draw (center1) -- (center2);

        \node[draw, circle, minimum size = \radius, dashed] at (- 3*\radius, -2 * \radius) (triangle1){};
        \draw[very thick] (center1) -- (triangle1);
        \draw[very thick] (center2) -- (triangle1);

        \node[draw, rectangle, fill = blue!40] at (-3*\radius+0*\radius, -2*\radius+0.289*\radius) (leaf11) {};
        \node[draw, rectangle, fill = blue!40] at (-3*\radius+0.289*\radius, -2*\radius-0.167*\radius) (leaf12) {};
        \node[draw, rectangle, fill = blue!40] at (-3*\radius-0.289*\radius, -2*\radius-0.167*\radius) (leaf13) {};
        \draw (leaf11) -- (leaf12);
        \draw (leaf12) -- (leaf13);
        \draw (leaf13) -- (leaf11);

        \node[draw, circle, minimum size = \radius, dashed] at (-1*\radius, -2 * \radius) (triangle2){};
        \draw[very thick] (center1) -- (triangle2);
        \draw[very thick] (center2) -- (triangle2);

        \node[draw, rectangle, fill = blue!40] at (-1*\radius+0*\radius, -2*\radius+0.289*\radius) (leaf21) {};
        \node[draw, rectangle, fill = blue!40] at (-1*\radius+0.289*\radius, -2*\radius-0.167*\radius) (leaf22) {};
        \node[draw, rectangle, fill = blue!40] at (-1*\radius-0.289*\radius, -2*\radius-0.167*\radius) (leaf23) {};
        \draw (leaf21) -- (leaf22);
        \draw (leaf22) -- (leaf23);
        \draw (leaf23) -- (leaf21);

        \node[circle, minimum size = \radius, dashed] at (1*\radius, -2 * \radius) (triangle3){$\cdots$};
        \draw[dotted, thick, shorten >=1mm, shorten <=2mm] (center1)--(triangle3);
        \draw[dotted, thick, shorten >=1mm, shorten <=2mm] (center2)--(triangle3);

        \node[draw, circle, minimum size = \radius, dashed] at (3*\radius, -2 * \radius) (trianglek){};
        \draw[very thick] (center1) -- (trianglek);
        \draw[very thick] (center2) -- (trianglek);

        \node[draw, rectangle, fill = blue!40] at (3*\radius+0*\radius, -2*\radius+0.289*\radius) (leafk1) {};
        \node[draw, rectangle, fill = blue!40] at (3*\radius+0.289*\radius, -2*\radius-0.167*\radius) (leafk2) {};
        \node[draw, rectangle, fill = blue!40] at (3*\radius-0.289*\radius, -2*\radius-0.167*\radius) (leafk3) {};
        \draw (leafk1) -- (leafk2);
        \draw (leafk2) -- (leafk3);
        \draw (leafk3) -- (leafk1);
        
        \end{tikzpicture}

        \label[figure]{fig:ternary-scheme}
        \caption{\footnotesize The three plans (left: Plan A; middle: Plan B; right: Plan C) of signaling for the graph from \Cref{fig:binary-schemes-fail}. Shaded circle vertices are labeled with $1 - \eps$. Empty circle vertices are labeled with $0$. Shaded square vertices are labeled with $\alpha$.}
\end{figure}

To ensure persuasiveness, we go back to the characterization in \Cref{lemma:persuasive-general}. The three plans, when we follow each of them alone, give
\begin{itemize}
    \item Plan A: $\Delta_{1 - \eps} = (1 - \eps) - \eps \cdot 2 = +\Theta(1)$ and  $\Delta_0 = 3k(1-\eps) - 1\cdot 3k =  -\Theta(k\eps)$.
    \item Plan B: $\Delta_{1 - \eps} = 0 - \eps\cdot k = -\Theta(k\eps)$ and $\Delta_0 = 2k(1-\eps) - 1\cdot (2k+2) = -\Theta(1 + k\eps)$.
    \item Plan C: $\Delta_0 = 3k\alpha - 1\cdot 2 = 3k\alpha - 2$ and $\Delta_\alpha = 3k\alpha - (1 - \alpha)\cdot 3k = 6k\alpha - 3k$.
\end{itemize}
To satisfy the constraint $\Delta_{\alpha} = 0$, we set $\alpha = 1/2$. The effect of Plan~C is then simplified to
\[
    \Delta_0 = +\Theta(k)
\quad \text{and} \quad
    \Delta_{1/2} = 0.
\]

In hindsight, the introduction of Plan~C and the third signal $\alpha$ is natural. Our previous strategy, which corresponds to randomizing between only Plans A~and~B, cannot guarantee both $\Delta_{1-\eps}=0$ and $\Delta_0 \ge 0$ --- enforcing $\Delta_{1-\eps} = 0$ inevitably results in $\Delta_0 < 0$. The purpose of Plan~C is then to bring $\Delta_0$ back to $0$, without introducing additional slacks.

Formally, it can be verified that, for some $\eps, p, q = \Theta(1/\sqrt{k})$, following the three plans with probability $1 - p - q$, $p$, and $q$ respectively gives a persuasive scheme. The resulting cost is given by
\[
    (1 - p - q)\cdot 2(1-\eps) + p \cdot k(1-\eps) + q \cdot 3k\alpha = O(\sqrt{k}) = O(\sqrt{n}).
\]

\paragraph{Upper bound for general graphs.} For general weighted graphs, generalizing the ternary signaling scheme above gives a cost of $n|\DS| / \sqrt{|\IS|}$, where $\IS$ is {an independent set of the graph}. 
{In particular, the first two plans still correspond to assigning signal $1 - \eps$ to $\DS$ and $\IS$, respectively, while Plan~C assigns a different non-zero value to $V \setminus \DS$, in the hope of making the slack at $0$ non-negative.}
When we can find a large $\IS$, this would give a non-trivial approximation of $|\DS|$, {which in turn approximates $\OPTIR$.} %
What if there are no large independent sets, e.g., when the graph is dense?

Assuming a lower bound on the non-zero edge weights, this issue can be resolved via a win-win argument. Suppose that the edge weights are at least $\delta > 0$. Let $m$ be the total edge weight in the graph. If $m$ is large, by \Cref{cor:degenerate-signal}, we have a persuasive signaling scheme with cost $O(n^2/m)$. If $m$ is small, by our assumption on the edge weights, there are at most $m / \delta$ edges in the graph. Intuitively, this sparse graph must have a large independent set. Indeed, we can lower bound $|\IS|$ by $\Omega(\delta n^2/m)$. Then, regardless of the value of $m$, the better between the two schemes gives an $O((n\cdot\OPTIR)^{2/3} \cdot \delta^{-1/3})$ cost.

Without such a lower bound $\delta$, we still have a non-trivial approximation of $\OPTIR$. The key observation is that $\IS$ does not need to be an independent set --- the same proof strategy goes through as long as, in the sub-graph induced by $\IS$, every vertex has a (weighted) degree of $O(\eps)$. Going one step further, we do not need $\IS$ to be a deterministic set at all! In our actual proof, we replace $\IS$ with a random set $\IStilde \subseteq V$, obtained from including each vertex in the graph with a fixed, carefully chosen probability. This gives the $O(n^{3/4}\cdot\left(\OPTIR\right)^{1/2})$ bound for the general case.

\subsection{Overview of lower bounds}
\label{sec:lower_overview}
To prove the tightness of our approximation guarantees, we start by reverse-engineering the proofs of the upper bounds, and identifying graphs on which the analyses are tight. For unit-weight graphs, the hard instance is exactly the double-star graph in \Cref{fig:double-star}. For graphs with edge weights bounded by $\Omega(1)$, we modify the construction in \Cref{fig:binary-schemes-fail} by replacing the $\Theta(n)$ triangles with $\Theta(n^{2/3})$ copies of a clique of size $\Theta(n^{1/3})$.

The more difficult part is, of course, to lower bound the cost of \emph{all} persuasive signaling schemes on these graphs. To this end, we revisit the characterization of persuasiveness in \Cref{lemma:persuasive-general}: $\D_\varphi \in \Delta(\signalspace^{V})$ is persuasive if and only if the induced slacks satisfy $\Delta_0 \ge 0$ and $\Delta_\theta = 0$ for all $\theta \in \signalspace \setminus \{0\}$. By \Cref{def:slack}, each $\Delta_\theta$ can be expressed as an expectation over the distribution $\D_\varphi$. Therefore, the family of persuasive signaling schemes with signal space $\signalspace$ is exactly a subset of $\Delta(\signalspace^V)$ defined by finitely many linear constraints. Consequently, the minimum-cost persuasive scheme is characterized by a linear program.

This simple observation leads us to lower bound the cost of persuasive signaling schemes by constructing a feasible solution of the dual LP. Concretely, suppose that we could find a function $f:[0, 1] \to \mathbb{R}$ such that $f(0) \ge 0$ and, for any $\signalspace$ and degenerate distribution at $s \in \signalspace^V$ (i.e., a deterministic labeling of $V$ using values in $\signalspace$), the resulting slacks satisfy
\begin{equation}\label{eq:dual-feasibility-overview}
    \|s\|_1 \ge \sum_{\theta \in \signalspace}f(\theta)\cdot\Delta_\theta + C.
\end{equation}
Then, for any persuasive signaling scheme $\D_\varphi \in \Delta(\signalspace^V)$, the linearity of expectation gives
\begin{align*}
    \Ex{s \sim \D_\varphi}{\|s\|_1}
\ge \sum_{\theta \in \signalspace}f(\theta)\cdot\Delta_{\theta} + C
\ge C.
\end{align*}
The last step above holds since when $\theta = 0$, $f(0)\cdot\Delta_0 \ge 0$ holds as both $f(0)$ and $\Delta_0$ are non-negative. When $\theta \ne 0$, $\Delta_\theta = 0$ implies $f(\theta)\cdot\Delta_\theta = 0$.

Our proof of the $\Omega(n^{2/3})$ lower bound (for graphs with edge weights lower bounded by $\Omega(1)$) proceeds by carefully choosing the function $f(\theta)$, and proving \Cref{eq:dual-feasibility-overview} for a sufficiently large $C$ via an involved case analysis. For unit-weight graphs, while the $\Omega(\sqrt{n})$ lower bound admits a simpler proof, in the proof we implicitly consider a dual solution, in which $f$ is a piece-wise constant function $f$ that takes a negative value with a large magnitude when $\theta$ is close to $1$.
    
    \section{Discussion and Future Directions}\label{sec:discussion}
In this section, we highlight a few concrete open problems and discuss a few natural extensions of our model.

\paragraph{The power of simple signaling schemes.} A recurring theme in our positive results is the surprising effectiveness of \emph{simple} signaling schemes that only use a few different signal values. The only exception is \Cref{thm:weighted-improve-ic}: we use $\Omega(n)$ different signals to achieve a cost below $\OPTIC$. Can we reduce the number of signals to $O(1)$? In \Cref{sec:discussion-details-match-ic}, we explain the technical difficulties in proving such a result, and give instances which suggest that the binary and ternary schemes that we consider are insufficient in general.

\paragraph{Tight approximation ratio for weighted graphs.} In terms of worst-case approximation guarantees, the only result that is not shown to be optimal is the $O(n^{3/4})$ approximation on general weighted graphs. In \Cref{sec:discussion-details-approx}, we give a concrete instance on which the $n^{3/4}$ ratio is conjectured to be tight, and discuss why the conjecture does not following easily from our current proof strategy.

\paragraph{Alternative collaboration systems.} Our work focuses on collaboration systems in which the quality of each agent's task is a linear combination of the agent's own contribution and the amounts contributed by the neighbors. In the context of collaborative federated learning, this corresponds to the \emph{random discovery} model proposed by~\citet{blumOneOneAll2021}. %
A non-linear version of the collaboration system is another model of~\citet{blumOneOneAll2021}, termed \emph{random coverage}, in which each agent is associated with a (discrete) data distribution, and the accuracy of each agent's task is linear in the total probability mass of the elements sampled by the agent themself and their neighbors. This formulation brings more structure to the relation between each pair of agents' tasks and might allow us to circumvent some hard instances in the linear model. For instance, agent $i$'s data are maximally effective for another agent $j$ only if they share the same distribution, and this property would be transitive. In contrast, in our current model, there might exist task types $i$, $j$, and $k$ such that $W_{i,j} = W_{j,k} = 1$ yet $W_{i,k} = 0$.

\paragraph{Alternative models of incentives.} Another modeling assumption that we made is regarding the agents' incentives. Implicit in the definition of persuasiveness (\Cref{sec:signaling-schemes}) is that each agent prioritizes satisfying feasibility, and exactly minimizes their own effort subject to the feasibility constraint. A natural relaxation is to allow \emph{approximate stability}, i.e., a signaling scheme is considered stable as long as no agent has the incentive to decrease their action by some $\eps > 0$. We may also consider alternative models in which each agent maximizes the expected quality of their task minus a penalization term that depends on their own workload.

    \section{Upper Bounds for Unit-Weight Graphs}\label{sec:unit-weight-upper}

In this section, we prove the upper bounds in \Cref{thm:unit-ub-opt,thm:unit-improve-ic}. To this end, we first state a few bounds on the cuts and induced sub-graphs in the unit-weight graph representation of the collaboration system. We then present a binary signaling scheme that is shown to be persuasive and achieve an $O(\sqrt{n})$ approximation of $\OPT$. Towards proving \Cref{thm:unit-improve-ic}, we give a binary scheme with a cost of exactly $\OPTIC$, which, under the additional assumption that $\OPT < \OPTIC$, easily implies a strict improvement upon $\OPTIC$.

\subsection{Weights of cuts and induced sub-graphs}
\label{sec:weight-induced-size}
We start by stating several elementary bounds on the cuts and induced edge weights (from \Cref{def:cut-induced}) for dominating sets and maximal independent sets in a graph. These bounds will be used towards constructing binary signaling schemes via \Cref{lemma:persuasive-binary}. We defer the proofs to \Cref{sec:unit-weight-upper-omitted}.

The first bound is regarding maximal independent sets in an induced sub-graph obtained from removing a dominating set.

\begin{lemma}\label[lemma]{lemma:IS-cut}
    Let $\DS$ denote a dominating set of a unit-weight graph $G = (V, E)$, and $\IS$ be an arbitrary maximal independent set of the sub-graph induced by $V \setminus \DS$. Then,
    $\cut(\IS, V \setminus \IS) \ge |V| - |\DS|.$
\end{lemma}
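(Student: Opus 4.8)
The plan is to decompose the cut $\cut(\IS, V \setminus \IS)$ according to the partition of $V \setminus \IS$ into the dominating set $\DS$ and the remaining vertices $V \setminus (\DS \cup \IS)$, and then lower bound each piece by a counting argument. Since the graph is unit-weight, $\cut(\IS, V \setminus \IS)$ simply counts the number of edges with exactly one endpoint in $\IS$, and we have
\[
    \cut(\IS, V \setminus \IS) = \cut(\IS, \DS) + \cut\bigl(\IS, V \setminus (\DS \cup \IS)\bigr),
\]
using the fact that $\DS$ and $V \setminus (\DS \cup \IS)$ together partition $V \setminus \IS$ (recall $\IS \subseteq V \setminus \DS$, so $\IS$ and $\DS$ are disjoint).

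First I would bound $\cut(\IS, \DS) \ge |\IS|$. This is immediate from the defining property of a dominating set: every vertex $v \in \IS \subseteq V \setminus \DS$ must have at least one neighbor in $\DS$, so each vertex of $\IS$ contributes at least one edge crossing into $\DS$; since these edges are distinct across distinct $v \in \IS$, the count is at least $|\IS|$. Next I would bound $\cut(\IS, V \setminus (\DS \cup \IS)) \ge |V \setminus (\DS \cup \IS)|$. Here the key is the \emph{maximality} of $\IS$ as an independent set of the subgraph induced by $V \setminus \DS$: any vertex $w \in (V \setminus \DS) \setminus \IS = V \setminus (\DS \cup \IS)$ cannot be added to $\IS$ while preserving independence, so $w$ must have a neighbor in $\IS$. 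Again, each such $w$ contributes a distinct crossing edge, giving the claimed bound.

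Combining the two inequalities,
\[
    \cut(\IS, V \setminus \IS) \ge |\IS| + |V \setminus (\DS \cup \IS)| = |\IS| + \bigl(|V| - |\DS| - |\IS|\bigr) = |V| - |\DS|,
\]
where the middle equality uses that $\DS$, $\IS$, and $V \setminus (\DS \cup \IS)$ partition $V$ (again relying on $\DS \cap \IS = \emptyset$). This is exactly the claimed bound.

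I do not anticipate a genuine obstacle here: both steps are elementary consequences of the definitions of ``dominating set'' and ``maximal independent set,'' and the arithmetic is a one-line partition count. The only points requiring a little care are (i) confirming the partition identities so that the decomposition of the cut is exact rather than an inequality, and (ii) making sure the edges counted in the two bounds are genuinely distinct so that the counts add — which holds since $\DS$ and $V \setminus (\DS \cup \IS)$ are disjoint, hence an edge from $\IS$ cannot be simultaneously counted in $\cut(\IS,\DS)$ and in $\cut(\IS, V\setminus(\DS\cup\IS))$. This is essentially the same computation already sketched in \Cref{sec:overview_unit} around \Cref{fig:DS-IS-example}, now stated as a standalone lemma.
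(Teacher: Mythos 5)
Your proof is correct and follows essentially the same route as the paper's: decompose the cut over the partition of $V \setminus \IS$ into $\DS$ and $V \setminus (\DS \cup \IS)$, lower bound the first part by $|\IS|$ via the dominating-set property and the second by $|V \setminus (\DS \cup \IS)|$ via the maximality of $\IS$. No gaps.
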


The second lemma considers a random set $\DStilde$ obtained from the independent rounding of a feasible solution that satisfies the IR requirement as defined in \Cref{sec:benchmarks-PoS}. This lemma holds for weighted graphs as well.

\begin{lemma}\label[lemma]{lemma:IR-sol-rounding}
    Let $G = (V, E, w)$ be a weighted graph, and $W$ be the corresponding matrix. Let $\btheta \in [0, 1]^V$ denote a feasible solution subject to IR, i.e., $W\btheta \ge 1$ and $\btheta\le1$ hold coordinate-wise. Let $\D \in \Delta(2^V)$ denote the distribution of a random subset of $V$ that includes each $v \in V$ independently with probability $\theta_v$. Then, the following bounds hold:
    \begin{itemize}
        \item $\Ex{S \sim \D}{|S|} = \|\btheta\|_1$.
        \item $\Ex{S \sim \D}{\induced(S)} \le \|\btheta\|_1^2 + \|\btheta\|_1$.
        \item $\Ex{S \sim \D}{\cut(S, V\setminus S)} \ge |V| - 2\|\btheta\|_1$.
    \end{itemize}
\end{lemma}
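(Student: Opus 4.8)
The plan is to establish all three bounds by direct computation: expand each graph quantity from \Cref{def:cut-induced} into a double sum over (ordered) pairs of vertices, push the expectation inside by linearity, and evaluate the resulting probabilities using the independence of the events $\{v \in S\}$, which by construction satisfy $\Pr[v \in S] = \theta_v$ and $\Pr[u \in S \wedge v \in S] = \theta_u\theta_v$ for $u \ne v$.

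The first bound is immediate: writing $|S| = \sum_{v \in V}\1{v \in S}$ gives $\Ex{S\sim\D}{|S|} = \sum_{v\in V}\theta_v = \|\btheta\|_1$. For the second bound, $\induced(S) = \sum_{u\in S}\sum_{v\in S}W_{u,v}$ yields $\Ex{S\sim\D}{\induced(S)} = \sum_{u,v\in V}W_{u,v}\Pr[u\in S \wedge v\in S]$; I would split off the diagonal terms $u = v$, which contribute $\sum_u W_{u,u}\theta_u = \|\btheta\|_1$, and bound the off-diagonal terms using $W_{u,v}\in[0,1]$ and independence by $\sum_{u\neq v}\theta_u\theta_v \le \|\btheta\|_1^2$. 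Summing the two contributions gives the claimed $\|\btheta\|_1^2 + \|\btheta\|_1$.

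The third bound is the only part that uses the hypotheses on $\btheta$, and is where any real care is needed. Expanding $\cut(S, V\setminus S) = \sum_{u\in S}\sum_{v\in V\setminus S}W_{u,v}$ and using independence gives $\Ex{S\sim\D}{\cut(S, V\setminus S)} = \sum_{u\neq v}W_{u,v}\,\theta_u(1-\theta_v)$. The key manipulation is to fix the ``outside'' vertex $v$, factor out $(1-\theta_v)$, and lower bound $\sum_{u\neq v}W_{u,v}\theta_u$: by symmetry of $W$, this sum equals $(W\btheta)_v - W_{v,v}\theta_v \ge 1 - \theta_v$, where the inequality uses feasibility $W\btheta \ge \vecone$. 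The IR constraint $\btheta \le \vecone$ ensures $1 - \theta_v \ge 0$, so multiplying through is valid and we obtain $\Ex{S\sim\D}{\cut(S, V\setminus S)} \ge \sum_{v\in V}(1-\theta_v)^2 \ge |V| - 2\|\btheta\|_1$, the last step discarding the nonnegative term $\sum_v \theta_v^2$. The only (minor) subtlety is that both hypotheses are needed and must be applied in this order; beyond that, no obstacle is anticipated, as everything reduces to bookkeeping with linearity of expectation.
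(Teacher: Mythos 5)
Your proposal is correct and follows essentially the same route as the paper: linearity of expectation plus independence for the first two bounds (splitting diagonal from off-diagonal terms and using $W_{u,v}\le 1$), and for the cut bound, factoring out $(1-\theta_v)$ and invoking $W\btheta\ge\vecone$ together with $W_{v,v}=1$ to get $\sum_{u\ne v}W_{u,v}\theta_u\ge 1-\theta_v$, then $\sum_v(1-\theta_v)^2\ge |V|-2\|\btheta\|_1$. No gaps.
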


\subsection{General unit-weight graphs}
In this section, we prove the $O(\sqrt{n}\cdot\OPT)$ upper bound in \Cref{thm:unit-ub-opt} by designing a distribution $\D\in\Delta(2^V)$ that satisfies \Cref{lemma:persuasive-binary}. Our proof closely relies on the properties developed in \cref{sec:weight-induced-size}. Before the proof, we first present a lemma on the integrality gap of $\OPT$.
Note that the program for computing $\OPT$ is essentially the relaxed program determining the minimum dominating set of a graph. Therefore, our lemma follows from the integrality gap upper bound for the dominating set problem, e.g., see \citep[Section 1.7]{WS11}.
\begin{lemma}[Integrality gap]
        \label[lemma]{lem:DS-integrality-gap}
        There exists a dominating set $\DS\subseteq V$ such that $|\DS|=O(\log n\cdot\OPT)$. Moreover, the set $\DS$ can be computed efficiently through randomized rounding of $\OPT$.
\end{lemma}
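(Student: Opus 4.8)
The plan is to recognize that, on a unit-weight graph, the program defining $\OPT$ is precisely the standard LP relaxation of minimum dominating set, and then invoke the textbook randomized-rounding argument that certifies its $O(\log n)$ integrality gap.

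First I would unpack the feasibility constraint: on a unit-weight graph $W = I + A$ with $A$ the adjacency matrix, so $u_v(\btheta) = \sum_{v' \in N[v]} \theta_{v'}$ where $N[v] := \{v\} \cup N(v)$ is the closed neighborhood; thus $\bu(\btheta) \ge \vecone$ says exactly that $\sum_{v' \in N[v]} \theta_{v'} \ge 1$ for every $v$, and the $\{0,1\}$-valued feasible points are precisely the dominating sets of $G$. Let $\btheta^\star$ be an optimal fractional solution, $\|\btheta^\star\|_1 = \OPT$; capping each coordinate at $1$ keeps feasibility and only decreases the $\ell_1$ norm, so I may assume $\btheta^\star \in [0,1]^V$.

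Next comes the rounding. I would run $t := \lceil 3 \ln n \rceil$ independent rounds, each including every $v \in V$ with probability $\theta^\star_v$, and let $\DS$ be the union of the $t$ sampled sets. Two checks remain. (i) Size: $\mathbb{E}[|\DS|] \le t \cdot \|\btheta^\star\|_1 = O(\log n \cdot \OPT)$, so $|\DS| = O(\log n \cdot \OPT)$ with probability at least $1/2$ by Markov. (ii) Domination: for a fixed $v$, the probability that one round misses all of $N[v]$ is $\prod_{v' \in N[v]} (1 - \theta^\star_{v'}) \le \exp\!\left(-\sum_{v' \in N[v]} \theta^\star_{v'}\right) \le e^{-1}$ by feasibility at $v$ together with $1 - x \le e^{-x}$; hence $v$ stays undominated after all $t$ rounds with probability at most $n^{-3}$, and a union bound over the $n$ vertices shows $\DS$ fails to dominate $G$ with probability at most $n^{-2}$.

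Combining (i) and (ii), the random $\DS$ is simultaneously a dominating set and of size $O(\log n \cdot \OPT)$ with probability bounded away from $0$, which gives existence; and the procedure --- solve the LP for $\btheta^\star$, then round --- runs in polynomial time and succeeds with constant probability, which can be amplified by repetition. I do not anticipate a real obstacle: the only points requiring care are the capping step (so that the rounding probabilities are legitimate) and picking the constant in $t$ large enough to beat the union bound. Since this is exactly the classical integrality-gap bound for dominating set, one could alternatively just cite \citep[Section~1.7]{WS11}, as we do.
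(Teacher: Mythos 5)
Your proof is correct and is precisely the argument the paper invokes: the paper simply observes that the program for $\OPT$ is the LP relaxation of minimum dominating set and cites the standard $O(\log n)$ randomized-rounding integrality-gap bound from \citep[Section~1.7]{WS11}, which is exactly the capping-plus-$\Theta(\log n)$-rounds-of-independent-rounding argument you spell out. Both of your careful points (capping preserves feasibility since a capped coordinate still contributes $1$ to every closed neighborhood containing it, and the union bound over vertices) check out.
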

We are now ready to prove the upper bound theorem.
\begin{theorem}[Upper bound part of \Cref{thm:unit-ub-opt}]
\label{thm:unit-upper-general}
    In any unit-weight graph, there is a persuasive signaling scheme with a cost of $O\left(\sqrt{n}\cdot\OPT\right)$.
\end{theorem}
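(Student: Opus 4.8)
The plan is to combine the two ingredients developed in \Cref{sec:weight-induced-size}: a dominating-set-like distribution $\DStilde$ that approximates $\OPT$ but may violate persuasiveness, and an independent-set $\IS$ that restores persuasiveness, and then take a carefully chosen convex combination of the two. Concretely, let $\btheta^\star$ be an optimal feasible (fractional) solution with $\|\btheta^\star\|_1 = \OPT$; since in unit-weight graphs feasibility is equivalent to $\btheta^\star$ being a fractional dominating set, and since $\OPT \le n$, we may assume $\btheta^\star \le \vecone$ after truncation (truncating at $1$ only helps feasibility because $W$ has unit diagonal). Let $\DStilde$ be the random subset obtained by independent rounding of $\btheta^\star$, so by \Cref{lemma:IR-sol-rounding} we have $\Ex{}{|\DStilde|} = \OPT$, $\Ex{}{\induced(\DStilde)} \le \OPT^2 + \OPT$, and $\Ex{}{\cut(\DStilde, V\setminus\DStilde)} \ge n - 2\OPT$. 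Let $\DS$ be the dominating set of size $O(\log n\cdot\OPT)$ from \Cref{lem:DS-integrality-gap}, and let $\IS$ be a maximal independent set of the subgraph induced by $V\setminus\DS$, so that $\induced(\IS) = |\IS|$, $|\IS| \le n$, and by \Cref{lemma:IS-cut} we have $\cut(\IS, V\setminus\IS) \ge n - |\DS| = n - O(\log n\cdot\OPT)$.

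The next step is to define $\D = (1-p)\cdot\DStilde + p\cdot\delta_{\IS}$ for a parameter $p\in[0,1]$ to be chosen, and verify the persuasiveness inequality \Cref{eq:binary-signal-persuasiveness} from \Cref{lemma:persuasive-binary}. Writing $N = \Ex{S\sim\D}{|S|} = (1-p)\OPT + p|\IS|$, $I = \Ex{S\sim\D}{\induced(S)} \le (1-p)(\OPT^2+\OPT) + p|\IS|$, and $Q = \Ex{S\sim\D}{\cut(S,V\setminus S)} \ge (1-p)(n-2\OPT) + p(n - O(\log n\cdot\OPT)) = n - O(\log n \cdot \OPT)$, the inequality to check is $\frac{Q}{n - N} \ge \frac{I}{N}$, i.e., $Q\cdot N \ge I\cdot(n-N)$. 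The intuition (mirroring the double-star sketch) is that $\DStilde$ alone makes the right-hand side blow up by a factor of roughly $\OPT$ (because $\induced(\DStilde)$ can be as large as $\OPT^2$), while mixing in even a $p = \Theta(1/\sqrt{n})$ fraction of $\IS$ forces $N \ge p|\IS|$, and when $|\IS|$ is large this makes $N$ large enough to dominate; when $|\IS|$ is small, the complementary bound — $\DS$ is small, hence $\OPT$ is small — keeps $I$ itself small. I would split into cases based on the size of $\IS$ (equivalently, whether $n$ is dominated by $|\IS|$ or by $|V\setminus(\DS\cup\IS)|$ and $|\DS|$) and pick $p$ accordingly, checking that $p = \Theta(1/\sqrt{n})$ works in each case. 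Once \Cref{eq:binary-signal-persuasiveness} holds, \Cref{lemma:persuasive-binary} hands us a persuasive binary scheme with cost at most $\Ex{S\sim\D}{|S|} = (1-p)\OPT + p|\IS| \le \OPT + \Theta(1/\sqrt{n})\cdot n = O(\sqrt{n} + \OPT) = O(\sqrt{n}\cdot\OPT)$, using $\OPT \ge 1$.

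The main obstacle I anticipate is the case analysis in verifying $Q\cdot N \ge I\cdot(n-N)$ with a \emph{single} choice of $p$ of order $1/\sqrt{n}$, because the two regimes (large $\IS$ versus small $\IS$) pull in opposite directions, and one has to be careful that the $O(\log n \cdot \OPT)$ slack in the cut bound for $\IS$ and the $\OPT^2$ term in $\induced(\DStilde)$ do not interact badly. A cleaner route — and the one I would actually pursue — is to avoid committing to a fixed $p$ and instead observe that the persuasiveness region is convex, so it suffices to show that the $\DStilde$ distribution lies on the ``wrong'' side by a controlled amount while $\IS$ lies on the ``right'' side by a definite margin $\frac{\cut(\IS,V\setminus\IS)}{|V\setminus\IS|} - 1 \ge \frac{n - |\DS| - (n - |\IS|)}{n-|\IS|} = \frac{|\IS| - |\DS|}{n - |\IS|} \ge 0$ (handling the degenerate case $|\IS| \le |\DS|$ separately, where $n$ is small and the trivial no-information scheme of \Cref{cor:degenerate-signal} already suffices), and then solve for the smallest feasible $p$ explicitly as a ratio of the relevant slacks; bounding that ratio by $O(1/\sqrt{n})$ is then a short computation using $|\IS|, |\DS| \le n$ and $\induced(\DStilde) = O(\OPT^2)$. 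The remaining routine step is to confirm the truncation $\btheta^\star\le\vecone$ is without loss of generality and that all expectations above are exactly as claimed by \Cref{lemma:IR-sol-rounding}.
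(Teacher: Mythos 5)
Your overall strategy is exactly the paper's: mix the independent rounding of $\btheta^\star$ with the degenerate distribution on a maximal independent set $\IS$ of $V\setminus\DS$, verify \Cref{eq:binary-signal-persuasiveness} via \Cref{lemma:IR-sol-rounding,lemma:IS-cut}, and tune the mixing weight $p$. However, two of your quantitative claims would fail as written. First, $p=\Theta(1/\sqrt{n})$ is not enough for persuasiveness in general: after clearing denominators, the binding constraint is roughly $p^2|\IS|^2 \gtrsim n\cdot\OPT^2$, so the smallest feasible $p$ is $\Theta(\sqrt{n}\cdot\OPT/|\IS|)$, which exceeds $1/\sqrt{n}$ by a factor of $\OPT$ when $|\IS|=\Theta(n)$ and can be as large as a constant when $|\IS|$ is near $\sqrt{n}\cdot\OPT$. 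The final bound still comes out right, but via $p\cdot|\IS|=\Theta(\sqrt{n}\cdot\OPT)$ rather than your $p\cdot n=\Theta(\sqrt{n})$; your cost arithmetic is anchored to the wrong $p$.

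Second, your handling of the degenerate regimes rests on two non sequiturs: a small $\IS$ does not imply $\DS$ (hence $\OPT$) is small, and $|\IS|\le|\DS|$ does not imply $n$ is small, so \Cref{cor:degenerate-signal} (cost $O(n^2/m)$) cannot be invoked there. The paper's case split is different and is what actually closes these gaps: (a) if $\OPT>\sqrt{n}/C$, the no-information scheme of \Cref{cor:degenerate-signal} costs $O(n)=O(\sqrt{n}\cdot\OPT)$; (b) if $|\IS|<C\sqrt{n}\cdot\OPT$, extend $\IS$ to a maximal independent set of all of $V$ by adding vertices of $\DS$ --- its size is at most $|\IS|+|\DS|=O(\sqrt{n}\cdot\OPT)$, and the degenerate scheme on a maximal independent set of $V$ is always persuasive, giving cost $O(\sqrt{n}\cdot\OPT)$ directly. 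Only under the complementary assumptions $\OPT\le\sqrt{n}/C$ and $|\IS|\ge C\sqrt{n}\cdot\OPT$ does the mixture argument (with $\kappa=p/(1-p)=\Theta(\sqrt{n}\cdot\OPT/|\IS|)=O(1)$) go through. With these two repairs your argument coincides with the paper's proof.
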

\begin{proof}[Proof of \cref{thm:unit-upper-general}]
    We start with constructing a family of distributions $\D_p\in\Delta(2^V)$ parametrized by $p\in[0,1]$, {and} then show that there exists a choice of $p$ such that the {binary} signaling scheme induced by $\D_p$ is persuasive and has a small cost. 

    Let $\btheta^\star\in[0,1]^V$ be a socially optimal solution that achieves $\|\btheta^\star\|_1=\OPT$. 
    We define $\D_A\in\Delta(2^V)$ to be the distribution of a random subset $\DStilde\subseteq V$ that includes each $v\in V$ independently with probability $\theta_v^\star$ --- 
    we call $\DStilde\sim\D_A$ an \emph{independent rounding} of $\theta^\star$.
    In addition, let $\DS$ be a dominating set satisfying \Cref{lem:DS-integrality-gap}. Then, $\D_B\in\Delta(2^V)$ is defined as the degenerate distribution on $\IS$ where $\IS$ is a maximal independent set of $V\setminus\DS$.
    
    We set the probability mass function of the distribution $\D_p$ to be 
    \[
        \forall S\subseteq V,\qquad \D_p(S)=(1-p)\cdot\D_A(S)+p\cdot\D_B(S).
    \]
    Recall from \Cref{lemma:persuasive-binary} that $\D_p$ induces a persuasive binary signaling scheme if it satisfies
    \begin{align}
         \frac{\Ex{S \sim \D_p}{\cut(S, V \setminus S)}}{\Ex{S \sim \D_p}{|V \setminus S|}} \ge \frac{\Ex{S \sim \D_p}{\induced(S)}}{\Ex{S \sim \D_p}{|S|}}.
         \label{eq:tmp-12345}
    \end{align}
    We analyze each term separately to obtain a sufficient condition for \cref{eq:tmp-12345} to hold. For brevity, we denote $\ISsize\triangleq|\IS|$.
    \begin{itemize}
        \item Applying the first property in \cref{lemma:IR-sol-rounding} to $\D_A$ gives
        \[
            \Ex{S \sim \D_p}{|S|}
        =   (1-p)\cdot\Ex{S \sim \D_A}{|S|}+p\cdot\Ex{S\sim\D_B}{|S|}
        =   (1-p)\cdot\OPT+p\cdot\ISsize,
        \]
        and thus, $\Ex{S \sim \D_p}{|V \setminus S|}
        =   n - \Ex{S \sim \D_p}{|S|}
        =   (1-p)\cdot(n-\OPT)+p\cdot\left(n-\ISsize\right)$.
                \item For the expected size of the cut, we can lower bound it as below.
        \begin{align*}
            \Ex{S \sim \D_p}{\cut(S, V \setminus S)}=&(1-p)\cdot\Ex{S \sim \D_A}{\cut(S, V \setminus S)}+p\cdot\Ex{S \sim \D_B}{\cut(S, V \setminus S)}\\
            {\ge}&(1-p)\cdot \left(n-2\OPT\right)+p\cdot ({n-|\DS|})\\
            {\ge}&(1-p)\cdot \left(n-2\OPT\right)+p\cdot \left(n- \iota\cdot\OPT\right).
        \end{align*}
        In the above equations, the second step uses the third property of \cref{lemma:IR-sol-rounding} and \cref{lemma:IS-cut}.
        In the last step, we denote $|\DS|$ by $\iota\cdot\OPT$ with $\iota=O(\log n)$ according to \cref{lem:DS-integrality-gap}.
        \item For the expected size of induced subgraphs, we upper bound it using the second property of \Cref{lemma:IR-sol-rounding} and the independence of $\IS$,
        \begin{align*}
            \Ex{S \sim \D_p}{\induced(S)}=&(1-p)\cdot\Ex{S \sim \D_A}{\induced(S)}+p\cdot\Ex{S \sim \D_B}{\induced(S)}\\
            \le& (1-p)\left(\OPT^2+\OPT\right)+p\ISsize.
        \end{align*}
    \end{itemize}
    Plugging the above bounds into \cref{eq:tmp-12345}, we obtain a sufficient condition for persuasiveness:
    \begin{align}
        \frac{(1-p) \left(n-2\OPT\right)+p \left(n-\iota\cdot\OPT\right)}
        {(1-p)(n-\OPT)+p\cdot(n-\ISsize)} \ge 
        \frac{(1-p)(\OPT^2+\OPT)+p\ISsize}
        {(1-p)\OPT+p\ISsize}.
        \label{eq:unit-suff-condition}
    \end{align}
    For the remaining proof, we assume $\OPT\le\sqrt{n}/C$ and $\ISsize\ge C\sqrt{n}\cdot\OPT$ for a constant $C=10$. The first assumption is WLOG because if $\OPT>\sqrt{n}/C$, \Cref{cor:degenerate-signal} guarantees a cost of $O(\frac{n^2}{n+2m})=O(n)$ by not sending any signals, which is already an $O(\sqrt{n})$ approximation to $\OPT$. 
    The second assumption is WLOG because if $\beta=|\IS|<C\sqrt{n}\cdot\OPT$, we can expand $\IS$ into a maximal independent set of $V$ by including some additional vertices from $\DS$. The size of the resulting maximal independent set is at most $|\IS|+|\DS|=\ISsize+\iota\cdot\OPT=O(\sqrt{n}+\log n)\cdot\OPT$. Therefore, sending signal $1$ to the resulting maximal independent set gives a persuasive binary signaling scheme with cost $O(\sqrt{n}\cdot\OPT)$.

    Under these two additional assumptions, there always exists $p\in[0,1]$ that satisfies \cref{eq:unit-suff-condition} --- for example, taking $p=1$ satisfies \cref{eq:unit-suff-condition} by making the left-hand side $>1$ and the right-hand side $=1$. However, since 
    \[
    \Cost(\D_p)\le\Ex{S\sim\D_p}{|S|}=(1-p)\cdot\OPT+p\cdot\ISsize,
    \]
    the larger $p$ is, the more costly the signaling scheme would be. Therefore, we seek the smallest $p$ that satisfies \cref{eq:unit-suff-condition}.
    To do so, we continue simplifying \cref{eq:unit-suff-condition} by subtracting $1$ from both sides:
    \begin{align*}
        \eqref{eq:unit-suff-condition}
        \Longleftrightarrow& \frac{(1-p) \left(-\OPT\right)+p \left(\ISsize-\iota\cdot\OPT\right)}
        {(1-p)(n-\OPT)+p\cdot(n-\ISsize)} \ge 
        \frac{(1-p)\OPT^2}
        {(1-p)\OPT+p\ISsize}.
    \end{align*}
    Dividing $(1-p)$ on both denominators and enumerators and setting $\kappa=p/(1-p)$ gives us:
    \begin{align*}
        \eqref{eq:unit-suff-condition}
        \Longleftrightarrow& \frac{-\OPT+\kappa \left(\ISsize-\iota\cdot\OPT\right)}
        {(n-\OPT)+\kappa(n-\ISsize)} \ge 
        \frac{\OPT^2}
        {\OPT+\kappa\ISsize}\\
        \Longleftrightarrow& 
        \left(-\OPT+\kappa (\ISsize-\iota\cdot\OPT)\right)\left(\OPT+\kappa\ISsize\right)\ge 
        \left((n-\OPT)+\kappa(n-\ISsize)\right)\OPT^2\\
        \Longleftrightarrow& 
        \kappa^2\ISsize(\ISsize-\iota\cdot\OPT)\ge 
        \OPT^2(n-\OPT+\kappa(n-\ISsize))+\OPT^2-\kappa((\ISsize-\iota\cdot\OPT)\OPT-\ISsize\OPT)\\
        \Longleftrightarrow&
        \kappa^2\ISsize(\ISsize-\iota\cdot\OPT)\ge 
        \OPT^2(n-\OPT+\kappa(n-\ISsize+\iota)+1).
    \end{align*}
    In the last equation above, we have $\ISsize-\iota\cdot\OPT\ge\frac{\beta}{2}$ {for all sufficiently large $n$, by} the assumption that $\ISsize\ge C\sqrt{n}\cdot\OPT$ {and the fact that $\iota = O(\log n)$}. In addition, we upper bound the right-hand side by dropping the negative terms. This gives the following sufficient condition:
    \begin{align*}
        \eqref{eq:unit-suff-condition}\Longleftarrow&
        \frac{\kappa^2\ISsize^2}{2}\ge (2+2\kappa)n\OPT^2,
    \end{align*}
    which can be satisfied by choosing $\kappa=\Theta(\sqrt{n}\OPT/\ISsize)$. In particular, under this choice, we have $\kappa=O(1)$ due to the assumption that $\ISsize\ge C\sqrt{n}\OPT$, so both the left- and right-hand sides are of the order $\Theta(n\OPT^2)$. This gives us the choice of $p$ via $\kappa=\frac{p}{1-p}$.

    Finally, {by \Cref{lemma:persuasive-binary}}, the cost of the resulting signaling scheme satisfies
    \begin{align*}
        \Cost(\D_p)\le\Ex{S\sim\D_p}{|S|}=(1-p)\cdot\OPT+p\cdot\ISsize
       \le\OPT+\kappa\beta\le O(\sqrt{n}\cdot\OPT).
    \end{align*}
\end{proof}

\subsection{Matching the cost of full revelation}
Towards proving \Cref{thm:unit-improve-ic}, in this section, we give a binary signaling scheme with a cost of exactly $\OPTIC$.

\begin{theorem}\label{thm:unit-match-ic}
    For any unit-weight graph $G$, there is a binary signaling scheme with cost $\OPTIC$.
\end{theorem}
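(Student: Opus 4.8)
The plan is to read the scheme straight off the structural results already in hand: \Cref{lemma:IC-solution-decomp} converts an optimal stable solution into a distribution over independent sets with prescribed marginals, and \Cref{lemma:persuasive-binary} turns such a distribution into a persuasive binary scheme once a single cut-versus-induced-subgraph inequality is verified. Concretely, let $\btheta^\star\in[0,1]^V$ be an optimal stable solution, so $\|\btheta^\star\|_1=\OPTIC$, and let $\D\in\Delta(2^V)$ be the distribution guaranteed by \Cref{lemma:IC-solution-decomp}: it is supported on independent sets of $G$ and satisfies $\pr{S\sim\D}{v\in S}=\theta^\star_v$ for every $v\in V$. It then remains to check that $\D$ satisfies inequality \eqref{eq:binary-signal-persuasiveness} and to read off the cost from \Cref{lemma:persuasive-binary}.

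First I would compute the three expectations in \eqref{eq:binary-signal-persuasiveness}. By linearity of expectation over the marginals, $\Ex{S\sim\D}{|S|}=\sum_{v}\theta^\star_v=\OPTIC$, and hence $\Ex{S\sim\D}{|V\setminus S|}=n-\OPTIC$. Since every $S$ in the support of $\D$ is independent it has no internal edges, so $\induced(S)=|S|$ pointwise; therefore $\Ex{S\sim\D}{\induced(S)}=\OPTIC$, and the right-hand side of \eqref{eq:binary-signal-persuasiveness} equals $1$. The same independence lets me rewrite the cut: for an independent $S$, a vertex $v\in S$ has no neighbor in $S$, so $\cut(S,V\setminus S)=\sum_{v\notin S}|N(v)\cap S|=\sum_{v\in V}|N(v)\cap S|$; taking expectations and plugging in the marginals gives $\Ex{S\sim\D}{\cut(S,V\setminus S)}=\sum_{v\in V}\sum_{v'\in N(v)}\theta^\star_{v'}$.

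It remains to lower bound this last quantity using the stability of $\btheta^\star$. The fixed-point condition defining stability gives, for every $v$, $\sum_{v'\in N(v)}\theta^\star_{v'}\ge 1-\theta^\star_v$ (an equality when $\theta^\star_v>0$, and a bound by $1\ge 1-\theta^\star_v$ when $\theta^\star_v=0$). Summing over $v$, $\Ex{S\sim\D}{\cut(S,V\setminus S)}\ge n-\|\btheta^\star\|_1=n-\OPTIC=\Ex{S\sim\D}{|V\setminus S|}$, so the left-hand side of \eqref{eq:binary-signal-persuasiveness} is at least $1$ whenever $\OPTIC<n$, which matches the right-hand side and establishes \eqref{eq:binary-signal-persuasiveness}. (The corner case $\OPTIC=n$ forces $\btheta^\star=\vecone$, which is stable only when $G$ is edgeless; then $\D$ is degenerate at $V$ and persuasiveness follows from the $+\infty$ convention in \Cref{lemma:persuasive-binary}.) Applying \Cref{lemma:persuasive-binary} with this $\D$ yields a persuasive binary scheme with $\alpha=\Ex{S\sim\D}{|S|}/\Ex{S\sim\D}{\induced(S)}=1$ and cost $(\Ex{S\sim\D}{|S|})^2/\Ex{S\sim\D}{\induced(S)}=\OPTIC^2/\OPTIC=\OPTIC$, as claimed.

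There is little genuine obstacle in this particular theorem: its content is entirely the two invocations of independence (collapsing $\induced(S)$ to $|S|$ and rewriting the cut) together with the elementary translation of the stability fixed point into $\sum_{v'\in N(v)}\theta^\star_{v'}\ge1-\theta^\star_v$. The substantive work lies elsewhere --- in proving \Cref{lemma:IC-solution-decomp} (rounding a fractional stable solution to a distribution over independent sets via competing exponential clocks), and, toward \Cref{thm:unit-improve-ic}, in the LP-duality argument showing that when $\OPT<\OPTIC$ one may choose $\btheta^\star$ so that \eqref{eq:binary-signal-persuasiveness} holds with a strict margin, which then permits mixing this scheme with an independent rounding of the social optimum to push the cost strictly below $\OPTIC$.
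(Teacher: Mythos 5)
Your proposal is correct and follows essentially the same route as the paper: take the distribution over independent sets from \Cref{lemma:IC-solution-decomp} with marginals $\theta^\star_v$, use independence to collapse $\induced(S)$ to $|S|$ and rewrite the cut, and use feasibility/stability of $\btheta^\star$ to lower bound the expected cut by $n-\OPTIC$ before invoking \Cref{lemma:persuasive-binary}. The only (harmless) differences are cosmetic: you write the cut as $\sum_{v\in V}|N(v)\cap S|$ rather than $\sum_{v\in S}\deg(v)$, and you explicitly dispose of the edgeless corner case $\OPTIC=n$, which the paper leaves to the $+\infty$ convention in \Cref{lemma:persuasive-binary}.
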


Note that \Cref{thm:unit-match-ic} would be trivial if we allow signaling schemes with a large signal space. This is because, for any optimal stable solution $\btheta^\star \in [0, 1]^V$, labeling each vertex $v \in V$ with $\theta^\star_v$ gives a persuasive signaling scheme with cost $\OPTIC$. Indeed, the interesting aspect of \Cref{thm:unit-match-ic} is that, while $\btheta^\star$ might contain many different entries, it is possible to achieve the same cost, via signaling, using only two different values.

Recall that \Cref{lemma:IC-solution-decomp} states that any stable solution in a unit-weight graph can be written as the marginal of a distribution over independent sets. We first show that the lemma immediately implies \Cref{thm:unit-match-ic}. 

\begin{proof}[Proof of \Cref{thm:unit-match-ic}]
Let $\btheta^\star \in [0, 1]^V$ be an optimal stable solution for graph $G = (V, E)$ with $\|\btheta^\star\|_1 = \OPTIC$. We consider the binary signaling scheme defined by distribution $\D \in \Delta(2^V)$ obtained from $\btheta^\star$ via \Cref{lemma:IC-solution-decomp}.

By \Cref{lemma:persuasive-binary}, to verify the persuasiveness of the binary signaling scheme, it suffices to show that
\[
    \frac{\Ex{S \sim \D}{\cut(S, V \setminus S)}}{\Ex{S \sim \D}{|V \setminus S|}} \ge \frac{\Ex{S \sim \D}{\induced(S)}}{\Ex{S \sim \D}{|S|}}.
\]
Since every $S$ in the support of $\D$ is an independent set, we have $\Ex{S \sim \D}{\induced(S)} = \Ex{S \sim \D}{|S|}$. It remains to show that $\Ex{S \sim \D}{\cut(S, V \setminus S)} \ge \Ex{S \sim \D}{|V \setminus S|}$.

Again, since $S$ is always an independent set, the cut size $|\cut(S, V \setminus S)|$ is equal to $\sum_{v \in S}\deg(v)$.
Then, we have
\begin{align*}
    \Ex{S \sim \D}{|\cut(S, V \setminus S)|}
&=  \sum_{v \in V}\pr{S \sim \D}{v \in S}\cdot\deg(v)\tag{$S$ is an independent set}\\
&=  \sum_{v \in V}\theta^\star_v\cdot\deg(v)\tag{property of $\D$ from \Cref{lemma:IC-solution-decomp}}\\
&=  \sum_{v \in V}\sum_{u \in N(v)}\theta^\star_v
=  \sum_{v \in V}\sum_{u \in N(v)}\theta^\star_u\\
&\ge \sum_{v \in V}(1 - \theta^\star_v) \tag{$\btheta^\star$ is a feasible solution}\\
&=  n - \|\btheta^\star\|_1
=   \Ex{S \sim \D}{|V \setminus S|}.
\end{align*}
The fourth step holds since both the third line and the fourth line are equal to $\sum_{\{u, v\} \in E}(\theta^\star_u + \theta^\star_v)$. This proves persuasiveness.

By \Cref{lemma:persuasive-binary}, the cost of this scheme is exactly
\[
    \frac{\left(\Ex{S \sim \D}{|S|}\right)^2}{\Ex{S \sim \D}{\induced(S)}}
=   \Ex{S \sim \D}{|S|}
=   \sum_{v \in V}\pr{S \sim \D}{v \in S}
=   \sum_{v \in V}\theta^\star_v
=   \|\btheta^\star\|_1
=   \OPTIC.
\]
The third step above applies the property of $\D$ guaranteed by \Cref{lemma:IC-solution-decomp}.
\end{proof}

Now we prove \Cref{lemma:IC-solution-decomp} via rounding the stable solution through a correlated rounding method. The method is referred to as the \emph{competing exponential clocks} in the context of approximation algorithms (e.g.,~\citep{BNS13}).

\begin{proof}[Proof of \Cref{lemma:IC-solution-decomp}]
    Let $V' \coloneqq \{v \in V: \theta_v > 0\}$ be the support of the stable solution $\btheta \in [0, 1]^V$. We define $\D \in \Delta(2^V)$ as the distribution of the set $S \subseteq V$ generated by the following process:
    \begin{itemize}
        \item For each $v \in V'$, independently draw $X_v \sim \Exp(\theta_v)$.
        \item Choose the set $S$ as $\left\{v \in V': X_v < \min_{u \in N(v) \cap V'}X_v\right\}$.
    \end{itemize}
    It is clear that $S$ must be an independent set: if $S$ contains two neighbouring vertices $u$ and $v$, we must have both $X_u < X_v$ and $X_v < X_u$, a contradiction.

    For every $v \in V \setminus V'$, we clearly have $\pr{S \sim \D}{v \in S} = 0 = \theta_v$. To see that $\pr{S}{v \in S} = \theta_v$ holds for every $v \in V'$, we use the condition $\theta_v + \sum_{u \in N(v) \cap V'}\theta_u = 1$, which follows from stability. Then, the property of the exponential distribution shows that $\min_{u \in N(v) \cap V'}X_u$ follows the exponential distribution $\Exp(1 - \theta_v)$ and is independent of $X_u$. Therefore, we have
    \begin{align*}
        \pr{}{v \in S}
    &=  \pr{}{X_v < \min_{u \in N(v) \cap V'}X_u}\\
    &=  \Ex{X_v \sim \Exp(\theta_v)}{\exp(-(1 - \theta_v)X_v)}\\
    &=  \int_{0}^{+\infty}\theta_v\exp(-\theta_v x)\cdot \exp(-(1-\theta_v) x)~\rmd x
    =   \theta_v,
    \end{align*}
    as desired.
\end{proof}

\subsection{Strict improvement upon full revelation}
\label{sec:strict-improve-unweighted}
In this section, we build on \Cref{thm:unit-match-ic} and characterize conditions under which signaling results in a \emph{strict} improvement compared to $\OPTIC$ and the amount of improvement achieved. 

\begin{theorem}[Formal version of \Cref{thm:unit-improve-ic}]
\label{thm:unit-strict-improvement}
    On any unit-weight graph {that satisfies} $\OPTIC > \OPT$, there exists a binary signaling scheme $\D_\varphi$ with   
    $\Cost(\varphi)=\OPTIC-\eps(\OPTIC-\OPT)$, where
    \begin{align}
    \eps=\max\left\{\min\left\{\frac{1}{2},\ \frac{\PoS-1}{\PoS+\frac{2n}{\PoS+1}}\right\}
    , \frac{\PoS-1}{\PoS+(n-\OPT)}\right\} {> 0,}
    \label{eq:choice-of-eps}
    \end{align}
    {and $\PoS = \OPTIC/\OPT$ is the price of stability.}
\end{theorem}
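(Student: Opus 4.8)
The plan is to realize the desired scheme as a convex combination, at the level of subsets of $V$, of two binary signaling schemes. The first is the cost-$\OPTIC$ scheme $\D_1\in\Delta(2^V)$ produced by \Cref{thm:unit-match-ic}: it is the independent-set decomposition of an optimal stable solution $\btheta^\star$ guaranteed by \Cref{lemma:IC-solution-decomp}, and by \Cref{lemma:persuasive-binary} it satisfies $\Ex{S\sim\D_1}{\induced(S)}=\Ex{S\sim\D_1}{|S|}=\OPTIC$ (all its sets are independent) and the persuasiveness inequality $\Ex{S\sim\D_1}{\cut(S,V\setminus S)}\ge\Ex{S\sim\D_1}{|V\setminus S|}$. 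The second is a cheap scheme $\D_2$, obtained by independently rounding an optimal feasible solution $\btheta^{\OPT}$; recalling that on unit-weight graphs an optimal solution may be taken in $[0,1]^V$ (so $\OPT=\OPTIR$), \Cref{lemma:IR-sol-rounding} gives $\Ex{S\sim\D_2}{|S|}=\OPT$, $\Ex{S\sim\D_2}{\induced(S)}\le\OPT^2+\OPT$, and $\Ex{S\sim\D_2}{\cut(S,V\setminus S)}\ge n-2\OPT$, although $\D_2$ alone need not be persuasive. For $p\in[0,1]$ I would consider $\D_p\triangleq(1-p)\D_1+p\D_2$, which via \Cref{lemma:persuasive-binary} induces a \emph{binary} scheme with a single positive signal; since every relevant expectation is affine in $p$, the persuasiveness criterion collapses to one polynomial inequality in $p$, and the cost is at most $\Ex{S\sim\D_p}{|S|}=\OPTIC-p(\OPTIC-\OPT)$. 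It then suffices to identify the largest admissible $p$ and set $\eps$ accordingly.

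The crux is that $\D_p$ can be persuasive for $p>0$ \emph{only if} $\D_1$'s inequality already held with a strict margin. Re-examining the proof of \Cref{thm:unit-match-ic}, the margin of $\D_1$ is exactly $\Gamma\triangleq\sum_{v\in V}((W\btheta^\star)_v-1)$, the total over-coverage of $\btheta^\star$, which vanishes precisely when $\btheta^\star$ is tight at every vertex (by stability, this over-coverage is supported on the zero-effort vertices). When $\Gamma=0$, both sides of the criterion equal $1$ at $p=0$, and an elementary computation shows that in $p$ the left-hand side strictly decreases while the right-hand side strictly increases — so the mixture would be non-persuasive for all $p>0$. Hence the main obstacle is the following key lemma: \emph{whenever $\OPT<\OPTIC$, there is an optimal stable solution $\btheta^\star$ with $\Gamma>0$, and moreover $\Gamma$ is bounded below by a quantity controlled by $\OPTIC-\OPT$ (and $n$, $\PoS$)}. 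I would establish this via LP duality, comparing the (fractional dominating-set) program defining $\OPT$ — and its packing dual — with the program defining $\OPTIC$: the qualitative part amounts to showing that if \emph{every} optimal stable solution were tight everywhere then one could push $\OPTIC$ down to $\OPT$, and turning this into the quantitative lower bound on $\Gamma$ (via complementary slackness and the structure of stable solutions on unit-weight graphs) is where the precise form of \Cref{eq:choice-of-eps} originates.

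Granting the margin bound, the remainder is bookkeeping. Substituting the four expectations for $\D_p$ into \Cref{lemma:persuasive-binary} yields a quadratic inequality in $p$ that holds at $p=0$ with slack proportional to $\Gamma$ and fails at $p=1$; it therefore defines an interval $[0,p^\star]$ of admissible $p$, and I take $p$ arbitrarily close to $p^\star$. Bounding this quadratic in two complementary ways — a sharper relaxation valid only for $p\le\tfrac12$, and a coarser one valid for all $p$ — produces the two quantities inside the $\max$ in \Cref{eq:choice-of-eps}; taking the better of the two gives the stated $\eps$, which is strictly positive exactly because $\Gamma>0$. Finally $\Cost(\D_p)\le\Ex{S\sim\D_p}{|S|}=\OPTIC-\eps(\OPTIC-\OPT)$, as claimed, and in particular the scheme strictly improves upon $\OPTIC$.
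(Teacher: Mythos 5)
Your proposal is correct and follows essentially the same route as the paper: mix the independent-set decomposition of the optimal stable solution (cost $\OPTIC$) with an independent rounding of the optimal solution (cost $\OPT$), observe that the persuasiveness margin of the former is exactly the wastefulness $\Gamma=\|W\btheta^\star\|_1-n$, lower-bound $\Gamma$ by $\OPTIC-\OPT$ via LP weak duality (the paper's \Cref{lem:wasteful-unit}, which in fact holds for \emph{every} feasible solution, so no selection of a special optimal stable solution is needed), and then derive the two terms in \Cref{eq:choice-of-eps} from two complementary relaxations of the resulting inequality in the mixing parameter.
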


To prove \Cref{thm:unit-strict-improvement}, we first present two technical lemmas that connect the ``wastefulness'' of the optimal stable solution to the gap $\OPTIC-\OPT$. Formally, we call a feasible solution $\btheta\in[0,1]^V$ \emph{wasteful} if it satisfies $W\btheta \ne \vecone$. Recall that feasibility requires $W\btheta \ge \vecone$ coordinate-wise. Therefore, a wasteful solution is one in which at least one of the tasks receives strictly more contribution than what is required. Our first lemma states that, even on general weighted graphs, {every non-wasteful feasible solution must be optimal. In particular, assuming $\OPTIC > \OPT$, the lemma below implies that the optimal stable solution must be wasteful, and this wastefulness allows us to design a signaling scheme with a cost strictly below $\OPTIC$.}
\begin{lemma}
\label[lemma]{lem:wasteful-general}
    In a general weighted graph, if there exists a non-wasteful feasible solution {$\btheta \in [0, 1]^V$}, it holds that {$\OPT= \|\btheta\|_1$}.
\end{lemma}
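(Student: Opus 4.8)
The plan is to show that a non-wasteful feasible solution $\btheta$ is a minimizer of $\|\cdot\|_1$ over the feasibility polytope $\{\btheta' \ge \veczero : W\btheta' \ge \vecone\}$ by exhibiting an optimal dual certificate for the defining linear program of $\OPT$. The program for $\OPT$ is $\min \|\btheta'\|_1$ subject to $W\btheta' \ge \vecone$ and $\btheta' \ge \veczero$; its dual is $\max \vecone^\top \bm{y}$ subject to $W\bm{y} \le \vecone$ and $\bm{y} \ge \veczero$ (using that $W$ is symmetric). I would propose the dual candidate $\bm{y} = \btheta$ itself. This is dual-feasible: $\btheta \ge \veczero$ by assumption, and $W\btheta \le \vecone$ because non-wastefulness ($W\btheta \ne \vecone$ together with feasibility $W\btheta \ge \vecone$) — wait, one must read non-wasteful correctly: the paper defines wasteful as $W\btheta \neq \vecone$, so non-wasteful means $W\btheta = \vecone$ exactly. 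Hence $\bm y = \btheta$ satisfies $W\bm y = \vecone \le \vecone$, so it is dual-feasible.

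Next I would check the objective values match. The primal objective at $\btheta$ is $\|\btheta\|_1 = \vecone^\top \btheta$ (since $\btheta \ge \veczero$). The dual objective at $\bm{y} = \btheta$ is also $\vecone^\top \btheta$. Since the primal is a minimization and the dual a maximization, weak duality gives $\OPT \ge \vecone^\top \bm y = \vecone^\top \btheta = \|\btheta\|_1 \ge \OPT$, where the last inequality holds because $\btheta$ is primal-feasible. Therefore all quantities are equal and $\OPT = \|\btheta\|_1$. Equivalently, one can phrase this via complementary slackness: $\btheta$ is primal-feasible, $\btheta$ (as $\bm y$) is dual-feasible, and the slackness conditions hold exactly — primal constraint $i$ is tight since $(W\btheta)_i = 1$, and dual constraint $v$ times $\theta_v$: whenever $\theta_v > 0$ we need $(W\bm y)_v = 1$, which again holds since $W\btheta = \vecone$; whenever $y_v = \theta_v > 0$ the primal variable is positive so the dual constraint should be tight, consistent. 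So $\btheta$ is a primal optimum.

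The main (and only real) obstacle is making sure the LP-duality bookkeeping is set up correctly — in particular that the dual of $\min \vecone^\top \btheta$ s.t. $W\btheta \ge \vecone,\ \btheta \ge \veczero$ is indeed $\max \vecone^\top \bm y$ s.t. $W^\top \bm y \le \vecone,\ \bm y \ge \veczero$, and then using symmetry $W^\top = W$. Everything else is a one-line substitution. An alternative, duality-free argument that I would mention as a sanity check: if $\btheta$ is feasible with $W\btheta = \vecone$ and $\btheta'$ is any other feasible solution ($W\btheta' \ge \vecone$, $\btheta' \ge \veczero$), then
\[
\|\btheta\|_1 = \vecone^\top \btheta = (W\btheta')^\top \btheta \cdot \text{?}
\]
— more carefully, $\vecone^\top \btheta \le (W\btheta')^\top \btheta = \btheta'^\top (W\btheta) = \btheta'^\top \vecone = \|\btheta'\|_1$, where the first step uses $W\btheta' \ge \vecone$ and $\btheta \ge \veczero$, the second uses symmetry of $W$, and the third uses $W\btheta = \vecone$. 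This directly shows $\|\btheta\|_1 \le \|\btheta'\|_1$ for every feasible $\btheta'$, hence $\|\btheta\|_1 = \OPT$. I would likely present this self-contained two-line inequality as the actual proof, since it avoids invoking LP-duality machinery, and relegate the duality interpretation to a remark.
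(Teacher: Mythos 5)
Your proposal is correct and takes essentially the same route as the paper: both arguments observe that non-wastefulness plus feasibility gives $W\btheta=\vecone$, so $\btheta$ is simultaneously feasible for the primal LP defining $\OPT$ and (using $W=W^{\top}$) for its dual, with equal objective values, whence weak duality forces optimality. Your ``duality-free'' two-line inequality is just weak duality unrolled, so it is a presentational variant rather than a genuinely different argument.
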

\begin{proof}
    Recall from the definition that $\OPT$ can be computed by the following linear program:
    \begin{align}
        \min_{\btheta\in\R^n}\  &\vecone^{\top} \btheta\label{eq:primal-lp}\tag{primal LP}\\
        \suchthat\  & W\btheta\ge\vecone;\nonumber\\
        &\btheta\ge\veczero.\nonumber
    \end{align}
    Since $W = W^{\top}$, the dual linear program of \eqref{eq:primal-lp} is given by:
    \begin{align}
        \max_{\dualvar\in\R^n}\  &\vecone^{\top} \dualvar\label{eq:dual-lp}\tag{dual LP}\\
        \suchthat\  & W \dualvar\le\vecone;\nonumber\\
        &\dualvar\ge\veczero.\nonumber
    \end{align}
    Since $\btheta$ is feasible and non-wasteful, we have $W\btheta=\vecone$, which implies that $\btheta$ is feasible solution for both \eqref{eq:primal-lp} and \eqref{eq:dual-lp}, and achieves the same objective value of $\vecone^{\top} \btheta = \|\btheta\|_1$. Therefore, $\btheta$ must be an optimal solution for \eqref{eq:primal-lp}, and we have $\OPT = \|\btheta\|_1$.
\end{proof}

The second lemma quantitatively captures the gap $\OPTIC-\OPT$ in terms of the ``wastefulness'' of the optimal {stable} solution {$\btheta^\star$}, i.e., the gap $\|W\btheta^\star\|_1-n$, in the case of unit-weight graphs. {Again, we state a more general result, and will apply it specifically to the optimal stable solution.}
\begin{lemma}
\label[lemma]{lem:wasteful-unit}
    Let $\btheta \in [0, 1]^V$ be a feasible solution on a unit-weight graph. Then,
    \[\|W\btheta\|_1\ge n+ \|\btheta\|_1-\OPT.\]
\end{lemma}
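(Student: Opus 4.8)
The plan is to reduce the inequality to a statement about weighted vertex degrees and then certify it by exhibiting an explicit dual solution. On a unit-weight graph we may write $W = I + A$, where $A$ is the adjacency matrix, so that $(W\btheta)_v = \theta_v + \sum_{u \in N(v)}\theta_u$ and $\deg(v) = |N(v)|$. Since $W \ge \veczero$ entrywise and $\btheta \ge \veczero$, feasibility gives $W\btheta \ge \vecone \ge \veczero$, hence
\[
\|W\btheta\|_1 = \vecone^{\top} W \btheta = \vecone^{\top}\btheta + \vecone^{\top} A \btheta = \|\btheta\|_1 + \sum_{v \in V}\deg(v)\,\theta_v .
\]
Therefore the claimed bound $\|W\btheta\|_1 \ge n + \|\btheta\|_1 - \OPT$ is equivalent to $\sum_{v \in V}\deg(v)\,\theta_v \ge n - \OPT$, which is what I would actually prove.

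Next I would produce a dual certificate from an optimal solution for $\OPT$. Let $\btheta^{\star}$ be an optimal solution of the linear program defining $\OPT$; I would first argue it can be chosen with $\btheta^{\star} \in [0,1]^V$. Indeed, replacing each coordinate by $\min\{\theta^{\star}_v, 1\}$ never increases $\|\btheta^{\star}\|_1$ and preserves feasibility: if some coordinate $\theta^{\star}_v$ gets truncated, then either $v$'s own constraint is already met by $\theta^{\star}_v = 1$ alone, or for any vertex $u$ whose constraint involves a truncated coordinate, that coordinate now contributes a full unit, so $u$'s constraint remains satisfied. Set $\mathbf{y} \coloneqq \vecone - \btheta^{\star} \in [0,1]^V \subseteq \R^V_{\ge 0}$. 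From feasibility of $\btheta^{\star}$ we get, for every $v$,
\[
(W\mathbf{y})_v = (W\vecone)_v - (W\btheta^{\star})_v = \big(1 + \deg(v)\big) - (W\btheta^{\star})_v \le \deg(v),
\]
and $\vecone^{\top}\mathbf{y} = n - \|\btheta^{\star}\|_1 = n - \OPT$.

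Finally I would ``double count'' using the symmetry of $W$. Using $\mathbf{y} \ge \veczero$ together with feasibility $W\btheta \ge \vecone$ for the first inequality, $W = W^{\top}$ for the middle equality, and $(W\mathbf{y})_v \le \deg(v)$ together with $\btheta \ge \veczero$ for the last inequality:
\[
n - \OPT = \vecone^{\top}\mathbf{y} \le (W\btheta)^{\top}\mathbf{y} = \btheta^{\top}(W\mathbf{y}) \le \sum_{v \in V}\deg(v)\,\theta_v .
\]
This is precisely the reduced inequality, which finishes the proof.

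The whole argument is essentially weak LP duality for the program $\min\{\vecone^{\top} A\btheta : W\btheta \ge \vecone,\ \btheta \ge \veczero\}$, with $\mathbf{y} = \vecone - \btheta^{\star}$ as a feasible dual point; the only genuinely delicate step is confirming that $\btheta^{\star}$ may be taken inside the unit box, since otherwise $\mathbf{y}$ could have negative entries and the first inequality in the last display would break. Everything else is routine. (As a byproduct, the argument never uses $\btheta \le \vecone$, so the conclusion in fact holds for every feasible $\btheta \ge \veczero$.)
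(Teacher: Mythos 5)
Your proof is correct, but it takes a genuinely different route from the paper's. The paper proves the lemma by starting from the arbitrary feasible $\btheta$ itself and iteratively lowering its coordinates, one vertex at a time, until it becomes a feasible point $\dualvar^\star$ of the dual LP ($W\dualvar\le\vecone$); it then tracks the total mass removed, shows this loss is at most $\|W\btheta\|_1-n$, and invokes weak duality to get $\vecone^{\top}\dualvar^\star\le\OPT$. You instead leave $\btheta$ untouched, reduce the claim to $\sum_v\deg(v)\theta_v\ge n-\OPT$ via $W=I+A$, and certify that reduced inequality with the single closed-form vector $\mathbf{y}=\vecone-\btheta^\star$, using the symmetry of $W$ in a one-line double count $\vecone^{\top}\mathbf{y}\le(W\btheta)^{\top}\mathbf{y}=\btheta^{\top}(W\mathbf{y})\le\btheta^{\top}A\vecone$. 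Your argument is shorter and avoids the paper's iterative construction entirely, and you correctly flag and handle the one delicate point (that $\btheta^\star$ may be taken in $[0,1]^V$, which does require the unit-weight truncation argument you give). The trade-off is that your certificate leans on the unit-weight structure twice --- both in the truncation step and in $W\vecone=\vecone+\deg$ --- whereas the paper's coordinate-reduction scheme is the template it later adapts (via a different dual point) to the weighted setting in its Lemma 8.6. Your closing observation that neither the hypothesis $\btheta\le\vecone$ nor any IR condition is used is also accurate; the paper's own proof does not use it either.
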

We prove \Cref{lem:wasteful-unit} in \Cref{app:wasteful-proof-unit}. The high-level idea of the proof is to convert $\btheta$, which is a feasible solution of \eqref{eq:primal-lp}, into a feasible solution $\dualvar^\star$ of \eqref{eq:dual-lp} by lowering the contributions that enter wasteful coordinates. Through our careful construction, we ensure the resulting decrease in the objective value (i.e., $\vecone^{\top} \btheta-\vecone^{\top} \dualvar^\star$) is upper bounded by the wastefulness of $\btheta$ (i.e., $\|W\btheta\|_1-n$). Finally, we establish the lemma by invoking weak duality to show that $\vecone^{\top} \dualvar^\star\le\OPT$.

Now we prove \Cref{thm:unit-strict-improvement} by constructing a signaling scheme that randomizes between an independent rounding of $\OPT$ and the signaling scheme in \Cref{thm:unit-match-ic}.

\begin{proof}[Proof of \Cref{thm:unit-strict-improvement}]
    Let $\btheta^\star \in [0, 1]^V$ be an optimal solution with cost $\OPT$, and $\btheta \in [0,1]^V$ be an optimal stable solution {with cost $\OPTIC$}. 
    Then, let $\D_A$ be the distribution defined by rounding $\btheta^\star$ independently, and $\D_B$ be the distribution obtained from rounding {$\btheta$} according to \Cref{lemma:IC-solution-decomp}.
    We define a parametrized family of distributions $\D_\eps\in\Delta(2^V)$ for {$\eps\in[0,1]$} as
    \[
        \forall S\subseteq V,\qquad \D_\eps(S)=\eps\cdot\D_A(S)+(1-\eps)\cdot\D_B(S).
    \]
    Recall from \cref{lemma:persuasive-binary} {that} the distribution $\D_\eps$ induces a persuasive binary signaling scheme if the following inequality holds:
    \begin{align}
         \frac{\Ex{S \sim \D_\eps}{\cut(S, V \setminus S)}}{\Ex{S \sim \D_\eps}{|V \setminus S|}} \ge \frac{\Ex{S \sim \D_\eps}{\induced(S)}}{\Ex{S \sim \D_\eps}{|S|}}.
         \label{eq:tmp-567}
    \end{align}
    We analyze each term separately to obtain a sufficient condition for \cref{eq:tmp-567} to hold.
    \begin{itemize}
        \item For the expected size of $|S|$ and $|V\setminus S|$, we combine \cref{lemma:IC-solution-decomp} with the first property of \cref{lemma:IR-sol-rounding} to get
        \begin{align*}
            \Ex{S \sim \D_\eps}{|S|}=&~\eps\Ex{S \sim \D_A}{|S|}+(1-\eps)\Ex{S\sim\D_B}{|S|}=\eps\OPT+(1-\eps)\OPTIC;\\
            \Ex{S \sim \D_\eps}{|V \setminus S|}=&~n-\Ex{S \sim \D_\eps}{|S|}=n-\OPTIC+\eps(\OPTIC-\OPT).
        \end{align*}
        \item For the expected size of the cut, we have
        \begin{align*}
            \Ex{S \sim \D_B}{\cut(S, V\setminus S)}
        &=  \Ex{S \sim \D_B}{\sum_{v \in S}\deg(v)} \tag{$S \sim \D_B$ is an independent set}\\
        &=  \sum_{v \in V}\pr{S \sim \D_B}{v \in S}\cdot\deg(v) \tag{linearity of expectation}\\
        &=  \sum_{v \in V}\theta_v\cdot\deg(v) \tag{property of $\D_B$ from \cref{lemma:IC-solution-decomp}}\\
        &=  \btheta^{\top}(W\vecone - \vecone) \tag{$W$ is the adjacency matrix of $G$}\\
        &=  \|W\btheta\|_1 - \|\btheta\|_1 \tag{$\btheta^{\top}W\vecone = (W\btheta)^{\top}\vecone = \|W\btheta\|_1$}\\
        &\ge (n+\OPTIC-\OPT)- \OPTIC\tag{\Cref{lem:wasteful-unit,lemma:IC-solution-decomp}}\\
        &= n-\OPT.
        \end{align*}
        Combining the above inequalities with the third property of \Cref{lemma:IR-sol-rounding} gives us
        \begin{align*}
            \Ex{S \sim \D_\eps}{\cut(S, V \setminus S)}&=\eps\Ex{S \sim \D_A}{\cut(S, V \setminus S)}+(1-\eps)\Ex{S \sim \D_B}{\cut(S, V \setminus S)}\\
            &\ge\eps \left(n-2\OPT\right)+(1-\eps) (n-\OPT)\\
            &=n-(1+\eps)\OPT.
        \end{align*}
        \item For the expected size of induced subgraphs, we upper bound it using the second property of \Cref{lemma:IR-sol-rounding} and the independence of subsets drawn from $\D_B$ (\Cref{lemma:IC-solution-decomp}).
        \begin{align*}
            \Ex{S \sim \D_\eps}{\induced(S)}&=\eps\Ex{S \sim \D_A}{\induced(S)}+(1-\eps)\Ex{S \sim \D_B}{\induced(S)}\\
            &\le \eps\left(\OPT^2+\OPT\right)+(1-\eps)\OPTIC.
        \end{align*}
    \end{itemize}
    Plugging the above bounds into \Cref{eq:tmp-567}, we obtain its sufficient condition:
    \begin{align}
        \frac{n-(1+\eps)\OPT}{n-\OPTIC+\eps(\OPTIC-\OPT)}
        \ge \frac{\eps\left(\OPT^2+\OPT\right)+(1-\eps)\OPTIC}{\eps\OPT+(1-\eps)\OPTIC}
        \label{eq:another-suff-cond}
    \end{align}
    We first show the existence of {$\eps\in (0,1]$} to make \cref{eq:another-suff-cond} hold. Note that when $\eps=0$, it reduces to
    \[
    \frac{n-\OPT}{n-\OPTIC}\ge\frac{\OPTIC}{\OPTIC}=1,
    \]
    which holds with a strict inequality because $\OPTIC>\OPT$. Since both sides are continuous in $\eps$, there should exist a small neighborhood of $\eps$ around $0$ that all satisfies \cref{eq:another-suff-cond}.
    
    Moreover, according to \Cref{lemma:persuasive-binary}, the cost of the signaling scheme induced by $\D_\eps$ is 
    \begin{align*}
        \Cost(\D_\eps)\le&\Ex{S\sim\D_\eps}{|S|}=\eps\OPT+(1-\eps)\OPTIC=\OPTIC-\eps(\OPTIC-\OPT).
    \end{align*}
    Therefore, to prove \cref{thm:unit-strict-improvement}, it remains to show that the choice of $\eps$ in \cref{eq:choice-of-eps} satisfies \cref{eq:another-suff-cond}.
    We start with simplifying the condition by subtracting $1$ on both sides:
    \begin{align*}
        \eqref{eq:another-suff-cond}\Longleftrightarrow\ & \frac{\OPTIC-\OPT-\eps\OPTIC}{n-\OPTIC+\eps(\OPTIC-\OPT)}
        \ge \frac{\eps\OPT^2}{\eps\OPT+(1-\eps)\OPTIC}.
    \end{align*}
    On the one hand, since lower bounding the left-hand side and upper bounding the right-hand side results in a sufficient condition, we use $\eps\OPT+(1-\eps)\OPTIC\ge\OPT$ and obtain
    \begin{align*}
        \eqref{eq:another-suff-cond}\Longleftarrow\ & \frac{\OPTIC-\OPT-\eps\OPTIC}{n-\OPT}
        \ge \frac{\eps\OPT^2}{\OPT}\\
        \Longleftrightarrow\ &\frac{\OPTIC-\OPT}{n-\OPT}\ge\eps
        \left(\OPT+\frac{\OPTIC}{n-\OPT}\right)\\
        \Longleftrightarrow\ &
        \eps\le \frac{\OPTIC-\OPT}{\OPT(n-\OPT)+\OPTIC}
        =\frac{\PoS-1}{\PoS+(n-\OPT)},
    \end{align*}
    where $\PoS=\OPTIC/\OPT$ is the price of stability. This justifies the second choice of $\eps$ in \Cref{eq:choice-of-eps}.

    On the other hand, under the condition of $\eps\le\frac{1}{2}$, we have $\eps\OPT+(1-\eps)\OPTIC\ge\frac{\OPT+\OPTIC}{2}$. We use this to provide an alternative sufficient condition to \cref{eq:another-suff-cond}. 
    \begin{align*}
        \eqref{eq:another-suff-cond}\Longleftarrow\ & \frac{\OPTIC-\OPT-\eps\OPTIC}{n}
        \ge \frac{\eps\OPT^2}{\frac{\OPT+\OPTIC}{2}}\\
        \Longleftrightarrow\ &\OPTIC-\OPT\ge\eps\left(
            \frac{2n\OPT^2}{\OPT+\OPTIC}+\OPTIC
        \right)\\
        \Longleftrightarrow\ &
        \eps\le\frac{\OPTIC-\OPT}{\frac{2n\OPT^2}{\OPT+\OPTIC}+\OPTIC}
        =\frac{\PoS-1}{\PoS+\frac{2n}{\PoS+1}},
    \end{align*}
    which justifies the first option of $\eps$ when $\eps\le\frac{1}{2}$.
\end{proof}

    \section{Lower Bounds for Unit-Weight Graphs}
\label{sec:unit-lower}

In this section, we prove the lower bound part of \Cref{thm:unit-ub-opt} by showing that on the double-star graph in \Cref{fig:double-star}, every persuasive signaling scheme must have an $\Omega(\sqrt{n})$ cost, while the optimal total workload is $O(1)$. Therefore, the $O(\sqrt{n})$ approximation guarantee in \Cref{thm:unit-ub-opt} cannot be significantly improved, even when non-binary schemes are allowed. 

Then, we prove a more general lower bound, which states that for any $n$ and $k \in [2, n]$, there is an instance on which $\OPT = \Theta(k)$, $\OPTIC = \Theta(n)$, and no persuasive signaling scheme can achieve a cost that is much better than $\min\{k\sqrt{n}, n\}$. Recall that by Theorems \ref{thm:unit-ub-opt}~and~\ref{thm:unit-improve-ic}, on this instance there is a persuasive binary signaling scheme with cost $O(\min\{\OPT\cdot\sqrt{n}, \OPTIC\}) = O(\min\{k\sqrt{n}, n\})$. Therefore, for a wide range of $\OPT$, between the two signaling schemes that we develop for unit-weight graphs, the better one is essentially optimal.

\subsection{Lower bound for the double-star graph}

\begin{theorem}[Lower bound part of \Cref{thm:unit-ub-opt}]
    On the double-star graph with $n = 2k + 2$ vertices, every persuasive signaling scheme has a cost of $\Omega(\sqrt{n})$.
\end{theorem}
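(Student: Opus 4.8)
The goal is to lower-bound $\Cost(\varphi)$ for an arbitrary persuasive scheme $\varphi$ on the double-star; combined with the easy bound $\OPT = O(1)$ (both centers playing $1$), this proves the tightness claimed in \Cref{thm:unit-ub-opt}. By \Cref{lem:simplify-signal} I may assume $\varphi$ is identity-independent, so the characterization in \Cref{lemma:persuasive-general} applies to the induced distribution $\D_\varphi \in \Delta(\signalspace^V)$. Write $a := s_{x_0}$ and $b := s_{y_0}$ for the random labels of the two centers, $P := \Ex{\bs\sim\D_\varphi}{a+b}$, and $C := \Cost(\varphi)$. The plan is to combine a \emph{global} inequality, obtained by summing all of the slacks, with a \emph{local} inequality showing that the two centers rarely both receive near-maximal signals.

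For the global inequality: since persuasiveness forces $\Delta_\theta = 0$ for every $\theta > 0$ (and $\Delta_0 \ge 0$ if $0 \in \signalspace$), we have $\sum_\theta \Delta_\theta \ge 0$. Expanding via \Cref{def:slack}, and using that in a unit-weight graph $\sum_\theta \Contrib_\theta = \Ex{}{\sum_v \deg(v)\,s_v} = (k+1)P + (C-P)$ (each center has degree $k+1$ and each leaf degree $1$) while $\sum_\theta (1-\theta)\Numtheta_\theta = n - C$, this rearranges to
\[
    2C \ \ge\ n - kP .
\]
In particular, since $P \le 2$ and $n = 2k+2$, already $C \ge 1$; and we would be done once we certify $2 - P = \Omega(1/\sqrt n)$. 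So the whole point is to prevent $P$ from being too close to $2$.

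For the local inequality I claim that for every $t \in (0, \tfrac12)$,
\[
    \pr{\bs\sim\D_\varphi}{a > 1-t \ \text{and}\ b > 1-t} \ <\ \frac{tC}{1-t} .
\]
Indeed, for each signal value $\theta \in \signalspace$ with $\theta > 1-t$ (hence $\theta > 0$), persuasiveness gives $\Contrib_\theta = (1-\theta)\Numtheta_\theta < t\,\Numtheta_\theta$. Summing over all such $\theta$, and noting that any vertex whose label exceeds $1-t > \tfrac12$ contributes more than $\tfrac12$ to $\|\bs\|_1$ (so the expected number of such vertices is below $2C$), we get $\sum_{\theta > 1-t} \Contrib_\theta < 2tC$. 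On the other hand, restricting the outer sum in the definition of $\Contrib_\theta$ to $v \in \{x_0, y_0\}$ and keeping only the edge between the two centers, $\sum_{\theta > 1-t}\Contrib_\theta \ge \Ex{}{\1{a > 1-t}\,b + \1{b > 1-t}\,a} \ge 2(1-t)\cdot\pr{}{a > 1-t,\, b > 1-t}$, since on the event that both centers exceed $1-t$ the bracket equals $a+b > 2(1-t)$. Comparing the two estimates proves the claim. (This is the ``dual certificate'' hinted at in \Cref{sec:lower_overview}: the implicit function $f$ is a large negative constant on $(1-t,1]$ and $0$ elsewhere.)

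It remains to combine the two inequalities with an \emph{adaptive} threshold. Take $t := \tfrac{1}{4C} \le \tfrac14$, which is valid since $C \ge 1$; then $\tfrac{tC}{1-t} \le \tfrac{1/4}{3/4} = \tfrac13$, so $q := \pr{}{a > 1-t,\, b > 1-t} < \tfrac13$. Splitting $\Ex{}{a+b}$ according to whether both centers exceed $1-t$ — and using that otherwise $a + b \le (1-t)+1$ — gives $P \le 2q + (2-t)(1-q) = 2 - t(1-q) \le 2 - \tfrac{2t}{3}$, i.e.\ $2 - P \ge \tfrac{1}{6C}$. Plugging into $2C \ge n - kP = 2 + k(2-P)$ yields $2C \ge k/(6C)$, hence $C \ge \sqrt{k/12} = \Omega(\sqrt n)$ since $n = 2k+2$. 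The expectation identities in the global step and the final arithmetic are routine; the main obstacle is discovering the local inequality and, in particular, recognizing that the threshold $1-t$ must be taken at scale $\Theta(1/C)$ rather than a fixed constant — with a fixed $t$ the quantity $tC/(1-t)$ is vacuous once $C \gtrsim 1$, and one only recovers the trivial bound $C = \Omega(1)$.
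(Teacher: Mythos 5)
Your proof is correct, and while it rests on the same two structural facts as the paper's argument --- (i) summing all slacks to get $2\,\Cost \ge n - k\,\Ex{}{a+b}$, and (ii) exploiting the zero-slack constraint $\Contrib_\theta = (1-\theta)\Numtheta_\theta$ for signals $\theta$ near $1$ --- the second half runs in the opposite logical direction and is organized differently. The paper assumes the cost is at most $\sqrt{k}/3$, uses Markov's inequality to conclude that $a+b \ge 2-\eps$ with constant probability for the \emph{fixed} threshold $\eps = 1/\sqrt{k}$, and then decomposes $\sum_{\theta \ge 1-\eps}\Delta_\theta = 0$ into four terms, showing the large positive contribution of the two centers must be absorbed by the leaves, forcing $\Ex{}{\Sum} = \Omega(1/\eps)$. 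You instead go local first: comparing the upper bound $\sum_{\theta>1-t}\Contrib_\theta \le 2tC$ (from the slack equalities plus $\Numtheta \le C/(1-t)$) against the lower bound $2(1-t)\cdot\Pr[a>1-t,\,b>1-t]$ (from the single center--center edge) caps the probability that both centers receive near-maximal signals, which then bounds $\Ex{}{a+b}$ away from $2$ by $\Omega(1/C)$ and closes the loop through the global inequality. The adaptive threshold $t = 1/(4C)$ is the key move that lets you avoid both the bootstrapping assumption and the Markov/case-analysis machinery; the resulting argument is arguably cleaner and makes the ``dual test function supported near $\theta=1$'' interpretation from the paper's overview more transparent. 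The one cosmetic caveat is that several of your strict inequalities should be non-strict in degenerate cases (e.g.\ when some $\Numtheta_\theta = 0$), but this changes nothing downstream.
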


\begin{proof}
    For clarity, we rename the vertices in the graph with $V = \{(c, 1), (c, 2)\} \cup \{(i, j): i \in [2], j \in [k]\}$, where $(c, i)$ is the center of the $i$-th star, and $(i, j)$ is the $j$-th leaf in the $i$-th star.

    Fix a persuasive signaling scheme for the graph. By \Cref{lem:simplify-signal}, it is without loss of generality to assume that the scheme is identity-independent and is specified by  $\D_\varphi \in \Delta(\signalspace^V)$ on signal space $\signalspace$. In other words, the signaling scheme samples $(s_{c,1}, s_{c,2}, s_{1,1}, \ldots, s_{1,k}, s_{2,1}, \ldots, s_{2,k})\in \signalspace^V$ from $\D_\varphi$, labels the two centers with $s_{c,1}$ and $s_{c,2}$, and labels the $j$-th leaf in the $i$-th star with $s_{i,j}$. For brevity, we use $x$ and $y$ as shorthands for $s_{c,1}$ and $s_{c,2}$, and let random variable $\Sum$ denote $x + y + \sum_{i=1}^{2}\sum_{j=1}^{k}s_{i,j}$. Then, the cost of the scheme is given by $\Ex{\D_\varphi}{\Sum}$.

    Let $C \coloneqq 3$. We will show that, under the assumption that $\Ex{\D_\varphi}{\Sum} \le \sqrt{k} / C$, the persuasiveness of $\D_\varphi$ implies $\Ex{\D_\varphi}{\Sum} = \Omega(\sqrt{k})$. Therefore, $\Ex{\D_\varphi}{\Sum} = \Omega(\sqrt{k}) = \Omega(\sqrt{n})$ always holds.

    By \Cref{lemma:persuasive-general}, the persuasiveness of $\D_\varphi$ implies that for every $\theta \in \signalspace$,
    \[
    0 \le \Delta_\theta=\Ex{\D_\varphi}{\sum_{v \in V}\1{s_v=\theta}\left(\sum_{u\in N(v)}s_u-(1-s_v)\right)}.
    \]
    Summing over all $\theta \in \signalspace$ gives
    \[
        0
    \le \sum_{\theta\in\signalspace}\Delta_\theta=\Ex{\D_\varphi}{\sum_{v \in V}\left(s_v+\sum_{u\in N(v)}s_u-1\right)}
    =\Ex{\D_\varphi}{\sum_{u \in V}s_u\left(1+|N(u)|\right)}-n.
    \]
    Since $|N(u)|=1$ when $u$ is a leaf and $|N(u)|=k+1$ when $u$ is a center, we have
    \[
        n
    \le \Ex{\D_\varphi}{\sum_{u \in V}s_u\left(1+|N(u)|\right)}
    =   2\Ex{\D_\varphi}{\sum_{v \in V}s_v}+k\cdot\Ex{\D_\varphi}{s_{c,1} + s_{c,2}}
    =   2\Ex{\D_\varphi}{\Sum} + k\cdot\Ex{\D_\varphi}{x+y},
    \]
    which further implies $\Ex{\D_\varphi}{x + y}\ge\frac{n-2\Ex{\D_\varphi}{\Sum}}{k}$. By the assumption that $\Ex{\D_\varphi}{\Sum} \le \sqrt{k} / C$, we have
    \[
        \Ex{\D_\varphi}{x + y} 
    \ge \frac{2k + 2 - 2\sqrt{k}/C}{k}
    \ge 2 - \frac{2}{C\sqrt{k}}.
    \]
    Let $\eps \coloneqq 1 / \sqrt{k}$. Since $2 - (x + y)$ is always non-negative, by Markov's inequality, we have
    \begin{align*}
        \pr{\D_\varphi}{x + y < 2 - \eps}
    &=  \pr{\D_\varphi}{2 - (x+y) > \eps}
    \le\frac{\Ex{\D_\varphi}{2 - (x + y)}}{\eps}\\
    &\le\frac{2 - \left(2 - \frac{2}{C\sqrt{k}}\right)}{1 / \sqrt{k}}
    =   \frac{2}{C}. \tag{$\Ex{\D_\varphi}{x + y} \ge 2 - 2/(C\sqrt{k})$}
    \end{align*}
    Equivalently, we have $\pr{\D_\varphi}{x + y \ge 2 - \eps} \ge 1 - 2/C$.

    As long as $k \ge 2$, we have $1 - \eps = 1 - 1/\sqrt{k} > 0$, so \Cref{lemma:persuasive-general} implies $\sum_{\theta\in \signalspace}\Delta_\theta\cdot\1{\theta \ge 1 - \eps}=0$, which is equivalent to
    \[
        \Ex{\D_\varphi}{\sum_{v \in V}\1{s_v\ge1-\eps}\left(s_v+\sum_{u\in N(v)}s_u-1\right)} = 0.
    \]
    The left-hand side above can be written as the sum of the following four terms:
    \begin{itemize}
        \item $T_1 \coloneqq \Ex{\D_\varphi}{\1{x+y\ge2-\eps}\cdot\left(2(x+y-1)+\sum_{i=1}^{2}\sum_{j=1}^{k}s_{i,j}\right)}$, the contribution from the two centers when $x + y \ge 2 - \eps$. Note that this condition implies both $x \ge 1 - \eps$ and $y \ge 1 - \eps$.
        \item $T_2 \coloneqq \Ex{\D_\varphi}{\1{x+y<2-\eps \wedge x\ge1-\eps}\left(x+y-1+\sum_{i=1}^{k} s_{1,i}\right)}$, the contribution from the center of the first star, when $x + y < 2 - \eps$.
        \item $T_3 \coloneqq \Ex{\D_\varphi}{\1{x+y<2-\eps \wedge y\ge1-\eps}\left(x+y-1+\sum_{i=1}^{k} s_{2,i}\right)}$, the contribution from the center of the second star, when $x + y < 2 - \eps$.
        \item $T_4 \coloneqq \Ex{\D_\varphi}{\sum_{v \in V \setminus \{(c, 1), (c, 2)\}}\1{s_{v}\ge1-\eps}\left(s_v+\sum_{u\in N(v)}s_{u}-1\right)}$, the contribution from the leaves.
    \end{itemize}
    Recall that we showed $\pr{\D_\varphi}{x + y \ge 2 - \eps} \ge 1 - 2/C$ earlier, so the first term $T_1$ satisfies
    \begin{align*}
        T_1
    &\ge\Ex{\D_\varphi}{\1{x+y\ge2-\eps}\cdot2(x+y-1)}
    \ge\Ex{\D_\varphi}{\1{x+y\ge2-\eps}\cdot2(2 - \eps -1)}\\
    &=  \pr{\D_\varphi}{x+y\ge2-\eps}\cdot(2 - 2\eps)
    \ge (1 - 2/C)\cdot2\cdot(1 - \eps)
    \ge 2\cdot(1 - 2/C - \eps).
    \end{align*}
    For the second term, note that when $x \ge 1 - \eps$ holds, $x + y - 1 + \sum_{i=1}^{k}s_{1,i} \ge x - 1 \ge -\eps$. Thus,
    \[
        T_2
    \ge -\eps\cdot\pr{\D_\varphi}{x + y < 2 - \eps \wedge x \ge 1 - \eps}
    \ge -\eps\cdot\pr{\D_\varphi}{x + y < 2 - \eps}
    \ge -2\eps/C.
    \]
    By symmetry, we also have $T_3 \ge -2\eps/C$.
    For the same reason, the last term can be lower bounded by $-\eps\cdot\Ex{\D_\varphi}{\sum_{v \in V \setminus \{(c, 1), (c, 2)\}}\1{s_{v}\ge1-\eps}}$.
    Note that
    \begin{align*}
        \Ex{\D_\varphi}{\Sum}
    &\ge\sum_{v \in V \setminus \{(c, 1), (c, 2)\}}\Ex{\D_\varphi}{s_v}
    \ge\sum_{v \in V \setminus \{(c, 1), (c, 2)\}}(1-\eps)\cdot\pr{\D_\varphi}{s_v \ge 1 - \eps}\\
    &=  (1 - \eps)\cdot\Ex{\D_\varphi}{\sum_{v \in V \setminus \{(c, 1), (c, 2)\}}\1{s_{v}\ge1-\eps}},
    \end{align*}
    so we have $T_4 \ge -\frac{\eps}{1-\eps}\Ex{\D_\varphi}{\Sum}$.
    
    Recall that $T_1 + T_2 + T_3 + T_4 = 0$. Therefore, combining the four lower bounds gives
    \[
        0
    \ge 2\cdot(1 - 2/C - \eps) - 2\eps/C - 2\eps/C - \frac{\eps}{1-\eps}\Ex{\D_\varphi}{\Sum},
    \]
    which is equivalent to 
    \[
        \Ex{\D_\varphi}{\Sum}
    \ge \frac{1-\eps}{\eps}\cdot[2\cdot(1 - 2/C - \eps) - 4\eps/C].
    \]
    By our choice of $C = 3$ and $\eps = 1/\sqrt{k}$, the right-hand side above is $\Omega(\sqrt{k}) = \Omega(\sqrt{n})$ for all sufficiently large $k$. This establishes the $\Omega(\sqrt{n})$ lower bound on the cost.
\end{proof}

\subsection{A more general lower bound via $k$-stars}\label{sec:unit-lb-general}
The previous lower bound gives an instance where $\OPT = O(1)$ and the optimal persuasive signaling scheme has an $\Omega(\sqrt{n})$ cost. Now, we present a more general lower bound in \Cref{thm:unit-lb-general} that holds for a wider range of $\OPT$. The proof of \Cref{thm:unit-lb-general} is deferred to \Cref{app:unit-lower}.

\begin{theorem}\label{thm:unit-lb-general}
    For any $n \ge k \ge 2$, there is a unit-weight graph with $\le n$ vertices, on which the following hold simultaneously:
    \begin{itemize}
        \item $\OPT = \Theta(k)$.
        \item $\OPTIC = \Omega(n)$.
        \item Every persuasive signaling scheme has a cost of $\Omega(\min\{k\sqrt{n}, n\})$.
    \end{itemize}
    When $n$ is sufficiently large, the $\Theta(\cdot)$ and $\Omega(\cdot)$ notations above hide universal constants that do not depend on $n$ and $k$.
\end{theorem}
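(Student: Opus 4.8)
The instance I would use is a ``clustered star'' that generalizes the double-star of \Cref{fig:double-star}: take $\Gamma = \Theta(k)$ \emph{hub} vertices forming a clique, and attach to each hub a private set of $\ell = \Theta(n/k)$ \emph{leaf} vertices, where leaves of different hubs are non-adjacent (so each hub together with its leaves forms a star $K_{1,\ell}$, and the $\Gamma$ stars are glued only through the clique on their centers). This uses $\Gamma(\ell+1) = \Theta(n)$ vertices. The two benchmarks are quick: the $\Gamma$ vertex-disjoint ``blocks'' (a hub plus its leaves) each need total contribution at least $1$, since every leaf's only nontrivial neighbour is its hub and $\ell \ge 2$, so $\OPT \ge \Gamma$; and letting each hub contribute $1$ is feasible, hence $\OPT = \Theta(k)$. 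For $\OPTIC$, in any stable solution, if some hub contributes at least $1$ then — because the hubs form a clique — stability forces \emph{every other} hub to contribute $0$, so each of the remaining $\Omega(n)$ leaves must contribute its full unit; and if no hub reaches $1$, then (as in a single star, using $\ell \ge 2$) stability still forces each block to carry essentially a full unit on its leaves, so $\OPTIC = \Omega(n)$ either way. When $k$ is a large polynomial fraction of $n$ the target $\min\{k\sqrt n, n\}$ is simply $n$, and there I would instead use a rigid near-regular instance built from small dense gadgets (e.g.\ a disjoint union of $\Theta(n)$ triangles when $k = \Theta(n)$), on which the first-moment inequality below already gives cost $\Omega(n)$; making a single family work uniformly over all $k \in [2,n]$ is part of the bookkeeping.

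For the lower bound on every persuasive scheme, I would use LP duality. By \Cref{lem:simplify-signal} the scheme may be assumed identity-independent, $\D_\varphi \in \Delta(\signalspace^V)$, and by \Cref{lemma:persuasive-general} persuasiveness is exactly $\Delta_0 \ge 0$ together with $\Delta_\theta = 0$ for every $\theta \in \signalspace$ with $\theta > 0$. Since each $\Delta_\theta$ is a \emph{linear} functional of $\D_\varphi$, it suffices to produce a dual certificate: a function $f : [0,1] \to \R$ with $f(0) \ge 0$ such that every deterministic labeling $s \in [0,1]^V$ satisfies $\|s\|_1 \ge \sum_\theta f(\theta)\,\Delta_\theta(s) + C$; then linearity and persuasiveness give $\Cost(\D_\varphi) = \Ex{s \sim \D_\varphi}{\|s\|_1} \ge \sum_\theta f(\theta)\,\Delta_\theta + C \ge C$, using $f(0)\Delta_0 \ge 0$ and $f(\theta)\Delta_\theta = 0$ for $\theta > 0$. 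The constant choices $f \equiv 1$ (i.e.\ $\sum_\theta \Delta_\theta \ge 0$) and $f(\theta) = \theta$ — the latter combined with the clique on the hubs to bound $\sum_{\{u,v\} \in E} s_u s_v$ below by a quadratic form in the hub signals — already yield a bound of order $\min\{k^2, n\}$, which suffices near $k \approx \sqrt n$ but falls short of $k\sqrt n$ when $k \ll \sqrt n$. To close this, I would take $f$ piecewise constant, equal to a positive constant on $[0,\tau)$ and a large-magnitude negative constant on $[\tau,1]$, for a threshold $\tau$ and magnitude tuned to $k$ and $n$ (this is the ``$\eps$ as a function of $k,n$'' that appears in the double-star case). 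The clique is again the engine: whenever $q$ hubs simultaneously receive a signal above $\tau$ they mutually cover one another, forcing $\Omega(\tau q^2)$ total incoming contribution at high-signal vertices, which is the quadratic effect that turns $\sqrt n$ into $k\sqrt n$.

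The hard part — and the reason this does not fall out of the double-star proof — is that with $\Gamma = \Theta(k)$ coupled hubs rather than $2$, the \emph{aggregate} slack constraints no longer pin down the joint signal profile of the hubs: they are consistent with, for instance, every hub sitting near $1/2$, a configuration producing no useful quadratic slack. Ruling this out, and verifying the pointwise inequality $\|s\|_1 \ge \sum_\theta f(\theta)\Delta_\theta(s) + C$ for the adversarial labelings that place a \emph{moderately} high value on all hubs (the labelings that defeat the naive threshold $\tau \approx 1-\eps$), requires exploiting the constraint $\Delta_\theta = 0$ at individual signal levels rather than only summed, and then a case analysis on how many hubs and leaves lie above $\tau$. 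Finally, the threshold $\tau$ and the negative magnitude must be chosen so that the two ``branches'' of the analysis — one dominant for $k \lesssim \sqrt n$ and giving $k\sqrt n$, the other for $k \gtrsim \sqrt n$ and giving $n$ — line up with consistent constants across the whole range $k \in [2,n]$; this parameter optimization, rather than the construction or the benchmark computations, is where the technical weight of the proof lies.
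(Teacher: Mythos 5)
Your construction and benchmark computations coincide with the paper's (the paper uses exactly the $k$-star-with-clique-of-centers graph, $l=\lfloor n/k\rfloor-1$ leaves per center, for $k\le\sqrt n/2$), and your LP-duality framing is the one the paper itself advertises in its overview. But the core of the lower bound is not actually carried out, and the specific route you commit to --- a single pointwise dual certificate $\|s\|_1\ge\sum_\theta f(\theta)\Delta_\theta(s)+C$ holding for \emph{every} deterministic labeling $s$, with $f$ piecewise constant --- runs head-on into the obstruction you yourself name (all hubs sitting at a moderate value like $1/2$, producing no quadratic slack) and you do not resolve it; you only assert that resolving it ``requires exploiting the constraint $\Delta_\theta=0$ at individual signal levels'' and a case analysis. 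The paper's proof is \emph{not} a pointwise certificate. It is a conditional first-moment argument: assume $\Ex{\D_\varphi}{\Sum}\le k\sqrt n/C$; then the single summed constraint $\sum_\theta\Delta_\theta\ge 0$ forces the expected total hub signal to be at least roughly $k(l+1)/(l+k)\approx k$ (because the $n$ leaves must be covered and, under the cost assumption, cannot cover themselves); the pointwise bound $\1{x\ge y}\ge\frac{x-y}{1-y}$ with the threshold $y=\frac{\sqrt n}{\sqrt n+2k}$ then gives $\Ex{}{X}\ge k/8$ for $X=\sum_i\1{x_i\ge y}$, and Markov's inequality gives $\pr{}{X\ge k/10}\ge 1/36$. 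This is precisely what rules out the ``all hubs near $1/2$'' configuration: in the regime $k\le\sqrt n/2$ one has $y\ge 1/2$, and a distribution concentrated on such configurations would violate the first-moment bound. Finally, the tightness constraint $\sum_{\theta\ge y}\Delta_\theta=0$ together with the clique on the centers forces an $\Omega(k^2)$ positive slack on the event $\{X\ge k/10\}$, which can only be cancelled by leaves receiving signals $\ge y$, each costing $y/(1-y)=\Theta(\sqrt n/k)$ per unit of slack absorbed --- giving $\Omega(k^2)\cdot\sqrt n/k=\Omega(k\sqrt n)$. None of this multi-step structure (cost assumption, first moment, Markov, threshold tuned to $\sqrt n/(\sqrt n+2k)$ rather than to $1-\eps$) is present in your sketch, and it is exactly the content of the theorem.

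A second, smaller gap: your treatment of the regime $k\gg\sqrt n$ (ad hoc gadgets such as disjoint triangles for $k=\Theta(n)$, with the interpolation left as ``bookkeeping'') does not cover the full range $\sqrt n\ll k\ll n$. The paper handles this cleanly by a reduction: build the $k'$-star instance with $k'=n/k$ centers on $n'=(n/k)^2$ vertices (so $k'=O(\sqrt{n'})$), take $n/n'$ disjoint copies, observe that $\OPT$ and $\OPTIC$ are additive over components, and show the cost lower bound composes because restricting a persuasive scheme on the disjoint union to a uniformly random copy yields a persuasive scheme on one copy with a $1/m$ fraction of the slack and of the cost. You would need this (or an equivalent) composition argument to get the stated bound uniformly over $k\in[2,n]$.
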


When $k = O(\sqrt{n})$, we consider a generalized version of the double-star construction, in which there are $k$ disjoint stars, each with $\Theta(n/k)$ leaves. The centers of the $k$ stars form a clique. This gives an instance on which $\OPT = k$ and $\OPTIC = \Omega(n)$. By a similar analysis to that of the double-star graph, we show that every persuasive signaling scheme has an $\Omega(k\sqrt{n})$ cost on this $k$-star graph.

When $k \gg \sqrt{n}$, the above construction cannot be directly used for establishing an $\Omega(n)$ lower bound. This is because the $k$-clique formed by the centers already contains $\Omega(k^2) \gg n$ edges; by \Cref{cor:degenerate-signal}, an $O(n^2/m) = o(n)$ cost can be achieved. Instead, we apply this construction on $n' = (n / k)^2$ and $k' = n / k$ (so that $k' = O(\sqrt{n'})$ still holds), and construct $n / n' = k^2/n$ disjoint copies of such a graph. Intuitively, the optimal solutions in the large graph must be $n/n'$ times those for a single copy. This gives $\OPT = (n/n')\cdot k' = k$, $\OPTIC = (n/n')\cdot n' = n$, and the cost of every persuasive scheme is lower bounded by $(n/n')\cdot\Omega(n') = \Omega(n)$.

    \section{Upper Bounds for Weighted Graphs}
\label{sec:weighted-upper}

In this section, we prove upper bounds in weighted graphs using the ternary signaling approach outlined in \Cref{sec:weighted-overview}. Together, they give the upper bound part of \Cref{thm:weighted-ternary-opt}. We also prove \Cref{thm:weighted-improve-ic}, which states that we can achieve a cost strictly lower than $\OPTIC$ whenever $\OPTIR < \OPTIC$.

\subsection{Upper bound for approximating $\OPTIR$}
Before presenting the formal theorem statements, we first establish the following technical lemma, which shows that if the graph contains a large set that is ``almost independent'', we have a good approximation of $\OPTIR$.

\begin{lemma}\label[lemma]{lemma:weighted-upper-technical}
    Suppose that distribution $\D \in \Delta(2^V)$ satisfies
    \[
        \Ex{S \sim \D}{\induced(S)} \le (1 + \gamma)\cdot\Ex{S \sim \D}{|S|}.
    \]
    Then, there is a persuasive signaling scheme with a cost of
    \[
        O\left(\OPTIR + \gamma n + \frac{n\cdot\OPTIR}{\sqrt{\Ex{S \sim \D}{|S|}}}\right).
    \]
\end{lemma}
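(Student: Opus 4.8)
The plan is to construct a ternary signaling scheme on the signal space $\signalspace=\{0,\alpha,1-\eps\}$ that randomizes among three plans, in the spirit of the ternary scheme of \Cref{sec:weighted-overview}. Fix an optimal IR solution $\btheta^\star\in[0,1]^V$ with $\|\btheta^\star\|_1=\OPTIR$, let $\D_A$ be its independent rounding (so $\Ex{T\sim\D_A}{|T|}=\OPTIR$ by \Cref{lemma:IR-sol-rounding}), and set $U:=\{v\in V:\theta^\star_v=0\}$. \emph{Plan~A} draws $T\sim\D_A$ and labels $T$ with $1-\eps$ and all other vertices with $0$; \emph{Plan~B} draws $S\sim\D$ and labels $S$ with $1-\eps$ and all other vertices with $0$; \emph{Plan~C} labels $U$ with $\alpha$ and $V\setminus U$ with $0$. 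The scheme follows Plan~A with probability $1-b-c$, Plan~B with probability $b$, and Plan~C with probability $c$, where $\eps,\alpha,b,c$ are chosen below. Morally, Plan~A is the cheap, near-optimal plan that provides feasibility; Plan~B feeds in the large almost-independent distribution $\D$ to absorb the slack that Plan~A creates at signal $1-\eps$; and Plan~C supplies the corrective third signal that restores $\Delta_0\ge0$.

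First I would apply \Cref{lemma:persuasive-general}: the scheme is persuasive iff $\Delta_{1-\eps}=0$, $\Delta_\alpha=0$, and $\Delta_0\ge 0$. Since signal $1-\eps$ appears only in Plans~A and~B, signal $\alpha$ only in Plan~C, and signal $0$ in all three, each slack splits additively over the plans, and I would read off each per-plan contribution from \Cref{def:slack}: labeling a set $R$ with value $x$ (and the rest $0$) adds $x\,(\induced(R)-|R|)$ to $\Contrib_x$, $x\cdot\cut(R,V\setminus R)$ to $\Contrib_0$, $|R|$ to $\Numtheta_x$, and $|V\setminus R|$ to $\Numtheta_0$. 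Substituting, and bounding the Plan-A terms via \Cref{lemma:IR-sol-rounding} ($\Ex{\D_A}{\induced(T)}\le\OPTIR^2+\OPTIR$ and $\Ex{\D_A}{\cut(T,V\setminus T)}\ge n-2\OPTIR$) and the Plan-B terms via the hypothesis $\Ex{\D}{\induced(S)}\le(1+\gamma)\Ex{\D}{|S|}$, converts the three persuasiveness conditions into three (in)equalities in $\eps,\alpha,b,c$, which I would then solve in order. The equation $\Delta_\alpha=0$ fixes $\alpha=|U|/(|U|+2m_U)\le1$, where $m_U$ is the total edge weight inside $U$. The equation $\Delta_{1-\eps}=0$ fixes the ratio $b/(1-b-c)$: Plan~A \emph{over}-supplies signal $1-\eps$ (positive contribution once $\eps$ is small relative to $(\Ex{\D_A}{\induced(T)}-\OPTIR)/\OPTIR$) while Plan~B \emph{under}-supplies it (negative contribution once $\eps$ is above roughly $\gamma$, since $\D$ is almost independent), so a small $b$ balances the two. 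Finally, $c$ and $\eps$ are set so that the positive slack Plan~C injects at $0$ dominates the residual deficits of Plans~A and~B, using that the support of $\btheta^\star$ is a dominating set and hence $\cut(U,V\setminus U)\ge|U|$. Plugging the choices into $\Cost\le (1-b-c)(1-\eps)\OPTIR+b(1-\eps)\Ex{\D}{|S|}+c\,\alpha\,|U|\le\OPTIR+b\,\Ex{\D}{|S|}+cn$ and simplifying should give the claimed bound, with $b=\Theta\big(n\OPTIR/(\Ex{\D}{|S|})^{3/2}\big)$ and $c=\Theta\big(\gamma+\OPTIR/\sqrt{\Ex{\D}{|S|}}\big)$.

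The hard part will be making these parameter choices mutually consistent --- in particular, reconciling the lower bound on $\eps$ needed for Plan~B to under-supply $\Delta_{1-\eps}$ with the upper bounds on $\eps$ needed both for Plan~A to over-supply it and for $b,c$ to remain small. This forces a case analysis on the degenerate regimes: (i) if $\Ex{\D_A}{\induced(T)}-\OPTIR$ is too small, then $\btheta^\star$ has little self-overlap and a simpler, near-binary scheme built directly from $\btheta^\star$ already achieves cost $O(\OPTIR)$; (ii) if $m_U$ is so large that the value of $\alpha$ forced by $\Delta_\alpha=0$ is too small for Plan~C to make $\Delta_0$ non-negative, then the graph is dense enough that the no-information scheme of \Cref{cor:degenerate-signal}, of cost $O(n^2/m)$, already beats the target; and (iii) whenever the target bound is itself $\Omega(n)$ --- e.g., if $\gamma$ exceeds a fixed constant --- the no-information scheme, of cost at most $n$, suffices outright. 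The bulk of the proof is then the ``generic'' regime in between, where one must carefully verify $b+c\le1$ and that all three slack conditions hold simultaneously: routine but delicate bookkeeping with the estimates from \Cref{lemma:IR-sol-rounding}.
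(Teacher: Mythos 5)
Your overall architecture is the same as the paper's: a ternary scheme mixing (A) an independent rounding of the optimal IR solution labeled $1-\eps$, (B) a draw from $\D$ labeled $1-\eps$, and (C) a corrective plan with a third signal $\alpha$; persuasiveness is reduced via \Cref{lemma:persuasive-general} to $\Delta_\alpha=0$, $\Delta_{1-\eps}=0$, $\Delta_0\ge0$, solved in that order, with the degenerate regimes handled by falling back to the no-information scheme of \Cref{cor:degenerate-signal}. (As a minor point, your stated $b=\Theta\bigl(n\,\OPTIR/(\Ex{S\sim\D}{|S|})^{3/2}\bigr)$ is off: balancing $bn$ against $n\,\OPTIR^2/(b\,\Ex{S\sim\D}{|S|})$ gives $b=\Theta\bigl(\OPTIR/\sqrt{\Ex{S\sim\D}{|S|}}\bigr)$, which is what the paper uses.)

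The genuine gap is in Plan~C. You assign $\alpha$ to the \emph{deterministic} zero-set $U=\{v:\theta^\star_v=0\}$ and signal $0$ to the support $V\setminus U$, so Plan~C's contribution to $\Delta_0$ is $c\bigl[\alpha\cdot\cut(U,V\setminus U)-|V\setminus U|\bigr]$, and the only general lower bound on the cut is $\cut(U,V\setminus U)\ge|U|$. This requires $\alpha|U|\gg|V\setminus U|$, i.e., that the \emph{support} of the optimal fractional IR solution be small --- but nothing controls the support size, only the $\ell_1$ norm. For instance, on a clique with all edge weights $1/2$ the unique optimal IR solution is $\theta^\star_v=2/(n+1)$ for every $v$, so $U=\emptyset$, Plan~C labels everything with $0$, and it contributes $-cn$ to $\Delta_0$, i.e., it pushes the slack in the wrong direction; none of your three fallback cases (small self-overlap of $\btheta^\star$, large $m_U$, large $\gamma$) catches this failure mode. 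The paper avoids it by making Plan~C assign $\alpha$ to the \emph{random complement} $V\setminus\DStilde$ of a fresh independent rounding $\DStilde$ of $\btheta^\star$ and $0$ to $\DStilde$: then the penalty term is $\Ex{}{|\DStilde|}=\|\btheta^\star\|_1=\OPTIR$ rather than $|{\rm supp}(\btheta^\star)|$, and the gain term is $\alpha\,\Ex{}{\cut(\DStilde,V\setminus\DStilde)}\ge\alpha(n-2\OPTIR)$ by \Cref{lemma:IR-sol-rounding}. The rounding is precisely what converts ``support size'' into ``$\ell_1$ norm,'' and without it your $\Delta_0\ge0$ step does not go through.
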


Note that if $\gamma = 0$, the condition $\Ex{S \sim \D}{\induced(S)} \le (1 + \gamma)\cdot\Ex{S \sim \D}{|S|}$ is equivalent to that $\D$ is supported over independent sets of $V$. When $\gamma > 0$, we allow some edges (with a small total weight) in the sub-graph induced by a random $S \sim \D$, at the cost of an additional $\gamma n$ term in the cost of the resulting signaling scheme.

We include a proof of \Cref{lemma:weighted-upper-technical} in \Cref{app:proof-weighted-technical}. The proof is based on optimizing the probabilities of sending signals to $\DStilde,\IStilde\sim\D$, and $V\setminus\DStilde$ as well as the values of the three signals, as described in \Cref{sec:weighted-overview}. In particular, let $\btheta^\star \in [0, 1]^V$ be an optimal solution subject to IR with cost $\|\btheta^\star\|_1 = \OPTIR$. We consider the following family of ternary signaling schemes with parameters $p,q\ge0$ and $\epsilon,\alpha\in[0,1]$:
    \begin{itemize}
        \item With probability $\frac{1}{1 + p + q}$, draw a random set $\DStilde \subseteq V$ that includes each $v \in V$ independently with probability $\theta^\star_v$. Label (the vertices in) $\DStilde$ with $1 - \eps$, and label $V \setminus \DStilde$ with $0$.
        
        \item With probability $\frac{p}{1 + p + q}$, draw a random set $\IStilde \sim \D$. Label $\IStilde$ with $1 - \eps$, and label $V \setminus \IStilde$ with $0$.
        
        \item With probability $\frac{q}{1 + p + q}$, draw a random set $\DStilde \subseteq V$ that includes each $v \in V$ independently with probability $\theta^\star_v$. Label $V \setminus \DStilde$ with $\alpha$, and label $\DStilde$ with $0$.
    \end{itemize}
    In the next section, we prove the upper bound side of \Cref{thm:weighted-ternary-opt} by instantiating \Cref{lemma:weighted-upper-technical} with specific choices of $\D$.

In \Cref{thm:weighted-upper-general}, we provide the $O(n^{3/4})$ approximation guarantee for general weighted graphs. The distribution $\D$ is chosen as the distribution of a random subset of the vertices sampled with a carefully chosen probability. The proof of \Cref{thm:weighted-upper-general} is in \Cref{app:proof-weighted-upper-general}.

\begin{theorem}[The first upper bound in \Cref{thm:weighted-ternary-opt}]\label{thm:weighted-upper-general}
    In any weighted graph, there exists a ternary signaling scheme that is persuasive and has cost
    $O\left(n^{3/4}\cdot\left(\OPTIR\right)^{1/2}\right)$.
\end{theorem}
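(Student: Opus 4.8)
The plan is to obtain the claimed ternary scheme as the better of two options, depending on the total edge weight $m = \sum_{\{u,v\}\in E}W_{u,v}$: the no-information scheme of \Cref{cor:degenerate-signal}, whose cost is $\frac{n^2}{n+2m}\le\min\{n,\ \tfrac{n^2}{2m}\}$, and an instantiation of \Cref{lemma:weighted-upper-technical} with the distribution $\D\in\Delta(2^V)$ that includes each vertex independently with probability $\rho$, where $\rho\in(0,1]$ is a parameter to be tuned. For this $\D$ one has $\Ex{S\sim\D}{|S|}=\rho n$, while $\Ex{S\sim\D}{\induced(S)}=\rho n + 2\rho^2 m$, since the diagonal of $W$ contributes $\rho n$ and each edge $\{u,v\}$ contributes $2\rho^2 W_{u,v}$ in expectation. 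Hence the hypothesis of \Cref{lemma:weighted-upper-technical} holds with equality for $\gamma=2\rho m/n$, and the lemma yields a persuasive ternary scheme (concretely, the three-plan scheme stated after \Cref{lemma:weighted-upper-technical}, with $\IStilde\sim\D$) of cost $O\!\big(\OPTIR + \rho m + \sqrt{n}\,\OPTIR/\sqrt{\rho}\big)$.

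It then remains to choose, for each instance, which scheme to use; recall that $1\le\OPTIR\le n$. I would split into three cases. First, if $\OPTIR\ge\sqrt{n}$, the no-information scheme alone has cost at most $n\le n^{3/4}\OPTIR^{1/2}$. Second, if $\OPTIR<\sqrt{n}$ and $m\ge n^{5/4}/\sqrt{\OPTIR}$, the no-information scheme has cost at most $\tfrac{n^2}{2m}\le\tfrac{1}{2}n^{3/4}\sqrt{\OPTIR}$. Third, if $\OPTIR<\sqrt{n}$ and $m<n^{5/4}/\sqrt{\OPTIR}$, I would apply the scheme above with $\rho=\OPTIR/\sqrt{n}\in(0,1)$: then $\sqrt{n}\,\OPTIR/\sqrt{\rho}=n^{3/4}\sqrt{\OPTIR}$, $\rho m< (\OPTIR/\sqrt{n})\cdot n^{5/4}/\sqrt{\OPTIR}=n^{3/4}\sqrt{\OPTIR}$, and $\OPTIR\le\sqrt{n}\le n^{3/4}\sqrt{\OPTIR}$, so the total cost is $O(n^{3/4}\sqrt{\OPTIR})$. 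In every case the scheme uses at most three distinct signal values, which establishes the theorem.

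The step I expect to be the real crux — and the one that makes the win–win go through — is recognizing that the no-information scheme's cost is bounded by $n$ for \emph{every} $m$, which is precisely what covers the regime where $\OPTIR$ is large and the graph is too dense to contain any useful ``almost independent'' structure (equivalently, where the natural choice $\rho=\OPTIR/\sqrt n$ would exceed $1$). When $\OPTIR<\sqrt n$, by contrast, the graph is effectively sparse enough that a random subset at rate $\rho=\OPTIR/\sqrt n$ is ``almost independent'' with only a small induced-weight penalty $\gamma n = 2\rho m$, so \Cref{lemma:weighted-upper-technical} applies favorably. The remaining work is routine bookkeeping: verifying the $\gamma=2\rho m/n$ balance, checking $\rho\in(0,1]$ in the third case, and tracking the exponents of $n$ and $\OPTIR$ through the three cases.
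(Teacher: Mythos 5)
Your proof is correct and follows essentially the same route as the paper: both instantiate \Cref{lemma:weighted-upper-technical} with an i.i.d.\ Bernoulli vertex sample and fall back on the no-information scheme of \Cref{cor:degenerate-signal} in a win--win over the total edge weight $m$. The only difference is organizational --- the paper picks the sampling rate as a function of $m$ by optimizing $\gamma$ and then takes a minimum of the two bounds, whereas you fix $\rho=\OPTIR/\sqrt{n}$ and split into explicit cases at the thresholds $\OPTIR=\sqrt{n}$ and $m=n^{5/4}/\sqrt{\OPTIR}$; the resulting bounds coincide.
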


When the edges in the graph have weights lower bounded by $\delta > 0$, we apply \Cref{lemma:weighted-upper-technical} with $\gamma = 0$ and choose $\D$ as a degenerate distribution at an independent set of the graph. The proof of \Cref{thm:weighted-upper-special} is in \Cref{app:proof-weighted-upper-special}.

\begin{theorem}[The second upper bound in \Cref{thm:weighted-ternary-opt}]\label{thm:weighted-upper-special}
    In any weighted graph, if every edge has a weight of at least $\delta$, there is a persuasive ternary signaling scheme with a cost of
        $O\left(\left(n\cdot \OPTIR\right)^{2/3}\cdot\delta^{-1/3}\right)$.
\end{theorem}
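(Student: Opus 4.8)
The plan is a win--win between two simple persuasive schemes, governed by the total edge weight $m \triangleq \sum_{\{u,v\}\in E} W_{u,v}$. The first is the no-information scheme of \Cref{cor:degenerate-signal}, whose exact cost $\frac{n^2}{n+2m}$ is bounded both by $n$ and by $\frac{n^2}{2m}$. The second instantiates \Cref{lemma:weighted-upper-technical} with $\gamma = 0$, taking $\D$ to be the degenerate distribution at a largest independent set $I$ of $G$: since $\induced(I) = |I|$ for an independent set, the lemma yields a persuasive scheme of cost $O(\OPTIR + \frac{n\cdot\OPTIR}{\sqrt{|I|}})$. The lower bound $\delta$ on the nonzero edge weights is what lets us control $|I|$: because $m \ge \delta\cdot|E|$, the graph has at most $m/\delta$ edges, so the Tur\'an/Caro--Wei bound gives $|I| \ge \frac{n^2}{n+2|E|} \ge \frac{n^2}{n+2m/\delta}$. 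Thus a large $m$ makes the first scheme cheap, while a small $m$ forces a large independent set and makes the second scheme cheap; the rest is bookkeeping to see that the cheaper of the two always meets the target $(n\cdot\OPTIR)^{2/3}\delta^{-1/3}$.

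I would then carry out the following case analysis (we may assume $G$ has an edge and $\delta \le 1$, so $m \ge \delta$; the edgeless case is trivial). \emph{Case 1: $\OPTIR \ge \sqrt{n\delta}$.} Here the no-information scheme already suffices, since its cost is at most $n$ and $n \le (n\cdot\OPTIR)^{2/3}\delta^{-1/3}$ is (after cubing) exactly the hypothesis $\OPTIR^2 \ge n\delta$. \emph{Case 2: $\OPTIR < \sqrt{n\delta}$.} Put $m_0 \triangleq \tfrac12 n^{4/3}(\OPTIR)^{-2/3}\delta^{1/3}$. If $m \ge m_0$, the no-information scheme costs at most $\frac{n^2}{2m} \le \frac{n^2}{2m_0} = n^{2/3}(\OPTIR)^{2/3}\delta^{-1/3}$, as desired. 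If $m < m_0$, then $2m/\delta < n^{4/3}(\OPTIR)^{-2/3}\delta^{-2/3}$; using $\OPTIR < \sqrt{n\delta} \le \sqrt{n}/\delta$ one checks this quantity exceeds $n$, so $n + 2m/\delta < 2n^{4/3}(\OPTIR)^{-2/3}\delta^{-2/3}$, hence $|I| > \tfrac12 n^{2/3}(\OPTIR)^{2/3}\delta^{2/3}$. Substituting into the cost of the independent-set scheme gives $O(\OPTIR + n^{2/3}(\OPTIR)^{2/3}\delta^{-1/3}) = O((n\cdot\OPTIR)^{2/3}\delta^{-1/3})$, where the $\OPTIR$ term is absorbed because $\OPTIR \le n \le n^2/\delta$ implies $\OPTIR \le n^{2/3}(\OPTIR)^{2/3}\delta^{-1/3}$. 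Taking in each case the cheaper scheme proves the theorem.

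The subtlety I expect to be the main obstacle is the very sparse regime, where the independent-set scheme has an irreducible cost floor of $\Omega(\sqrt{n}\cdot\OPTIR)$ (coming from $|I|\le n$) that can exceed the target when $\OPTIR$ is large. What makes the argument close is precisely the complementarity exploited in Case 1: when $\OPTIR$ is large the no-information scheme, whose cost never exceeds $n$, is automatically within $(n\cdot\OPTIR)^{2/3}\delta^{-1/3}$, so the two schemes cover opposite parameter regimes. Checking that the thresholds ($\OPTIR$ versus $\sqrt{n\delta}$, and $m$ versus $m_0$) partition all cases, together with the elementary but slightly fiddly exponent accounting in Case 2, is where most of the care goes; nothing here is conceptually deep once the two building-block schemes and the Tur\'an bound are in hand.
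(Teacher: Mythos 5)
Your proposal is correct and follows essentially the same route as the paper: a win--win over the total edge weight $m$ between the no-information scheme of \Cref{cor:degenerate-signal} and the ternary scheme from \Cref{lemma:weighted-upper-technical} instantiated with $\gamma=0$ at a large independent set, whose size is controlled via the at-most-$m/\delta$ edge count (the paper uses the greedy bound of \Cref{lemma:IS-lower-bound}, which is equivalent to your Tur\'an/Caro--Wei bound). Your explicit threshold case analysis, including the observation that the $\OPTIR\cdot\sqrt{n}$ floor only matters when the target bound already exceeds $n$, matches the paper's final simplification step.
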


\subsection{Strict improvement upon full revelation}
In this section, we state and prove a more formal version of \Cref{thm:weighted-improve-ic} regarding strict improvement upon $\OPTIC$.
\begin{theorem}[Formal version of \Cref{thm:weighted-improve-ic}]\label{thm:weighted-strict-improvement}
    In any weighted graphs, if $\OPTIR<\OPTIC$, there exists a persuasive signaling scheme that uses at most $(n+1)$ signals and achieves a cost of
    \begin{align}
        \Cost=\OPTIC-\eps(\OPTIC-\OPTIR),
    \quad\text{where}\ 
    \eps=\frac{\PoS-1}{\PoS-1+\frac{\OPTIR(n-\OPTIR+1)}{\OPTIR+1}}> 0,
        \label{eq:choice-of-eps-weighted}
    \end{align}
    {and $\PoS = \OPTIC / \OPT$ is the price of stability.}
\end{theorem}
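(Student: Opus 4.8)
The plan is to exhibit a persuasive scheme of the mixture form $\D_\eps=(1-\eps)\D_A+\eps\D_B$, combining an expensive but persuasive scheme $\D_A$ with a cheaper one $\D_B$, in the spirit of \Cref{thm:unit-strict-improvement} but with two modifications that weighted graphs force. First, I would take $\D_A$ to be the \emph{deterministic} labeling $s_v=\theta^\star_v$ of an optimal stable solution $\btheta^\star$ (cost $\OPTIC$); this replaces the unit-weight decomposition of a stable solution into independent sets, which is unavailable here, and it is what makes the signal space grow. By \Cref{lemma:persuasive-general}, $\D_A$ is already persuasive: stability makes the feasibility condition tight on the support of $\btheta^\star$, so the slack satisfies $\Delta_\theta^A=\sum_{v:\theta^\star_v=\theta}\big((W\btheta^\star)_v-1\big)=0$ for every signal $\theta>0$, while $\Delta_0^A=\|W\btheta^\star\|_1-n=:w^\star\ge 0$ by feasibility. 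Second, I would take $\D_B$ to be an independent rounding of an optimal \emph{IR} solution $\btheta^{\IR}$ (cost $\OPTIR$) rather than of the socially optimal solution, whose coordinates may exceed $1$ so that \Cref{lemma:IR-sol-rounding} would not apply: $\D_B$ draws $S\subseteq V$ including each $v$ independently with probability $\theta^{\IR}_v$ and labels $S$ with the single value $\beta\coloneqq\Ex{S\sim\D_B}{|S|}\big/\Ex{S\sim\D_B}{\induced(S)}\in(0,1]$, adding at most one new signal and giving cost $\beta\cdot\OPTIR\le\OPTIR$. The total signal space has at most $n+1$ elements: $\OPTIR<\OPTIC$ implies $\OPT<\OPTIC$, so by \Cref{lem:wasteful-general} $\btheta^\star$ is wasteful, and since $(W\btheta^\star)_v=1$ on the support this forces some $\theta^\star_v=0$, so $0$ already lies in the (at most $n$-element) signal space of $\D_A$.

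Next I would verify persuasiveness of $\D_\eps$. Using that each slack $\Delta_\theta$ is linear in the distribution and invoking \Cref{lemma:persuasive-general}, this comes down to three points: (i) for $\theta>0$ in the signal space of $\D_A$, scheme $\D_B$ emits only $0$ and $\beta$, so $\Delta_\theta=(1-\eps)\Delta_\theta^A=0$; (ii) for $\theta=\beta$, the computation underlying \Cref{lemma:persuasive-binary} gives $\Delta_\beta=\eps\big(\beta\,\Ex{S\sim\D_B}{\induced(S)}-\Ex{S\sim\D_B}{|S|}\big)=0$ by the choice of $\beta$ (any $\D_A$-contribution to $\Delta_\beta$ vanishes too, whether or not $\beta$ coincides with some $\theta^\star_v$); and (iii) writing $\Delta_0^B=\beta\,\Ex{S\sim\D_B}{\cut(S,V\setminus S)}-(n-\OPTIR)$, we have $\Delta_0=(1-\eps)w^\star+\eps\Delta_0^B$, so persuasiveness holds as soon as $\eps\le w^\star/(w^\star-\Delta_0^B)$.

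Finally I would show that the $\eps$ in \eqref{eq:choice-of-eps-weighted} meets this bound, via two estimates. For a lower bound $w^\star\ge\PoS-1$: the rescaled vector $\dualvar\coloneqq\btheta^\star/\max_v(W\btheta^\star)_v$ satisfies $W\dualvar\le\vecone$ and $\dualvar\ge\veczero$, i.e.\ it is feasible for the dual of the LP defining $\OPT$, so weak duality gives $\OPT\ge\vecone^\top\dualvar=\OPTIC/\max_v(W\btheta^\star)_v\ge\OPTIC/(w^\star+1)$, where the last step uses feasibility ($W\btheta^\star\ge\vecone$) to get $\max_v(W\btheta^\star)_v-1\le\sum_v\big((W\btheta^\star)_v-1\big)=w^\star$; rearranging gives $w^\star\ge(\OPTIC-\OPT)/\OPT=\PoS-1$. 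For an upper bound on $-\Delta_0^B$: by \Cref{lemma:IR-sol-rounding}, $\Ex{S\sim\D_B}{\induced(S)}\le\OPTIR^2+\OPTIR$ (hence $\beta\ge 1/(\OPTIR+1)$) and $\Ex{S\sim\D_B}{\cut(S,V\setminus S)}\ge n-2\OPTIR$; substituting these, with an elementary case split on the sign of $n-2\OPTIR$ (in the case $n\le 2\OPTIR$ one only needs $\beta\,\Ex{S\sim\D_B}{\cut(S,V\setminus S)}\ge 0$ together with the inequality $n-\OPTIR\le\OPTIR(n-\OPTIR+1)/(\OPTIR+1)$, which is equivalent to $n\le 2\OPTIR$), yields $-\Delta_0^B\le\OPTIR(n-\OPTIR+1)/(\OPTIR+1)$. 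Since $t\mapsto t/(t+c)$ is increasing, the stated $\eps$ then satisfies $\eps\le w^\star\big/\big(w^\star+\OPTIR(n-\OPTIR+1)/(\OPTIR+1)\big)\le w^\star/(w^\star-\Delta_0^B)$, so $\D_\eps$ is persuasive, has cost at most $(1-\eps)\OPTIC+\eps\OPTIR=\OPTIC-\eps(\OPTIC-\OPTIR)$, and $\eps>0$ since $\PoS>1$.

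The step I expect to be the crux is the lower bound $w^\star\ge\PoS-1$ on the wastefulness of the optimal stable solution. In unit-weight graphs (\Cref{lem:wasteful-unit}) one even gets $w^\star\ge\OPTIC-\OPT$, by carefully lowering the contributions entering over-satisfied tasks; but that argument breaks once edge weights can be small, since reducing a neighbor's contribution by $\eta$ relieves only $W_{v,u}\eta$ of a task's surplus. The rescaling argument above is the clean substitute, and the choice of $\eps$ is calibrated exactly so that the pair $w^\star\ge\PoS-1$ and $-\Delta_0^B\le\OPTIR(n-\OPTIR+1)/(\OPTIR+1)$ is enough; everything else (choosing $\beta$, bounding $\Delta_0^B$ through \Cref{lemma:IR-sol-rounding}, the $n\le 2\OPTIR$ case) is routine.
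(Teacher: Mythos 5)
Your proposal is correct and follows essentially the same route as the paper's proof: mix the deterministic labeling of an optimal stable solution (giving at most $n$ signals, with $0$ forced by wastefulness) with an independent rounding of the optimal IR solution labeled by a single value $\alpha=\Ex{}{|S|}/\Ex{}{\induced(S)}$ chosen to zero its slack, then balance the slack at signal $0$ using the wastefulness lower bound $\|W\btheta^\star\|_1-n\ge\PoS-1$ (the paper's \Cref{lem:wasteful-weighted}, which you re-derive by the same rescaling/weak-duality argument) against the bound $-\Delta_0^B\le\OPTIR(n-\OPTIR+1)/(\OPTIR+1)$ from \Cref{lemma:IR-sol-rounding}. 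The only differences are cosmetic (swapped names for the two components, a direct rather than by-contradiction derivation of the wastefulness bound, and an explicit case split on the sign of $n-2\OPTIR$ that the paper handles implicitly).
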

Similar to \cref{thm:unit-strict-improvement}, our proof relies on quantifying the \emph{wastefulness} of any feasible solution. We first present a quantitative version of \cref{lem:wasteful-general} in the case of weighted graphs.
\begin{lemma}
    \label[lemma]{lem:wasteful-weighted}
    Let $\btheta$ be a feasible solution {in a general weighted graph}. Then, we have
    \[
    \|W\btheta\|_1\ge n+\frac{\|\btheta\|_1-\OPT}{\OPT}.
    \]
\end{lemma}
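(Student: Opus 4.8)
The plan is to reuse the LP-duality setup from the proof of \Cref{lem:wasteful-general}, but instead of exploiting exact complementary slackness, to scale $\btheta$ down by just enough to make it \emph{dual}-feasible, so that the loss incurred is controlled by the wastefulness of $\btheta$. First I would record feasibility in the form $W\btheta = \vecone + \boldsymbol{z}$ with $\boldsymbol{z} \ge \veczero$; then $\|W\btheta\|_1 = n + \|\boldsymbol{z}\|_1$, and the target inequality is equivalent to $\|\btheta\|_1 \le \OPT\cdot(1 + \|\boldsymbol{z}\|_1)$, i.e., the wastefulness $\|\boldsymbol{z}\|_1$ must ``pay for'' any excess of $\|\btheta\|_1$ over $\OPT$.

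The key step is to choose the scaling factor $c \coloneqq 1 + \|\boldsymbol{z}\|_\infty \ge 1$ and set $\dualvar \coloneqq \btheta / c$. Since $\boldsymbol{z} \ge \veczero$, every coordinate satisfies $z_v \le \|\boldsymbol{z}\|_\infty = c - 1$, so $W\dualvar = (\vecone + \boldsymbol{z})/c \le \vecone$ coordinatewise; together with $\dualvar \ge \veczero$, this makes $\dualvar$ feasible for \eqref{eq:dual-lp} (using $W = W^\top$ exactly as in \Cref{lem:wasteful-general}). Weak duality then gives $\vecone^\top\dualvar \le \OPT$, i.e., $\|\btheta\|_1 = c\cdot\vecone^\top\dualvar \le c\cdot\OPT = (1 + \|\boldsymbol{z}\|_\infty)\cdot\OPT$.

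Finally I would relax $\|\boldsymbol{z}\|_\infty \le \|\boldsymbol{z}\|_1$ (valid because $\boldsymbol{z} \ge \veczero$) and substitute $\|\boldsymbol{z}\|_1 = \|W\btheta\|_1 - n$, obtaining $\|\btheta\|_1 \le \OPT\cdot(1 + \|W\btheta\|_1 - n)$. Rearranging (note $\OPT \ge 1 > 0$, since any feasible solution has $\ell_1$-norm at least $1$) yields the claimed bound $\|W\btheta\|_1 \ge n + \frac{\|\btheta\|_1 - \OPT}{\OPT}$.

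There is no genuine obstacle here; the argument is short. The only point that requires care is the choice of scaling factor: one must divide by $1 + \|\boldsymbol{z}\|_\infty$, the worst over-satisfaction among all tasks, to preserve dual feasibility, and pass to the $\ell_1$ norm only at the very end. This multiplicative loss of $\OPT$ is precisely why the bound here is weaker than the unit-weight bound of \Cref{lem:wasteful-unit}, whose proof instead decreases the wasteful coordinates non-uniformly and loses only an additive $\|\boldsymbol{z}\|_1$.
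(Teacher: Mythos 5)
Your proof is correct and takes essentially the same route as the paper's: both arguments scale $\btheta$ by the largest coordinate of $W\btheta$ (your $c=1+\|\boldsymbol{z}\|_\infty$ is exactly the paper's $R_{\max}=\max_i(W\btheta)_i$) to produce a feasible point for \eqref{eq:dual-lp}, invoke weak duality, and then use that the worst single-coordinate excess is at most the total excess $\|W\btheta\|_1-n$. The paper merely packages this as a proof by contradiction, whereas you argue directly; the content is identical.
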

The proof of \Cref{lem:wasteful-weighted} (deferred to \Cref{app:proof-wasteful-weighted}) mirrors that of \Cref{lem:wasteful-unit} and follows from a different construction of the feasible dual variables. Leveraging this lemma, \Cref{thm:weighted-strict-improvement} is achieved by randomizing between the independent rounding of the optimal IR solution and the degenerate scheme of deterministically sending the optimal stable solution. We prove this theorem in \Cref{app:proof-weighted-improve-ic}.

    \section{Lower Bounds for Weighted Graphs}
\label{sec:weighted-lower}

In this section, we prove the lower bound part of \Cref{thm:weighted-ternary-opt}. Even when all the edges are of weight $1/2$, there are graphs on which $\OPTIR$ is a constant, yet no signaling scheme achieves a cost lower than $n^{2/3}$. In addition, when restricted to binary signaling schemes, the gap becomes $\Omega(n)$.

We start with the lower bound against binary signaling schemes. In contrast to the unit-weight case, we cannot achieve any non-trivial approximation via a binary signaling scheme as before. This shows that the introduction of a third signal in \Cref{thm:weighted-ternary-opt} is necessary. The proof of the following theorem is in \Cref{app:proof-binary-fails}.

\begin{theorem}\label{thm:weighted-binary-signals-fail}
    There exists a family of weighted graphs on which (1) all edge weights are $1/2$; (2) $\OPTIR = O(1)$; (3) every persuasive binary signaling scheme has a cost of $\Omega(n)$.
\end{theorem}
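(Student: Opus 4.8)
The plan is to use the graph $G_k$ of \Cref{fig:binary-schemes-fail}: two centers $c_1,c_2$ adjacent to every vertex, together with $k$ vertex-disjoint triangles forming the set $L$ of $3k$ leaves, all edge weights $1/2$, so $n=3k+2$. All edges have weight $1/2$, and letting both centers contribute $1$ and every leaf contribute $0$ is feasible (each leaf's task receives $2\cdot\tfrac12\cdot1=1$) and satisfies IR, so $\OPTIR\le2=O(1)$. For the lower bound, fix a persuasive binary scheme; by \Cref{lem:simplify-signal} it is identity-independent, i.e.\ a distribution over labelings, and (dropping unused signals) both signal values are realized. The principal case is signal space $\{0,\alpha\}$, $\alpha\in(0,1]$, which we identify with a distribution $\D$ over subsets $S\subseteq V$ (the vertices labeled $\alpha$). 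Write $p_v=\pr{S\sim\D}{v\in S}$, $\mu_C=p_{c_1}+p_{c_2}\le2$, $\mu_L=\sum_{v\in L}p_v$, $\mu=\mu_C+\mu_L$, and let $I=\Ex{S\sim\D}{|E(S)|}$ be the expected number of edges inside $S$. By \Cref{lemma:persuasive-binary} (equivalently \Cref{lemma:persuasive-general}), persuasiveness forces $\alpha=\mu/(\mu+I)$ (i.e.\ $I=\mu(1-\alpha)/\alpha$) together with $\Delta_0\ge0$, and the cost equals $\alpha\mu$; so it suffices to prove $\alpha\mu=\Omega(k)$.

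\textbf{First-order arguments and the case $\alpha\le\tfrac12$.} I would begin with two easy facts. (i) Some leaf has $p_v>0$: otherwise $S\subseteq\{c_1,c_2\}$ always, and for a leaf $v$ the feasibility condition at signal $0$ reads $\Ex{}{|N(v)\cap S|}\ge 2/\alpha$, impossible since $|N(v)\cap\{c_1,c_2\}|\le2$ --- unless $\alpha=1$ and $S=\{c_1,c_2\}$ deterministically, in which case a center's task receives $3/2$, violating the tightness of stability at the positive signal $1$. (ii) Fixing such a leaf $v^\star$, tightness at signal $\alpha$ gives $\Ex{}{|N(v^\star)\cap S|\mid v^\star\in S}=\tfrac2\alpha-2=:D$, and since $|N(v^\star)|=4$ we get $D\le4$, i.e.\ $\alpha\ge\tfrac13$. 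Next, for \emph{every} leaf $v$, combining stability when $v\in S$ (which gives $\Ex{}{|N(v)\cap S|\mid v\in S}=D$) with feasibility when $v\notin S$ (which gives $\Ex{}{|N(v)\cap S|\mid v\notin S}\ge D+2$) yields $\Ex{}{|N(v)\cap S|}\ge D+2-2p_v$. Summing over $v\in L$ and using $\sum_{v\in L}|N(v)\cap S|=3k\,|S\cap\{c_1,c_2\}|+2\,|S\cap L|$ (each center is adjacent to all $3k$ leaves, each leaf to its two mates) gives $3k\mu_C+4\mu_L\ge 3k(D+2)$, hence $\mu\ge\frac{3kD+8}{4}$. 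If $\alpha\le\tfrac12$ (so $D\ge2$), this already yields $\mu\ge\tfrac{3k}{2}$ and cost $\ge\tfrac13\cdot\tfrac{3k}{2}=\Omega(n)$.

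\textbf{The delicate case $\alpha\in(\tfrac12,1)$ via the clique structure.} When $\alpha$ is close to $1$, $D$ can be $o(1)$ and the bound above degrades to $\mu\gtrsim2$; the crux is to handle this. The key observation is that for every leaf $v\in T_i$ the neighborhood $N(v)=\{c_1,c_2\}\cup(T_i\setminus\{v\})$ induces a \emph{clique} $K_4$ in $G$, which I would exploit twice. First, since all neighbors of $v$ are adjacent to $v$, $\Ex{}{|N(v)\cap S|\cdot\1{v\in S}}\le I$; plugging this into the per-triangle feasibility-at-$0$ inequality for a leaf $v$ of $T_i$ with $p_v=\min_{w\in T_i}p_w$ (which is $<1$, since otherwise the two mates of $v$ lie in $S$ always and we get $D\ge2$), and using $p_v\le\tfrac13\Ex{}{|S\cap T_i|}$, gives after summing $\mu_L\ge\tfrac{kD}{2}+\tfrac{kI}{2}$; combined with $\mu\ge\mu_L$ and $I=\mu D/2$ this forces $D=O(1/k)$, hence $I=O(\mu/k)$. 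Second, since $N(v)$ is a clique, $|E(S)\cap E(N(v))|=\binom{|N(v)\cap S|}{2}$, so $\Ex{}{\binom{|N(v)\cap S|}{2}}\le I$; this upper-bounds $\Ex{}{|N(v)\cap S|\cdot\1{v\notin S}}\le(1-p_v)+\tfrac32 I$, which against the feasibility-at-$0$ lower bound $\tfrac2\alpha(1-p_v)$ and the inequality $\tfrac2\alpha-1\ge1$ yields $1-p_v\le\tfrac32 I$ for \emph{every} leaf. Summing, $\mu_L\ge 3k-\tfrac92 kI= 3k-O(\mu)$, so $\mu=\Omega(k)$, and cost $=\alpha\mu=\Omega(n)$, finishing the $\{0,\alpha\}$ case.

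\textbf{Main obstacle and the remaining case.} The chief difficulty is exactly the regime $\alpha\to1$: all the first-order constraints (summed leaf inequalities, sum of slacks) become vacuous there, and one must invoke the clique structure of the leaf-neighborhoods --- precisely the feature of this instance that a ternary scheme can circumvent but a binary one cannot. It remains to treat a general binary scheme with signal space $\{a,b\}$, $0<a<b\le1$; this is handled by a parallel but more elementary argument: every vertex always contributes at least $a$, so the cost is at least $an$, and running the same computations with $b$ in the role of $\alpha$ (now $\Delta_a=\Delta_b=0$) one shows that either $a=\Omega(1)$ --- hence cost $\ge an=\Omega(n)$ --- or the set $S$ of vertices receiving $b$ has $\Ex{}{|S|}=\Omega(n)$, again forcing cost $=\Omega(n)$. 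Since $\OPTIR=O(1)$, this establishes the claimed $\Omega(n)$ gap.
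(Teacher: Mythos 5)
Your construction and target coincide with the paper's (the two-centers-plus-$k$-triangles graph of \Cref{fig:binary-schemes-fail}, $\OPTIR\le 2$, and the goal $\alpha\cdot\Ex{S\sim\D}{|S|}=\Omega(n)$), but the core of your argument rests on constraints that persuasiveness does not impose in this model. Throughout you invoke \emph{per-vertex} conditions: tightness at a single leaf $v^\star$ in the form $\Ex{}{|N(v^\star)\cap S|\mid v^\star\in S}=2/\alpha-2$, a per-leaf feasibility bound $\Ex{}{|N(v)\cap S|\mid v\notin S}\ge 2/\alpha$, and, in the crucial $\alpha\in(\tfrac12,1)$ regime, the per-leaf conclusion $1-p_v\le\tfrac32 I$. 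In this model an agent who receives a signal does not know which vertex they occupy; their posterior is spread over \emph{all} vertices receiving that signal, so the obedience constraints are only the aggregate conditions $\Delta_0\ge 0$ and $\Delta_\alpha=0$ of \Cref{lemma:persuasive-general} (equivalently the single ratio inequality of \Cref{lemma:persuasive-binary}), i.e.\ $\alpha I=(1-\alpha)\mu$ and $\alpha\,\Ex{}{\cut(S,V\setminus S)}\ge n-\mu$. These say nothing about individual vertices: a persuasive scheme may leave a particular leaf's conditional contribution well below $1$ as long as other vertices pooled under the same signal (notably the two centers, which are adjacent to everything) compensate in the average. Hence the inequalities you sum over leaves, and especially the leaf-by-leaf bounds that drive the $\alpha\to1$ case, are not available; symmetrizing over graph automorphisms does not rescue them, since it only equates leaves with leaves and still pools them with the centers inside one obedience constraint. (Your $\alpha\ge\tfrac13$ step has the same defect, though an aggregate substitute is easy: $2I\le 2\mu+4\mu_L\le 6\mu$ gives $\alpha=\mu/(\mu+I)\ge\tfrac14$, which is essentially what the paper proves.)

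This is not cosmetic: the entire difficulty of the lower bound is that one only gets two aggregate linear constraints, and the high-degree centers can absorb slack on behalf of the leaves. Your $\alpha\le\tfrac12$ branch can likely be repaired in aggregate form (adding $\Delta_0\ge0$ and $\Delta_\alpha=0$ yields $\tfrac{\alpha}{2}\bigl[(3k+1)\mu_C+4\mu_L\bigr]\ge n-\alpha\mu$, which still forces $\mu=\Omega(kD)$), but the $\alpha\to1$ branch, as you yourself identify, is the crux, and there your argument genuinely requires per-leaf information. The paper instead never conditions on individual vertices: it compresses $\D$ into the sufficient statistic $(x_3,x_2,x_1,x_0,y)$ (counts of triangles with $i$ members in $S$, and $|S\cap\{c_1,c_2\}|$), expresses $|S|$, $\cut(S,V\setminus S)$, and $\induced(S)$ in these variables, and shows by a pointwise comparison of the two numerators of the persuasiveness inequality that $\Ex{}{|S|}\ge n/3$. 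You would need either to restate every step of your $\alpha>\tfrac12$ case using only $\Delta_0$ and $\Delta_\alpha$, or to adopt an argument of the paper's aggregate type. The closing treatment of general $\{a,b\}$ signal spaces with $a>0$ is also only an assertion ("one shows that either\ldots") rather than a proof, but that is secondary to the per-vertex issue.
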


The rest of this section is devoted to proving the lower bound part of \Cref{thm:weighted-ternary-opt}.

\begin{theorem}[Lower bound part of \Cref{thm:weighted-ternary-opt}]\label{thm:weighted-lower-special}
    There exists a family of weighted graphs on which (1) all edge weights are $1/2$; (2) $\OPTIR = O(1)$; (3) every persuasive signaling scheme has a cost of $\Omega(n^{2/3})$.
\end{theorem}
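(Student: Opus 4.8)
Take $G$ to consist of two ``center'' vertices $c_1,c_2$ together with $r=\Theta(n^{2/3})$ vertex-disjoint cliques $K_1,\dots,K_r$, each of size $\ell=\Theta(n^{1/3})$, with $2+r\ell=n$; the edge set consists of all edges inside the cliques, the edge $\{c_1,c_2\}$, and all edges between a center and a clique vertex, each of weight $1/2$. Assigning contribution $1$ to $c_1$ and $c_2$ and $0$ to every clique vertex is feasible (each clique vertex receives $\tfrac12+\tfrac12=1$, each center receives $1$) and individually rational, so $\OPTIR\le 2=O(1)$. The remaining content is the lower bound: every persuasive signaling scheme on $G$ has cost $\Omega(n^{2/3})$. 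Together with $\OPTIR=O(1)$ this also shows the tightness of the $O(n^{2/3})$ approximation ratio in \Cref{thm:weighted-ternary-opt} for $\delta=1/2$.

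\textbf{Reduction to a pointwise dual inequality.} Following the template of \Cref{sec:lower_overview}, it suffices to exhibit a function $f:[0,1]\to\R$ with $f(0)\ge 0$ and a value $C=\Omega(n^{2/3})$ such that, for \emph{every} deterministic labeling $s\in[0,1]^V$,
\begin{equation}\label{eq:dual-pointwise-lb}
    \|s\|_1 \ \ge\ \sum_{v\in V} f(s_v)\bigl(u_v(s)-1\bigr)+C, \qquad u_v(s):=s_v+\tfrac12\sum_{v'\in N(v)}s_{v'},
\end{equation}
since, writing $\Delta_\theta$ for the slack of signal $\theta$ under $s$, a direct computation gives $\sum_\theta f(\theta)\,\Delta_\theta=\sum_v f(s_v)(u_v(s)-1)$; then taking the expectation of \eqref{eq:dual-pointwise-lb} over $s\sim\D_\varphi$ for a persuasive scheme and using $\Delta_\theta=0$ for $\theta>0$, $\Delta_0\ge 0$, $f(0)\ge 0$ yields $\Cost(\varphi)=\Ex{s\sim\D_\varphi}{\|s\|_1}\ge C$. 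Using the structure of $G$, the right-hand side of \eqref{eq:dual-pointwise-lb} decomposes into two center terms $f(s_{c_j})\bigl(s_{c_j}+\tfrac12 s_{c_{3-j}}+\tfrac12\Sigma-1\bigr)$, where $\Sigma$ is the total clique mass, plus, for each clique $K_i$, a sum $\sum_{v\in K_i} f(s_v)\bigl(\tfrac12 s_v+\tfrac12(s_{c_1}+s_{c_2})+\tfrac12\sigma_i-1\bigr)$ with $\sigma_i=\sum_{v\in K_i}s_v$.

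\textbf{Choice of $f$ and the case analysis.} We take $f$ to equal a constant $f(0)\in(0,1)$ at $\theta=0$, to vanish on $(0,1-\eta]$, and to be strongly negative, $f(\theta)=-\Theta(n^{2/3})$, on the narrow top range $[1-\eta,1]$, with threshold $\eta=\Theta(n^{-1/3})$; the constants in $f(0)$, in $|f(1)|$, and in $\eta$ are tuned so that \eqref{eq:dual-pointwise-lb} holds with $C=\Theta(n^{2/3})$. To verify \eqref{eq:dual-pointwise-lb} one splits on the labels carried by the two centers (both in $[1-\eta,1]$; exactly one; neither), and, within each clique, on whether the clique is ``served'' by the centers ($s_{c_1}+s_{c_2}$ large) or must ``self-serve'', and on whether it contains a vertex with label in $[1-\eta,1]$. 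The underlying trade-off is threefold. If both centers carry a top label the total cost is only $\approx 2$, yet the two center terms already contribute $\approx-|f(1)|=-\Theta(n^{2/3})$, which both permits and forces $C=\Theta(|f(1)|)=\Theta(n^{2/3})$. A clique that self-serves costs $\ge 2-o(1)$, contributing $\Omega(r)=\Omega(n^{2/3})$ in aggregate, while the centers are then over-served and contribute at most $f(0)\cdot\tfrac12\Sigma<\tfrac12\Sigma$, which is dominated since $f(0)<1$. Finally, a clique containing an under-served top vertex but not served by the centers necessarily has $\Omega(\ell)$ zero-labeled vertices that are themselves under-served; since $f(0)>0$ each contributes a fixed negative amount to the right-hand side, and the adversary's gain from the top vertex's term — at most $\eta\,|f(1)|=\Theta(n^{-1/3})\cdot\Theta(n^{2/3})=O(\ell)$ per clique — is cancelled by these $\Theta(\ell)$ losses. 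Summed over all cliques, these mechanisms cover every labeling and give \eqref{eq:dual-pointwise-lb}; note also that an all-zero or low-mass labeling is harmless because then $\Theta(n)$ under-served zero vertices drive the right-hand side down to $-\Theta(n)\ll C$.

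\textbf{Main obstacle.} The difficulty lies entirely in making the case analysis for \eqref{eq:dual-pointwise-lb} close \emph{simultaneously} for every labeling. The delicate cases are the intermediate ones: the centers carry moderate mass (neither near $0$ nor in $[1-\eta,1]$), or the mass of a given clique is spread over several vertices at values just below the threshold $1-\eta$, so that none of the three clean mechanisms applies outright and one must balance, clique by clique, the cost $\sigma_i$ against (a) the contribution of the clique's under-served zero vertices, (b) the contribution of any vertex labeled in $[1-\eta,1]$, and (c) the help provided by the centers. Closing all of these inequalities at once is exactly what pins down the scaling ($|f(1)|=\Theta(n^{2/3})$, $\eta=\Theta(n^{-1/3})$, $f(0)$ a constant in $(0,1)$, hence clique size $\Theta(n^{1/3})$ as the sweet spot where $\Theta(\sqrt{n\ell})=\Theta(r)=\Theta(n^{2/3})$), and it is the same reason the analogous construction with constant-size cliques yields only the weaker $\Omega(\sqrt n)$ bound of the double-star graph.
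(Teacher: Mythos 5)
Your hard instance, the computation $\OPTIR\le 2$, and the reduction to a pointwise dual inequality $\|s\|_1\ge\sum_{v\in V}f(s_v)\bigl(u_v(s)-1\bigr)+C$ for all deterministic labelings all match the paper's strategy (this is exactly the paper's \Cref{eq:dual-feasibility-overview}, specialized via a projection onto the two centers and one random clique). The gap is in your choice of $f$. A two-level dual that vanishes on $(0,1-\eta]$ and is $-\Theta(n^{2/3})$ only on a top band $[1-\eta,1]$ with $\eta=\Theta(n^{-1/3})$ cannot satisfy the pointwise inequality with $C=\Theta(n^{2/3})$. Concretely, label both centers $1-2\eta$, and in each clique label one designated vertex $4\eta$ and the rest $0$. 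Then every zero-labeled vertex receives exactly $(1-2\eta)+\tfrac12\cdot4\eta=1$, while the designated vertices and the centers are strictly over-served; every label lands where your $f$ vanishes (the centers sit just below the band, $4\eta$ lies in the dead zone) or multiplies a zero slack. The right-hand side is therefore exactly $C$, while $\|s\|_1=2(1-2\eta)+4\eta r=\Theta(n^{1/3})$, so the inequality fails for any $C=\omega(n^{1/3})$. None of your three mechanisms fires here, and retuning does not help: enlarging $\eta$ to make this labeling expensive causes your third mechanism to break, since a single under-served band vertex contributes up to $+\eta\,|f(1)|$ to the right-hand side, which for $\eta=\omega(n^{-1/3})$ exceeds the $O(\ell)$ cancellation available from its clique. (A second failure mode: a clique whose non-band vertices carry a tiny positive label instead of $0$ escapes your zero-vertex penalty entirely, since $f$ vanishes there.)

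What the paper does instead is precisely designed to close these intermediate cases: after discarding the signal value $1$ (a separate argument) and projecting to $k+2$ coordinates, it takes $f(\theta)=\beta_1-\beta_2\,\theta/(1-\theta)$ with $\beta_1=\Theta(k^{-2})$, $\beta_2=\Theta(k^{-1})$, $k=\Theta(n^{1/3})$. This $f$ is positive only on a tiny interval near $0$ and is negative with magnitude $\Theta\bigl(\tfrac{1}{k(1-\theta)}\bigr)$ on essentially all of $(0,1)$, so a center sitting at $1-c/k$ for \emph{any} constant $c$ is penalized with weight $\Theta(1)$ against its $\Theta(k)$-scale over-service; on the labeling above this yields a positive margin of $\Theta(k)$ on the dual side, exactly where your step function yields nothing. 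The subsequent three-case analysis (on the signs of $x+y-1$ and $\Sum+x+y-1$) is then a genuine quadratic-minimization argument, not a threshold argument. So the construction and template are right, but the certificate itself needs the graded $\theta/(1-\theta)$ penalty; the piecewise-constant dual only suffices for the unit-weight double-star, as the paper itself remarks.
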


Recall that when the edge weights are $1/2$, our proof of \Cref{thm:weighted-upper-special} implies a signaling scheme with a cost of
\[
    O\left(\min\left\{\OPTIR\sqrt{n + m}, \frac{n^2}{m}\right\}\right),
\]
where $m$ is the total edge weight in the graph. In order to establish an $\Omega(n^{2/3})$ gap, we need $\OPTIR = \Theta(1)$ and $m = \Theta(n^{4/3})$. This naturally leads to the following construction, which is a modified version of the graph in \Cref{fig:binary-schemes-fail}:
\begin{itemize}
    \item The graph contains $n = k^3 + 2$ vertices: two centers and $k^3$ leaves.
    \item The $k^3$ leaves form $k^2$ disjoint cliques, each of size $k$.
    \item Each of the two centers is adjacent to every other vertex in the graph.
\end{itemize}
We indeed have $\OPTIR = 2 = O(1)$ (achieved by letting both centers play $1$) and $m = \Theta(k^4) = \Theta(n^{4/3})$. In the rest of this section, we show that any persuasive signaling scheme for this graph must have an $\Omega(k^2) = \Omega(n^{2/3})$ cost.

Our proof of \Cref{thm:weighted-lower-special} can be divided into the following three steps. We provide a sketch in the remainder of this section and offer more details in \Cref{app:projection-low-dim,app:test-function,app:dual-feasibility}.
\paragraph{Step 1. Dimensionality reduction.} By \Cref{lem:simplify-signal}, it suffices to consider an identity-independent persuasive signaling scheme specified by $\D_\varphi \in \Delta(\signalspace^V)$ with signal space $\signalspace$. According to \Cref{lemma:persuasive-general}, the persuasiveness of such a signaling scheme is characterized by $|\signalspace|$ constraints on the slack for each signal, which are linear in the $n$-dimensional probability simplex. As a result, the optimal signaling scheme with minimum cost can be characterized by a linear program. In \Cref{app:projection-low-dim}, we show that this is  equivalent to considering the projection of $\D_\varphi$ onto a lower-dimensional probability simplex supported on $k+2$ vertices --- two centers and a random clique of size $k$. As a result, we obtain a $k+2$-dimensional distribution $\D'$ and significantly simplify the constraints for persuasiveness.

\paragraph{Step 2. Choosing test functions.} In the second step, we aim to use the duality of linear programs to lower bound the cost of $\D'$. We consider two choices of test functions $f:\signalspace\to\R_{\ge 0}$: a constant function $f_1(\theta)=1$, and a function that puts more weight on larger signal values $f_2(\theta)=\frac{\theta}{1-\theta}$. Since the second test function is ill-defined at $\theta=1$, we additionally show that it is without loss of generality to consider signaling schemes without the signal $1$, as the restriction of distribution $\D'$ to $\signalspace\setminus\{1\}$ incurs at most a constant blow-up in the cost.
With this, the outputs of the two test functions essentially serve as two sets of dual variables --- they give rise to $\sum_{\theta\in\signalspace}f_1(\theta)\Delta_\theta\ge0$ and $\sum_{\theta\in\signalspace}f_2(\theta)\Delta_\theta=0.$ See more details in \Cref{app:test-function}.

\paragraph{Step 3. Verify the dual feasibility} In the final step, we consider the dual variables $\beta_1 f_1(\theta)-\beta_2f_2(\theta)$ with carefully chosen constants $\beta_1,\beta_2$, and verify the feasibility by establishing the inequality \[
    \Ex{\D'}{\Sum}\ge\sum_{\theta\in\signalspace}(\beta_1 f_1(\theta)-\beta_2f_2(\theta))\Delta_\theta+\Omega(1)\ge\Omega(1),
\]where $\Sum$ is the sum of the signals sent to the randomly selected clique. We prove the above inequality by an involved case analysis in \Cref{app:dual-feasibility}. As a result, this implies that $\Cost(\D_\varphi)\ge k^2\Ex{\D'}{\Sum}\ge\Omega(k^2)=\Omega(n^{\frac{2}{3}})$ and completes the proof of \Cref{thm:weighted-lower-special}.

    \section*{Acknowledgments}
This work was supported in part by the National Science Foundation under grant CCF-2145898, by the Office of Naval Research under grant N00014-24-1-2159, a C3.AI Digital Transformation Institute grant, and Alfred P. Sloan fellowship, and a Schmidt Science AI2050 fellowship.
    
    \newpage
    \bibliographystyle{plainnat}
    \bibliography{arxiv-ref}
    
    \appendix
    
    \section{Omitted Proofs from Section~\ref{sec:model}}
\subsection{Alternative Benchmarks}
Recall that in \Cref{sec:benchmarks-PoS}, we defined the benchmarks assuming agents all know the type profile $\bt=(t_1,\ldots,t_n)$. In this section, we discuss alternative benchmarks for the Bayesian setting where the agents only know the prior distribution $\prior$ of $\bt$ but not its actual realization. In this case, agents evaluate the quality of a collaborative solution in expectation under the type distribution $\prior$.
We start by defining the alternative benchmarks $\barOPT,\barOPT^{\IR},\barOPT^{\IC}$, then show how they can be easily achieved by a no-information signaling scheme and discuss the comparison with the benchmarks defined in \Cref{sec:benchmarks-PoS}.

We define benchmarks $\barOPT$ as the optimal total workload of a feasible action profile $\ba$, and $\barOPT^{\IR},\barOPT^{\IC}$ as the optimal workload under additional IR and stability constraints.
\begin{align*}
    \barOPT=\min_{\ba}\|\ba\|_1 \quad 
    \suchthat \ &
    \Ex{\bt\sim\prior}{\util(\ba;\bt)}\ge1,\ \ba\ge0;\\
    \barOPT^{\IR}=\min_{\ba}\|\ba\|_1 \quad \suchthat \ &
    \Ex{\bt\sim\prior}{\util(\ba;\bt)}\ge1,\ \ba\ge0,\ \ba\le1;\\
    \barOPT^{\IC}=\min_{\ba}\|\ba\|_1 \quad \suchthat \ &
    \Ex{\bt\sim\prior}{\util(\ba;\bt)}\ge1,\ \ba\ge0;\\
    &\forall i\in[n],\ a_i=\min\left\{x\ge0\ \left|\ \Ex{\bt\sim\prior}{\util_i(x,\ba_{-i};\bt)}\ge1\right.\right\}.
    \end{align*}

In is not hard to see that because $\util_i(\bt,\ba)=\Pi W\Pi^{-1}\ba$ for the permutation matrix that corresponds to $\bt$,
the benchmarks $\barOPT,\barOPT^{\IR},\barOPT^{\IC}$ for $W$ equals the corresponding benchmarks under the original definition for a different problem instance $\barW$ with entries 
\[\barW_{ij}=\Ex{\bt\sim \prior}{W_{t_i,t_j}}=\begin{cases}
    1,&i=j,\\
    \frac{\induced(V)-n}{n(n-1)},&i\neq j,
\end{cases}\]
in which the unique optimal solution for all three benchmarks is $\frac{n^2}{\induced(V)}$.
According to \Cref{cor:degenerate-signal}, one can achieve this cost by either incorporating a degenerate signaling scheme or not sending any signals at all. Therefore, we do not use these alternative benchmarks to evaluate our signaling schemes.

Next, we use the comparison of $\OPT$ and $\barOPT$ in unit-weight graphs as an example to show that the two sets of benchmarks are incomparable in general.
\begin{itemize}
    \item Consider a unit-weight graph that consists of a complete subgraph $K_{\frac{n}{2}}$ and $\frac{n}{2}$ isolated vertices.
    In this graph, we have $\OPT=\Theta(n)$ as all isolated vertices have to contribute unit effort. On the other hand, $\barOPT=\Theta(1)$ because $\induced(V)=\Theta(n^2)$. This shows the possibility of having $\barOPT\ll\OPT$.
    \item Consider a star graph with $n-1$ leaves. We have $\OPT=1$, achieved by having the center node contribute unit effort. However, $\barOPT=\Theta(n)$ as $\induced(V)=\Theta(n)$. Therefore, it is also possible to have $\barOPT\gg\OPT$.
\end{itemize}

\subsection{Proof of \Cref{lem:simplify-signal}}
\label{app:identity-independent}
Let $\varphi:\sym(V)\to\Delta(\signalspace^V)$ be a persuasive signaling scheme that is not necessarily identity-independent. Consider the following identity-independent signaling scheme $\tildephi$ that always chooses $\tildephi(\bt)=\D_{\tildephi}$ to be the mixture of $\{\varphi(\bt)\mid \bt\in\sym(V)\}$, i.e., 
\begin{align*}
    \D_{\tildephi}(\bs)=\frac{1}{|\sym(V)|}\sum_{\bt\in\sym(V)}\varphi(\bt)(\bs).
\end{align*}
We show that $\tildephi$ is persuasive. Let $\mu_i,\tildemu_i$ be the posterior distribution of agent $i$ under signaling schemes $\varphi$ and $\tildephi$, respectively.
Since $\tildephi(\bt)$ is defined to be independent of $\bt$, we have
\begin{align*}
    \tildemu_i(\bt,\bs_{-t_i}\mid s_{t_i})
    =\prior(\bt)\cdot\frac{\sum_{\bt'\in\sym(V)}\pr{\varphi(\bt')}{s_{t'_i},\bs_{-t'_i}}}{
    \sum_{\bs'_{-t'_i}}
    \sum_{\bt'\in\sym(V)}\pr{\varphi(\bt')}{s_{t_i'},\bs_{-t_i'}'}
    }
    =\frac{1}{|\sym(V)|}\sum_{\bt'\in\sym(V)}
    \mu_i(\bt',\bs_{-t'_i}\mid s_{t_i'}).
\end{align*}
where the last step follows from $\prior(\bt)=\frac{1}{|\sym(V)|}$ and the definition of $\mu_i$ in \Cref{eq:posterior}.
Therefore, for the expected quality under the posterior distribution $\tildemu_i$, we have
\begin{align*}
    \Ex{(\bt,\bs_{-t_i})\sim\tildemu_i(\cdot\mid \bs_{t_i}=\theta)}{u_{t_i}(\theta,\bs_{-t_i})}=&\frac{1}{|\sym(V)|}\Ex{\bt\sim\tau}{
    \sum_{\bt'\in\sym(V)}
    \Ex{(\bt',\bs_{-t'_i})\sim\mu_i(\cdot\mid \bs_{t_i'}=\theta)}{u_{t_i'}(\theta,\bs_{-t_i'})}
    }\ge1,
\end{align*}
where the inequality is because the persuasiveness of $\varphi$ implies that 
\begin{align*}
    \Ex{(\bt,\bs_{-t_i})\sim\mu_i(\cdot\mid \bs_{t_i}=\theta)}{u_{t_i}(\theta,\bs_{-t_i})}\ge1
\end{align*}
for any agent $i\in[n]$ and any $\theta\in\signalspace$, which holds with equality when $\theta>0$. We have thus established the persuasiveness of $\tildephi$.

Finally, as for the cost, we have
\begin{align*}
    \Cost(\tildephi)=\Ex{\bs\sim\D_{\tildephi}}{\|\bs\|_1}
    =&\frac{1}{|\sym(V)|}\sum_{\bt\in\sym(V)}\Ex{\bs\sim\varphi(\bt)}{\|\bs\|_1}
    =\Ex{\bt\sim\prior,\bs\sim\varphi(\bt)}{\|\bs\|_1}
    =\Cost(\varphi).
\end{align*}
    
    \section{Omitted proofs from Section~\ref{sec:structural}}
\subsection{Proof of \Cref{lemma:persuasive-general}}
\label{app:persuasive-general}
In this subsection, we finish the proof of \Cref{lemma:persuasive-general} by establishing \cref{eq:tmp-cond-exp-slack}, which is restated as follows:
\begin{align*}
    \Ex{(\bt,\bs_{-t_i})\sim\mu_i(\cdot\mid \bs_{t_i}=\theta)}{u_{t_i}(\theta,\bs_{-t_i})}
    =\theta+\frac{\Contrib_\theta}{\Numtheta_\theta}.
\end{align*}
We begin by substituting the expression for the posterior $\mu_i(\cdot\mid \bs_{t_i}=\theta)$ from \cref{eq:posterior}:
\begin{align*}
    &~\sum_{\bt,\bs_{-t_i}}
    \mu_i(\bt,\bs_{-t_i}\mid s_{t_i=\theta})
    \cdot u_{t_i}(\theta,\bs_{-t_i})\stepa{=}
    \sum_{\bt,\bs_{-t_i}}
    \frac{\prior(\bt)\D_{\varphi}({\theta,\bs_{-t_i}})}{
    \sum_{\bt',\bs_{-t_i}'}\prior(\bt')\D_{\varphi}({\theta,\bs_{-t_i}'})
    }
    \cdot u_{t_i}(\theta,\bs_{-t_i})\nonumber\\
    \stepb{=}&~\frac{\sum_{\bt,\bs}\prior(\bt)\D_{\varphi}(\bs)\cdot\indicator{s_{t_i}=\theta}\cdot u_{t_i}(\bs)}{
    \sum_{\bt',\bs'}\prior(\bt')\D_{\varphi}(\bs')\cdot\indicator{s'_{t_i}=\theta}
    }
    =\frac{\Ex{\bs\sim\D_\varphi}{\sum_{\bt}\prior(\bt)\cdot\indicator{s_{t_i}=\theta}\cdot u_{t_i}(\bs)}}{
    \Ex{\bs'\sim\D_\varphi}{\sum_{\bt'}\prior(\bt')\cdot\indicator{s'_{t_i}=\theta}}
    }\nonumber\\
    \stepc{=}&~\frac{\Ex{\bs\sim\D_\varphi}{\sum_{v\in V}\frac{1}{n}\cdot\indicator{s_{v}=\theta}\cdot u_{v}(\bs)}}{
    \Ex{\bs'\sim\D_\varphi}{\sum_{v\in V}\frac{1}{n}\cdot\indicator{s'_{v}=\theta}}
    }
    \stepd{=}\frac{\Ex{\bs\sim\D_\varphi}{\sum_{v\in V}\indicator{s_{v}=\theta}\cdot \left(\theta+\sum_{v'\in N(v)}W_{v,v'}s_{v'}\right)}}{
    \Ex{\bs'\sim\D_\varphi}{\sum_{v\in V}\indicator{s'_{v}=\theta}}
    }\\
    =&~\theta+\frac{\Contrib_\theta}{\Numtheta_\theta}.
\end{align*}
In the above equations, step (a) uses the expression of posterior $\mu_i$ in \cref{eq:posterior} together with the identity-independent property of scheme $\D_\varphi$. Step (b) follows from $\D_{\varphi}({\theta,\bs_{-t_i}})= \sum_{s_{t_i}}\D_\varphi(\bs)\cdot\indicator{s_{t_i}=\theta}$, and that $u_{t_i}(\theta,\bs_{-t_i})=u_{t_i}(\bs)$ when $s_{t_i}=\theta$.
Step (c) is because $\sum_{\bt:t_i=v}\prior(\bt)=\frac{1}{n}$ since the marginal distribution of $\prior$ on $t_i$ is uniform over $V$. Step (d) follows from $u_v(\bs)=\theta+\sum_{v'\in N(v)}W_{v,v'}s_{v'}$ when $s_v=\theta$. The final step uses the linearity of expectations.

\subsection{Proof of \cref{lemma:persuasive-binary}}
\label{app:persuasive-binary}

We prove \Cref{lemma:persuasive-binary} using the characterizations in \Cref{lemma:persuasive-general}. If the signaling scheme $\D_\varphi$ induced by sending $\alpha$ to a random subset of vertices $S\sim\D$ is persuasive, its slack (defined in \Cref{def:slack}) must satisfy $\Delta_0\ge0$ and $\Delta_\alpha=0$ for all $\alpha \in (0, 1]$. We have
\begin{align*}
    \Delta_\alpha&=\Contrib_\alpha-(1-\alpha)\Numtheta_\alpha\\
    &=\Ex{\bs \sim \D_\varphi}{\sum_{v \in V}\1{s_v = \alpha}\left(\sum_{v' \in N(v)}W_{v,v'}s_{v'}-(1-\alpha)\right)}\\
    &=\Ex{S\sim \D}{\sum_{v\in S}\left(\sum_{v'\in S\setminus\{v\}}W_{v,v'}\cdot\alpha-1+\alpha\right)}
    \tag{$s_{v'}=\alpha\cdot\1{v'\in S}$}\\
    &=\Ex{S\sim\D}{\alpha\cdot\sum_{v\in S}\sum_{v'\in S}W_{v,v'}-|S|}
    \tag{$W_{v,v}=1$}\\
    &=\alpha\cdot\Ex{S\sim\D}{\induced(S)}-\Ex{S\sim \D}{|S|}.
\end{align*}
Therefore, to make $\D_\varphi$ persuasive, we should set $\alpha=\frac{\Ex{S \sim \D}{|S|}}{\Ex{S \sim \D}{\induced(S)}}$.

Similarly, for $\Delta_0$, we have
\begin{align*}
    \Delta_0=\Contrib_0-\Numtheta_0
    =\Ex{S\sim \D}{\sum_{v\not\in S}\left(\sum_{v'\in S}\alpha\cdot W_{v,v'}-1\right)}
    =\alpha\cdot\Ex{S\sim \D}{\cut(S,V\setminus S)}-\Ex{S\sim \D}{|V\setminus S|}.
\end{align*}
Plugging in the choice of $\alpha$, to make $\Delta_0\ge0$, the distribution $\D$ should satisfy
\begin{align*}
    \frac{\Ex{S \sim \D}{\cut(S, V \setminus S)}}{\Ex{S \sim \D}{|V \setminus S|}} \ge \frac{\Ex{S \sim \D}{\induced(S)}}{\Ex{S \sim \D}{|S|}},
\end{align*}
which is exactly the inequality in \Cref{lemma:persuasive-binary}. Finally, for the cost of $\D_\varphi$, we have
\begin{align*}
    \Cost(\D_\varphi)=&\alpha\cdot\Ex{S\sim\D}{|S|}=\frac{\left(\Ex{S \sim \D}{|S|}\right)^2}{\Ex{S \sim \D}{\induced(S)}}.
\end{align*}
    
    \section{Details for Section~\ref{sec:discussion}}\label{sec:discussion-details}

\subsection{Conjectured impossibility of matching $\OPTIC$ using $O(1)$ signals.} \label{sec:discussion-details-match-ic}
We conjecture that, in general, $\omega(1)$ different signals are needed to achieve a cost below $\OPTIC$. On a technical level, even if it is possible to achieve a cost of $\OPTIC$ using $O(1)$ different signals, such a result is unlikely to follow from a ``natural'' proof strategy, because signaling schemes with small signal spaces do not \emph{compose} well.

On a graph $G$ that consists of several disjoint components $G_1, G_2, \ldots$, a natural proof strategy would first derive a signaling scheme $\varphi_i$ with a constant-size signal space for each $G_i$ (such that $\Cost(\varphi_i) = \OPTIC(G_i)$), and then appropriately combine them into a scheme for $G$. While the direct product of $\varphi_1, \varphi_2, \ldots$ does give a persuasive scheme for the larger graph $G$, the number of different signals also gets added up, and may scale with the size of $G$.

Another natural idea is to mimic our approach to designing binary schemes --- we view the signaling scheme as a randomized partitioning of $G_i$ into $O(1)$ parts\footnote{For instance, the distribution $\D \in \Delta(2^V)$ that specifies a binary scheme is essentially a randomized $2$-partition of $V$.}, and take the direct product of each $\D_i$ instead. Unfortunately, this breaks even for the case of binary schemes: Recall from \Cref{lemma:persuasive-binary} that $\D_i \in \Delta(2^{V_i})$ gives a persuasive binary scheme for $G_i$ if and only if
\begin{equation}\label{eq:simpson-individual}
    \frac{\Ex{S \sim \D_i}{\cut(S, V_i \setminus S)}}{\Ex{S \sim \D_i}{|V_i \setminus S|}} \ge \frac{\Ex{S \sim \D_i}{\induced(S)}}{\Ex{S \sim \D_i}{|S|}}.
\end{equation}
If we choose $\D$ as the direct product of $\D_1, \D_2, \ldots$, the corresponding condition reduces to
\begin{equation}\label{eq:simpson-combined}
    \frac{\sum_{i}\Ex{S \sim \D_i}{\cut(S, V_i \setminus S)}}{\sum_{i}\Ex{S \sim \D_i}{|V_i \setminus S|}} \ge \frac{\sum_{i}\Ex{S \sim \D_i}{\induced(S)}}{\sum_{i}\Ex{S \sim \D_i}{|S|}}.
\end{equation}
Perhaps counter-intuitively, even if \Cref{eq:simpson-individual} holds for every $i$, the combined inequality in \Cref{eq:simpson-combined} might \emph{not} hold!\footnote{This is similar to \emph{Simpson's paradox} in statistics. For unit-weight graphs, we avoided this issue by showing that the left-hand side of \Cref{eq:simpson-individual} is always lower bounded by $1$, which in turn upper bounds the right-hand side. This stronger condition \emph{does} imply \Cref{eq:simpson-combined} when multiple components are combined.}

Based on this technical insight, we found a simple instance on which a numerical optimization over ternary signaling schemes suggests that three different signals are not enough to achieve a cost of $\OPTIC$. We consider a graph that consists of multiple connected components $G_1, G_2, \ldots$ with vertex sets $V_1, V_2, \ldots$. Each graph $G_i$ admits a persuasive binary signaling scheme with cost $\OPTIC(G_i)$ induced by a distribution $\D_i \in \Delta(2^{V_i})$ that satisfies \Cref{eq:simpson-individual}. Towards ensuring that the direct product of $\D_1, \D_2, \ldots$ does not give a persuasive signaling scheme, we choose these components such that: (1) \Cref{eq:simpson-individual} is tight for each $i$; (2) the two sides of \Cref{eq:simpson-individual} take different values for different values of $i$.

Concretely, we consider the graph
that consists of:
\begin{itemize}
    \item \textbf{Component 1:} Three vertices that form a path with two unit-weight edges.
    \item \textbf{Component 2:} Two additional vertices connected by an edge of weight $1/2$.
\end{itemize}
In the optimal stable solution, the vertex in the middle of the $3$-path plays $1$, while both vertices in the other component play $2/3$. Recall that a binary signaling scheme is specified by a distribution over $2^V$, which is a $31$-dimensional simplex on this specific graph. Therefore, the optimal binary signaling scheme is characterized by a constrained continuous optimization problem on this simplex, in which the constraint is given by \Cref{lemma:persuasive-binary}. We relax this optimization problem to multiple linear programs, and numerically verified that no binary signaling scheme achieves a cost of $\OPTIC$ or lower.

The same idea gives another graph
that consists of:
\begin{itemize}
    \item \textbf{Component 1:} Three vertices that form a path with two unit-weight edges.
    \item \textbf{Component 2:} Two vertices connected by an edge of weight $2/3$.
    \item \textbf{Component 3:} Three vertices that form a triangle with three edges of weights $1/2$, $3/4$, and $3/4$.
\end{itemize}
A similar but more complicated numerical optimization on ternary signaling schemes indicates that no ternary scheme achieves a cost of $\OPTIC$ or lower.

Despite this negative evidence, it \emph{might} still be possible to achieve an $O(\OPTIC)$ cost using $O(1)$ signals. However, such a constant factor approximation, in conjunction with our proof of \Cref{thm:weighted-improve-ic}, would not guarantee a strict improvement upon $\OPTIC$.

\subsection{Conjectured $\Omega(n^{3/4})$ lower bound}\label{sec:discussion-details-approx}
We conjecture that the $n^{3/4}$ ratio is tight on the following instance: The graph consists of two centers connected by a unit-weight edge, along with $n - 2$ leaves that form a clique with edge weight $n^{-3/4}$. In addition, each center is connected to each leaf by an edge of weight $1/2$.

When only the centers play $1$ each, we obtain a feasible IR solution with cost $2 = O(1)$, whereas our approach at best gives an $O(n^{3/4})$ cost on this graph. For a sanity check, note that the total edge weight in the graph is $m = \Theta(n^2\cdot n^{-3/4} + n) = \Theta(n^{5/4})$. Our proof of \Cref{thm:weighted-ternary-opt} (more specifically, \Cref{thm:weighted-upper-general}) gives a persuasive signaling scheme with cost upper bounded by a constant factor times
\[
    \min\left\{(\OPTIR)^{2/3}m^{1/3}n^{1/3} + \OPTIR\cdot\sqrt{n}, \frac{n^2}{m}\right\},
\]
yet both terms reduce to $\Theta(n^{3/4})$ when we plug in $\OPTIR = \Theta(1)$ and $m = \Theta(n^{5/4})$.

The graph above resembles the hard instances for the other cases on which we manage to prove tight lower bounds, so why is it harder to prove an $\Omega(n^{3/4})$ bound? Recall from \Cref{sec:lower_overview} that we prove lower bounds by carefully choosing a dual solution $f: [0, 1] \to \R$ to the LP that characterizes the optimal scheme. On the lower bounds that we managed to prove, we could choose $f$ as a linear combination of the constant function and the function $f(\theta) = \theta / (1 - \theta)$. This choice gives a simple closed-form expression, and satisfies concavity and Lipschitz continuity (except when $\theta$ is close to $1$). These make the choice of $f$ relatively amenable to formal proofs. However, in the hard instance defined above, a numerical computation suggests that the optimal dual solution is much more ill-behaved --- it is non-convex, non-concave, and has a large derivative around $1/2$. Therefore, we expect that, even if the instance above does witness an $\Omega(n^{3/4})$ bound, the proof would involve a much more complicated choice of $f$, and a more involved analysis.
    
    \section{Omitted Proofs from Section~\ref{sec:unit-weight-upper}}\label{sec:unit-weight-upper-omitted}

\subsection{Technical Lemmas in Section~\ref{sec:weight-induced-size}}
We prove \Cref{lemma:IS-cut,lemma:IR-sol-rounding} stated in \Cref{sec:weight-induced-size}.

\begin{proof}[Proof of \Cref{lemma:IS-cut}]
    Since $\DS$ and $V \setminus (\IS \cup \DS)$ partitions $V \setminus \IS$, we have
    \[
        \cut(\IS, V \setminus \IS)
    =   \cut(\IS, \DS) + \cut(\IS, V \setminus (\IS \cup \DS)).
    \]
    Since $\DS$ is a dominating set of $G$, every vertex in $\IS \subseteq V \setminus \DS$ has a neighbour in $\DS$, so we have $\cut(\IS, \DS) \ge |\IS|$. Since $\IS$ is maximal with respect to the induced sub-graph of $V \setminus \DS$, every vertex $v \in V \setminus (\DS \cup \IS)$ has a neighbour in $\IS$; otherwise, $v$ can be added to $\IS$ to give a larger independent set. This gives $\cut(\IS, V \setminus (\IS \cup \DS)) \ge |V \setminus (\IS \cup \DS)|$. Therefore, we conclude that
    \[
        \cut(\IS, V \setminus \IS)
    \ge |\IS| + |V \setminus (\IS \cup \DS)|
    =   |V| - |\DS|.\qquad\qedhere
    \]
\end{proof}

\begin{proof}[Proof of \Cref{lemma:IR-sol-rounding}]
    The first bound follows easily from the fact that each vertex is included independently with probability $\theta_v^\star$. For the second bound, note that
    \[
        \Ex{S \sim \D}{\induced(S)}
    \le \sum_{u, v \in V}\pr{S \sim \D}{u, v \in S}.
    \]
    By definition of $\D$, $\pr{S \sim \D}{u, v \in S}$ is given by $\theta_u\theta_v$ if $u \ne v$, and $\theta_u$ if $u = v$.
    It follows that
    \[
        \Ex{S \sim \D}{\induced(S)}
    \le \sum_{u, v \in V: u \ne v}\theta_u\theta_v + \sum_{u \in V}\theta_u
    \le \left(\sum_{u \in V}\theta_u\right)^2 + \left(\sum_{u \in V}\theta_u\right)
    =   \|\btheta\|_1^2 + \|\btheta\|_1.
    \]

   For the third bound, note that
    \[
        \Ex{S \sim \D}{\cut(S, V \setminus S)}
    =   \sum_{u, v \in V}W_{u,v}\cdot\pr{S \sim \D}{u \notin S \wedge v \in S}
    =   \sum_{u \in V}\sum_{v \in N(u)}W_{u,v}\cdot(1 - \theta_u)\cdot \theta_v.
    \]
    For each $u \in V$, the condition $W\btheta \ge \vecone$ implies $\sum_{v \in N(u)}W_{u,v}\theta_v \ge 1-\theta_u$,
    and it follows that
    \begin{align*}
        \Ex{S \sim \D}{\cut(S, V \setminus S)}
    &=  \sum_{u \in V}(1 - \theta_u)\sum_{v \in N(u)}W_{u,v}\theta_v
    \ge\sum_{u \in V}(1 - \theta_u)^2
    \ge \sum_{u \in V}(1 - 2\theta_u)
    =   |V| - 2\|\btheta\|_1.
    \end{align*}
\end{proof}

\subsection{Proof of \Cref{lem:wasteful-unit}}
\label{app:wasteful-proof-unit}
\begin{proof}[Proof of \Cref{lem:wasteful-unit}]
    Recall from the proof of \Cref{lem:wasteful-general} that $\OPT$ is characterized by the linear program in \eqref{eq:primal-lp}, and its dual is given by \eqref{eq:dual-lp}. In the following, we construct a feasible solution $\dualvar^\star$ to the dual LP, such that $\vecone^{\top} (\btheta-\dualvar^\star)\le \|W\btheta\|_1-n$. By the weak duality of LP, we have $\vecone^{\top}\dualvar^\star \le \OPT$. It follows that
    \[
        \|\btheta\|_1 - \OPT
    \le \vecone^{\top}\btheta - \vecone^{\top}\dualvar^\star
    \le \|W\btheta\|_1 - n,
    \]
    which is equivalent to the desired bound.

    Now we show how to construct $\dualvar^\star$. Without loss of generality, the vertices of the graph are labeled with $V = [n]$. Let $\btheta^{(0)}=\btheta$.
    For $i = 1, 2, \ldots, n$, we examine the $i$-th coordinate of $W\btheta^{(i-1)}$. If $(W\btheta^{(i-1)})_i\le1$, we set $\btheta^{(i)}=\btheta^{(i-1)}$ and move on to the next coordinate. If $(W\btheta^{(i-1)})_i>1$, we choose vector $\btheta^{(i)}$ such that:
    \begin{itemize}
        \item $\theta^{(i)}_i + \sum_{j \in N(i)}\theta^{(i)}_j = 1$, and $\theta^{(i)}_j \in \left[0, \theta^{(i-1)}_j\right]$ holds for every $j \in N(i)\cup\{i\}$. This is possible since $\theta^{(i-1)}_i + \sum_{j \in N(i)}\theta^{(i-1)}_j
    =   (W\btheta^{(i-1)})_i
    >   1$.
        \item $ \theta^{(i)}_j = \theta^{(i-1)}_j$ for each $j \notin N(i) \cup \{i\}$.
    \end{itemize}
    If $(W\btheta^{(i-1)})_i\le1$, we immediately obtain $(W\btheta^{(i)})_i = (W\btheta^{(i-1)})_i\le1$; otherwise, our choice of $\btheta^{(i)}$ guarantees
    \[
        (W\btheta^{(i)})_i
    =   \theta^{(i)} + \sum_{j \in N(i)}\theta^{(i)}_j = 1,
    \]
    Thus, in both cases, the dual constraint $W\dualvar \le \vecone$ is satisfied by $\dualvar = \btheta^{(i)}$ on the $i$-th coordinate.
    Furthermore, it is clear from our procedure that $\veczero \le \btheta^{(n)} \le \btheta^{(n-1)} \le \cdots \le \btheta^{(0)} = \btheta$ holds coordinate-wise. Since all entries in $W$ are non-negative, $W\btheta^{(n)} \le W\btheta^{(n-1)} \le \cdots \le W\btheta^{(0)} = W\btheta$ also holds coordinate-wise.
    
    Therefore, if we let $\dualvar^\star=\btheta^{(n)}$, it holds for every $i \in [n]$ that
    \[
        (W\dualvar^\star)_i
    =   (W\btheta^{(n)})_i
    \le (W\btheta^{(i)})_i
    =   1.
    \]
    In other words, $\dualvar^\star$ is a feasible solution to \eqref{eq:dual-lp}. Moreover, for each $i \in [n]$, if $\btheta^{(i)}$ is different from $\btheta^{(i-1)}$, we have
    \[
        \vecone^{\top}\left(\btheta^{(i-1)}-\btheta^{(i)}\right)
    =   \left(\theta^{(i-1)}_i + \sum_{j \in N(i)}\theta^{(i-1)}_j\right) - \left(\theta^{(i)}_i + \sum_{j \in N(i)}\theta^{(i)}_j\right)
    =   (W\btheta^{(i-1)})_i-1
    \le (W\btheta)_i-1.
    \]
    If $\btheta^{(i)}$ remains the same as $\btheta^{(i-1)}$, we trivially have
    \[
        \vecone^{\top}\left(\btheta^{(i-1)}-\btheta^{(i)}\right)
    =   0
    \le (W\btheta)_i-1,
    \]
    where the last step follows from the feasibility of $\btheta$. Therefore, we have $\vecone^{\top}\left(\btheta^{(i-1)}-\btheta^{(i)}\right) \le (W\btheta)_i-1$ in both cases, and the desired bound follows from
    \begin{align*}
         \vecone^{\top}\left(\btheta-\dualvar^\star\right)= \sum_{i=1}^n \vecone^{\top}\left(\btheta^{(i-1)}-\btheta^{(i)}\right)\le\sum_{i=1}^{n} [(W\btheta)_i-1]=\|W\btheta\|_1-n.
    \end{align*}
\end{proof}

\section{Omitted Proofs from Section~\ref{sec:unit-lower}}
\label{app:unit-lower}

\begin{proof}[Proof of \Cref{thm:unit-lb-general}]
    We start with the case that $k = O(\sqrt{n})$. The $k \gg \sqrt{n}$ case would easily follow from the lower bound for the first case.

    \paragraph{The $k = O(\sqrt{n})$ case.} Concretely, we assume that $k \le \sqrt{n} / 2$. Consider the following graph: There are $k$ star graphs, each with a center connected to $l \coloneqq \lfloor\frac{n}{k}\rfloor - 1$ leaves. The $k$ centers of the stars form a clique. For this graph, we have $\OPT=k$ (achieved when each of the $k$ centers contributes $1$) and $\OPTIC = \Omega(n)$ (e.g., achieved when the center in one of the stars and the leaves in all the other stars contribute $1$ each).
    We will show that every persuasive signaling scheme must have a cost of $\Omega(k\sqrt{n}) = \Omega(\min\{k\sqrt{n}, n\})$.

Again, we assume that there is a signaling scheme with cost $\le k\sqrt{n} / C$ where $C \coloneqq 20$, and we will still derive a lower bound of $\Omega(k\sqrt{n})$ on the cost under this assumption. By \Cref{lem:simplify-signal}, without loss of generality, the signaling scheme is identity-independent and is specified by a distribution $\D_\varphi \in \Delta(\signalspace^V)$. In other words, the scheme samples $((s_{c,1},\ldots,s_{c,k}),(s_{i,j})_{i\in[k],j\in[l]})\in \signalspace^V$ from $\D_\varphi$, labels the center of the $i$-th star with $s_{c,i}$ and labels the $j$-th leaf in the $i$-th star with $s_{i,j}$. For brevity, we shorthand $x_i$ for $s_{c,i}$, and let $\Sum \coloneqq \sum_{i=1}^{k}x_i + \sum_{i=1}^{k}\sum_{j=1}^{l}s_{i,j}$ denote the sum of the signals.

We first use the persuasiveness of $\D_\varphi$ to derive a lower bound on $\Ex{\D_\varphi}{\sum_{i=1}^{k}x_i}$, the expected total signal received by the $k$ centers. By \Cref{lemma:persuasive-general}, it holds for every $\theta \in \signalspace$ that
\[
0 \le 
\Delta_\theta=\Ex{\D_\varphi}{\sum_{v \in V}\1{s_v=\theta}\left(\sum_{u\in N(v)}s_u-(1-s_v)\right)}.
\]
Summing over $\theta \in \signalspace$ gives
\[
0 \le \sum_{\theta\in\signalspace}\Delta_\theta=\Ex{\D_\varphi}{\sum_{v \in V}\left(s_v+\sum_{u\in N(v)}s_u-1\right)}
=\Ex{\D_\varphi}{\sum_{u \in V}s_u\left(1+|N(u)|\right)}-k(l+1).
\]
Since $|N(u)|=1$ when $u$ is a leaf and $|N(u)|=l+k-1$ when $u$ is a center, we have
\[
k(l+1)
\le
\Ex{\D_\varphi}{\sum_{u \in V}s_u\left(1+|N(u)|\right)}=
2\Ex{\D_\varphi}{\Sum}+(l+k-2)\cdot\Ex{\D_\varphi}{\sum_{i=1}^k x_i}.
\]
Applying the assumption that $\Ex{\D_\varphi}{\Sum} \le k\sqrt{n}/C$ gives the desired lower bound on $\Ex{\D_\varphi}{\sum_{i=1}^k x_i}$:
\[
    \Ex{\D_\varphi}{\sum_{i=1}^k x_i}\ge\frac{k(l+1)-2\Ex{\D_\varphi}{\Sum}}{l+k-2}
\ge \frac{k(l+1)-2k\sqrt{n}/C}{l+k}.
\]

Let $y \coloneqq \frac{\sqrt{n}}{\sqrt{n}+2k}$ and define the random variable $X \coloneqq \sum_{i=1}^k \1{x_i \ge y}$. Next, we use the lower bound on $\Ex{\D_\varphi}{\sum_{i=1}^k x_i}$ to lower bound the expectation of $X$ by $\Omega(k)$. Applying the inequality $\1{x_i \ge y} \ge \frac{x_i - y}{1 - y}$ gives
\begin{align*}
    \Ex{\D_\varphi}{X}
&\ge \frac{\Ex{\D}{\sum_{i=1}^k x_i}-ky}{1-y}
\ge \frac{\frac{k(l+1) - 2k\sqrt{n}/C}{l + k} - ky}{1-y}\\
&=  \frac{kl(1-y) + k - 2k\sqrt{n}/C - yk^2}{(1-y)(l+k)}\\
&\ge\frac{kl}{l + k} - \frac{2k\sqrt{n}}{C(1-y)(l+k)} - \frac{yk^2}{(1-y)(l+k)}.
\end{align*}
Plugging $y = \frac{\sqrt{n}}{\sqrt{n} + 2k}$ into the last term above gives
\[
    \frac{kl}{l + k} - \frac{n + 2k\sqrt{n}}{C(l+k)} - \frac{k\sqrt{n}}{2(l+k)}.
\]
Recall that $k \le \sqrt{n} / 2$, $l = \lfloor n/k\rfloor - 1 = (n/k)\cdot(1 + o_n(1))$ and $C = 20$. For all sufficiently large $n$, the three terms above can be bounded as follows:
\[
    \frac{kl}{l+k}
=   \frac{k}{1 + k/l}
=   \frac{k}{1 + k^2/n\cdot(1 + o_n(1))}
\ge \frac{k}{1 + 1/4 + o_n(1)} \ge \frac{k}{2},
\]
\[
    \frac{n + 2k\sqrt{n}}{C(l+k)} \le \frac{2n}{Cl} = \frac{k}{10}(1+o_n(1))\le \frac{k}{8},
\]
and
\[
    \frac{k\sqrt{n}}{2(l+k)}
=   \frac{k}{2(\sqrt{n}/k+k/\sqrt{n})}\cdot(1+o_n(1))
\le \frac{k}{2\cdot(2 + 1/2)}\cdot(1+o_n(1))
\le \frac{k}{4}.
\]
Therefore, we conclude that $\Ex{\D_\varphi}{X} \ge k/2 - k/8 - k/4 = k/8$.

Since $k - X$ is always non-negative, by Markov's inequality, we have
\[
    \pr{\D_\varphi}{X\le\frac{k}{10}}=\pr{\D_\varphi}{k-X\ge\frac{9k}{10}}\le\frac{\Ex{\D_\varphi}{k-X}}{9k/10}\le
    \frac{7k/8}{9k/10}
    =   \frac{35}{36},
\]
which implies $\pr{\D_\varphi}{X\ge\frac{k}{10}}\ge\frac{1}{36}$.

Applying \Cref{lemma:persuasive-general} again gives
\begin{equation}\label{eq:unit-lb-general}
\begin{split}
    0
=   \sum_{\theta \in \signalspace}\Delta_\theta\cdot\1{\theta \ge y}
&=  \Ex{\D_\varphi}{\sum_{v}\1{s_v\ge y}\left(\sum_{u\in N(v)}s_u-(1-s_v)\right)}\\
&\ge\Ex{\D_\varphi}{\sum_{i=1}^k \1{x_i \ge y}\cdot\left(\sum_{i=1}^k x_i-1\right)+\sum_{i=1}^k\sum_{j=1}^{l}\1{s_{i,j}\ge y}(y-1)}\\
&=  \Ex{\D_\varphi}{X\left(\sum_{i=1}^k x_i-1)\right)}-(1-y)\Ex{\D_\varphi}{\sum_{i=1}^{k}\sum_{j=1}^{l}\1{s_{i,j}\ge y}}.
\end{split}
\end{equation}
In the second line above, we drop the contributions from the leaves when $x_i \ge y$, and drop the contribution from the incident center when a leaf satisfies $s_{i,j}\ge y$.

In the rest of the proof, we will first lower bound the $\Ex{\D_\varphi}{X\left(\sum_{i=1}^k x_i-1)\right)}$ term. By \Cref{eq:unit-lb-general}, this gives a lower bound on $(1-y)\Ex{\D_\varphi}{\sum_{i=1}^{k}\sum_{j=1}^{l}\1{s_{i,j}\ge y}}$, which, in turn, lower bounds the cost.

Note that
\[
    X\cdot\left(\sum_{i=1}^k x_i-1\right)
\ge X\cdot\left(\sum_{i=1}^k y\cdot\1{x_i \ge y}-1\right)
=   X\cdot(yX - 1),
\]
and the minimum of the quadratic function $x \mapsto yx^2 - x$ is $-1/(4y)$. By the assumption that $k \le \sqrt{n} / 2$, we have $y = \frac{\sqrt{n}}{\sqrt{n}+2k} \in [1/2, 1]$, which implies $X\cdot\left(\sum_{i=1}^k x_i-1\right) \ge -1/(4y) \ge -1/2$. When $X \ge k / 10$, we have a stronger lower bound of $X \cdot \left(\sum_{i=1}^{k}x_i - 1\right) \ge (k/10)\cdot [y\cdot(k/10)-1]$. It follows that
\begin{align*}
    \Ex{\D_\varphi}{X\left(\sum_{i=1}^k s_i-1)\right)}
&\ge\Ex{\D_\varphi}{\1{X\ge\frac{k}{10}}\cdot\frac{k}{10}\cdot\left(\frac{ky}{10}-1\right) + \1{X<\frac{k}{10}}\cdot (-1/2)}\\
&\ge\Omega(k^2)\cdot\pr{\D_\varphi}{X\ge\frac{k}{10}} - \frac{1}{2}
\ge \Omega(k^2).
\end{align*}
The last step follows from the inequality $\pr{\D_\varphi}{X \ge k / 10} \ge 1/36$ that we proved earlier. Plugging the above into \Cref{eq:unit-lb-general} gives $(1-y)\Ex{\D_\varphi}{\sum_{i=1}^{k}\sum_{j=1}^{l}\1{s_{i,j}\ge y}} \ge \Omega(k^2)$.

Finally, we conclude that
\begin{align*}
    \Ex{\D_\varphi}{\Sum} \ge
    y\cdot \Ex{\D_\varphi}{\sum_{i=1}^{k}\sum_{j=1}^{l}\1{s_{i,j}\ge y}}\ge\Omega(k^2)\cdot\frac{y}{1-y}\ge\Omega(k\sqrt{n}),
\end{align*}
where the last step follows from $y = \frac{\sqrt{n}}{\sqrt{n} + 2k}$. This completes the proof for the $k \le \sqrt{n}/2$ case.

\paragraph{The $k \gg \sqrt{n}$ case.} We handle the $k > \sqrt{n} / 2$ case by a reduction to the first case. Let $k' \coloneqq \lfloor n / (4k)\rfloor$ and $n' = 4(k')^2$. Since $k' \le \sqrt{n'} / 2$, our proof for the first case gives a graph $G'$ with $n'$ vertices, on which $\OPT = k'$, $\OPTIC = \Omega(n')$, and every persuasive signaling scheme has an $\Omega(k'\sqrt{n'}) = \Omega(n')$ cost.

Let $m \coloneqq \lfloor n / n'\rfloor = \Theta(k^2/n)$. Consider the graph $G$ that consists of $m$ disjoint copies of $G'$, denoted by $G'_1$ through $G'_m$. It is clear that the benchmarks $\OPT$ and $\OPTIC$ are additive on a graph consisting of multiple connected components. Thus, on the graph $G$, we have $\OPT = m \cdot k' = \Theta(k)$ and $\OPTIC = m \cdot \Omega(n') = \Omega(n)$.

It remains to show that the lower bound on the cost of persuasive signaling schemes also composes. Suppose that $\D_\varphi$ is a persuasive signaling scheme for graph $G$, with a cost of $C$. Consider the following induced signaling scheme $\D'_\varphi$ for graph $G'$: Pick $i \in [m]$ uniformly at random, draw $s \sim \D_\varphi$, and choose the signal as the restriction of $s$ to $G'_i$. Let $\Delta$ and $\Delta'$ denote the slacks induced by the signaling schemes $\D_\varphi$ and $\D'_\varphi$, respectively. Note that $\D'_\varphi$ is still persuasive, since for every $\theta\in[0,1]$, we have
\begin{align*}
    \Delta'_{\theta} = \frac{1}{m}\sum_{i=1}^m\Ex{\D_\varphi}{\sum_{v\in G'_i}\1{s_v=\theta}\left(\sum_{u\in N(v)}s_u-(1-s_v)\right)}=\frac{\Delta_\theta}{m} \ge 0,
\end{align*}
and the inequality is tight for all $\theta > 0$.
Moreover, the cost of the signaling scheme $\D'_\varphi$ is given by $C / m$.

Now, applying the lower bound for graph $G'$ shows that $C / m \ge \Omega(k'\sqrt{n'}) = \Omega(n')$. Therefore, we have $C \ge \Omega(mn') = \Omega(n)$. In other words, every persuasive signaling scheme must have an $\Omega(n) = \Omega(\min\{k\sqrt{n}, n\})$ cost on graph $G$. This completes the proof.
\end{proof}
    
    \section{Omitted Proofs from Section~\ref{sec:weighted-upper}}
\label{app:weighted-upper}

\subsection{Proof of \Cref{lemma:weighted-upper-technical}}
\label{app:proof-weighted-technical}
\begin{proof}[Proof of \Cref{lemma:weighted-upper-technical}.]
    Let $\btheta^\star \in [0, 1]^V$ be an optimal solution subject to IR with cost $\|\btheta^\star\|_1 = \OPTIR$. Consider the following signaling scheme with parameters $p, q \ge 0$ and $\eps, \alpha \in [0, 1]$.
    \begin{itemize}
        \item With probability $\frac{1}{1 + p + q}$, draw a random set $\DStilde \subseteq V$ that includes each $v \in V$ independently with probability $\theta^\star_v$. Label (the vertices in) $\DStilde$ with $1 - \eps$, and label $V \setminus \DStilde$ with $0$.
        
        \item With probability $\frac{p}{1 + p + q}$, draw a random set $\IStilde \sim \D$. Label $\IStilde$ with $1 - \eps$, and label $V \setminus \IStilde$ with $0$.
        
        \item With probability $\frac{q}{1 + p + q}$, draw a random set $\DStilde \subseteq V$ that includes each $v \in V$ independently with probability $\theta^\star_v$. Label $V \setminus \DStilde$ with $\alpha$, and label $\DStilde$ with $0$.
    \end{itemize}

    \paragraph{Properties of $\DStilde$ and $\IStilde$.} We will use the following properties of $\DStilde$ and $\IStilde$:
    \begin{itemize}
        \item $\Ex{}{|\DStilde|} = \|\btheta^\star\|_1 = \OPTIR$.
        \item $\Ex{}{\induced(\DStilde)} \le (\OPTIR)^2 + \OPTIR$.
        \item $\Ex{}{\cut(\DStilde, V \setminus \DStilde)} \ge n - 2\OPTIR$.
        \item $\Ex{}{\induced(\IStilde)} \le (1 + \gamma)\Ex{}{|\IStilde|}$.
    \end{itemize}
    The first three bounds follow from \Cref{lemma:IR-sol-rounding}. The last follows from the assumption on $\D$.

    \paragraph{Conditions for persuasiveness.} By \Cref{def:slack} and \Cref{lemma:persuasive-general}, in order for the signaling scheme above to be persuasive, we need to satisfy:
    \begin{itemize}
        \item For $\theta \in \{\alpha, 1 - \eps\}$,
        \begin{equation}\label{eq:condition-general}
        (1-\theta)\cdot\Numtheta_\theta=\Contrib_\theta
        \end{equation}
        \item For $\theta = 0$, \Cref{eq:condition-general} holds with ``$=$'' replaced with ``$\le$''.
    \end{itemize}
    In the rest of the proof, we carefully pick the parameters of the signaling scheme to satisfy the conditions above, while ensuring that the resulting cost satisfies the desired upper bound.

    \paragraph{Pick $\alpha$ to handle $\theta = \alpha$.} We first examine \Cref{eq:condition-general} when $\theta = \alpha$. Since we send $\alpha$ to $V \setminus \DStilde$ with probability $\frac{q}{1+p+q}$, the left-hand side of \Cref{eq:condition-general} is given by
    \[
        (1 - \alpha)\cdot\frac{q}{1+p+q}\Ex{}{|V \setminus \DStilde|} = (1 - \alpha)\cdot\frac{q}{1+p+q}(n - \OPTIR).
    \]
    The right-hand side, on the other hand, is given by
    \[
        \frac{q}{1+p+q}\cdot\alpha\cdot\Ex{}{\induced(V\setminus \DStilde) - (n - |\DStilde|)}
    =   \frac{q}{1+p+q}\cdot\alpha\cdot\left[\Ex{}{\induced(V\setminus \DStilde)} - (n - \OPTIR)\right].
    \]
    Therefore, for $\theta = \alpha$, \Cref{eq:condition-general} reduces to
    \[
       (1 - \alpha)(n - \OPTIR) = \alpha\cdot[\Ex{}{\induced(V\setminus \DS)} - (n - \OPTIR)],
    \]
    which holds if we pick $\alpha = \frac{n - \OPTIR}{\Ex{}{\induced(V \setminus \DStilde)}}$. Note that this choice of $\alpha$ is valid, since we have
    \[
    \Ex{}{\induced(V \setminus \DStilde)} \ge \Ex{}{|V \setminus \DStilde|} = n - \OPTIR \ge 0,
    \]
    which implies $\alpha \in [0, 1]$.

    \paragraph{Pick $\eps$ to handle $\theta = 1 - \eps$.} When $\theta = 1 - \eps$, the left-hand side of~\Cref{eq:condition-general} is
    \[
        \eps\cdot\left[\frac{1}{1+p+q}\cdot\Ex{}{\left|\DStilde\right|} + \frac{p}{1+p+q} \cdot \Ex{}{|\IStilde|}\right]
    =   \frac{1}{1+p+q}\cdot\eps\cdot\left(\OPTIR + p\Ex{}{|\IStilde|}\right).
    \]
    The right-hand side is equal to
    \[
        (1 - \eps)\cdot\left[\frac{1}{1+p+q}\cdot\Ex{}{\induced(\DStilde) - |\DStilde|} + \frac{p}{1+p+q}\cdot\Ex{}{\induced(\IStilde) - |\IStilde|}\right].
    \]
    Then, \Cref{eq:condition-general} at $\theta = 1 - \eps$ is equivalent to
    \[
        \eps\cdot\left(\OPTIR + p\Ex{}{|\IStilde|}\right)
    =   (1 - \eps)\cdot\left[\Ex{}{\induced(\DStilde) - |\DStilde|} + p\cdot\Ex{}{\induced(\IStilde) - |\IStilde|}\right],
    \]
    which is satisfied if we pick
    \[
        \eps = \frac{\Ex{}{\induced(\DStilde)} - \OPTIR + p\cdot\Ex{}{\induced(\IStilde) - |\IStilde|}}{\Ex{}{\induced(\DStilde)} + p\cdot\Ex{}{\induced(\IStilde)}} \in [0, 1].
    \]

    \paragraph{Pick $q$ to handle $\theta = 0$ case.} Finally, at $\theta = 0$, the left-hand side of \Cref{eq:condition-general} is equal to
    \begin{align*}
        &~\frac{1}{1+p+q}\cdot\left[\Ex{}{|V\setminus\DStilde|} + p\cdot\Ex{}{|V\setminus \IStilde|} + q\cdot\Ex{}{|\DStilde|}\right]\\
    =   &~\frac{1}{1+p+q}\cdot\left[(n - \OPTIR) + p\left(n - \Ex{}{|\IStilde|}\right) + q\OPTIR\right].
    \end{align*}
    The right-hand side is given by
    \[
        \frac{1}{1+p+q}\cdot\left[
        (1 - \eps)\cdot\Ex{}{\cut(\DStilde, V \setminus \DStilde)} + p\cdot(1-\eps)\cdot\Ex{}{\cut(\IStilde, V \setminus \IStilde)} + q\cdot\alpha\cdot \Ex{}{\cut(\DStilde, V\setminus\DStilde)}\right].
    \]
    Recall that we have $\Ex{}{\cut(\DStilde, V \setminus \DStilde)} \ge n - 2\OPTIR$. We also trivially relax the $\Ex{}{\cut(\IStilde, V \setminus \IStilde)}$ term to $0$. Recall that at $\theta = 0$, it suffices for the left-hand side of \Cref{eq:condition-general} to be upper bounded by the right-hand side. This can be guaranteed if we have
    \[
        (n - \OPTIR) + p\left(n - \Ex{}{|\IStilde|}\right) + q\OPTIR
    \le (1 - \eps)\cdot(n - 2\OPTIR) + q\cdot\alpha\cdot (n - 2\OPTIR),
    \]
    which is equivalent to
    \[
        q\cdot[\alpha(n-2\OPTIR) - \OPTIR]
    \ge (n - \OPTIR) + p\left(n - \Ex{}{|\IStilde|}\right) - (1 - \eps)\cdot(n - 2\OPTIR).
    \]
    We claim that it is without loss of generality to assume that $\alpha(n-2\OPTIR) - \OPTIR \ge \alpha n / 2$; otherwise, as we show at the end of the proof, a cost of $O(\OPTIR)$ can be trivially achieved. Under this additional assumption, it is, in turn, sufficient to satisfy
    \[
        \frac{1}{2}q\alpha n \ge \eps n + (1 - 2\eps)\OPTIR + p\left(n - \Ex{}{|\IStilde|}\right).
    \]
    Therefore, we can always pick $q$ to satisfy the condition at $\theta = 0$, such that $q\alpha n$ is at most $O(\OPTIR + (\eps + p)n)$.

    \paragraph{Upper bound the cost by optimizing $p$.} The cost of our signaling scheme is clearly
    \[
        \frac{1}{1+p+q}\cdot\left[(1-\eps)\Ex{}{|\DStilde|} + p\cdot(1-\eps)\Ex{}{|\IStilde|} + q\cdot\alpha\Ex{}{|V \setminus \DStilde|}\right]
    \le \OPTIR + p\Ex{}{|\IStilde|} + q\alpha n.
    \]
    By our choice of $q$, the $q\alpha n$ term is at most $O(\OPTIR + (\eps + p)n)$, so the cost is also upper bounded by $O(\OPTIR + (\eps + p)n)$.

    Now, recall our choice of
    \[
        \eps = \frac{\Ex{}{\induced(\DStilde)} - \OPTIR + p\cdot\Ex{}{\induced(\IStilde) - |\IStilde|}}{\Ex{}{\induced(\DStilde)} + p\cdot\Ex{}{\induced(\IStilde)}}.
    \]
    Also recall that $\Ex{}{\induced(\DStilde)} \le (\OPTIR)^2 + \OPTIR$ and $\Ex{}{\induced(\IStilde)} \le \Ex{}{|\IStilde|}\cdot (1 + \gamma)$. Note that the denominator above is trivially lower bounded by $p \cdot \Ex{}{\induced(\IStilde)} \ge p\Ex{}{|\IStilde|}$. This gives 
    \[
        \eps \le \frac{(\OPTIR)^2 + p\cdot\left(\Ex{}{|\IStilde|}\cdot (1 + \gamma) - \Ex{}{|\IStilde|}\right)}{p\Ex{}{|\IStilde|}}
    =   \gamma + \frac{(\OPTIR)^2}{p\Ex{}{|\IStilde|}}.
    \]
    It follows that our cost is at most
    \[
        O(\OPTIR + (\eps + p)n)
    \preceq \OPTIR + \gamma n + pn + \frac{n(\OPTIR)^2}{p\Ex{}{|\IStilde|}}
    \preceq \OPTIR + \gamma n + \frac{n\cdot\OPTIR}{\sqrt{\Ex{}{|\IStilde|}}},
    \]
    where the second step holds if we set $p = \frac{\OPTIR}{ \sqrt{\Ex{}{|\IStilde|}}}$.

    \paragraph{When $\alpha(n-2\OPTIR) - \OPTIR \ge \alpha n / 2$ does not hold.} Finally, we show that if the assumption $\alpha(n-2\OPTIR) - \OPTIR \ge \alpha n / 2$ is violated, we can achieve an $O(\OPTIR)$ cost easily.

    Note that we must have $\OPTIR > \alpha n / 6$ in this case; otherwise, we would have
    \[
        \alpha(n-2\OPTIR) - \OPTIR
    \ge \alpha[n- 2 \cdot (\alpha n / 6)] - \alpha n / 6
    \ge \alpha n - \alpha n / 3 - \alpha n / 6
    =   \alpha n / 2,
    \]
    a contradiction. The second step above holds since $\alpha \in [0, 1]$. 
    Also, we may assume that $\OPTIR \le n / 2$; otherwise, any persuasive signaling scheme would give a cost of at most $n = O(\OPTIR)$.
    
    Now, recall that $\alpha = \frac{n - \OPTIR}{\Ex{}{\induced(V \setminus \DStilde)}}$. Then, $\OPTIR > \alpha n / 6$ and $\OPTIR \le n / 2$ together imply
    \[
        \OPTIR > \frac{n(n - \OPTIR)}{6\cdot \Ex{}{\induced(V \setminus \DStilde)}}
    \ge \frac{n^2}{12\cdot \Ex{}{\induced(V \setminus \DStilde)}}
    \ge \frac{n^2}{12(n + 2m)},
    \]
    where $m$ is the total edge weight in the graph. Then, by \Cref{cor:degenerate-signal}, we can achieve a cost of
    \[
        \frac{n^2}{n + 2m}
    \le 12\OPTIR = O(\OPTIR).
    \]
\end{proof}

\subsection{Proof of \Cref{thm:weighted-upper-general}}
\label{app:proof-weighted-upper-general}

\begin{proof}[Proof of \Cref{thm:weighted-upper-general}.]
    Let $m$ be the total edge weight in the graph, and $\gamma > 0$ be a parameter to be chosen later. Define $\D \in \Delta(2^V)$ as the distribution of the random set obtained from including each $v \in V$ independently with probability $r \coloneqq \min\left\{\frac{\gamma n}{2m}, 1\right\}$. Clearly, we have $\Ex{S \sim \D}{|S|} = rn$ and 
    \[
        \Ex{S \sim \D}{\induced(S)}
    =   rn + 2r^2m
    =   \left(1 + \frac{2rm}{n}\right)\cdot\Ex{S \sim \D}{|S|}
    \le (1 + \gamma)\Ex{S \sim \D}{|S|}.
    \]

    By \Cref{lemma:weighted-upper-technical}, there is a persuasive signaling scheme with a cost of, up to a constant factor, at most
    \[
        \OPTIR + \gamma n + \frac{n\cdot\OPTIR}{\sqrt{\min\left\{\frac{\gamma n^2}{2m}, n\right\}}}
    \preceq \OPTIR + \gamma n + \OPTIR\cdot \max\left\{\sqrt{m / \gamma}, \sqrt{n}\right\}.
    \]
    Under this choice of $\gamma = (\OPTIR)^{2/3}m^{1/3}n^{-2/3}$, that above cost reduces to
    \[
        (\OPTIR)^{2/3}m^{1/3}n^{1/3} + \OPTIR \cdot \sqrt{n}.
    \]
    In addition, recall from \Cref{cor:degenerate-signal} that we can easily achieve a cost of $O(n^2/m)$. Therefore, we can always achieve the minimum between these two and obtain an upper bound of
    \begin{align*}
        \min\left\{(\OPTIR)^{2/3}m^{1/3}n^{1/3} + \OPTIR \cdot \sqrt{n}, \frac{n^2}{m}\right\}
    &\le \min\left\{(\OPTIR)^{2/3}m^{1/3}n^{1/3}, \frac{n^2}{m}\right\}  + \OPTIR \cdot \sqrt{n}\\
    &\preceq n^{3/4}\cdot(\OPTIR)^{1/2} + \OPTIR\cdot\sqrt{n}.
    \end{align*}
    Note that the second term above dominates the first only if $\OPTIR = \Omega(\sqrt{n})$, at which point both terms are greater than the trivial cost of $n$. Therefore, we can further simplify the bound to $O\left(n^{3/4}\cdot(\OPTIR)^{1/2}\right)$.
\end{proof}

\subsection{Proof of \Cref{thm:weighted-upper-special}}
\label{app:proof-weighted-upper-special}
Before proving the theorem, we first present a lemma that lower bounds the size of the maximum independent set in a graph.

\begin{lemma}\label[lemma]{lemma:IS-lower-bound}
    A graph with $n$ vertices and $m$ edges contains an independent set of size $\Omega\left(\min\{n^2/m, n\}\right)$.
\end{lemma}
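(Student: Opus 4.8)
The statement to prove is the Turán-type bound: a graph with $n$ vertices and $m$ edges contains an independent set of size $\Omega(\min\{n^2/m, n\})$.

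\textbf{Proof proposal.} The plan is to invoke (or re-derive) the classical Turán bound on the independence number. Recall the Caro--Wei / Turán inequality: any graph $G$ with $n$ vertices satisfies $\alpha(G) \ge \sum_{v \in V} \frac{1}{d(v)+1}$, where $d(v)$ is the degree of $v$. First I would apply Jensen's inequality (convexity of $x \mapsto \frac{1}{x+1}$) to this sum: writing $\bar d = \frac{1}{n}\sum_v d(v) = \frac{2m}{n}$ for the average degree, we get $\sum_v \frac{1}{d(v)+1} \ge \frac{n}{\bar d + 1} = \frac{n}{\frac{2m}{n}+1} = \frac{n^2}{2m+n}$. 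Then I would observe that $\frac{n^2}{2m+n} = \Omega\!\left(\min\{n^2/m,\, n\}\right)$: when $m \ge n$ the denominator is $\Theta(m)$ so the bound is $\Theta(n^2/m)$, and when $m < n$ the denominator is $\Theta(n)$ so the bound is $\Theta(n)$. This gives the claim.

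\textbf{On the Caro--Wei inequality itself.} Since the paper may not want to cite it, I would include the short probabilistic proof: take a uniformly random permutation $\sigma$ of $V$, and let $S = \{v : \sigma(v) < \sigma(u) \text{ for all } u \in N(v)\}$ be the set of vertices appearing before all their neighbors. Then $S$ is an independent set (two adjacent vertices cannot both precede the other), and $\Pr[v \in S] = \frac{1}{d(v)+1}$ since $v$ must be the first among the $d(v)+1$ vertices in $\{v\} \cup N(v)$. By linearity of expectation, $\mathbb{E}[|S|] = \sum_v \frac{1}{d(v)+1}$, so some independent set has at least this size. I note this is essentially the same "competing clocks"/random-priority idea used in the proof of \Cref{lemma:IC-solution-decomp}, so it fits the paper's toolkit.

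\textbf{Main obstacle.} There is no real obstacle here — this is a standard fact, and the only thing to be careful about is the constant in the $\Omega(\cdot)$ and handling the regime split $m \gtrless n$ cleanly (including the degenerate case $m=0$, where the whole vertex set is independent and the bound is trivially $n$). The write-up is therefore short: state Caro--Wei, give the one-paragraph random-permutation proof, apply Jensen, and do the two-case comparison to $\min\{n^2/m, n\}$.
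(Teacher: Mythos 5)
Your proof is correct, but it takes a genuinely different route from the paper's. The paper argues greedily: repeatedly add a minimum-degree vertex to the independent set and delete its closed neighborhood; while at least $n/2$ vertices remain, the minimum degree is at most $4m/n$, so each round removes at most $4m/n+1$ vertices, giving an independent set of size at least $\frac{n/2}{4m/n+1}=\Omega(\min\{n^2/m,n\})$. You instead derive the Caro--Wei bound $\alpha(G)\ge\sum_{v}\frac{1}{d(v)+1}$ via the random-permutation argument and then apply Jensen to obtain the closed form $\frac{n^2}{2m+n}$, which indeed dominates $\frac{1}{3}\min\{n^2/m,n\}$ by the two-case split you describe (and the $m=0$ case is handled automatically). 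Both arguments are standard and complete. Yours yields a slightly sharper explicit constant and, as you observe, reuses the random-priority idea already at work in the paper's proof of \Cref{lemma:IC-solution-decomp}, so it fits the paper's toolkit; the paper's greedy version is self-contained, elementary, and directly constructive, with no expectation-to-existence step. Either proof would serve here.
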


\begin{proof}
    Consider the greedy algorithm that keeps adding the vertex with the smallest degree, and then removing the vertex along with its neighbors. As long as the number of remaining vertices, $n'$, is at least $n / 2$, the smallest degree is at most $\frac{2m}{n'} \le 4m / n$, so each iteration removes at most $4m/n + 1$ vertices. The maximal independent set obtained from this greedy approach has size at least $\frac{n / 2}{4m/n+1} = \Omega(\min\{n^2/m, n\})$. 
\end{proof}

Now we are ready to prove \Cref{thm:weighted-upper-special}.
\begin{proof}[Proof of \Cref{thm:weighted-upper-special}.]
    Let $m$ denote the total edge weight of the graph. Since the edge weights are lower bounded by $\delta$, there are at most $m / \delta$ edges in the graph. By \Cref{lemma:IS-lower-bound}, the graph contains an independent set $\IS$ of size $\Omega(\min\{n^2 / (m / \delta), n\}) = \Omega(\min\{\delta n^2/m, n\})$.

    Let $\D \in \Delta(2^V)$ be the degenerate distribution at $\IS$. We have $\Ex{S \sim \D}{\induced(S)} = \Ex{S \sim \D}{|S|} = |\IS|$. Then, applying \Cref{lemma:weighted-upper-technical} with $\gamma = 0$ gives a persuasive signaling scheme with a cost of
    \[
        O\left(\OPTIR + \frac{n \cdot \OPTIR}{\sqrt{\min\{\delta n^2/m, n\}}}\right)
    =   O\left(\OPTIR\cdot\left(\sqrt{n} + \sqrt{m / \delta}\right)\right).
    \]

    Again, we use the fact that we can achieve a cost of $O(n^2/m)$ (\Cref{cor:degenerate-signal}). This give an upper bound of
    \[
        O\left(\min\left\{\OPTIR\cdot\left(\sqrt{n} + \sqrt{m / \delta}\right), \frac{n^2}{m}\right\}\right)
    =   O\left(\left(n\cdot\OPTIR\right)^{2/3}\delta^{-1/3} + \OPTIR \cdot \sqrt{n}\right).
    \]
    Note that the $\OPTIR \cdot\sqrt{n}$ term dominates the first term only if $\OPTIR = \Omega(\sqrt{n})$, at which point the first term is already $\Omega(n)$. Therefore, the upper bound can be simplified to $O\left((n\cdot\OPTIR)^{2/3}\delta^{-1/3}\right)$.
\end{proof}

\subsection{Proof of \Cref{lem:wasteful-weighted}}
\label{app:proof-wasteful-weighted}
\begin{proof}[Proof of \Cref{lem:wasteful-weighted}.]
    For the sake of contradiction, assume $\|W\btheta\|_1<n+\frac{\|\btheta\|-\OPT}{\OPT}$. Therefore, it must be the case that for all $i\in[n]$,
    \begin{align*}
        1\le (W\btheta)_i<1+\frac{\|\btheta\|_1-\OPT}{\OPT}=\frac{\|\btheta\|_1}{\OPT}.
    \end{align*}
    Recall from the proof of \Cref{lem:wasteful-general} that $\OPT$ is characterized by the linear program in \cref{eq:primal-lp}, and its dual is given by \cref{eq:dual-lp}.
    We construct dual variables {$\dualvar^\star$} as follows, where $R_{\max}$ is the largest value among all coordinates of $W\btheta$:
    \[
    \dualvar^\star\triangleq \frac{1}{R_{\max}}\btheta,\qquad
        R_{\max}\triangleq\max_{i\in[n]}(W\btheta)_i<\frac{\|\btheta\|_1}{\OPT}.
    \]
    Note that {$\dualvar^\star$} is a feasible solution of \cref{eq:dual-lp} because
    \[
    W\dualvar^\star=\frac{1}{R_{\max}}(W\btheta)\le\vecone.
    \]
     Connecting to the optimal objective value $\OPT=\min_{\btheta}\vecone^{\top}\btheta$ of \cref{eq:primal-lp}, we have
    \begin{align*}
        \OPT=&\min_{\btheta}\vecone^{\top}\btheta
        \ge\max_{\dualvar} \vecone^{\top}\dualvar
        \ge \vecone^{\top} \dualvar^\star\tag{weak duality}\\
        = &\frac{1}{R_{\max}}\left(\vecone^{\top} \btheta\right)=\frac{1}{R_{\max}}\cdot\|\btheta\|_1
        \tag{definition of $\dualvar^\star$}\\
        >&\OPT,\tag{$R_{\max}<\frac{\|\btheta\|_1}{\OPT}$}
    \end{align*}
    a contradiction! Therefore, we must have $\|W\btheta\|_1\ge n+\frac{\|\btheta\|_1-\OPT}{\OPT}$.
\end{proof}

\subsection{Proof of \Cref{thm:weighted-strict-improvement}}
\label{app:proof-weighted-improve-ic}
\begin{proof}[Proof of \Cref{thm:weighted-strict-improvement}.]
    Let $\btheta^\star \in [0, 1]^V$ be an optimal IR solution with cost $\OPTIR$, $\btheta \in [0,1]^V$ be the optimal stable solution {with cost $\OPTIC$,}
    and {$\alpha \in [0, 1]$} be a signal value to be defined later. {By \Cref{lem:wasteful-general}, the fact that $\|\btheta\|_1 = \OPTIC > \OPTIR \ge \OPT$ implies that $\btheta$ must be wasteful.}
    
    {Let} $\D_A\in\Delta(\{0,\alpha\}^V)$ be the distribution defined by first sampling {a random set $S \subseteq V$} by including each vertex {$v$} independently with probability $\theta_v^\star$, {and} then labeling each $v\in V$ with $s_v=\alpha\cdot\1{v\in S}$. Let $\D_{\btheta^\star} \in \Delta(2^V)$ denote the distribution of such a random set $S$. In addition, we define $\D_B\in\Delta([0,1]^V)$ {as} the degenerate distribution supported on {$\{\btheta\}$}. Let $$\signalspace=\{\theta_v: v\in V\}\cup\{\alpha\}$$ be the signal space including all distinct values in $\btheta$ and $\alpha$. Note that we must have $0\in\signalspace$, because a stable solution can only be wasteful when there exists {$v$ such that} $\theta_v=0$; {otherwise, the stability condition implies $W\btheta = \vecone$, i.e., $\btheta$ is not wasteful. Thus,} we have $|\signalspace|\le n+1$.
    
    We define a parametrized family of distributions $\D_\eps\in\Delta(\signalspace^V)$ for $\eps\in[0,1]$ as
    \[\forall s \in \signalspace^V,\qquad \D_\eps(s)=\eps\cdot\D_A(s)+(1-\eps)\cdot\D_B(s).\]
    Recall from \cref{lemma:persuasive-general} {that} the signaling scheme $\D_\eps$ is persuasive if for all $\theta\in\signalspace$, we have
    \begin{align}
        \Delta_\theta=\Ex{\bs \sim \D_\varphi}{\sum_{v \in V}\1{s_v = \theta} \left(\sum_{v'\in N(v)}W_{v,v'}s_{v'}-(1-\theta)\right)}\ge0,
    \end{align}
    and the above is tight for all $\theta\in \signalspace\setminus\{0\}$.
    We now analyze $\Delta_\theta$ for each $\theta\in\signalspace$.
    \begin{itemize}
        \item \textbf{Case 1.} $\theta\in\signalspace\setminus\{0,\alpha\}$.
        Since $\btheta$ is stable, for any $v$ such that $\theta=\theta_v>0$, the feasibility condition must be tight, i.e.,
        \[
        u_v(\btheta)=\theta+\sum_{v'\in N(v)}W_{v,v'}\theta_{v'}=1.
        \]
        This, together with the fact that $\D_B$ is a degenerate distribution on $\btheta$, implies
        \[
        \Delta_\theta^{\D_B}=\sum_{v\in V}\1{\theta_v = \theta} \left(\sum_{v'\in N(v)}W_{v,v'}\theta_{v'}-(1-\theta)\right)=0.
        \]
        Therefore, we have \[
        \Delta_\theta^{\D_\eps}=(1-\eps)\Delta_\theta^{\D_B}=0.
        \]
        \item \textbf{Case 2.} $\theta=\alpha$. If $\alpha=\theta_v$ for some $v$, the analysis in the previous case gives us $\Delta_\alpha^{\D_B}=0$. {On the other hand,}
        \begin{align*}
            \Delta_\alpha^{\D_A}=&\Ex{\bs \sim \D_A}{\sum_{v \in V}\1{s_v = \alpha} \left(\sum_{v'\in N(v)}W_{v,v'}s_{v'}-(1-\alpha)\right)}\\
            =&\Ex{S\sim\D_{\btheta^\star}}{\sum_{v \in S}\left(\alpha\sum_{v'\in S}W_{v,v'}-1\right)}\tag{$s_v=\alpha\cdot\1{v\in S}$ }\\
            =&\alpha\cdot\Ex{S\sim\D_{\btheta^\star}}{\induced(S)}-\OPTIR\tag{$\Ex{S\sim\D_{\btheta^\star}}{|S|}=\OPTIR$}.
        \end{align*}
        Therefore, to guarantee $\Delta_\alpha^{\D_A}=0$, we set 
        \begin{align}
            \alpha=\frac{\OPTIR}{\Ex{S\sim\D_{\btheta^\star}}{\induced(S)}}.
            \label{eq:alpha-choice}
        \end{align}
        Note that $\alpha\le1$ because $\Ex{S\sim\D_{\btheta^\star}}{\induced(S)}\ge\Ex{S\sim\D_{\btheta^\star}}{|S|}=\OPTIR$.
        In addition, we have
        \[
        \Delta_\alpha^{\D_\eps}=\eps\Delta_\alpha^{\D_A}+(1-\eps)\Delta_\alpha^{\D_B}=0.
        \]
        \item \textbf{Case 3.} $\theta=0$. We first analyze the slack in $\D_A$. 
        \begin{align*}
            \Delta_0^{\D_A}=&\Ex{\bs \sim \D_A}{\sum_{v \in V}\1{s_v = 0} \left(\sum_{v'\in N(v)}W_{v,v'}s_{v'}-1\right)}\\
            =&\Ex{S\sim\D_{\btheta^\star}}{\sum_{v \in V\setminus S}\left(\alpha\sum_{v'\in S}W_{v,v'}-1\right)}\tag{$s_v=\alpha\cdot\1{v\in S}$ }\\
            =&\alpha\cdot\Ex{S\sim\D_{\btheta^\star}}{\cut(S,V\setminus S)}-(n-\OPTIR)\tag{$\Ex{S\sim\D_{\btheta^\star}}{|V\setminus S|}=n-\OPTIR$}\\
            =&\OPTIR\cdot
            \frac{\Ex{S\sim\D_{\btheta^\star}}{\cut(S,V\setminus S)}}{{\Ex{S\sim\D_{\btheta^\star}}{\induced(S)}}}-(n-\OPTIR)\tag{choice of $\alpha$ in \cref{eq:alpha-choice}}\\
            \ge&\OPTIR\cdot
            \frac{n-2\OPTIR}{(\OPTIR)^2+\OPTIR}-(n-\OPTIR)\tag{last two properties in \cref{lemma:IR-sol-rounding}}\\
            =&-\frac{\OPTIR(n-\OPTIR+1)}{\OPTIR+1}.
        \end{align*}
        For $\Delta_0^{\D_B}$, since $(W\btheta)_v-1=0$ for all coordinates where $\theta_v>0$, we have
        \begin{align*}
            \Delta_0^{\D_B}
        &=  \sum_{v\in V}\1{\theta_v = 0} \left(\theta_v + \sum_{v'\in N(v)}W_{v,v'}\theta_{v'}-1\right)\\
        &=  \sum_{v\in V}\1{\theta_v = 0} \left((W\btheta)_v-1\right)\\
        &=  \sum_{v\in V}\1{\theta_v = 0} \left((W\btheta)_v-1\right) + \sum_{v\in V}\1{\theta_v > 0} \left((W\btheta)_v-1\right) \tag{$\theta_v > 0 \implies (W\btheta)_v = 1$}\\
        &=  \|W\btheta\|_1-n\\
        &\ge\frac{\|\btheta\|_1-\OPT}{\OPT}=\frac{\OPTIC-\OPT}{\OPT},
        \end{align*}
        where the inequality follows from \cref{lem:wasteful-weighted}.
        Therefore, for the mixed distribution $\D_\eps$, we have
        \begin{align*}
            \Delta_0^{\D_\eps}=&~\eps\Delta_0^{\D_A}+(1-\eps)\Delta_0^{\D_B}\\
        \ge&~-\eps\cdot \frac{\OPTIR(n-\OPTIR+1)}{\OPTIR+1}
        +(1-\eps)\frac{\OPTIC-\OPT}{\OPT}.
        \end{align*}
        Therefore, for the signaling scheme $\D_\eps$ to be persuasive, it suffices to choose $\eps\in[0,1]$ that such that 
        \begin{align*}
            &-\eps\cdot\frac{\OPTIR(n-\OPTIR+1)}{\OPTIR+1}
        +(1-\eps)\cdot\frac{\OPTIC-\OPT}{\OPT}\ge0\\
        \Longleftrightarrow\ &
        \eps \le \frac{\OPTIC-\OPT}{\OPTIC-\OPT
        +\frac{\OPT\cdot\OPTIR(n-\OPTIR+1)}{\OPTIR+1}}=\frac{\PoS-1}{\PoS-1+\frac{\OPTIR(n-\OPTIR+1)}{\OPTIR+1}},
        \end{align*}
        which justifies the choice of $\eps$ in \Cref{eq:choice-of-eps-weighted}. The cost of this signaling scheme is 
        \begin{align*}
            \Cost(\D_\eps)=&\eps\alpha\Ex{S\sim\D_A}{|S|}+(1-\eps)\OPTIC
            \le\OPTIC-\eps(\OPTIC-\OPTIR),
        \end{align*}
        where we have used $\alpha \le 1$ and $\Ex{S\sim\D_A}{|S|}=\OPTIR$.
    \end{itemize}
\end{proof}

\section{Omitted Proofs from Section~\ref{sec:weighted-lower}}
\label{app:weighted-lower}
\subsection{Proof of \Cref{thm:weighted-binary-signals-fail}}
\label{app:proof-binary-fails}
\begin{proof}[Proof of \Cref{thm:weighted-binary-signals-fail}]
    We prove the theorem using the example from \Cref{fig:binary-schemes-fail}: Set $n = 3k + 2$ for some integer $k$. Let $C_1, C_2, \ldots, C_k$ be $k$ cliques, each of size $3$. Let $v_1,v_2$ be two additional vertices. Each $v_i$ is connected to each other and all the vertices in $C_{1:k}$. All edges have weight $1/2$. The optimal solution is $2$, achieved when both $v_1$ and $v_2$ play $1$ while the other vertices play $0$.

    Fix any binary signaling scheme $\D\in\Delta(2^{[n]})$. Consider the distribution $\D'$ over $\Vec{x}=(x_3,x_2,x_1,x_0,y)\in\{0, 1, \ldots, k\}^4\times \{0, 1, 2\}$ induced by $\D$ through the following procedure: 
    \begin{itemize}
        \item Sample $S\sim\D$.
        \item For $i \in \{0,1,2,3\}$, let $x_i$ be the number of cliques in $C_1, \ldots, C_k$ that contain exactly $i$ vertices in $S$.
        \item Let $y=|S\cap\{v_1,v_2\}|$.
    \end{itemize}
    Note that $\D'$ is a sufficient statistics of $\D$ in terms of the persuasiveness of the induced binary signaling scheme: When $\D$ is the degenerate distribution at $S \subseteq [n]$, the resulting $\Vec{x}=(x_3,x_2,x_1,x_0,y)$ satisfies
    \begin{align}
    \label{eq:suff-stat}
        \begin{cases}
            |S|=3x_3+2x_2+x_1+y\\
            \cut(S,V\setminus S)=3x_3+3x_2+2x_1-\frac{y}{2}(3x_3+x_2-x_1-3x_0)+\frac{\1{y=1}}{2}\\
            \induced(S)=|S| + 3x_3+x_2+y(3x_3+2x_2+x_1)+\1{y=2}
        \end{cases}
    \end{align}
    By \Cref{lemma:persuasive-binary}, we have
    \begin{align*}
        \frac{\Ex{\D}{\cut(S,V\setminus S)}}{n-\Ex{\D}{|S|}}\ge
        \frac{\Ex{\D}{\induced(S)}}{\Ex{\D}{|S|}}=\frac{1}{\alpha},
    \end{align*}
    where $\alpha$ is the value of the non-zero signal.
    
    Substituting \Cref{eq:suff-stat} into the above condition, subtracting $1$ from both sides, and using the fact that $n=3(x_0+x_1+x_2+x_3)+2$, we obtain
    \begin{align}
        &\frac{\Ex{\D'}{3x_3+2x_2+\frac{x_1y}{2}+y-3x_0(1-\frac{y}{2})-\frac{y}{2}(3x_3+x_2)-\left(2-\frac{\1{y=1}}{2}\right)}}{n-\Ex{\D}{|S|}}\nonumber\\
        &\quad\ge\frac{\Ex{\D'}{3x_3+x_2+y(3x_3+2x_2+x_1)+\1{y=2}}}{\Ex{\D}{|S|}}.\label{eq:raw-ineq-stable}
    \end{align}

    We compare the numerators of both sides of \Cref{eq:raw-ineq-stable} when $(x_3, x_2, x_1, x_0, y)$ is deterministic. When $y \in \{0, 2\}$, we have $2\cdot\1{y = 2} \ge y$, and it follows that
    \[
        2\cdot[3x_3 + x_2 + y(3x_3 + 2x_2 + x_1) + \1{y = 2}]
    \ge 3x_3 + 2x_2 + \frac{x_1y}{2} + y,
    \]
    which is, in turn, lower bounded by the numerator on the left-hand side of \Cref{eq:raw-ineq-stable}.
    When $y = 1$ and $x_1 + x_2 + x_3 \ge 1$, we have
    \begin{align*}
        2\cdot[3x_3 + x_2 + y(3x_3 + 2x_2 + x_1) + \1{y = 2}]
    &=  12x_3 + 6x_2 + 2x_1\\
    &=  (11x_3 + 5x_2) + x_1 + (x_1 + x_2 + x_3)\\
    &\ge3x_3 + 2x_2 + \frac{x_1y}{2} + y,
    \end{align*}
    which is again lower bounded by the numerator on the left-hand side of \Cref{eq:raw-ineq-stable}.
    Finally, in the remaining case that $x_1 = x_2 = x_3 = 0$, $x_0 = k$, and $y = 1$, the numerator on the right-hand side of \Cref{eq:raw-ineq-stable} is $0$, while the numerator on the left is given by
    \[
        y - 3x_0(1 - y/2) - (2 - \1{y=1}/2)
    =   - \frac{3}{2}k - \frac{1}{2}
    <   0.
    \]

    Let $\LHS$ and $\RHS$ denote the left- and right-hand sides of \Cref{eq:raw-ineq-stable}, respectively. Then, the above case analysis shows that
    \[
        2\cdot \frac{\Ex{\D}{|S|}}{n-\Ex{\D}{|S|}}\cdot \RHS \ge\LHS\ge\RHS,
    \]
    which implies $\Ex{\D}{|S|} \ge n / 3 =\Omega(n)$.

    Recall that by \Cref{lemma:persuasive-binary}, the cost of the binary scheme is given by $\alpha\cdot \Ex{\D}{|S|}$. Therefore, it is left to show $\alpha=\Omega(1)$. To prove $\alpha=\Omega(1)$, it suffices to upper bound $\frac{\Ex{\D}{\induced(S)}}{\Ex{\D}{|S|}}$ by a constant. We have
    \begin{align*}
        \frac{\Ex{\D}{\induced(S)}}{\Ex{\D}{|S|}} - 1&=\frac{\Ex{\D'}{3x_3+x_2+y(3x_3+2x_2+x_1)+\1{y=2}}}{\Ex{\D}{|S|}}\\
        &\le\frac{\Ex{\D'}{9x_3+5x_2+2x_1+\1{y=2}}}{\Ex{\D'}{3x_3+2x_2+x_1+y}}\le 3.
    \end{align*}
    Therefore, $\alpha= \frac{\Ex{\D}{|S|}}{\Ex{\D}{\induced(S)}} \ge\frac{1}{4}=\Omega(1)$. This lower bounds the cost of the binary scheme by $\alpha\cdot\Ex{\D}{|S|}=\Omega(n)$.
\end{proof}

\subsection{Projection to a low-dimensional space}
\label{app:projection-low-dim}
 Consider an identity-independent persuasive signaling scheme specified by $\D_\varphi \in \Delta(\signalspace^V)$ with signal space $\signalspace$. The scheme naturally induces a distribution $\D'$ over $(x, y, \alpha_1, \alpha_2, \ldots, \alpha_k) \in \signalspace^{k+2}$ in the following way:
\begin{itemize}
    \item First, we sample a set of signals from $\D_\varphi$.
    \item Let $x$ and $y$ be the signals sent to the two center vertices.
    \item Choose one of the $k^2$ cliques uniformly at random, and set $\alpha_1, \ldots, \alpha_k$ to the signals sent to the $k$ vertices in that clique.
\end{itemize}
Define random variable $\Sum \coloneqq \sum_{i=1}^{k}\alpha_i$. It is easy to verify that the cost of the signaling scheme is given by
\[
    \Ex{(x, y, \alpha) \sim \D'}{x + y + k^2 \cdot \Sum},
\]
so our goal is to prove that $\Ex{\D'}{\Sum} = \Omega(1)$.

Recall from \Cref{def:slack} that $\Delta_{\theta}$ is the following quantity:
\begin{align*}
\Delta_\theta=\Contrib_\theta-(1-\theta)\cdot\Numtheta_\theta,
\end{align*}
which is the total amount of slack at signal value $\theta$. \Cref{lemma:persuasive-general} states that a valid scheme must satisfy $\Delta_0 \ge 0$ and $\Delta_{\theta} = 0$ for every $\theta > 0$.

Our first step is to re-write each $\Delta_{\theta}$ as an expectation over the distribution of $(x, y, \alpha_1, \ldots, \alpha_k)$. This reduces the dimensionality of the signaling scheme from $n = k^3 + 2$ to $k + 2$, and slightly simplifies the notations in the remainder of the proof.

\begin{lemma}[informal]\label[lemma]{lemma:slack-contribution}
    The slacks $(\Delta_{\theta})_{\theta \in \signalspace}$ can be equivalently defined as the expected outcome of the following procedure:
    \begin{itemize}
        \item Draw $(x, y, \alpha_1, \ldots, \alpha_k) \sim \D'$.
        \item $\Delta_x \gets \Delta_x + \frac{y + k^2\Sum}{2} - (1 - x)$.
        \item $\Delta_y \gets \Delta_y + \frac{x + k^2\Sum}{2} - (1 - y)$.
        \item For every $i \in [k]$, $\Delta_{\alpha_i} \gets \Delta_{\alpha_i} + k^2\cdot\left[\frac{\Sum + x + y + \alpha_i}{2} - 1\right]$.
    \end{itemize}
\end{lemma}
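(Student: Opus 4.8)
The plan is to prove the identity by directly expanding each slack $\Delta_\theta$ from \Cref{def:slack} and regrouping the vertex sum according to the structure of this instance. Recall that $V$ consists of two centers $v_1,v_2$, each adjacent to every other vertex, together with $k^3$ leaves partitioned into $k^2$ vertex-disjoint $k$-cliques $C_1,\dots,C_{k^2}$, and that every edge has weight $1/2$. Fixing a realization $\bs\sim\D_\varphi$ and writing $x=s_{v_1}$, $y=s_{v_2}$, $\sigma_j=\sum_{\ell\in C_j}s_\ell$, the first step is the elementary computation that the term of $\Delta_\theta$ contributed by $v_1$ equals $\1{s_{v_1}=\theta}\bigl(\tfrac12(y+\sum_{j}\sigma_j)-(1-x)\bigr)$ (using $N(v_1)=V\setminus\{v_1\}$), symmetrically for $v_2$, and that the term contributed by a leaf $\ell\in C_j$ equals $\1{s_\ell=\theta}\bigl(\tfrac12(x+y+\sigma_j+s_\ell)-1\bigr)$ after simplifying $-(1-s_\ell)=-1+s_\ell$ (using $N(\ell)=\{v_1,v_2\}\cup(C_j\setminus\{\ell\})$).

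Next I would rewrite these per-realization quantities as expectations under the projected distribution $\D'$. By the definition of $\D'$, conditioned on $\bs$ the index $J$ of the chosen clique is uniform over $[k^2]$ with $\Sum=\sigma_J$, so $\Ex{}{k^2\Sum\mid\bs}=\sum_{j=1}^{k^2}\sigma_j$ equals the total signal over all leaves. Substituting this into the two center terms turns them into $\Ex{\D'}{\1{x=\theta}\bigl(\tfrac{y+k^2\Sum}{2}-(1-x)\bigr)}$ and $\Ex{\D'}{\1{y=\theta}\bigl(\tfrac{x+k^2\Sum}{2}-(1-y)\bigr)}$, which are exactly the first two updates of the stated procedure. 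For the leaves, I would sum the per-leaf term over all $\ell$ and over all cliques, then multiply and divide by $k^2$ to display it as $k^2$ times the uniform average over $j\in[k^2]$ of $\sum_{\ell\in C_j}\1{s_\ell=\theta}\bigl(\tfrac{x+y+\sigma_j+s_\ell}{2}-1\bigr)$; since $(\alpha_1,\dots,\alpha_k)$ is precisely $(s_\ell)_{\ell\in C_J}$ under $\D'$, this average equals $\Ex{\D'}{\sum_{i=1}^{k}\1{\alpha_i=\theta}\bigl(\tfrac{\Sum+x+y+\alpha_i}{2}-1\bigr)}$, matching the third batch of updates. Summing the three groups over all of $V$ and taking expectation over $\D_\varphi$ recovers $\Delta_\theta$, as claimed.

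I do not expect a genuine conceptual obstacle, since the statement is a bookkeeping identity; the points that need care are (i) tracking the uniform weight $1/2$ so that each center contributes a factor $\tfrac12$ to every leaf and leaves contribute $\tfrac12$ to each clique-mate and to the two centers, and (ii) justifying the symmetrization step, i.e., that replacing the deterministic sum over the $k^3$ leaves by $k^2$ times an expectation over a uniformly random clique (with an arbitrary fixed labeling of its vertices) is valid because every quantity involved is symmetric in $\alpha_1,\dots,\alpha_k$ and linear in the signals. It is also worth recording that identity-independence (\Cref{lem:simplify-signal}) is what makes $\D_\varphi$ --- and hence the projection $\D'$ --- well defined in the first place. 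The full computation is carried out in \Cref{app:projection-low-dim}.
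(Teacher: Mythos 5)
Your proposal is correct and follows essentially the same route as the paper: both proofs fix a realization of the signals, compute each vertex's contribution to $\Delta_\theta$ using the explicit neighborhoods (centers see everything, a leaf sees the two centers and its clique-mates, all with weight $1/2$), and then identify the deterministic sum over all $k^2$ cliques with $k^2$ times the expectation over the uniformly chosen clique defining $\D'$. The only cosmetic difference is that the paper verifies the identity vertex-by-vertex under a distinctness assumption on the entries of $s$, whereas your indicator-sum bookkeeping avoids even that simplification.
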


Informally, the $k^2$ factors in the above account for the fact that every clique is only sampled with probability $1/k^2$.

Below is a more formal statement of the lemma above.

\begin{lemma}[Formal version of \Cref{lemma:slack-contribution}]\label[lemma]{lemma:slack-contribution-formal}
    Let $\D_\varphi \in \Delta(\signalspace^V)$ be the distribution that specifies a signaling scheme for the graph. Formally, the scheme draws
    \[
        s = (s_{c,1}, s_{c,2}, s_{1,1},\ldots,s_{1,k}, s_{2,1},\ldots,s_{2,k},\ldots, s_{k^2,1},\ldots,s_{k^2,k})
    \sim \D_\varphi,
    \]
    labels the two centers with $s_{c,1}$ and $s_{c,2}$, and labels the $j$-th vertex in the $i$-th clique with $s_{i,j}$.

    Let $\D' \in \Delta(\signalspace^{k+2})$ be the projection of $\D_\varphi$, i.e., $\D'$ is the distribution of
    \[
        (s_{c,1}, s_{c,2}, s_{i,1}, s_{i,2}, \ldots, s_{i,k})
    \]
    when $s \sim \D_\varphi$ and $i$ is drawn uniformly at random from $[k^2]$.

    Then, for any $\theta \in \signalspace$, we have
    \begin{align*}
        &~\Ex{s \sim \D_\varphi}{\sum_{v \in V}\1{s_v = \theta}\cdot\frac{1}{2}\sum_{u \in N(v)}s_u}
    -   (1 - \theta)\cdot\Ex{s \sim \D_\varphi}{\sum_{v \in V}\1{s_v = \theta}}\\
    =   &~\Ex{(x,y,\alpha)\sim\D'}{\1{x = \theta}\cdot\left(\frac{y+k^2\Sum}{2} - (1 - x)\right)} + \Ex{(x,y,\alpha)\sim\D'}{\1{y = \theta}\cdot\left(\frac{x+k^2\Sum}{2} - (1 - y)\right)}\\
       &~\qquad+\sum_{i=1}^{k}\Ex{(x,y,\alpha)\sim\D'}{\1{\alpha_i=\theta}\cdot k^2\cdot\left(\frac{\Sum+x+y+\alpha_i}{2} - 1\right)}.
    \end{align*}
\end{lemma}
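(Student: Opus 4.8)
The statement is a bookkeeping identity: it re-expresses the "unnormalized slack" at signal value $\theta$ (the left-hand side) as an expectation under the projected distribution $\D'$ (the right-hand side). The plan is to expand the left-hand side using the structure of the graph — two centers $c_1, c_2$ and $k^2$ disjoint cliques $C_1, \ldots, C_{k^2}$ of size $k$, with each center adjacent to every leaf and to each other — and then recognize the resulting sum as the claimed expectation over $\D'$. First I would split the outer sum $\sum_{v \in V}\1{s_v = \theta}\cdot(\tfrac{1}{2}\sum_{u\in N(v)}s_u)$ according to whether $v$ is a center or a leaf, and likewise for the $(1-\theta)\Numtheta_\theta$ term. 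For $v = c_1$: $N(c_1) = \{c_2\} \cup \bigcup_i C_i$, so $\tfrac{1}{2}\sum_{u\in N(c_1)}s_u = \tfrac{1}{2}(s_{c,2} + \sum_{i=1}^{k^2}\sum_{j=1}^{k}s_{i,j})$, and symmetrically for $c_2$. For a leaf $v$ in clique $C_i$ (the $j$-th one, say): $N(v) = \{c_1, c_2\} \cup (C_i \setminus \{v\})$, so $\tfrac{1}{2}\sum_{u\in N(v)}s_u = \tfrac{1}{2}(s_{c,1} + s_{c,2} + \sum_{j'\neq j}s_{i,j'})$.

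The key step is translating from sums over all $n$ vertices under $\D_\varphi$ to expectations over $\D'$, where a single clique is drawn uniformly from $[k^2]$. The identity to use repeatedly is: for any function $g$ of the signals within one clique, $\Ex{s\sim\D_\varphi}{\sum_{i=1}^{k^2} g(s_{i,\cdot})} = k^2 \cdot \Ex{(x,y,\alpha)\sim\D'}{g(\alpha)}$, since $\D'$ picks the clique index $i$ uniformly. This is exactly where the $k^2$ factors in the leaf contributions come from. For the center contributions there is no averaging over cliques — the two centers are fixed — so $\Ex{s\sim\D_\varphi}{\1{s_{c,1}=\theta}\,h(s_{c,1},s_{c,2}, \text{clique sums})}$ becomes $\Ex{(x,y,\alpha)\sim\D'}{\1{x=\theta}\,h(x,y,k^2\Sum/\text{?})}$: here one has to be careful that $\sum_{i=1}^{k^2}\sum_{j=1}^{k}s_{i,j}$, the total leaf signal, equals $k^2 \Ex{i}{\Sum}$ in expectation, i.e. under $\D'$ it corresponds to $k^2\Sum$ where $\Sum = \sum_{j=1}^k \alpha_j$ is the sum over the single sampled clique. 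So after conditioning on $s_{c,1} = \theta$, the center-$c_1$ term contributes $\Ex{\D'}{\1{x=\theta}\cdot(\tfrac{1}{2}(y + k^2\Sum) - (1-x))}$, matching the first summand on the right-hand side; symmetrically the second.

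For the leaf terms: fixing clique $i$ and summing over its $k$ vertices, $\sum_{j=1}^k \1{s_{i,j}=\theta}\cdot[\tfrac{1}{2}(s_{c,1}+s_{c,2}+\sum_{j'\neq j}s_{i,j'}) - (1-\theta)]$. Note $\sum_{j'\neq j}s_{i,j'} = (\sum_{j'}s_{i,j'}) - s_{i,j} = \Sum_i - \alpha_{i,j}$... wait, $s_{i,j}$ at a vertex receiving signal $\theta$ equals $\theta$, so $\tfrac{1}{2}\sum_{j'\neq j}s_{i,j'} = \tfrac{1}{2}(\Sum_i - \theta)$ where $\Sum_i$ is the clique sum. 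Then the bracket becomes $\tfrac{1}{2}(s_{c,1}+s_{c,2}) + \tfrac{1}{2}\Sum_i - \tfrac{\theta}{2} - 1 + \theta = \tfrac{1}{2}(x + y + \Sum_i + \theta) - 1$, matching the form $\tfrac{1}{2}(\Sum + x + y + \alpha_i) - 1$ in the statement (with $\alpha_i$ playing the role of $\theta$ on the event $\alpha_i = \theta$). Now $\Ex{s\sim\D_\varphi}{\sum_{i=1}^{k^2}\sum_{j=1}^k (\cdots)} = k^2\Ex{\D'}{\sum_{j=1}^k \1{\alpha_j = \theta}\cdot(\tfrac{1}{2}(\Sum + x + y + \alpha_j) - 1)}$, which is the third summand (after relabeling the clique-internal index as $i$ to match the statement's indexing). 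Collecting the three pieces gives the identity.

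\textbf{Main obstacle.} The argument is conceptually routine but notationally delicate; the main source of error is keeping straight the distinction between a leaf's own signal value being $\theta$ (which lets one substitute $s_{i,j} = \theta$ and simplify $\sum_{j'\neq j}$ to $\Sum_i - \theta$) versus the indicator $\1{s_{i,j}=\theta}$ as a random event, and making sure the $k^2$ normalization is applied to leaf terms but not center terms. One should also double-check the edge case where the two centers coincide in signal value with some leaf — but since the identity is stated per fixed $\theta$ and each vertex contributes to exactly the $\theta$ equal to its own signal, no double-counting arises. I do not anticipate any genuine mathematical difficulty; the proof is a careful expansion and re-grouping, and can be written out in a few lines once the substitution $\tfrac12\sum_{u\in N(v)}s_u$ is expanded for each vertex type.
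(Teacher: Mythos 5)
Your proposal is correct and follows essentially the same route as the paper: expand the left-hand side by vertex type (centers versus leaves) using the explicit neighborhoods, and then use the fact that $\D'$ samples the clique index uniformly from $[k^2]$ — so the total leaf signal $\sum_{i,j}s_{i,j}$ becomes $k^2\Sum$ in the center terms, and the explicit $k^2$ factor in the leaf terms cancels the $1/k^2$ sampling probability. The paper phrases this as a per-coordinate contribution accounting on a deterministic signal vector followed by taking expectations, but the computation is the same and your handling of the $\1{s_{i,j}=\theta}$ substitution and the $k^2$ normalization is accurate.
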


\begin{proof}
    It suffices to prove the identity for deterministic values of $s_{c,1}$, $s_{c,2}$, and $(s_{i,j})_{(i,j)\in[k^2]\times[k]}$, i.e., when $\D_\varphi$ is degenerate. The general case then follows from taking an expectation. To this end, we show that each $s_{c,1}$, $s_{c,2}$, and $s_{i,j}$ contributes the same amount to both sides of the equation. For simplicity, we will assume that the $k^3+2$ entries of $s$ are distinct; the general case follows from the same argument.
    \paragraph{Contribution from $s_{c,1}$ and $s_{c,2}$.}
    When $\theta = s_{c,1}$, the left-hand side is given by
    \[
        \frac{1}{2}\sum_{u \in N((c,1))}s_u - (1 - s_{c,1})
    =   \frac{s_{c,2}}{2} + \frac{1}{2}\sum_{i=1}^{k^2}\sum_{j=1}^{k}s_{i,j} - (1 - s_{c,1}).
    \]
    The right-hand side is
    \begin{align*}
        \Ex{(x,y,\alpha)\sim\D'}{\frac{y+k^2\Sum}{2} - (1 - x)}
    &=  \frac{s_{c,2}}{2} - (1 - s_{c,1}) + \frac{k^2}{2}\Ex{(x,y,\alpha)\sim\D'}{\Sum}\\
    &=  \frac{s_{c,2}}{2} - (1 - s_{c,1}) + \frac{k^2}{2}\cdot\frac{1}{k^2}\sum_{i=1}^{k^2}\sum_{j=1}^{k}s_{i,j},
    \end{align*}
    which is equal to the left-hand side.

    By symmetry, the contributions from $s_{c,2}$ to both sides are also equal.
    \paragraph{Contribution from $s_{i,j}$.} When $\theta = s_{i,j}$, the left-hand side reduces to
    \[
        \frac{1}{2}\sum_{u \in N((i,j))}s_u - (1 - s_{i,j})
    =   \frac{s_{c,1} + s_{c,2}}{2} + \sum_{j' \in [k]\setminus\{j\}}\frac{s_{i,j'}}{2} - 1 + s_{i,j}
    =   \frac{s_{c,1} + s_{c,2} + s_{i,j}}{2} + \sum_{j' \in [k]}\frac{s_{i,j'}}{2} - 1,
    \]
    while the right-hand side is given by
    \[
        \Ex{(x,y,\alpha)\sim\D'}{\1{\alpha_j = s_{i,j}}\cdot k^2\cdot\left(\frac{\Sum+x+y+\alpha_j}{2} - 1\right)}.
    \]
    Note that $\alpha_j = s_{i,j}$ holds only when $\alpha_1, \ldots, \alpha_k$ are equal to $s_{i,1}, \ldots, s_{i,k}$, which happens with probability $1/k^2$ by definition of $\D'$. Thus, the expression above can be simplified to
    \[
        \frac{1}{k^2}\cdot k^2\cdot\left(\frac{\sum_{j'=1}^ks_{i,j'} + s_{c,1} + s_{c,2} + s_{i,j}}{2} - 1\right)
    =   \frac{s_{c,1}+s_{c,2}+s_{i,j}}{2} + \sum_{j' \in [k]}\frac{s_{i,j'}}{2} - 1,
    \]
    which is exactly the left-hand side. This completes the proof.
\end{proof}

\subsection{Choice of the test function}
\label{app:test-function}
Recall that our goal is to show that any distribution $\D'$ over $(x, y, \alpha_1, \ldots, \alpha_k)$ induced by a valid signaling scheme $\D_\varphi$ must satisfy $\Ex{\D'}{\Sum} = \Omega(1)$. In order for $\D'$ to be induced by a valid scheme, it must satisfy $\Delta_0 \ge 0$ and $\Delta_\theta = 0$ for every $\theta \in \signalspace\setminus\{0\}$, where $(\Delta_\theta)_{\theta \in \signalspace}$ are obtained from $\D'$ via \Cref{lemma:slack-contribution}. While the number of constraints might be large, we will carefully take only a few linear combinations of them, such that they are sufficient for lower bounding $\Ex{\D'}{\Sum}$.

In particular, we note that for every function $f:\signalspace\to\R$ with $f(0) \ge 0$, we must have
\[
    \sum_{\theta \in \signalspace}f(\theta)\cdot\Delta_{\theta} \ge 0.
\]
The hope is that we can choose a few simple functions $f$ such that, after plugging $x$, $y$, and $\alpha_i$ into the inequality above, we obtain a good lower bound on $\Ex{\D'}{\Sum}$.

\paragraph{The constant function.} We start with the most simple choice of $f(\theta) \equiv 1$. By \Cref{lemma:slack-contribution}, the contribution of the combination $(x, y, \alpha_1, \ldots, \alpha_k)$ to $\sum_{\theta \in \signalspace}f(\theta)\cdot\Delta_{\theta} = \sum_{\theta \in \signalspace}\Delta_{\theta}$ is given by:
\begin{align*}
    &~\left[\frac{y + k^2\Sum}{2} - (1 - x)\right] + \left[\frac{x + k^2\Sum}{2} - (1 - y)\right] + k^2\cdot\sum_{i=1}^{k}\left[\frac{\Sum + x + y + \alpha_i}{2} - 1\right]\\
=   &~\frac{3}{2}(x+y) + k^2\Sum - 2 + k^3\cdot\left[\frac{\Sum + x + y}{2} - 1\right] + \frac{k^2}{2}\Sum\\
=   &~\left(\frac{1}{2}k^3 + \frac{3}{2}k^2\right)\Sum + \left(\frac{1}{2}k^3 + \frac{3}{2}\right)(x + y) - (k^3 + 2).
\end{align*}
By linearity, the condition $\sum_{\theta \in \signalspace}\Delta_{\theta} \ge 0$ can be written as
\[
    \left(\frac{1}{2}k^3 + \frac{3}{2}k^2\right)\Ex{\D'}{\Sum} + \left(\frac{1}{2}k^3 + \frac{3}{2}\right)\Ex{\D'}{x + y} \ge (k^3 + 2).
\]
Note that this condition alone is not enough for lower bounding $\Ex{\D'}{\Sum}$, since the above can be easily satisfied by setting $\Sum \equiv 0$ and $x, y \equiv 1$.

\paragraph{Getting rid of signal $1$.} Later, we will consider another choice of $f(\theta) = \frac{\theta}{1 - \theta}$, which is ill-defined at $\theta = 1$. To avoid this issue, we now take a detour and show that sending signal $1$ is essentially useless, so it is without loss of generality that $1 \notin \signalspace$.

Note that whenever signal $1$ is sent to a vertex, all the neighbouring vertices must receive signal $0$. This is because, according to \Cref{lemma:slack-contribution}, whenever $\Delta_1$ is changed, the change is always non-negative. Therefore, to ensure $\Delta_1 = 0$, there cannot be any non-zero contribution to vertices that receive signal $1$. In other words, all neighbours of such vertices must receive signal $0$.

Therefore, the support of $\D'$ must be contained in $(\signalspace \setminus \{1\})^{k+2} \cup\{e_1, e_2, \ldots, e_{k+2}\}$, where $e_i$ is the vector with $1$ at the $i$-th coordinate and zeros elsewhere. Using \Cref{lemma:slack-contribution}, we can verify that the contribution of each $e_i$ to $\Delta_0$ is non-positive, while $\Delta_\theta$ is unaffected for $\theta > 0$. In particular, when either $x = 1$ or $y = 1$, $\Delta_0$ increases by
\[
    \left(\frac{1}{2} - 1\right) + k \cdot k^2 \cdot \left(\frac{1}{2} - 1\right)
=   -\frac{k^3+1}{2} < 0.
\]
When one of the $\alpha_i$'s is equal to $1$, $\Delta_0$ gets increased by
\[
    2\cdot\left(\frac{k^2}{2} - 1\right) + (k - 1)\cdot k^2\cdot\left(\frac{1}{2} - 1\right)
=   - \frac{1}{2}k^3 + \frac{3}{2}k^2 - 2
\le 0.
\]

Therefore, we may let $\D''$ be the restriction of $\D'$ to $(\signalspace\setminus\{1\})^{k+2}$. As argued above, $\D''$ is still valid in the sense that $(\Delta_\theta)_{\theta \in \signalspace\setminus\{1\}}$ induced by $\D''$ (according to \Cref{lemma:slack-contribution}) still satisfies $\Delta_0 \ge 0$ and $\Delta_{\theta} = 0$ for all $\theta > 0$.

We still need to show that the restriction to $(\signalspace\setminus\{1\})^{k+2}$ does not significantly increase the cost, i.e., $\Ex{\D''}{\Sum}$ should be $O(1)\cdot\Ex{\D'}{\Sum}$. To this end, it suffices to prove that $\D'((\signalspace\setminus\{1\})^{k+2})$ is lower bounded by $\Omega(1)$, so that the normalization does not blow up the cost. We start by claiming that
\[
    \D'(\{e_3, e_4, \ldots, e_{k+2}\})
=   \pr{(x, y, \alpha) \sim \D'}{\alpha_1 = 1 \vee \alpha_2 = 1 \vee\cdots\vee \alpha_k = 1}
\le \frac{1}{10}.
\]
This holds because, otherwise, we have $\Ex{(x, y, \alpha) \sim \D'}{\Sum}
\ge \frac{1}{10}\cdot 1
=   \Omega(1)$, and we are done.
Furthermore, we claim that
\[
    \D'(\{e_1, e_2\})
=   \pr{(x, y, \alpha) \sim \D'}{x = 1 \vee y = 1}
\le \frac{1}{10}.
\]
Earlier, we showed that when either $x = 1$ or $y = 1$, $\sum_{\theta\in\signalspace}\Delta_\theta$ is decreased by $\frac{k^3+1}{2}$. Furthermore, in general, each fixed value of $(x, y, \alpha_1, \ldots, \alpha_k)$ increases $\sum_{\theta\in\signalspace}\Delta_{\theta}$ by
\[
    \left(\frac{1}{2}k^3 + \frac{3}{2}k^2\right)\Sum + \left(\frac{1}{2}k^3 + \frac{3}{2}\right)(x + y) - (k^3 + 2)
\le \left(\frac{1}{2}k^3 + \frac{3}{2}k^2\right)\Sum + 1.
\]
This gives
\begin{align*}
    0 \le \sum_{\theta \in \signalspace}\Delta_\theta
\le &~-\pr{(x,y,\alpha)\sim\D'}{x=1\vee y=1}\cdot\frac{k^3+1}{2}\\ &~+ \Ex{(x,y,\alpha)\sim\D'}{\1{x\ne 1 \wedge y \ne 1}\cdot\left[\left(\frac{1}{2}k^3 + \frac{3}{2}k^2\right)\Sum + 1\right]}\\
\le &~-\pr{(x,y,\alpha)\sim\D'}{x=1\vee y=1}\cdot\frac{k^3+1}{2}+ \left(\frac{1}{2}k^3 + \frac{3}{2}k^2\right)\Ex{(x,y,\alpha)\sim\D'}{\Sum} + 1,
\end{align*}
which, together with $\pr{}{x=1\vee y=1} \ge 1/10$, would imply $\Ex{}{\Sum} = \Omega(1)$.
Therefore, without loss of generality, we may assume that $\D'((\signalspace \setminus \{1\})^{k+2}) \ge 1 - \frac{1}{10} - \frac{1}{10} = \frac{4}{5}$. Therefore,
\[
    \Ex{(x, y, \alpha) \sim \D''}{\Sum}
\le \frac{5}{4}\Ex{(x, y, \alpha) \sim \D'}{\Sum}.
\]
In particular, proving $\Ex{\D''}{\Sum} = \Omega(1)$ would imply $\Ex{\D'}{\Sum} = \Omega(1)$.

\paragraph{Another choice of $f$.} Now, we assume that $1 \notin \signalspace$, and examine $\sum_{\theta \in \signalspace}f(\theta)\cdot\Delta_\theta$ when $f(\theta) = \frac{\theta}{1 - \theta}$. Again, \Cref{lemma:slack-contribution} implies that the contribution of a fixed set of values $(x, y, \alpha_1, \ldots, \alpha_k)$ to $\sum_{\theta \in \signalspace}\frac{\theta}{1-\theta}\cdot\Delta_{\theta}$ is given by
\[
    \frac{x}{1-x}\cdot\left[\frac{y + k^2\Sum}{2} - (1 - x)\right]
+   \frac{y}{1-y}\cdot\left[\frac{x + k^2\Sum}{2} - (1 - y)\right]
+   \sum_{i=1}^{k}\frac{\alpha_i}{1-\alpha_i}\cdot k^2\left[\frac{\Sum + x + y + \alpha_i}{2} - 1\right].
\]
The sum of the first two terms above can be simplified into
\[
    \frac{xy}{2}\cdot\left(\frac{1}{1-x} + \frac{1}{1-y}\right) + \frac{k^2\Sum}{2}\cdot\left(\frac{x}{1-x} + \frac{y}{1-y}\right) - (x + y),
\]
while the last summation can be re-written as
\[
    k^2\sum_{i=1}^{k}\frac{\alpha_i}{1-\alpha_i}\cdot\left[\frac{\Sum+x+y-1}{2}-\frac{1-\alpha_i}{2}\right]
=   \frac{k^2}{2}(\Sum + x + y - 1)\sum_{i=1}^{k}\frac{\alpha_i}{1 - \alpha_i} - \frac{k^2}{2}\Sum.
\]

\subsection{Verify the dual feasibility}
\label{app:dual-feasibility}
We will show that, for any finite signal space $\signalspace \subset [0, 1)$ and every fixed choice of $(x, y, \alpha_1, \ldots, \alpha_k) \in \signalspace^{k+2}$, it holds that
\begin{equation}\label{eq:dual-feasibility}
    \Sum + \beta_2 \cdot \sum_{\theta \in \signalspace}\frac{\theta}{1-\theta}\cdot\Delta_{\theta} \ge \Omega(1) + \beta_1\cdot\sum_{\theta \in \signalspace}\Delta_{\theta},
\end{equation}
where $\beta_2 = \frac{1}{4k}$, $\beta_1 = \frac{1}{2k^2}$, and the $\Omega(1)$ notation hides a positive universal constant that does not depend on $(x, y, \alpha_1, \ldots, \alpha_k)$, as long as $k$ is sufficiently large.

Assuming that \cref{eq:dual-feasibility} holds, taking an expectation and rearranging shows that
\[
    \Ex{\D'}{\Sum}
\ge \Omega(1) + \beta_1\cdot\Ex{}{\sum_{\theta\in\signalspace}\Delta_{\theta}} - \beta_2 \cdot\Ex{}{\sum_{\theta\in\signalspace}\frac{\theta}{1-\theta}\cdot\Delta_{\theta}}
\ge \Omega(1),
\]
which implies the $\Omega(k^2) = \Omega(n^{2/3})$ lower bound on the cost and proves \Cref{thm:weighted-lower-special}.

Now we plug $\beta_1 = \frac{1}{4k}$, $\beta_2 = \frac{1}{2k^2}$, as well as the expressions for $\sum_{\theta\in\signalspace}\Delta_{\theta}$ and $\sum_{\theta\in\signalspace}\frac{\theta}{1-\theta}\cdot\Delta_{\theta}$ into \cref{eq:dual-feasibility}. It is sufficient to prove that
\begin{align*}
    &~\Sum + \frac{xy}{8k}\left(\frac{1}{1-x} + \frac{1}{1-y}\right) + \frac{k}{8}\Sum\cdot\left(\frac{x}{1-x} + \frac{y}{1-y}\right) - \frac{x+y}{4k} + \frac{k(\Sum+x+y-1)}{8}\sum_{i=1}^{k}\frac{\alpha_i}{1-\alpha_i} - \frac{k}{8}\Sum\\
\ge &~\Omega(1) + \left(\frac{k}{4} + \frac{3}{4}\right)\Sum + \left(\frac{k}{4} + \frac{3}{4k^2}\right)(x+y) - \left(\frac{k}{2} + \frac{1}{k^2}\right).
\end{align*}
Since $x + y \le 2 = O(1)$, every additive term of form $(x+y)/\poly(k)$ or $1/\poly(k)$ can be absorbed into the $\Omega(1)$ gap (for all sufficiently large $k$). Thus, it suffices to show
\begin{equation}\label{eq:simplified-dual}
\begin{split}
    &~\frac{k}{2} + \frac{xy}{8k}\left(\frac{1}{1-x} + \frac{1}{1-y}\right) + \frac{k}{8}\Sum\cdot\left(\frac{x}{1-x} + \frac{y}{1-y}\right) + \frac{k(\Sum+x+y-1)}{8}\sum_{i=1}^{k}\frac{\alpha_i}{1-\alpha_i}\\
\ge &~\Omega(1) + \left(\frac{3}{8}k-\frac{1}{4}\right)\Sum + \frac{k}{4}(x+y).
\end{split}
\end{equation}

We will consider the following three different cases:
\begin{itemize}
    \item \textbf{Case 1:} $x + y < 1$ and $\Sum + x + y - 1 \ge 0$. Since $x, y, \alpha_1, \ldots, \alpha_k \in [0, 1)$, we have
    \[
        \frac{x}{1-x} + \frac{y}{1-y} \ge x+y
    \quad\text{and}\quad
    \sum_{i=1}^{k}\frac{\alpha_i}{1 - \alpha_i} \ge \sum_{i=1}^{k}\alpha_i = \Sum.
    \]
    Furthermore, we relax the $\frac{xy}{8k}\left(\frac{1}{1-x}+\frac{1}{1-y}\right)$ term in \cref{eq:simplified-dual} to $0$. Then, it suffices to prove the following:
    \begin{equation}\label{eq:dual-feasibility-case-1}
        \frac{k}{2} + \frac{k}{8}\Sum\cdot(x+y) + \frac{k(\Sum + x + y - 1)}{8}\cdot\Sum \ge \Omega(1) + \left(\frac{3}{8}k-\frac{1}{4}\right)\Sum + \frac{k}{4}(x+y).
    \end{equation}
    For fixed $\Sum$, both sides of \cref{eq:dual-feasibility-case-1} are affine in $x + y$, so it suffices to verify it at $x + y = 1$ and $x + y = \max\{1 - \Sum, 0\}$. At $x + y = 1$, \cref{eq:dual-feasibility-case-1} gets reduced to
    \[
        \frac{k}{8}\Sum^2 - \left(\frac{k}{4}-\frac{1}{4}\right)\Sum + \frac{k}{4} \ge \Omega(1),
    \]
    which is true since the left-hand side is equal to $\frac{k}{8}(\Sum - 1)^2 + \frac{\Sum}{4} + \frac{k}{8} \ge \frac{k}{8} \ge \frac{1}{8}$.

    At $x + y = \max\{1 - \Sum, 0\}$, if $\Sum \le 1$ (so that $x + y = 1 - \Sum$), \cref{eq:dual-feasibility-case-1} is equivalent to
    \[
        \frac{\Sum}{4} + \frac{k}{4}\left(1 - \frac{\Sum^2}{2}\right) \ge \Omega(1),
    \]
    which is true since $\Sum \le 1$ implies $1 - \frac{\Sum^2}{2} \ge 1/2$.
    In the other case that $\Sum > 1$, we have $x + y = 0$, and this reduces \cref{eq:dual-feasibility-case-1} to
    \[
        \frac{k}{2} + \frac{k}{8}\Sum(\Sum-1) - \left(\frac{3}{8}k-\frac{1}{4}\right)\Sum \ge \Omega(1),
    \]
    which always holds since the left-hand side is equal to $\frac{k}{8}(\Sum-2)^2 + \frac{\Sum}{4} \ge \frac{\Sum}{4} > \frac{1}{4}$.
    
    \item \textbf{Case 2:} $x + y < 1$ and $\Sum + x + y - 1 < 0$. Again, we may apply the relaxation
    \[
        \frac{xy}{8k}\left(\frac{1}{1-x} + \frac{1}{1-y}\right) + \frac{k}{8}\Sum\cdot\left(\frac{x}{1-x} + \frac{y}{1-y}\right)
    \ge \frac{k}{8}\Sum\cdot(x+y).
    \]
    However, since the factor $\Sum+x+y-1$ is now negative, the last term on the left-hand side of \cref{eq:simplified-dual} (namely, $\frac{k(\Sum+x+y-1)}{8}\sum_{i=1}^{k}\frac{\alpha_i}{1-\alpha_i}$) is minimized when $\alpha$ is a permutation of $(\Sum, 0, 0, \ldots, 0)$. Thus, we need to prove the following inequality:
    \begin{equation}\label{eq:dual-feasibility-case-2}
        \frac{k}{2} + \frac{k}{8}\Sum(x+y) + \frac{k(\Sum + x + y - 1)}{8}\cdot\frac{\Sum}{1 - \Sum} \ge \Omega(1) + \left(\frac{3}{8}k-\frac{1}{4}\right)\Sum + \frac{k}{4}(x+y).
    \end{equation}

    Again, it suffices to verify the above at $x + y = 0$ and $x + y = 1 - \Sum$, respectively. At $x + y = 0$, \cref{eq:dual-feasibility-case-2} reduces to
    \[
        \frac{k}{2}\cdot(1 - \Sum) + \frac{1}{4}\Sum \ge \Omega(1),
    \]
    which is true since $\frac{k}{2}\cdot(1 - \Sum) + \frac{1}{4}\Sum \ge \frac{1}{4}\cdot(1 - \Sum) + \frac{1}{4}\Sum = \frac{1}{4}$.
    
    At $x + y = 1 - \Sum$, we get exactly the same inequality as the ``$x + y = \max\{1 - \Sum, 0\}$ and $\Sum \le 1$'' part of Case 1, which has already been verified.
    
    \item \textbf{Case 3:} $x + y \ge 1$ (and thus, $\Sum + x + y - 1 \ge 0$). In this case, we claim that the minimum of both $\frac{xy}{8k}\left(\frac{1}{1-x}+\frac{1}{1-y}\right)$ and $\frac{x}{1-x} + \frac{y}{1-y}$ are achieved at $x = y \ge 1/2$. Write $x = \mu + \delta$ and $y = \mu - \delta$ for $\mu = \frac{x+y}{2} \in [1/2, 1)$. We have
    \[
        xy\cdot\left(\frac{1}{1-x} + \frac{1}{1-y}\right)
    =   (\mu^2 - \delta^2)\cdot\frac{2-(\mu+\delta)-(\mu-\delta)}{(1-\mu-\delta)(1-\mu+\delta)}
    =   (2-2\mu)\cdot\left[1 + \frac{\mu^2 - (1 - \mu)^2}{(1 - \mu)^2 - \delta^2}\right],
    \]
    which is minimized at $\delta = 0$, since $\mu \in [1/2, 1)$ guarantees that $2 - 2\mu > 0$ and $\mu^2 - (1 - \mu)^2 \ge 0$.

    Similarly, $\frac{x}{1-x} + \frac{y}{1-y}$ can be written as
    \[
        \frac{1}{1-x} + \frac{1}{1-y} - 2
    =   \frac{2 - 2\mu}{(1 - \mu)^2 - \delta^2} - 2,
    \]
    which is also minimized at $\delta = 0$.
    
    Thus, it suffices to prove \cref{eq:simplified-dual} for the $x = y$ case, i.e., for all $x \in [1/2, 1)$,
    \[
        \frac{k}{2} + \frac{1}{4k}\cdot\frac{x^2}{1-x} + \frac{k}{4}\Sum\cdot\frac{x}{1-x} + \frac{k(\Sum + 2x - 1)}{8}\cdot\Sum \ge \Omega(1) + \left(\frac{3}{8}k - \frac{1}{4}\right)\Sum + \frac{k}{2}x.
    \]
    Consider the function
    \begin{align*}
        g(x, s)
    &\coloneqq
        \frac{k}{2} + \frac{1}{4k}\cdot\frac{x^2}{1-x} + \frac{k}{4}s\cdot\frac{x}{1-x} + \frac{k(s + 2x - 1)}{8}\cdot s - \left(\frac{3}{8}k - \frac{1}{4}\right)s - \frac{k}{2}x\\
    &=  \frac{k}{8}s^2 - \left(\frac{1}{2}k - \frac{1}{4} - \frac{k}{4}x - \frac{k}{4}\cdot\frac{x}{1-x}\right)s + \frac{k}{2}(1-x) + \frac{1}{4k}\cdot\frac{x^2}{1-x}.
    \end{align*}
    We need to prove that $g(x, s) \ge \Omega(1)$ holds for all $x \in [1/2, 1)$ and $s \ge 0$.
    For any fixed $x$, $g(x, s)$ is quadratic in $s$, and we have
    \[
        \inf_{s \ge 0}g(x, s)
    =   \begin{cases}
        g(x, 0), & s^*(x) < 0,\\
        g(x, s^*(x)), & s^*(x) \ge 0,
    \end{cases}
    \]
    where
    \[
    s^*(x) \coloneqq \frac{1}{k/4}\left(\frac{1}{2}k - \frac{1}{4} - \frac{k}{4}x - \frac{k}{4}\cdot\frac{x}{1-x}\right)
    =   2 - x - \frac{x}{1-x} - \frac{1}{k}.
    \]
    The case that $s^*(x) < 0$ is easy, since for any $x \ge 1/2$,
    \[
        g(x, 0)
    =   \frac{k}{2}(1-x) + \frac{1}{4k}\cdot\frac{x^2}{1-x}
    \ge 2\sqrt{\frac{k}{2}(1-x)\cdot\frac{1}{4k}\cdot\frac{x^2}{1-x}}
    =   \frac{2x}{\sqrt{8}}
    \ge \frac{1}{2\sqrt{2}}
    =   \Omega(1).
    \]

    To handle the $s^*(x) \ge 0$ case, we note that
    \[
        \inf_{s \ge 0}g(x,s)
    =   g(x,s^*(x))
    =   g(x, 0) - \frac{k}{8}[s^*(x)]^2,
    \]
    since the minimum of the quadratic function $f(x) = ax^2-bx+c$ where $a > 0$ is given by $c - \frac{b^2}{4a} = f(0) - a(x^*)^2$, achieved at $x^* = \frac{b}{2a}$.
    We will prove that whenever $s^*(x) \ge 0$,
    \begin{equation}\label{eq:g-zero-vs-g-opt}
        g(x, 0)
    \ge \frac{k}{4}[s^*(x)]^2.
    \end{equation}
    This then implies
    \[
        \inf_{s \ge 0}g(x,s)
    =   g(x, 0) - \frac{k}{8}[s^*(x)]^2
    \ge g(x, 0) - g(x, 0) / 2
    = \Omega(1),
    \]
    where the last step follows from our previous argument for $g(x, 0) = \Omega(1)$.

    Since $2 - 1/k - x - x/(1-x) = s^*(x) \ge 0$, we have $x + x / (1-x) \le 2$, which further implies $x \le 2 - \sqrt{2} \le 3/5$. Then, $g(x, 0)$ can be lower bounded as follows:
    \[
        g(x, 0)
    =   \frac{k}{2}(1-x) + \frac{1}{4k}\cdot\frac{x^2}{1-x}
    \ge \frac{k}{2}(1-x) \ge \frac{k}{5}.
    \]
    Furthermore, $x \in [1/2, 1)$ implies that
    \[
        s^*(x)
    =   2 - \frac{1}{k} - x - \frac{x}{1-x}
    \le 2 - 0 - \frac{1}{2} - 1
    =   \frac{1}{2}.
    \]
    Therefore, \cref{eq:g-zero-vs-g-opt} follows from
    \[
        g(x, 0)
    \ge \frac{k}{5}
    > \frac{k}{4}\cdot(1/2)^2
    \ge \frac{k}{4}[s^*(x)]^2.
    \]
    This finishes the proof for Case~3.
\end{itemize}

\end{document}